\documentclass[11pt]{article}
\RequirePackage[OT1]{fontenc}
\RequirePackage{amsthm,amsmath,amssymb,amsfonts,mathtools,mathrsfs}
\RequirePackage{graphicx,lmodern}
\RequirePackage{microtype}
\RequirePackage[round,colon,authoryear]{natbib}
\usepackage{xcolor}
\usepackage{booktabs,array,tabularx}
\usepackage[shortlabels]{enumitem}
\usepackage{tikz}
\usetikzlibrary{patterns,decorations.pathreplacing,arrows.meta,calc}
\usepackage{algorithm}
\usepackage[noend]{algpseudocode}
\allowdisplaybreaks
\numberwithin{equation}{section}
\setlist{itemsep=0.3em,topsep=0.4em,parsep=0pt}
\newtheorem{theorem}{Theorem}[section]
\newtheorem{proposition}[theorem]{Proposition}
\newtheorem{lemma}[theorem]{Lemma}
\newtheorem{corollary}[theorem]{Corollary}
\theoremstyle{definition}
\newtheorem{definition}[theorem]{Definition}
\newtheorem{example}[theorem]{Example}

\theoremstyle{remark}
\newtheorem{remark}[theorem]{Remark}

\newcommand{\Cm}{\Gamma}
\newcommand{\sgn}{\operatorname{sgn}}
\tikzset{
known/.style={fill=black!18},
knownsel/.style={fill=black!45},
unknown/.style={pattern=north east lines, pattern color=black!45},
brace/.style={decorate,decoration={brace,amplitude=4pt}},
lab/.style={font=\small},
slab/.style={font=\footnotesize},
arr/.style={-{Stealth[length=5pt]},semithick},
}

\newcommand{\R}{\mathbb{R}}
\newcommand{\Q}{\mathbb{Q}}
\newcommand{\Rpos}{\mathbb{R}_{\ge 0}}
\newcommand{\Qpos}{\mathbb{Q}_{\ge 0}}
\newcommand{\rank}{\operatorname{rank}}
\newcommand{\rankp}{\operatorname{rank}_{+}}
\newcommand{\supp}{\operatorname{supp}}
\newcommand{\bits}{\operatorname{bits}}
\newcommand{\col}{\operatorname{col}}
\newcommand{\spanof}{\operatorname{span}}
\newcommand{\T}{\mathcal{T}}
\newcommand{\be}{\mathbf e}
\newcommand{\Id}{\operatorname{Id}}

\newcommand{\pfpart}[1]{\smallskip\noindent\emph{#1.}}
\newcounter{proofstepblock}
\newcounter{proofstep}[proofstepblock]

\newcommand{\proofstep}[2][]{\pfpart{\if\relax\detokenize{#1}\relax\stepcounter{proofstep}\else\refstepcounter{proofstep}\label{#1}\fi Step \theproofstep, #2}}
\newcounter{costpart}
\renewcommand{\thecostpart}{(\alph{costpart})}
\newcommand{\costpart}[1]{\textup{\refstepcounter{costpart}\label{#1}\thecostpart}}
\usepackage{float}
\usepackage{titlesec}
\RequirePackage[colorlinks,citecolor=blue,urlcolor=blue,linkcolor=blue]{hyperref}
\titlespacing*{\section}{0pt}{2.2ex plus .5ex minus .2ex}{1.0ex plus .2ex}
\titlespacing*{\subsection}{0pt}{1.6ex plus .4ex minus .2ex}{0.6ex plus .2ex}
\setlist{nosep}
\title{Structure and Exact Certification of Nonnegative Tensor Decompositions}
\author{Haoming Wang and Ming Yuan\\ Columbia University}
\date{}

\begin{document}
\maketitle

\begin{abstract}
We study the structure and exact certification of nonnegative tensor decompositions supplied through rational factor matrices. Every failure of minimality or essential uniqueness contains an irreducible positive exchange between $p$ supplied terms and $q$ competing terms. A dimension count based on the Lovitz--Petrov splitting theorem bounds the effective order of such an exchange, the sum over the modes of the joint factor-span dimension minus one, by $p+q-2$, and this bound is attained at every supplied length $R\ge2$. After eliminating ambient-row variables by the Farkas lemma, the resulting semialgebraic systems have fewer than $2R^2$ real variables. For input bit length $L$, this gives a deterministic $G(R)L^{O(R^2)}$-time algorithm that computes the nonnegative rank and an exact minimum-length decomposition, decides minimality and essential uniqueness, and returns exact witnesses when either property fails. The exponent is independent of the tensor order and the mode dimensions. For Lebesgue moment tensors, positive quadrature determines the nonnegative rank and the uniqueness behavior, and some rational length-$R$ inputs force a common field of degree at least $\lfloor R/2\rfloor!$ for any exact minimum-length output. As an illustration, an implementation resolves all 114 instances left inconclusive by earlier positive-scattering criteria, of which 95 are minimal and essentially unique and 19 are minimal but nonunique.
\end{abstract}

\noindent\textbf{Keywords.} Nonnegative tensor rank; tensor decomposition; identifiability; semialgebraic geometry; exact algorithms; nonnegative factorization.

\medskip
\noindent\textbf{Mathematics Subject Classification (2020).} 15A69, 14P10, 14Q30; secondary 68Q25, 68Q27.

\section{Introduction}\label{sec:introduction}

Nonnegative matrix and tensor factorizations are widely used to represent data by a small number of interpretable nonnegative components.  Once a decomposition has been produced, however, a basic question remains: \emph{how can one certify what has been found?}  Suppose that we are given a nonnegative CP decomposition
\[
\T=\sum_{r=1}^{R}a_r^{(1)}\otimes\cdots\otimes a_r^{(d)},
\qquad a_r^{(m)}\in\Rpos^{n_m}\setminus\{0\}.
\]
There are three natural exact questions.  Is the supplied length $R$ minimal, that is, does $R=\rankp(\T)$?  If it is not minimal, can one construct a shorter nonnegative decomposition?  If it is minimal, is it essentially unique, and if uniqueness fails, can one construct an inequivalent decomposition of the same length?  Here essential uniqueness means uniqueness of the multiset of rank-one terms, equivalently up to permutation and within-term rescaling of the factors.  If $R$ is not minimal, the reported number of components is not intrinsic to $\T$; if the decomposition is not essentially unique, its individual rank-one components are not identified by $\T$.

These questions are distinct from the problem of \emph{finding} a decomposition. A large literature develops numerical and algebraic methods for computing tensor decompositions \citep[e.g.,][]{KoldaBader2009,AnandkumarEtAl2014}, while another line of work gives sufficient conditions for identifiability or uniqueness, beginning with Kruskal's theorem \citep{Kruskal1977} and including later deterministic and generic criteria \citep[e.g.,][]{DomanovDeLathauwer2013a,DomanovDeLathauwer2013b,LovitzPetrov2023, WangYuan2026}. Such conditions are valuable when their hypotheses hold, but failure of a sufficient condition is generally inconclusive. Here we ask for complete exact certification: given an arbitrary supplied decomposition, decide minimality and essential uniqueness in every case and return exact witnesses when either property fails. We use the term \emph{certification} in this complete exact decision-and-search sense, rather than merely for checking a sufficient condition.

The input model is important.  We are given the rational factor matrices of the supplied length-$R$ decomposition, with total bit length $L$, rather than the expanded tensor.  The factor list has $R\sum_m n_m$ entries, whereas the ambient tensor has $\prod_m n_m$ entries and can be exponentially larger.  Our algorithms work directly with the compact representation and never expand the tensor.  The model also reflects the form in which numerical factorization methods typically return their output: a finite list of floating-point factor entries, each of which admits an exact rational representation.  Thus the question studied here applies directly once such an output is converted to exact rational data.

The difficulty of complete certification is already visible for matrices.  When $d=2$, deciding whether the nonnegative rank of a $m\times n$ matrix is at most a prescribed value, say $r$, is a highly nontrivial exact problem.  \citet{AroraGeKannanMoitra2012} gave a polynomial-time algorithm for every fixed target rank, and \citet{Moitra2016} obtained an $(mn)^{O(r^2)}$ algorithm by replacing a naive $r(m+n)$-variable semialgebraic formulation by one with $2r^2$ real variables.  Thus even in order two, controlling the number of real variables is central to obtaining a useful exact algorithm.

A direct tensor formulation faces the same dimensionality barrier in a sharper form.  A length-$q$ competing decomposition has
\begin{equation}\label{eq:direct-vars}
q\sum_{m=1}^d n_m
\end{equation}
real factor variables.  Direct application of general real-algebraic decision procedures therefore yields a running-time exponent that grows with the tensor order and ambient mode dimensions.  The central question is whether one can certify the supplied decomposition without paying for those ambient dimensions in the exponent.

Our answer is affirmative. The main structural contribution is a bounded-dimensional description of the positive exchanges through which a competing decomposition can occur. Sharpness and necessity results explain the geometry of this reduction; exact certification and reconstruction follow from it. More specifically, our main result states:

\begin{theorem}[Exact certification]\label{thm:main}
There are a computable function $G$, depending only on $R$, and an absolute constant $C$ with the following property.  For every tensor order $d\ge2$, an explicitly listed rational supplied decomposition of length $R$ and total bit length $L$ can be processed deterministically in time
\begin{equation}\label{eq:running-time}
G(R)L^{CR^2},
\end{equation}
and the algorithm
\begin{enumerate}[label=\textup{(\roman*)}]
\item computes $\rankp(\T)$, which is at most $R$;
\item\label{item:main-decomposition} outputs an exact minimal nonnegative CP decomposition of $\T$;
\item\label{item:main-decisions} decides whether the supplied decomposition is minimal and whether it is essentially unique;
\item\label{item:main-witnesses} outputs an exact shorter competitor when minimality fails, and an exact inequivalent length-$R$ competitor when the supplied decomposition is minimal but not essentially unique. 
\end{enumerate}
\end{theorem}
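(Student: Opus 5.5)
The argument combines three ingredients: a structural reduction of every competitor to a bounded-dimensional positive exchange, an elimination of ambient-dimension variables, and a real-algebraic decision procedure whose exponent depends only on the number of remaining variables.

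\emph{Step 1, exchanges.} Suppose minimality or essential uniqueness fails. Then there is a competing nonnegative decomposition $\sum_{s}b_s^{(1)}\otimes\cdots\otimes b_s^{(d)}$ of length at most $R$ whose multiset of rank-one terms differs from the supplied one. Cancel the terms common to both decompositions. Then pass to a minimal subconfiguration: a subset $P$ of $p$ supplied terms and a set of $q$ competing terms with equal sums, containing no proper sub-equality. This gives an irreducible positive exchange. Minimality fails exactly when some exchange has $q<p$, and uniqueness fails when some exchange has $q=p$. Replacing the $P$-terms by the competitors then yields the required witness, so it suffices to search over exchanges with $p\le R$ and $q\le p$.

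\emph{Step 2, dimension bound.} For an irreducible exchange, let $k_m$ be the dimension of the span of the mode-$m$ factors of all $p+q$ terms. I would invoke the Lovitz--Petrov splitting theorem. If $\sum_m(k_m-1)>p+q-2$, the sum would split into two nontrivial parts with disjoint mode-wise spans, which contradicts irreducibility. Hence the effective order $\sum_m(k_m-1)$ is at most $p+q-2\le 2R-2$. In particular, at most $2R-2$ modes have $k_m\ge2$, and in every other mode all factors are proportional to a single vector. Such a mode can be absorbed into scalars.

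\emph{Step 3, small semialgebraic systems.} In each nontrivial mode, write the competing factors as $b_s^{(m)}=U_m c_s^{(m)}$, where the columns of $U_m$ span a $k_m$-dimensional subspace. By irreducibility this subspace lies inside the span of the supplied factors $\{a_r^{(m)}:r\in P\}$. Parameterize it by coefficients relative to those known rational vectors, so the unknowns do not involve $n_m$. Nonnegativity of $U_m c_s^{(m)}$ in $\Rpos^{n_m}$ is a linear condition over $n_m$ ambient rows. By Farkas' lemma, and following Moitra's trick, I would replace it by the condition that $c_s^{(m)}$ lies in the cone cut out by the rows of the known matrix $A_P^{(m)}$ restricted to the subspace. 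Membership is then expressed through the fixed rational data, and no ambient-row variables remain.

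The multilinear identity $\sum_{r\in P}\bigotimes_m a_r^{(m)}=\sum_s\bigotimes_m b_s^{(m)}$ then becomes an identity of small tensors in coordinates of size $\prod k_m$. This is only polynomially many equations in $L$, since $\sum(k_m-1)\le 2R$ keeps the product bounded by a function of $R$. I would count the variables as the coefficients $q\sum k_m$ together with the subspace parameters, and the aim is to show the total is below $2R^2$. Inequivalence, meaning $q=p$ with the multiset of terms differing, is an open condition, which I encode by a strict inequality or by excluding finitely many permutations.

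\emph{Step 4, algorithm.} Enumerate all choices: $P$, $q$, the dimensions $(k_m)$, and which at most $2R-2$ modes are nontrivial. The number of subsets and dimension patterns is bounded by $G(R)$. The choice of modes costs only a factor polynomial in $d\le L$. For each choice, run the Basu--Pollack--Roy decision and sample-point procedure on a system with $O(R^2)$ variables, polynomially many polynomials of degree $O(R)$, and bit size $\mathrm{poly}(L)$. This costs $L^{O(R^2)}$ and returns algebraic sample points, which serve as exact witnesses. Iterating the search for shorter exchanges computes $\rankp(\T)$ and gives an exact minimal decomposition, at most $R$ rounds. When no exchange is found, this certifies minimality and essential uniqueness.

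\emph{Main obstacle.} I expect Step 3 to be the hardest part: keeping the variable count strictly below $2R^2$ uniformly in $d$, while making the Farkas elimination of the $n_m$ rows exact, with rational data and no dependence on $n_m$ in the exponent. If a direct count exceeds $2R^2$, I would first try normalizing the gauge freedoms, namely the scalings within each term and a choice of basis for each $U_m$, to remove the excess variables.
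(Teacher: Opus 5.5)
Your Step~3 rests on a false structural claim, and this is a genuine gap. You assert that, by irreducibility, the competing mode-$m$ factors lie in $\spanof\{a_r^{(m)}:r\in P\}$, so that they can be parametrized by coefficients relative to known supplied vectors. Irreducibility gives no such containment. It only bounds the \emph{joint} dimension $\delta_m=\dim(U_m+V_m)$, which may be strictly larger than $\sigma_m=\dim U_m$.

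Concretely, supply the row decomposition $C_4=\sum_i\be_i\,C_4[i,:]$ of the $4$-cycle incidence matrix. By Lemma~\ref{lem:cycle} it is minimal, and its exchange with the column decomposition $C_4=\sum_jC_4[:,j]\,\be_j^{\mathsf T}$ is irreducible and term-disjoint. Yet the supplied mode-$2$ factors span only the $3$-dimensional row space of $C_4$, while the competing mode-$2$ factors are $\be_1,\ldots,\be_4$. Every term-disjoint balanced replacement here needs one added dimension in mode~$2$ (Corollary~\ref{cor:added-rows}). Your search therefore finds nothing and would certify this nonunique decomposition as essentially unique.

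The paper handles this with the added rows $K_m$ of a profile. A row basis of $A_I^{(m)}$ is extended by $\delta_m-\sigma_m$ enumerated rows to a row basis of the joint matrix $[\,A_I^{(m)}\ B^{(m)}\,]$. The unknowns are then the column-normalized competing entries on these $\delta_m$ selected rows.

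Only at this point does Farkas do real work. Once $\delta_m>\sigma_m$, the coefficient row $\ell_{m,i}$ expressing an omitted row through the selected rows is not determined by $\ell_{m,i}\bar X_m=\bar a_{m,i}$. Nonnegativity of that row becomes the existential condition $\exists\,\ell:\ \ell\bar X_m=\bar a_{m,i},\ \ell\bar Y_m\ge0$. Proposition~\ref{prop:lifting} and Lemma~\ref{lem:circuit-elimination} convert this into sign conditions, via minors, on the extreme rays of $\ker(N_m\bar Y_m)\cap\Rpos^q$. In your formulation, with known basis vectors, nonnegativity is just $n_m$ explicit rational linear inequalities and Farkas is not needed at all. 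That signals the real difficulty has been assumed away.

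These coordinates also settle your variable-count worry. There are no separate subspace parameters, and normalizing columns to coordinate sum one and weights to total $\Omega_I$ gives exactly $q\kappa+q-1$ variables per block.

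A second, lesser problem is Step~4's computation of $\rankp(\T)$ by repeatedly applying shortenings. After the first round the current decomposition has entries in some $\Q(\theta)$ rather than $\Q$, so later rounds are not instances of the rational-input problem. Theorem~\ref{thm:sampling} no longer applies as stated. Feeding its output bounds back in as input sizes compounds the exponent of $L$ over up to $R$ rounds, so $G(R)L^{CR^2}$ with an absolute $C$ does not follow without a much finer tower-field analysis. Nor can you search for a single exchange between all supplied terms and $\rho$ competitors: such an exchange is generally reducible, and reducible exchanges have unbounded effective order (Example~\ref{ex:reducible}).

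The ingredient you are missing is Theorem~\ref{thm:rank-localization}. Comparing the \emph{original} supplied decomposition with any minimal one yields pairwise disjoint irreducible shortening blocks on the original terms. So one combined system $\mathcal N_{\mathfrak D}$ over a disjoint shortening pattern suffices. It has rational data and fewer than $2R^2$ variables (Proposition~\ref{prop:dimension}), and in one call it gives the rank and an exact minimal decomposition with all entries in one field $\Q(\theta)$.
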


Section~\ref{sec:one-shot} states Algorithm~\ref{alg:main} and proves Theorem~\ref{thm:main}.  The output entries can be irrational; they are real algebraic numbers in one number field and are represented exactly, see Remark~\ref{rem:main-format}.  The function $G$ is not optimized; see Remark~\ref{rem:main-format} and the open questions in Section~\ref{sec:discussion}.

Moreover, the field degree in the output cannot be avoided, and it cannot be bounded by a polynomial in $R$.

\begin{theorem}[Output field degree]\label{thm:intro-degree}
For every $R\ge4$ there is a rational supplied decomposition of length $R$ whose tensor has a unique minimal nonnegative decomposition, such that every field which contains all entries of the factor matrices of an exact minimum-length decomposition has degree at least $\lfloor R/2\rfloor!$ over $\Q$.  In particular, no polynomial in $R$ bounds the degree of the number field of the output in Theorem~\ref{thm:main}.
\end{theorem}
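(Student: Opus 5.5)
The construction uses the Lebesgue moment tensors mentioned in the abstract, where positive quadrature controls the nonnegative rank. Fix $k=\lfloor R/2\rfloor$ and work with order-$d$ moment tensors of the form $\T=\sum_j w_j\, v(x_j)^{\otimes d}$, where $v(x)=(1,x,\dots,x^{n-1})$ and $d$ and $n$ are large enough that the moments of degree up to $2k-1$ (or more) are encoded. The idea is to take a measure with rational moments whose unique minimal positive quadrature, the Gaussian quadrature with $k$ nodes, has nodes equal to the roots of an orthogonal polynomial $p_k\in\Q[x]$ with Galois group $S_k$. The degree bound $k!$ is exactly the degree of the splitting field of such a polynomial. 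Any field containing all factor entries of the minimal decomposition contains every node $x_j$, since each is a ratio of entries $a_{j,2}/a_{j,1}$ of a rank-one factor. It therefore contains the splitting field, of degree $k!$.

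The supplied rational decomposition of length $R$ has to realize a tensor whose moments match such a measure. My plan is to choose a positive rational quadrature with $R$ rational nodes, for instance a Radau/Lobatto-type or a generic positive rational quadrature exact to degree $2k-1$ for a measure $\mu$. Then $\T$ is, coordinatewise, the truncated moment tensor of the discrete measure $\nu=\sum_{r\le R} w_r\delta_{x_r}$. Only moments up to degree $2k-1$ should appear. To arrange this, take $n$ so that the entries of $\T$ involve moments $\int x^{i_1+\dots+i_d}d\nu$ with total degree at most $2k-1$, which is possible by choosing the mode dimensions and order so that $d(n-1)\le 2k-1$. For example, take $d=2$ and $n=k$, which gives a Hankel matrix of size $k$ with degrees up to $2k-2$. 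The risk here is that with $d=2$ the nonnegative rank theory is matrix-like. I would rather use the paper's result that for Lebesgue moment tensors positive quadrature determines the nonnegative rank and uniqueness: $\rankp(\T)$ equals the minimal number of nodes of a positive quadrature for the relevant moment functional, which is $k$ and attained uniquely by Gauss nodes when the moment space is "odd-degree" truncated. For $R=2k$ or $2k+1$ we then need a rational length-$R$ quadrature for the same moments, which exists because the truncated moment vector lies in the interior of the moment cone. There, positive rational atomic representations with any number of atoms at least $k+1$ exist densely (Carathéodory plus rational perturbation). Note that this yields a decomposition that is not minimal, consistent with the theorem, which only asks for a rational supplied decomposition.

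Next, the Galois step. The moment functional must be rational with $p_k$ having full Galois group $S_k$. I would start from Lebesgue measure on $[0,1]$, whose orthogonal polynomials are shifted Legendre polynomials. Their Galois groups are conjecturally, and in many cases provably, $S_k$, but this is not known for all $k$. To avoid that, perturb instead: choose a rational moment sequence (a positive-definite rational Hankel data set) so that $p_k$ is any prescribed rational polynomial with $k$ distinct real roots in $(0,1)$. The Christoffel–Darboux/Favard correspondence lets us pick $p_k$ as an arbitrary real-rooted rational polynomial, and then the moments of the associated Gauss quadrature (weights positive and rational-symmetric, so the moments are rational) work. Hilbert irreducibility, or an explicit family such as a perturbation of $\prod (x-j/(k+1))$ by a small rational multiple of an Eisenstein-type polynomial, gives a real-rooted $p_k$ with Galois group $S_k$.

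The main obstacle is the uniqueness/rank identification, namely showing that every minimum-length nonnegative decomposition of this tensor is $\sum_j w_j v(x_j)^{\otimes d}$ with Gauss nodes. Generic nonnegative factors need not be moment vectors. For this I would invoke the earlier moment-tensor result of the paper (positive quadrature determining nonnegative rank and uniqueness), so the content reduces to the Galois construction above.
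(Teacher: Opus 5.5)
\textbf{Gap: uniqueness is never proved for your tensor.} Your overall plan matches the paper's. The paper places unit masses at the $k=\lfloor R/2\rfloor$ zeros of the Laguerre polynomial $L_k$, divided by an integer $c$, so the moments are rational power sums. It obtains a rational supplied decomposition from the interior of the degree-$(2k-1)$ moment cone, and bounds the field degree through the ratio of the two entries of each factor. However, the theorem requires that the tensor have a \emph{unique} minimal nonnegative decomposition, and your field argument depends on this. You defer it to ``the earlier moment-tensor result of the paper.'' That result, Proposition~\ref{prop:polya}, is proved only for the Lebesgue moment tensor $\mathcal P_d$, which you deliberately abandoned because the Galois groups of shifted Legendre polynomials are unknown. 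Your quadrature heuristic also speaks only to symmetric decompositions, whereas the competitors may be nonsymmetric; you note this yourself. For your measure $\nu=\sum_j w_j\delta_{x_j}$ you must prove two things directly.
\begin{enumerate}[label=(\alph*)]
\item $\rank(\T)\ge k$. Flatten $\T$ and apply the monomial evaluation maps to reach the Hankel matrix of moments. Its leading $k\times k$ block is positive definite because $\sum_j w_j q(x_j)^2>0$ for every nonzero $q$ of degree at most $k-1$.
\item Uniqueness among all real length-$k$ decompositions. Use Kruskal's theorem or the Lovitz--Petrov criterion. With distinct nodes, each mode contributes span dimension $2$ for every pair of terms, so you need $\beta(I)=d\ge 2|I|-1$ for all $|I|\le k$.
\end{enumerate}
This is exactly how the paper proves Theorem~\ref{thm:degree}(i).

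\textbf{Your format is not pinned down, and your one concrete choice fails.} With $d=2$ and $n=k$, the tensor is the $k\times k$ positive rational Hankel matrix $H=\sum_j w_j v(x_j)v(x_j)^{\mathsf T}$. It has the rational column decomposition $H=\sum_i H[:,i]\be_i^{\mathsf T}$ of length $k=\rank H\le\rankp(H)$. So there is no field lower bound at all, and minimal decompositions are far from unique. Your constraint $d(n-1)\le 2k-1$ is correct: it keeps the moment vector interior, so that a rational representation exists. For Kruskal's count $\sum_m\min(n,k)\ge 2k+d-1$ to apply, however, you also need $d(n-1)\ge 2k-1$, which forces equality. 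The uniform choice is $n=2$, $d=2k-1$, the paper's format with $v(t)=(1-t,t)$.

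\textbf{Smaller points.}
\begin{itemize}
\item An Eisenstein-type perturbation gives only irreducibility, not Galois group $S_k$. The Hilbert irreducibility route works only if you invoke density of Hilbert sets in the real topology, which is what keeps all roots real and in $(0,1)$. Citing Schur's theorem for the Laguerre polynomials avoids this entirely.
\item You do not need rational representations with ``any number of atoms at least $k+1$,'' which you have not justified. Carath\'eodory in the rational cone gives at most $2k$ rational atoms with rational weights. Splitting a term into two halves then pads to every length $R\ge2k$, which covers $R\in\{2k,2k+1\}$.
\end{itemize}
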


Theorem~\ref{thm:intro-degree} follows from Theorem~\ref{thm:degree} and Corollary~\ref{cor:output-degree-lower}, which is proved in Section~\ref{subsec:output-lower} for moment tensors of Laguerre zeros.
The lower bound also shows that the large degree of the common algebraic number field in our exact output representation is not merely an artifact of the algorithm. 

\paragraph{Contributions.}
This paper builds on the positive exchanges, the Lovitz--Petrov splitting inequality \citep{LovitzPetrov2023}, and the sufficient criteria of \citet{WangYuan2026}.  Its contributions are the following.
\begin{enumerate}[label=\textup{(\alph*)}]
\item \emph{Structure.}  Every failure of minimality or essential uniqueness is localized to irreducible exchanges on subsets of the supplied terms (Theorems~\ref{thm:localized-witness} and~\ref{thm:rank-localization}).  The effective order of such an exchange is at most $p+q-2$, independently of the tensor order (Theorem~\ref{thm:budget}).
\item \emph{Exact reduction and algorithm.}  Elimination of the ambient rows by the Farkas lemma gives polynomial systems with fewer than $2R^2$ real variables, and hence Theorem~\ref{thm:main}.
\item \emph{Sharpness and necessity.}  The effective-order and variable bounds are attained at every $R\ge2$ (Section~\ref{sec:sharpness}), and no ingredient of the reduction can be omitted (Appendix~\ref{app:necessity}).
\item \emph{Algebraic output.}  Exact outputs require algebraic numbers, and a common field can have degree growing factorially in $R$ (Theorem~\ref{thm:intro-degree}).  Moment tensors of positive quadrature rules supply explicit families with known rank and uniqueness behavior (Section~\ref{sec:moments}).
\end{enumerate}

The key reduction is from the ambient-dimensional formulation in \eqref{eq:direct-vars} to polynomial systems with fewer than $2R^2$ real variables.  Consequently the supplied-decomposition problem is in XP when parameterized by $R$.  Whether it is fixed-parameter tractable, with running time $G(R)L^{O(1)}$, remains open, see discussions in Section~\ref{sec:discussion}.  The reduction has four ingredients.

\paragraph{1. Localizing a global competitor.}
Comparing the supplied decomposition with a shorter or inequivalent decomposition gives a positive exchange between their rank-one terms.  Decomposing this exchange into irreducible blocks shows that every failure of minimality contains a \emph{shortening localized replacement}: some $p$ supplied terms can be replaced by $q<p$ nonnegative rank-one terms.  If the supplied decomposition is minimal but not essentially unique, some $p$ supplied terms admit an inequivalent replacement by $p$ terms.  Moreover, a minimum-length decomposition can be obtained by performing shortening replacements on pairwise disjoint subsets of the supplied terms.  Thus the global search reduces to localized competitors involving subsets of the $R$ given terms; see Theorems~\ref{thm:localized-witness} and~\ref{thm:rank-localization}.

\paragraph{2. Bounding the effective order.}
For an irreducible replacement, let $\delta_m$ be the dimension of the joint span of the supplied and competing mode-$m$ factors.  The dimension count behind the positive splitting inequality of \citet{WangYuan2026}, applied to these joint spans and stated without the scattering term, gives
\[
\kappa:=\sum_{m=1}^d(\delta_m-1)\le p+q-2;
\]
see Theorem~\ref{thm:budget}, which rests on the Lovitz--Petrov splitting theorem.  In particular, at most $p+q-2$ modes can be active, independently of the original tensor order, and $\sum_{m\in H}\delta_m\le2\kappa$, so the total active-mode dimension of the core is controlled solely by the number of terms in the exchange.  Section~\ref{sec:sharpness} shows that the effective-order bound and both resulting variable bounds are attained for shortening and balanced exchanges at every $R\ge2$.

\paragraph{3. Eliminating ambient-dimensional variables.}
For each localized replacement we enumerate a finite collection of profiles describing the active modes and selected rows.  After normalization, its core uses exactly $q\kappa+q-1$ real variables.  The omitted rows still impose nonnegativity constraints, but introducing their coefficient rows as variables would reintroduce ambient-dimensional dependence.  A Farkas-lemma argument eliminates these coefficients: feasibility is reduced to polynomial conditions on extreme rays of a polyhedral cone, represented by minors.  The ambient dimensions affect the number of constraints but introduce no additional real variables.  Theorem~\ref{thm:representation}, Section~\ref{sec:lifting}, and Proposition~\ref{prop:dimension} make this reduction precise and show that combining disjoint shortening blocks leaves fewer than $2R^2$ variables.

\paragraph{4. Exact semialgebraic search and reconstruction.}
We enumerate the resulting bounded-variable polynomial systems and solve them by exact algorithms from real algebraic geometry.  Searching possible resulting lengths computes $\rankp(\T)$; balanced term-disjoint replacements decide essential uniqueness.  A feasible core solution is then lifted back to the original dimensions by solving linear systems over the same real algebraic number field, yielding exact competing decompositions without numerical approximation.  Algorithm~\ref{alg:main} carries out this search and reconstruction.

\medskip
\medskip
The decisive algorithmic step is the structural reduction to $O(R^2)$ real variables. This connects our work to algebraic and geometric approaches to tensor ranks and decomposition \citep{LandsbergTeitler2010,FriedlandOttaviani2014,Blekherman2015,Nie2017,Mourrain2018}, and to polynomial-optimization approaches to factorization ranks \citep{GriblingDeLaatLaurent2019}.

The closest algorithmic precedent is Moitra's bounded-variable reduction for exact nonnegative matrix factorization \citep{Moitra2016}. Algorithms for unrestricted tensor rank \citep{BhargavaSarafVolkovich2021,PelegShpilkaVolk2024,BhargavaShringi2025} and work on nonnegative tensor geometry and identifiability \citep{QiComonLim2016,LovitzPetrov2023,WangYuan2026} provide complementary perspectives. Our task is complete exact decision and search for an arbitrary supplied nonnegative decomposition. Section~\ref{sec:discussion} compares the input models, guarantees, and structural criteria in detail.

\paragraph{Why matricization does not suffice.}
Every nonnegative CP decomposition of $\T$ induces an NMF of every matricization $\T_{(S)}$ obtained by grouping the modes into two sides, so
\begin{equation}\label{eq:unfolding-rank}
\rankp(\T_{(S)})\le \rankp(\T).
\end{equation}
The converse fails, because the two factors of a nonnegative rank-one matrix term of $\T_{(S)}$ need not split as Kronecker products across the grouped modes.  Neither the unfolding rank nor Moitra's formulation augmented by rank-one constraints gives a complete tensor certificate.  For the tensor $\T=\be_1\otimes\be_1\otimes\be_2+\be_1\otimes\be_2\otimes\be_1+\be_2\otimes\be_1\otimes\be_1$ with $\be_1,\be_2$ the standard basis of $\R^2$, the supplied decomposition is minimal and essentially unique with $\rankp(\T)=3$, every matricization has nonnegative rank two, and for $M=\T_{(1)}$ and every $q\le3$ the formulation of \citet[Section~2.4]{Moitra2016} with the added condition that every right factor reshapes to a rank-one matrix has no solution; see Example~\ref{ex:matricization} in Appendix~\ref{app:matricization}.  In particular, splitting the last term into two halves gives a nonminimal supplied decomposition of length four that this constrained formulation fails to detect.  Moreover, a balanced unfolding can have product-sized matrix dimensions and thereby forfeit the compact factor-list representation.
Thus the obstruction is not merely the size of an unfolding: matricization can lose precisely the multilinear structure that certification must preserve.

\medskip\noindent\emph{Organization.}
Section~\ref{sec:prelim} fixes notation and the input model.  Sections~\ref{sec:localization} and~\ref{sec:budgets} establish localization and the effective-order bound, and Section~\ref{sec:normalized} develops the exact semialgebraic reduction.  Section~\ref{sec:one-shot} presents the algorithm, together with the exact real-algebraic tools it uses, and proves Theorem~\ref{thm:main}.  Section~\ref{sec:sharpness} proves sharpness, Section~\ref{sec:moments} studies moment tensors and output complexity, Section~\ref{sec:experiments} gives computational evidence, and Section~\ref{sec:discussion} discusses related work and open problems.  The appendices contain the matricization analysis, the necessity examples, deferred proofs, the complexity analysis, and implementation details.

\section{The supplied decomposition}\label{sec:prelim}

\subsection{Notation and conventions}\label{subsec:notation}

Write $[n]=\{1,\ldots,n\}$.  All tensors and factors are real.  Multisets are written with braces, like sets, but an element may occur in a multiset several times.  Two multisets are equal if every element occurs in both the same number of times.  The letter $G$ denotes a computable function of its displayed arguments; its value may change from one occurrence to the next.  The letter $C$, with or without a subscript, denotes an absolute constant and may likewise change unless a specific occurrence is referenced later.  Constants hidden in $O(\cdot)$ are absolute unless a dependence is displayed explicitly.  Rational numbers are encoded by numerator and positive denominator in binary, and $\bits(\cdot)$ denotes total binary encoding length, including dimensions and separators.  For a matrix $M$ and index sets $S$ and $S'$, we write $M[S,S']$ for the submatrix with rows $S$ and columns $S'$, and $M[S,:]$ and $M[:,S']$ for row and column restrictions. For a set $J$ of column indices, we also write $M_J=M[:,J]$.  For a vector $\eta$ we write $\supp(\eta)$ for the set of indices of its nonzero entries and $\eta_J$ for its restriction to $J$.  A row basis is a set of row indices whose rows form a basis of the row space.  The left kernel of $M$ is the space of all row vectors $z$ with $zM=0$.

For $k\ge1$ we write
\[
\mathbb Z[y_1,\ldots,y_k]=\Bigl\{\sum_{\alpha\in E}c_\alpha\,y_1^{\alpha_1}\cdots y_k^{\alpha_k}\ :\ E\subseteq\mathbb Z_{\ge0}^{k}\text{ finite},\ c_\alpha\in\mathbb Z\Bigr\}
\]
for the ring of polynomials in the variables $y_1,\ldots,y_k$ with integer coefficients.  For $k=1$ and the variable $T$ we write $\mathbb Z[T]$.  The degree of a nonzero polynomial is the largest $\alpha_1+\cdots+\alpha_k$ with $c_\alpha\ne0$.  The \emph{bit length} of the polynomial is the largest bit length of its coefficients $c_\alpha$, as in \citet[Section~2]{BouzidiEtAl2015}. A real number $\theta$ is \emph{algebraic} if $f(\theta)=0$ for some nonzero $f\in\mathbb Z[T]$.  In this case
\[
\Q(\theta)=\Bigl\{\frac{g_1(\theta)}{g_0(\theta)} : g_0,g_1\in\mathbb Z[T],\ g_0(\theta)\ne0\Bigr\}
\]
is the smallest subfield of $\R$ which contains $\theta$, and $\Q(\theta)^k$ is the set of all vectors with $k$ coordinates in $\Q(\theta)$. The \emph{degree} of $\Q(\theta)$ is the smallest degree of a nonzero $f\in\mathbb Z[T]$ with $f(\theta)=0$ \citep[1.25 and p.~19]{MilneFT}.  Finitely many algebraic numbers \emph{generate} the smallest subfield of $\R$ which contains them, and this field equals $\Q(\theta)$ for some algebraic $\theta$ \citep[Proposition~1.30 and Aside~2.26]{MilneFT}. Throughout the paper, the word \emph{exact} means that all computations are carried out in exact arithmetic over $\Q$ or over a field $\Q(\theta)$, and that no numerical approximation is used.

\subsection{The supplied decomposition and its competitors}\label{sec:model}

Let $d\ge2$ and let
\begin{equation}\label{eq:given}
\T=\sum_{r=1}^{R}a_r^{(1)}\otimes\cdots\otimes a_r^{(d)},
\qquad
a_r^{(m)}\in\Rpos^{n_m}\setminus\{0\},
\end{equation}
be an explicitly \emph{supplied} nonnegative CP decomposition, with factor matrices
\[
A^{(m)}=[a_1^{(m)},\ldots,a_R^{(m)}]\in\Rpos^{n_m\times R}.
\]
The decomposition is \emph{minimal} if no shorter nonnegative CP decomposition of $\T$ exists.  It is \emph{essentially unique} if every nonnegative decomposition of $\T$ of length $R$ has the same multiset of rank-one terms.

Write
\[
P_r=a_r^{(1)}\otimes\cdots\otimes a_r^{(d)},
\qquad r\in[R],
\]
for the supplied rank-one terms, so that $\T=\sum_{r=1}^{R}P_r$.

A nonnegative rank-one tensor is a product
\[
x^{(1)}\otimes\cdots\otimes x^{(d)},
\qquad
x^{(m)}\in\Rpos^{n_m}\setminus\{0\}.
\]
A nonnegative CP decomposition is a finite sum of such tensors.  Its length is the number of nonzero terms, and the nonnegative rank $\rankp(\T)$ is the smallest possible length.

The algorithmic input is the explicit rational list
\begin{equation}\label{eq:input-factors}
A^{(m)}\in\Qpos^{n_m\times R},
\qquad m\in[d],
\end{equation}
with no zero column.  We write $A=(A^{(1)},\ldots,A^{(d)})$ for the full supplied factor data.  The input size $L=\bits(A)$ is the total number of bits in the explicit list.  The list contains all entries of every matrix $A^{(m)}$, including the zero entries.  Thus, up to the fixed conventions of the encoding, $d\le L$, $\sum_m n_m\le L$, and $R\sum_m n_m\le L$.  The factor list can be exponentially smaller than an explicit listing of the ambient tensor, which has $\prod_m n_m$ entries; our algorithm never needs to expand this tensor.  Our bounds do not directly apply to sparse factor formats, in which $n_m$ can be much larger than $L$.  All running-time bounds below are stated in terms of $L$, and they count bit operations.

We recall the following fact from \citet[Lemma 1(i)]{WangYuan2026}.
\begin{lemma}\label{lem:product-equality}
Suppose
\[
x^{(1)}\otimes\cdots\otimes x^{(d)}
=
y^{(1)}\otimes\cdots\otimes y^{(d)}\ne0.
\]
Then $y^{(m)}=\lambda_mx^{(m)}$ for nonzero scalars $\lambda_m$ satisfying $\prod_m\lambda_m=1$.  If all factors are nonnegative, every $\lambda_m$ is positive.
\end{lemma}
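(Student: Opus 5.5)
The plan is to pick one coordinate at which the tensor is nonzero and use it to compare the factors mode by mode. Since $x^{(1)}\otimes\cdots\otimes x^{(d)}\ne0$, every factor $x^{(m)}$ is nonzero, and likewise every $y^{(m)}$. So I choose indices $i_m$ with $x^{(m)}_{i_m}\ne0$ for each $m$. Evaluating both sides at $(i_1,\ldots,i_d)$ gives $\prod_m x^{(m)}_{i_m}=\prod_m y^{(m)}_{i_m}\ne0$, and hence $y^{(m)}_{i_m}\ne0$ for every $m$.

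Next I recover each mode separately by varying one index at a time. Fix a mode $m$, keep the indices $i_{m'}$ for $m'\ne m$, and let the $m$-th index $j$ range freely. Comparing the two sides gives
\[
x^{(m)}_j\prod_{m'\ne m}x^{(m')}_{i_{m'}}=y^{(m)}_j\prod_{m'\ne m}y^{(m')}_{i_{m'}}\qquad\text{for all } j.
\]
Both products are nonzero, so $y^{(m)}=\lambda_m x^{(m)}$ with
\[
\lambda_m=\prod_{m'\ne m}x^{(m')}_{i_{m'}}\Big/\prod_{m'\ne m}y^{(m')}_{i_{m'}}\ne0 .
\]

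It then remains to identify the product of the scalars. Substituting $y^{(m)}=\lambda_m x^{(m)}$ back into the identity gives $\bigl(\prod_m\lambda_m\bigr)\,x^{(1)}\otimes\cdots\otimes x^{(d)}=x^{(1)}\otimes\cdots\otimes x^{(d)}$. Since this tensor is nonzero, $\prod_m\lambda_m=1$.

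For the sign statement, assume all factors are nonnegative. Then $x^{(m)}_{i_m}>0$, and $y^{(m)}_{i_m}$ is nonzero and nonnegative, hence also positive. The relation $y^{(m)}_{i_m}=\lambda_m x^{(m)}_{i_m}$ therefore forces $\lambda_m>0$.

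No step is a real obstacle. The only point needing care is choosing the witnessing indices so that every division is legitimate, and that is handled in the first step.
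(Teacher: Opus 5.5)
Your proof is correct and complete: choosing indices where every $x^{(m)}$ is nonzero, reading off each mode with the other indices frozen, and fixing $\prod_m\lambda_m=1$ and the signs from the witnessing entry is exactly the standard argument, and every division you perform is justified. The paper gives no proof of its own here. It imports the statement from \citet[Lemma~1(i)]{WangYuan2026}, so your elementary coordinate argument supplies the self-contained proof that the paper omits.
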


Two decompositions of the same length are \emph{equivalent} when their multisets of rank-one terms agree. By Lemma~\ref{lem:product-equality}, this holds if and only if the factors of the two decompositions agree after a permutation of the terms and positive rescalings of the factors of every term, where the product of the scalars of every term is one.

A length-$q$ \emph{competitor} of \eqref{eq:given} is a decomposition
\begin{equation}\label{eq:competitor}
\T=\sum_{j=1}^{q}b_j^{(1)}\otimes\cdots\otimes b_j^{(d)},
\qquad
b_j^{(m)}\in\Rpos^{n_m}\setminus\{0\}.
\end{equation}
Its factor matrix in mode $m$ is
\[
B^{(m)}=[b_1^{(m)},\ldots,b_q^{(m)}]\in\Rpos^{n_m\times q}.
\]
A competitor with $q<R$ shows that the supplied decomposition is not minimal. At $q=R$, a competitor is \emph{nontrivial} when it is not equivalent to the supplied decomposition.  A competitor may have arbitrary nonnegative real entries. Even for matrices, i.e. for $d=2$, there are rational inputs such that every minimal nonnegative factorization has irrational entries \citep{ChistikovEtAl2017,Shitov2016}. Therefore, the entries of an exact output must be allowed to be irrational real algebraic numbers.  Their representation is described in Section~\ref{subsec:exact-computation}.

\section{Positive exchanges and localization}\label{sec:localization}

We use the positive-exchange framework of \citet{WangYuan2026}. The recalled machinery below leads to the two localization theorems, whose proofs we retain here; the effective-order bound in Section~\ref{sec:budgets} begins the structural reduction used for exact certification. Routine proofs of Lemmas~\ref{lem:block-decomposition}, \ref{lem:localized-minimal-subsums}, and~\ref{lem:nonminimal-nonunique}, Proposition~\ref{prop:blocks}, and Lemma~\ref{lem:combine} are deferred to Appendix~\ref{app:proofs}.

\begin{definition}\label{def:exchange}
A positive exchange is an equality
\[
\sum_{r\in I}P_r=\sum_{j\in J}Q_j
\]
between finite sums of nonzero nonnegative rank-one tensors.  A subexchange is a pair $(I',J')$ with $I'\subseteq I$, $J'\subseteq J$, and
\[
\sum_{r\in I'}P_r=\sum_{j\in J'}Q_j.
\]
The exchange is \emph{irreducible} if it has no nonempty proper subexchange.
\end{definition}

Nonnegativity forces both sides of a nonempty subexchange to be nonempty \citep[Remark~1]{WangYuan2026}. We recall \citet[Lemma~3]{WangYuan2026}.

\begin{lemma}\label{lem:block-decomposition}
Every positive exchange decomposes as a disjoint union of irreducible positive exchanges, which we call its \emph{blocks}.  If both sides are minimal nonnegative decompositions of the same tensor, then the two sides of every block have the same number of terms.
\end{lemma}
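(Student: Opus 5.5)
For the decomposition claim, I will argue by strong induction on $|I|+|J|$. If the exchange $\sum_{r\in I}P_r=\sum_{j\in J}Q_j$ is irreducible, it is its own single block. Otherwise it has a nonempty proper subexchange $(I',J')$. By the remark after Definition~\ref{def:exchange}, both $I'$ and $J'$ are nonempty.

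Subtracting the subexchange identity from the full identity gives
\[
\sum_{r\in I\setminus I'}P_r=\sum_{j\in J\setminus J'}Q_j .
\]
This complementary pair is again a positive exchange between sums of nonzero nonnegative rank-one tensors. It is nonempty, since $(I',J')$ is proper. If one of its sides were empty, the other side would be a sum of nonzero nonnegative tensors equal to zero, which is impossible; so both sides are nonempty.

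Both $(I',J')$ and its complement have strictly smaller total size $|I|+|J|$. Applying the induction hypothesis to each gives disjoint unions of irreducible exchanges, and concatenating them decomposes the original exchange. A subexchange of a piece is still an exchange, so irreducibility is intrinsic to each piece and does not depend on the ambient exchange.

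For the second claim, suppose both sides are minimal decompositions of the same tensor, so $|I|=|J|=\rankp(\T)$, and suppose some block $(I_0,J_0)$ has $|I_0|\ne|J_0|$. Since the block sizes sum to $|I|=|J|$, some block must have $|J_0|<|I_0|$. Replace the terms $\{P_r\}_{r\in I_0}$ on the $I$-side by $\{Q_j\}_{j\in J_0}$. The identity $\sum_{I_0}P_r=\sum_{J_0}Q_j$ ensures the result is still a nonnegative decomposition of $\T$. Its length is $|I|-|I_0|+|J_0|<|I|$, which contradicts minimality. Hence every block is balanced.

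I expect no real obstacle. The only point needing care is that the complement of a subexchange is a genuine positive exchange with both sides nonempty, and this comes straight from nonnegativity as shown above.
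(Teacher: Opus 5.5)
Your proof is correct and follows essentially the same route as the paper. The paper peels off an inclusion-minimal subexchange and repeats on the remainder, while you split into a proper subexchange and its complement, which is the same finite recursion; for the balance claim, your counting-plus-replacement argument is the proof of Lemma~\ref{lem:localized-minimal-subsums} inlined, whereas the paper cites that lemma to conclude that each side of each block is a minimal decomposition of the block sum.
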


\begin{definition}\label{def:localized-replacement}
For a nonempty set $I\subseteq[R]$ and nonzero nonnegative rank-one tensors $Q_1,\ldots,Q_q$, an equality
\begin{equation}\label{eq:localized-replacement}
\sum_{r\in I}P_r=\sum_{j=1}^{q}Q_j
\end{equation}
is a \emph{localized replacement} on $I$: \emph{shortening} if $q<|I|$, \emph{balanced} if $q=|I|$, and \emph{term-disjoint} if $\{P_r:r\in I\}$ and $\{Q_j:j\in[q]\}$ have empty multiset intersection. A balanced replacement is a \emph{localized alternative} if these multisets differ. Irreducibility refers to Definition~\ref{def:exchange}.
\end{definition}

\begin{lemma}\label{lem:localized-minimal-subsums}
Let $Z_1,\ldots,Z_n$ be nonzero nonnegative rank-one tensors with $\rankp\bigl(\sum_{i=1}^{n}Z_i\bigr)=n$.  Then
\begin{equation}\label{eq:localized-subsum-rank}
\rankp\!\left(\sum_{i\in K}Z_i\right)=|K|
\end{equation}
for every nonempty $K\subseteq[n]$.
\end{lemma}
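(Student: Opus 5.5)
The plan is a direct exchange argument by contradiction, using subadditivity of nonnegative rank. First, the inequality $\rankp\bigl(\sum_{i\in K}Z_i\bigr)\le|K|$ is immediate, because $\sum_{i\in K}Z_i$ is itself a nonnegative CP decomposition of length $|K|$. So the only thing to prove is the reverse inequality.

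Suppose, for some nonempty $K\subseteq[n]$, that $\rankp\bigl(\sum_{i\in K}Z_i\bigr)=s<|K|$. Choose a nonnegative decomposition $\sum_{i\in K}Z_i=\sum_{j=1}^{s}W_j$ into nonzero nonnegative rank-one tensors $W_j$; when $s=0$ the sum is simply empty. Substituting this into the full sum gives
\[
\sum_{i=1}^{n}Z_i=\sum_{j=1}^{s}W_j+\sum_{i\in[n]\setminus K}Z_i .
\]
The right-hand side is a nonnegative CP decomposition of length $s+(n-|K|)<n$. This contradicts $\rankp\bigl(\sum_{i=1}^{n}Z_i\bigr)=n$, so $\rankp\bigl(\sum_{i\in K}Z_i\bigr)=|K|$, which is \eqref{eq:localized-subsum-rank}.

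There is essentially no obstacle here. The one point to check is the degenerate case $s=0$, which cannot actually occur: each $Z_i$ is nonzero and nonnegative, so $\sum_{i\in K}Z_i\neq0$ whenever $K\neq\emptyset$. The argument also uses only that the length of a decomposition counts its nonzero terms, so concatenating two decompositions adds their lengths. No earlier result such as Lemma~\ref{lem:block-decomposition} is needed.
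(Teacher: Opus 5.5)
Your proposal is correct and follows essentially the same route as the paper: the paper also notes that the subsum is itself a decomposition of length $|K|$, and that a shorter decomposition of length $s<|K|$ could be substituted into the full sum to give one of length $n-|K|+s<n$, a contradiction. Your extra remark that $s=0$ cannot occur is a harmless addition; the paper leaves it implicit.
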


\begin{lemma}\label{lem:nonminimal-nonunique}
If the supplied length-$R$ decomposition is nonminimal, then it is not essentially unique.
\end{lemma}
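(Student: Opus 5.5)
Assume the supplied decomposition is nonminimal. The plan is to start from a shorter competitor and pad it back up to length $R$ in two genuinely different ways. At least one of these padded decompositions must fail to be equivalent to the supplied one, which contradicts essential uniqueness.

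First, take a competitor $\T=\sum_{j=1}^{q}Q_j$ with $q<R$. Every nonzero nonnegative tensor has $\rankp\ge1$ and $R\ge2$ here, so $q\ge1$. Pick the term $Q_1=b_1^{(1)}\otimes\cdots\otimes b_1^{(d)}$ and split it into $R-q+1$ nonnegative rank-one pieces. The simplest split uses scalar multiples $\lambda_iQ_1$ with $\lambda_i>0$ and $\sum_i\lambda_i=1$. This gives a length-$R$ nonnegative decomposition $\mathcal D_\lambda$ of $\T$ for each admissible weight vector $\lambda$.

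Next, vary $\lambda$. The multisets of rank-one terms of $\mathcal D_\lambda$ are pairwise different for weight vectors that are not permutations of one another. The reason is that the terms proportional to $Q_1$ carry the weights $\lambda_i$, possibly added to weights of other $Q_j$ proportional to $Q_1$. I would make this precise by grouping terms by their projective class, using Lemma~\ref{lem:product-equality}. The total mass of the class of $Q_1$ is fixed, while its partition into $R-q+1$ parts (together with any fixed parts coming from other $Q_j$) changes with $\lambda$.

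Since $R-q+1\ge2$, there are infinitely many multiset-distinct choices, for example $\lambda=(t,1-t)$ padded appropriately with $t\in(0,1)$. At most one of these can equal the supplied multiset $\{P_1,\dots,P_R\}$. Hence some length-$R$ decomposition is inequivalent to the supplied one, and the supplied decomposition is not essentially unique.

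The only delicate step is confirming that distinct $\lambda$ really give distinct multisets when several $Q_j$ are proportional to $Q_1$. To handle this, compare the multiset of scalar coefficients within the projective class of $Q_1$. Changing $\lambda$ from $(t,1-t,\dots)$ to $(t',1-t',\dots)$ with $t'\notin\{t,1-t\}$ changes this multiset of coefficients. This holds because the remaining coefficients are held fixed.
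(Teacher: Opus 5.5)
Your proof is correct and follows the paper's argument: split one term of a shorter decomposition into $R-q+1\ge2$ positive multiples, note that distinct coefficient multisets give distinct term multisets, and conclude that at least one of the resulting length-$R$ decompositions is inequivalent to the supplied one. The only difference is that the paper starts from a minimum-length decomposition, so no other term can be parallel to $Q_1$, whereas you allow any shorter competitor and handle parallel terms by cancelling the fixed coefficients on the ray of $Q_1$, which works equally well.
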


Both localization theorems follow from the following block structure.

\begin{proposition}[Blocks relative to a minimal decomposition]\label{prop:blocks}
Compare the supplied decomposition $\T=\sum_{r=1}^{R}P_r$ with a minimal decomposition $\T=\sum_{j=1}^{\rho^*}Q_j$, where $\rho^*=\rankp(\T)$. Lemma~\ref{lem:block-decomposition} gives irreducible blocks
\begin{equation}\label{eq:blocks-minimal}
\sum_{r\in I_\nu}P_r=\sum_{j\in J_\nu}Q_j,
\qquad\nu\in[c],
\end{equation}
where $[R]=I_1\sqcup\cdots\sqcup I_c$ and $[\rho^*]=J_1\sqcup\cdots\sqcup J_c$.  Put $p_\nu=|I_\nu|$ and $q_\nu=|J_\nu|$.  Then
\begin{enumerate}[label=\textup{(\alph*)}]
\item\label{item:blocks-rank} $1\le q_\nu\le p_\nu$ and $\rankp\bigl(\sum_{r\in I_\nu}P_r\bigr)=q_\nu$ for every $\nu\in[c]$;
\item\label{item:blocks-count} $\sum_{\nu=1}^{c}(p_\nu-q_\nu)=R-\rho^*$.
\end{enumerate}
In particular, if $\rho^*<R$, then some block satisfies $q_\nu<p_\nu$, and for it $p_\nu\ge2$.
\end{proposition}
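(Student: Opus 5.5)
We prove the proposition from the block decomposition of Lemma~\ref{lem:block-decomposition}, applied to the exchange $\sum_{r=1}^R P_r=\sum_{j=1}^{\rho^*}Q_j$, and from Lemma~\ref{lem:localized-minimal-subsums}, applied to the minimal side. The blocks partition both index sets, so $[R]=\bigsqcup_\nu I_\nu$ and $[\rho^*]=\bigsqcup_\nu J_\nu$. By the remark after Definition~\ref{def:exchange}, nonnegativity forces both sides of every nonempty block to be nonempty, so $p_\nu\ge1$ and $q_\nu\ge1$.

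The main point is the identity $\rankp\bigl(\sum_{r\in I_\nu}P_r\bigr)=q_\nu$. Since $\sum_{j=1}^{\rho^*}Q_j$ is a minimal decomposition, $\rankp\bigl(\sum_j Q_j\bigr)=\rho^*$. Lemma~\ref{lem:localized-minimal-subsums} with $Z_j=Q_j$ and $K=J_\nu$ then gives $\rankp\bigl(\sum_{j\in J_\nu}Q_j\bigr)=q_\nu$. By the block equality \eqref{eq:blocks-minimal} this is the nonnegative rank of $\sum_{r\in I_\nu}P_r$. That tensor is written as a sum of $p_\nu$ nonzero nonnegative rank-one terms, so its nonnegative rank is at most $p_\nu$. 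Hence $q_\nu\le p_\nu$, which proves \ref{item:blocks-rank}.

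For \ref{item:blocks-count}, summing the cardinalities over the two partitions gives $\sum_\nu p_\nu=R$ and $\sum_\nu q_\nu=\rho^*$; subtracting yields the identity. For the final assertion, suppose $\rho^*<R$. Then $\sum_\nu(p_\nu-q_\nu)>0$, and every summand is nonnegative by \ref{item:blocks-rank}, so some block has $q_\nu<p_\nu$. Since $q_\nu\ge1$, this block has $p_\nu\ge q_\nu+1\ge2$.

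There is no real obstacle. The only point needing care is that the block decomposition covers every index on both sides. Lemma~\ref{lem:block-decomposition} states this as ``disjoint union,'' so empty blocks do not arise and both partitions are exhaustive. One should also note that we do not use the second part of Lemma~\ref{lem:block-decomposition}, which would require the supplied side to be minimal; only the minimality of the $Q$ side enters, through Lemma~\ref{lem:localized-minimal-subsums}.
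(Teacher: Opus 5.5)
Your proof is correct and follows the paper's argument step for step: nonempty block sides give $q_\nu\ge1$, Lemma~\ref{lem:localized-minimal-subsums} applied to the minimal $Q$ side gives $\rankp\bigl(\sum_{r\in I_\nu}P_r\bigr)=q_\nu\le p_\nu$, and counting over the two partitions gives (b) and the final assertion. Your remark that only the minimality of the $Q$ side is needed, not the second part of Lemma~\ref{lem:block-decomposition}, is also consistent with the paper's proof.
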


\begin{theorem}[Localization of obstructions]\label{thm:localized-witness}
Let $\T=\sum_{r=1}^{R}P_r$ be the supplied decomposition.
\begin{enumerate}[label=\textup{(\roman*)}]
\item\label{item:localized-minimality}
The supplied decomposition is nonminimal if and only if it admits a shortening localized replacement.  If it is nonminimal, such a replacement can be chosen irreducible with
\[
2\le p:=|I|\le R,
\qquad
1\le q<p,
\qquad
\rankp\!\left(\sum_{r\in I}P_r\right)=q.
\]
\item\label{item:localized-uniqueness}
Assume the supplied decomposition is minimal.  It is not essentially unique if and only if it admits a balanced localized alternative.  Whenever it is not essentially unique, such an alternative can be chosen irreducible and term-disjoint with $2\le p:=|I|=q\le R$.
\end{enumerate}
\end{theorem}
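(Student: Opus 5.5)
For part~\ref{item:localized-minimality} we reduce everything to Proposition~\ref{prop:blocks}. The backward direction is immediate. If $\sum_{r\in I}P_r=\sum_{j=1}^qQ_j$ with $q<|I|$, then replacing the terms indexed by $I$ with $Q_1,\ldots,Q_q$ gives a decomposition of $\T$ of length $R-|I|+q<R$. For the forward direction, assume nonminimality, so $\rho^*=\rankp(\T)<R$. Fix a minimal decomposition $\T=\sum_{j=1}^{\rho^*}Q_j$ and take the irreducible blocks \eqref{eq:blocks-minimal}. By Proposition~\ref{prop:blocks}\ref{item:blocks-count} some block has $q_\nu<p_\nu$, and the final sentence of the proposition gives $p_\nu\ge2$. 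Put $I=I_\nu$. Then $\sum_{r\in I}P_r=\sum_{j\in J_\nu}Q_j$ is an irreducible shortening localized replacement, and it satisfies $1\le q<p\le R$. Part~\ref{item:blocks-rank} supplies $\rankp(\sum_{r\in I}P_r)=q$. Irreducibility is inherited directly from the block, so this part is pure bookkeeping.

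For part~\ref{item:localized-uniqueness}, assume minimality, so $\rankp(\T)=R$. If a balanced localized alternative exists on $I$, substitute the $Q_j$ for $\{P_r:r\in I\}$. This yields a length-$R$ decomposition whose multiset of rank-one terms differs from the supplied one, because the two submultisets differ and the complementary terms are shared. So the decomposition is not essentially unique. Conversely, suppose a length-$R$ competitor $\sum_{j=1}^RQ_j$ exists with a different multiset of terms. It is minimal since $R=\rankp(\T)$. Apply Lemma~\ref{lem:block-decomposition} to the exchange $\sum_rP_r=\sum_jQ_j$. Since both sides are minimal, every block has equal numbers of terms on each side, so every block is balanced.

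The remaining step, and the main obstacle, is to produce a block that is term-disjoint and has $p\ge2$. A block with $p=q=1$ reads $P_r=Q_j$, i.e.\ it matches identical terms. If every block had size one, the multisets would coincide, contradicting nonuniqueness, so some block has $p\ge2$. Term-disjointness needs an extra argument, because a block with $p\ge2$ could a priori share a term $P_r=Q_j$ with $r\in I_\nu$ and $j\in J_\nu$. If it did, then $(\{r\},\{j\})$ would be a nonempty proper subexchange of that block, contradicting irreducibility. Hence every irreducible block with $p\ge2$ is automatically term-disjoint.

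Finally we confirm that such a block is a localized alternative. The two multisets of the block are disjoint and nonempty, hence they differ, and the block is balanced with $2\le p=q\le R$. This completes part~\ref{item:localized-uniqueness}.
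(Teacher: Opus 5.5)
Your proof is correct. Part~\ref{item:localized-minimality} and the sufficiency direction of part~\ref{item:localized-uniqueness} match the paper's argument.

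For the necessity direction of part~\ref{item:localized-uniqueness} your route differs slightly.

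\begin{itemize}
\item The paper first cancels all common rank-one terms from the two length-$R$ decompositions. It then uses Lemma~\ref{lem:localized-minimal-subsums} to see that the surviving sides are still minimal, and decomposes the resulting term-disjoint exchange into balanced blocks. Term-disjointness of each block is inherited from the cancellation, and a block with $p=q=1$ is ruled out because it would be a common term that should have been cancelled.
\item You apply Lemma~\ref{lem:block-decomposition} directly to the full exchange, where both sides are minimal decompositions of $\T$. You rule out all blocks being singletons by nonuniqueness. You then get term-disjointness of any block with $p\ge2$ from irreducibility: a shared term $P_r=Q_j$ inside such a block would make $(\{r\},\{j\})$ a nonempty proper subexchange.
\end{itemize}

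Your version is slightly more economical, since it needs neither the cancellation step nor a separate appeal to Lemma~\ref{lem:localized-minimal-subsums}. In exchange, only the blocks of size at least two serve as witnesses, whereas in the paper every block of the cancelled exchange does.
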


\begin{proof}
\pfpart{Part \ref{item:localized-minimality}, sufficiency}
Replacing the terms indexed by $I$ in \eqref{eq:localized-replacement} gives $\T=\sum_{r\notin I}P_r+\sum_{j=1}^{q}Q_j$, a nonnegative decomposition of length $R-|I|+q<R$.

\pfpart{Part \ref{item:localized-minimality}, necessity}
Suppose that $\rho^*=\rankp(\T)<R$, and choose a minimal decomposition $\T=\sum_{j=1}^{\rho^*}Q_j$.  By Proposition~\ref{prop:blocks}, some block \eqref{eq:blocks-minimal} satisfies $p:=p_\nu\ge2$ and $q:=q_\nu<p$, and $\rankp\bigl(\sum_{r\in I_\nu}P_r\bigr)=q$.  This block is an irreducible shortening localized replacement on $I=I_\nu$, with $1\le q<p\le R$.

\pfpart{Part \ref{item:localized-uniqueness}, sufficiency}
A balanced localized alternative on $I$ gives the length-$R$ decomposition $\T=\sum_{r\notin I}P_r+\sum_{j=1}^{p}Q_j$. If it were equivalent to the supplied decomposition, canceling the common multiset $\{P_r:r\notin I\}$ would give $\{P_r:r\in I\}=\{Q_j:j\in[p]\}$, contrary to the definition of an alternative.

\pfpart{Part \ref{item:localized-uniqueness}, necessity}
Assume that the supplied decomposition is minimal and not essentially unique.  Choose an inequivalent length-$R$ decomposition $\T=\sum_{j=1}^{R}Q_j$.  Since $\rankp(\T)=R$, it is minimal as well.  Cancel from the two term multisets every common rank-one tensor, counted with multiplicity, and let $I_0$ and $J_0$ be the surviving index sets.  Both are nonempty, since otherwise the two multisets would be equal.  The surviving sums remain equal,
\begin{equation}\label{eq:localized-cancelled-exchange}
\sum_{i\in I_0}P_i=\sum_{j\in J_0}Q_j,
\end{equation}
and the two surviving term multisets are disjoint.  By Lemma~\ref{lem:localized-minimal-subsums}, applied to each of the two minimal decompositions, both sides of \eqref{eq:localized-cancelled-exchange} are minimal decompositions of the same tensor.  Lemma~\ref{lem:block-decomposition} therefore decomposes \eqref{eq:localized-cancelled-exchange} into irreducible blocks $\sum_{i\in I_\nu}P_i=\sum_{j\in J_\nu}Q_j$, and every block is balanced, $|I_\nu|=|J_\nu|$.

A size-one block would give $P_i=Q_j$, contradicting cancellation. Any block therefore gives the required irreducible, term-disjoint alternative with $2\le p=q\le R$.
\end{proof}

\subsection{Disjoint localization of the nonnegative rank}

Several shortening replacements on disjoint supplied subsets suffice to attain $\rankp(\T)$.

\begin{definition}\label{def:pattern}
A \emph{disjoint shortening pattern} is a finite family $\mathfrak D=((I_\nu,q_\nu))_{\nu=1}^{c}$ of \emph{blocks}, where the $I_\nu\subseteq[R]$ are pairwise disjoint, $p_\nu=|I_\nu|\ge2$, and $1\le q_\nu<p_\nu$. The empty family is allowed. Its \emph{resulting length} is
\begin{equation}\label{eq:resulting-length}
\rho(\mathfrak D)
=R-\sum_{\nu=1}^{c}(p_\nu-q_\nu).
\end{equation}
A \emph{realization} consists of localized replacements
\begin{equation}\label{eq:local-replacements}
\sum_{r\in I_\nu}P_r
=\sum_{j=1}^{q_\nu}Q_{\nu j},
\qquad\nu\in[c]
\end{equation}
by nonzero nonnegative rank-one tensors $Q_{\nu j}$; $\mathfrak D$ is \emph{realizable} if such tensors exist.
\end{definition}

\begin{lemma}\label{lem:combine}
Every realization of a disjoint shortening pattern $\mathfrak D$ gives a nonnegative decomposition of $\T$ of length $\rho(\mathfrak D)$, namely
\begin{equation}\label{eq:combined-decomposition}
\T
=\sum_{r\notin\bigcup_{\nu}I_\nu}P_r
+\sum_{\nu=1}^{c}\sum_{j=1}^{q_\nu}Q_{\nu j}.
\end{equation}
\end{lemma}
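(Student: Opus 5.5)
The plan is a direct substitution argument, using only that the index sets $I_\nu$ are pairwise disjoint and that the replacements \eqref{eq:local-replacements} hold. Put $U=\bigcup_{\nu=1}^{c}I_\nu$. I would first split the supplied sum as
\[
\T=\sum_{r\notin U}P_r+\sum_{r\in U}P_r
=\sum_{r\notin U}P_r+\sum_{\nu=1}^{c}\sum_{r\in I_\nu}P_r .
\]
This regrouping is exact because $[R]=([R]\setminus U)\sqcup I_1\sqcup\cdots\sqcup I_c$ is a disjoint union. Next, I substitute each inner sum using its localized replacement $\sum_{r\in I_\nu}P_r=\sum_{j=1}^{q_\nu}Q_{\nu j}$. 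This gives exactly \eqref{eq:combined-decomposition}.

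Then I check that the result is a nonnegative CP decomposition of the claimed length. Every $P_r$ is a nonzero nonnegative rank-one tensor by \eqref{eq:given}. Every $Q_{\nu j}$ is one by the definition of a realization. So the right-hand side of \eqref{eq:combined-decomposition} is a finite sum of nonzero nonnegative rank-one tensors. Its length is the number of terms, which is
\[
(R-|U|)+\sum_{\nu=1}^{c}q_\nu=R-\sum_{\nu=1}^{c}p_\nu+\sum_{\nu=1}^{c}q_\nu=\rho(\mathfrak D).
\]
Here $|U|=\sum_\nu p_\nu$ again uses disjointness, and the last equality is \eqref{eq:resulting-length}. In the case of the empty family, $U=\emptyset$, and the statement reduces to the supplied decomposition itself, of length $R=\rho(\emptyset)$.

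There is no real obstacle. The only point needing care is that length counts terms rather than distinct tensors. Coincidences such as $Q_{\nu j}=P_r$ for some $r\notin U$ do not reduce the count, because a decomposition is treated as a multiset of terms. So the length is exactly $\rho(\mathfrak D)$ as claimed, and in particular $\rankp(\T)\le\rho(\mathfrak D)$.
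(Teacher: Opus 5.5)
Your proposal is correct and follows the paper's proof essentially verbatim: the paper also uses pairwise disjointness of the $I_\nu$ to split off the subsums, substitutes the replacements \eqref{eq:local-replacements}, notes that all displayed terms are nonzero nonnegative rank-one tensors, and counts $R-\sum_\nu p_\nu+\sum_\nu q_\nu=\rho(\mathfrak D)$. Your added remarks on the empty pattern and on counting terms with multiplicity are harmless refinements.
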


The construction is illustrated in Figure~\ref{fig:pattern} in Appendix~\ref{app:proofs}.

\begin{theorem}[Disjoint localization of nonnegative rank]\label{thm:rank-localization}
Let $\T=\sum_{r=1}^{R}P_r$ be a supplied nonnegative decomposition.  Then
\begin{equation}\label{eq:rank-minimum}
\rankp(\T)
=\min\bigl\{\rho(\mathfrak D):
\mathfrak D\text{ is a realizable disjoint shortening pattern}\bigr\},
\end{equation}
where the empty pattern is realizable and has resulting length $R$.  Moreover, for every minimal decomposition $\T=\sum_{j=1}^{\rho^*}Q_j$ with $\rho^*=\rankp(\T)$, there are a pattern $\mathfrak D$ with $\rho(\mathfrak D)=\rho^*$ and a realization of $\mathfrak D$ such that every localized replacement in \eqref{eq:local-replacements} is irreducible and its right-hand side is a minimal decomposition of $\sum_{r\in I_\nu}P_r$.  The terms $Q_{\nu j}$ of this realization are terms of the given minimal decomposition.
\end{theorem}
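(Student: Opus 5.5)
The proof has two inequalities and then the structural refinement, which comes directly from Proposition~\ref{prop:blocks}.

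\emph{Upper bound.} The right-hand side of \eqref{eq:rank-minimum} is at least $\rankp(\T)$. The empty pattern is realizable with resulting length $R$, so the minimum is taken over a nonempty set. Since $\rho(\mathfrak D)$ is an integer between $\sum_\nu q_\nu$ and $R$, the set is finite and the minimum is attained. For any realizable pattern $\mathfrak D$, Lemma~\ref{lem:combine} produces a nonnegative decomposition \eqref{eq:combined-decomposition} of $\T$ of length $\rho(\mathfrak D)$. Hence $\rankp(\T)\le\rho(\mathfrak D)$ for every realizable $\mathfrak D$.

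\emph{Lower bound and the refinement.} Fix a minimal decomposition $\T=\sum_{j=1}^{\rho^*}Q_j$ and apply Proposition~\ref{prop:blocks}. This gives irreducible blocks \eqref{eq:blocks-minimal} with $[R]=\bigsqcup_\nu I_\nu$, $[\rho^*]=\bigsqcup_\nu J_\nu$ and $1\le q_\nu\le p_\nu$. Define $\mathfrak D$ to be the family of those blocks with $q_\nu<p_\nu$. For these blocks $p_\nu\ge2$ automatically, because $1\le q_\nu<p_\nu$, so $\mathfrak D$ is a disjoint shortening pattern. The balanced blocks, with $q_\nu=p_\nu$, are discarded.

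By item~\ref{item:blocks-count} of Proposition~\ref{prop:blocks}, balanced blocks contribute $0$ to $\sum_\nu(p_\nu-q_\nu)$, so $\rho(\mathfrak D)=R-\sum_\nu(p_\nu-q_\nu)=\rho^*$. The blocks themselves, $\sum_{r\in I_\nu}P_r=\sum_{j\in J_\nu}Q_j$ after relabeling $J_\nu$ as $[q_\nu]$, form a realization of $\mathfrak D$. Each of these replacements is irreducible by construction, and its terms are terms of the given minimal decomposition. By item~\ref{item:blocks-rank}, $\rankp(\sum_{r\in I_\nu}P_r)=q_\nu=|J_\nu|$, so each right-hand side is a minimal decomposition of its left-hand side. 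Since some minimal decomposition exists, this shows that the minimum is at most $\rho^*$. Together with the upper bound, this proves \eqref{eq:rank-minimum}.

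\emph{Main obstacle.} There is no real obstacle here, since all the substance sits in Proposition~\ref{prop:blocks}, which we take as given. The one point needing care is that the balanced blocks of the block decomposition must be excluded from $\mathfrak D$. They are not shortening, and a singleton block $P_r=Q_j$ would violate the requirement $p_\nu\ge2$. Excluding them costs nothing, because their supplied terms $P_r$ stay in the combined decomposition \eqref{eq:combined-decomposition}, and the rank count is unaffected since they contribute $p_\nu-q_\nu=0$.
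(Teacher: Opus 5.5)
Your proposal is correct and follows the paper's proof essentially step for step. You get $\rankp(\T)\le\rho(\mathfrak D)$ for every realizable pattern from Lemma~\ref{lem:combine}, and the reverse bound by keeping exactly the blocks of Proposition~\ref{prop:blocks} with $q_\nu<p_\nu$: their resulting length is $\rho^*$ by item~\ref{item:blocks-count}, and their right-hand sides are irreducible minimal decompositions by construction and item~\ref{item:blocks-rank}.
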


\begin{proof}
\pfpart{Every realizable pattern gives an upper bound on nonnegative rank}
By Lemma~\ref{lem:combine}, a realizable pattern $\mathfrak D$ gives a nonnegative decomposition of $\T$ of length $\rho(\mathfrak D)$.  Hence $\rankp(\T)\le \rho(\mathfrak D)$ for every realizable pattern.  The empty pattern is realizable and gives the supplied length $R$, so the minimum in \eqref{eq:rank-minimum} is taken over a nonempty finite set of patterns and is at least $\rankp(\T)$.

\pfpart{A minimal decomposition gives a realizable pattern with resulting length $\rho^*$}
Choose a minimal decomposition $\T=\sum_{j=1}^{\rho^*}Q_j$ and use the blocks \eqref{eq:blocks-minimal} of Proposition~\ref{prop:blocks}.  Let $S\subseteq[c]$ be the set of indices with $q_\nu<p_\nu$.  The sets $I_\nu$, $\nu\in S$, are pairwise disjoint and each has size $p_\nu\ge2$.  Relabeling the right-hand terms of each such block from $1$ to $q_\nu$ gives a realization of the disjoint shortening pattern $\mathfrak D=\bigl((I_\nu,q_\nu)\bigr)_{\nu\in S}$.  Every localized replacement of this realization is irreducible, and its right-hand side is a minimal decomposition of $\sum_{r\in I_\nu}P_r$ by Proposition~\ref{prop:blocks}\ref{item:blocks-rank}.  Since $p_\nu=q_\nu$ for $\nu\notin S$, Proposition~\ref{prop:blocks}\ref{item:blocks-count} gives
\[
\rho(\mathfrak D)
=R-\sum_{\nu\in S}(p_\nu-q_\nu)
=R-\sum_{\nu=1}^{c}(p_\nu-q_\nu)
=\rho^*.
\]
Thus the minimum in \eqref{eq:rank-minimum} is at most $\rho^*=\rankp(\T)$.  Together with the first part, equality follows.
\end{proof}

\section{Effective order of irreducible positive exchanges}\label{sec:budgets}

Section~\ref{sec:localization} shows that every obstruction to minimality or essential uniqueness is witnessed by a localized replacement, and that such a replacement decomposes into irreducible blocks.  This section bounds how many modes an irreducible block can actually use.  The bound depends only on the numbers of terms in the block, not on the order of the tensor.

Throughout this section we consider an irreducible positive exchange
\begin{equation}\label{eq:block-exchange}
\sum_{r\in I}P_r=\sum_{j\in J}Q_j
\end{equation}
between supplied terms $P_r$, $r\in I$, and nonzero nonnegative rank-one tensors $Q_j=b_j^{(1)}\otimes\cdots\otimes b_j^{(d)}$, $j\in J$.  We write $p=|I|$ and $q=|J|$.  Every irreducible localized replacement is of this form.  For every mode $m$ let
\[
U_m=\spanof\{a_r^{(m)}:r\in I\},
\qquad
V_m=\spanof\{b_j^{(m)}:j\in J\}
\]
be the mode-$m$ factor spans of the two sides of \eqref{eq:block-exchange}, and put
\begin{equation}\label{eq:effective-order}
\delta_m=\dim(U_m+V_m),
\qquad
\kappa=\sum_{m=1}^{d}(\delta_m-1),
\qquad
H=\{m:\delta_m>1\}.
\end{equation}
We call $\kappa$ the \emph{effective order} and $H$ the \emph{active mode set} of the exchange.  For a mode $m\notin H$ we have $\delta_m=1$, and all mode-$m$ factors of both sides are positive multiples of one supplied factor.  Theorem~\ref{thm:budget} below bounds $\kappa\le p+q-2$.  It is the dimension count behind the positive splitting inequality of \citet[Theorem~4]{WangYuan2026}, applied to the joint spans $U_m+V_m$ and stated without the scattering term.  Consequently, the exchange has at most $p+q-2$ active modes, and $\sum_{m\in H}\delta_m\le2(p+q-2)$.  These bounds do not depend on the order $d$.

The count rests on the splitting theorem of \citet{LovitzPetrov2023}, which generalizes the theorem of \citet{Kruskal1977}.  We recall the notion of splitting and the theorem in the form used in \citet[Definition~2 and Theorem~2]{WangYuan2026}.

\begin{definition}\label{def:splitting}
A finite multiset $\mathcal E$ of nonzero vectors splits if it has a nonempty proper submultiset $\mathcal F$ such that
\[
\spanof\mathcal E
=
\spanof\mathcal F\oplus\spanof(\mathcal E\setminus\mathcal F).
\]
If no such $\mathcal F$ exists, $\mathcal E$ is called connected.
\end{definition}

\begin{theorem}[Lovitz--Petrov splitting theorem]\label{thm:LP-splitting}
Let $\mathcal E$ be a finite multiset of nonzero product tensors $x_1\otimes\cdots\otimes x_d$ over $\R$, and for each mode $m$ let $r_m$ be the dimension of the span of the mode-$m$ factors $x_m$ of the members of $\mathcal E$.  If
\[
\dim\spanof\mathcal E
\le
\sum_{m=1}^{d}(r_m-1),
\]
then $\mathcal E$ splits.
\end{theorem}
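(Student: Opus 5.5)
The plan is to reduce to the two-mode case by grouping modes, and then prove the two-mode case separately. Throughout I use the contrapositive form: \emph{if $\mathcal E$ is connected, then $\dim\spanof\mathcal E\ge\sum_m(r_m-1)+1$.} The induction is on the order $d$. For $d=1$ the hypothesis would say $r_1\le r_1-1$, which is impossible, so the statement is vacuous there.

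\emph{Inductive step.} Fix $d\ge2$ and write each member of $\mathcal E$ as $x'\otimes x_d$ with $x'=x_1\otimes\cdots\otimes x_{d-1}$. Let $\mathcal E'$ be the multiset of the $x'$, with multiplicity, and put $r'=\dim\spanof\mathcal E'$. I split into two cases.
\begin{enumerate}[label=\textup{(\arabic*)}]
\item Suppose $r'\le\sum_{m<d}(r_m-1)$. Then $\mathcal E'$ splits by the inductive hypothesis, say $\spanof\mathcal E'=W_1\oplus W_2$ along a partition $\mathcal F'\sqcup(\mathcal E'\setminus\mathcal F')$. Take the corresponding partition $\mathcal F\sqcup(\mathcal E\setminus\mathcal F)$ of $\mathcal E$. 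Then $\spanof\mathcal F\subseteq W_1\otimes\R^{n_d}$ and $\spanof(\mathcal E\setminus\mathcal F)\subseteq W_2\otimes\R^{n_d}$. These two subspaces meet only in $0$ because $W_1\cap W_2=0$, so $\mathcal E$ splits.
\item Otherwise $r'-1\ge\sum_{m<d}(r_m-1)$. Then the hypothesis gives
\[
\dim\spanof\mathcal E\le\sum_{m=1}^d(r_m-1)\le(r'-1)+(r_d-1).
\]
The two-mode case, applied to the pairs $(x',x_d)$ with factor-span dimensions $r'$ and $r_d$, shows that $\mathcal E$ splits. A splitting as a set of matrices $x'\otimes x_d$ is literally a splitting of $\mathcal E$.
\end{enumerate}

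\emph{Two-mode case.} Everything therefore rests on the following claim. \emph{If $\mathcal E=\{x_i\otimes y_i\}$ is connected, with $r_1=\dim\spanof\{x_i\}$ and $r_2=\dim\spanof\{y_i\}$, then $\dim\spanof\mathcal E\ge r_1+r_2-1$.} I would argue by induction on $|\mathcal E|$ via an ear decomposition of the connected linear matroid on $\mathcal E$. Start from a circuit $C_1$. Then repeatedly add a circuit $C_k$ that meets the union $F_{k-1}$ of the previous circuits and contains new elements, which is possible because $\mathcal E$ is connected. Each ear raises $\dim\spanof$ by exactly $|C_k\setminus F_{k-1}|-1$ in the matroid rank sense. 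It therefore suffices to show two things.
\begin{enumerate}[label=\textup{(\alph*)}]
\item \emph{Base.} A single circuit of $k$ rank-one matrices satisfies $k-1\ge(\dim X_C-1)+(\dim Y_C-1)$.
\item \emph{Ear.} Each ear raises $\dim X+\dim Y$ by at most $|C_k\setminus F_{k-1}|-1$.
\end{enumerate}
For the base I would write the minimal dependence as $\sum_{i\in S}c_ix_i\otimes y_i=-\sum_{i\in C\setminus S}c_ix_i\otimes y_i$ for a suitable partition $S\sqcup(C\setminus S)$ of $C$. The two sides are the same matrix, so its column space and row space are shared by both sides. The aim is then to use minimality of the circuit to force the shared column and row spaces to be large enough to absorb the new directions.

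\emph{Main obstacle.} I expect the two-mode circuit/ear estimate to be the hardest step. It is a Kruskal-type rank count, and I am least sure the ear bookkeeping closes without a more global argument. If it does not, I would instead take a minimal counterexample and split off a maximal proper connected subfamily $\mathcal F$. I would then compare $\dim(\spanof\mathcal F\cap\spanof(\mathcal E\setminus\mathcal F))$ with the overlaps $\dim(X_{\mathcal F}\cap X_{\mathcal E\setminus\mathcal F})$ and $\dim(Y_{\mathcal F}\cap Y_{\mathcal E\setminus\mathcal F})$ of the factor spans. The key inequality to prove is that a nonzero intersection of the matrix spans forces at least one overlap dimension, summed over both factors, per dimension lost.
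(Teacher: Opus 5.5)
The paper does not prove this theorem; it imports it from Lovitz--Petrov, so your argument has to stand on its own. Your reduction from $d$ modes to two modes is correct. In case (1) the splitting of $\mathcal E'$ lifts because $(W_1\otimes\R^{n_d})\cap(W_2\otimes\R^{n_d})=(W_1\cap W_2)\otimes\R^{n_d}=0$. In case (2) the hypothesis is exactly the two-mode hypothesis for the pairs $(x',x_d)$. However, this moves the entire content of the theorem into the two-mode claim. There you only state what the base and the ear step should give, and neither is proved. The fallback via a minimal counterexample just restates the needed estimate.

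More seriously, your base inequality (a) is off by one. A circuit has $\dim\spanof C=k-1$, so (a) says $\dim\spanof C\ge(\dim X_C-1)+(\dim Y_C-1)$. Summing it with (b) over an ear decomposition gives only $\dim\spanof\mathcal E\ge r_1+r_2-2$, which is the theorem with a strict inequality in its hypothesis. That weaker form is not enough here. Step~2 of Theorem~\ref{thm:budget} needs "connected $\Rightarrow\dim\spanof\mathcal E\ge\sum_m(r_m-1)+1$", and your version would only give $\kappa\le p+q-1$, contradicting the sharpness in Theorem~\ref{thm:attainment}.

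The three missing ingredients are as follows.
\begin{enumerate}[label=\textup{(\roman*)}]
\item The correct base is $k\ge\dim X_C+\dim Y_C$. It follows at once from Sylvester's rank inequality applied to $X_C\operatorname{diag}(c)Y_C^{\mathsf T}=0$ with all $c_i\ne0$.
\item For the ear step, write $D_k=C_k\setminus F_{k-1}$. Let $\tilde X,\tilde Y$ have as columns the residues of $x_j,y_j$, $j\in D_k$, modulo the old spans $X_F=\spanof\{x_i:i\in F_{k-1}\}$ and $Y_F$. Sylvester applied to $\tilde X\operatorname{diag}(c)\tilde Y^{\mathsf T}=0$ yields only the bound $|D_k|$. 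The missing $-1$ comes from the old part of the circuit. The matrix $w=\sum_{j\in D_k}c_j\,x_jy_j^{\mathsf T}$ equals minus the sum over $C_k\cap F_{k-1}\ne\varnothing$. It is therefore nonzero by minimality of $C_k$, and its column space lies in $X_F$ and its row space in $Y_F$. Together these force $\col(\operatorname{diag}(c)\tilde Y^{\mathsf T})\subsetneq\ker\tilde X$.
\item The rank rises by exactly $|D_k|-1$ only if you choose $C_k$ with $|D_k|$ minimal among circuits meeting $F_{k-1}$, so that $D_k$ is a circuit of the contraction. For an arbitrary ear the rank can rise by less.
\end{enumerate}
With these ingredients your outline becomes a proof. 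As written, the key two-mode estimate is missing and its base case is misstated.
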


We apply Theorem~\ref{thm:LP-splitting} through its contrapositive to the signed multiset
\[
\mathcal E=\{P_r:r\in I\}\cup\{-Q_j:j\in J\}
\]
of the exchange \eqref{eq:block-exchange}, where the minus sign is absorbed into one factor of each $Q_j$, so that every member is a real product tensor.

\begin{theorem}[Effective-order bound]\label{thm:budget}
For the irreducible positive exchange \eqref{eq:block-exchange}, the following two statements hold.
\begin{enumerate}[label=\textup{(\roman*)}]
\item\label{item:budget-effective-order}
\begin{equation}\label{eq:master-joint-budget}
\kappa\le p+q-2.
\end{equation}
\item
\begin{equation}\label{eq:master-active-budget}
|H|\le\kappa,
\qquad
\sum_{m\in H}\delta_m\le2\kappa.
\end{equation}
\end{enumerate}
\end{theorem}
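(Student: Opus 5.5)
Part (i) will follow from the contrapositive of Theorem~\ref{thm:LP-splitting}, applied to the signed multiset $\mathcal E=\{P_r:r\in I\}\cup\{-Q_j:j\in J\}$. We need an upper bound on $\dim\spanof\mathcal E$ and a proof that $\mathcal E$ does not split.

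\emph{Dimension bound.} The exchange \eqref{eq:block-exchange} says $\sum_{e\in\mathcal E}e=0$. This is a nontrivial linear relation among the $p+q$ members, so $\dim\spanof\mathcal E\le p+q-1$.

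\emph{The factor spans are the $\delta_m$.} The mode-$m$ factors of members of $\mathcal E$ are the $a_r^{(m)}$ and $\pm b_j^{(m)}$. Their span is $U_m+V_m$, so $r_m=\delta_m$ and $\sum_m(r_m-1)=\kappa$.

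\emph{$\mathcal E$ is connected.} Suppose $\mathcal E$ splits as $\mathcal F\sqcup(\mathcal E\setminus\mathcal F)$ with $\spanof\mathcal E=\spanof\mathcal F\oplus\spanof(\mathcal E\setminus\mathcal F)$. Then the relation $\sum_{e\in\mathcal F}e=-\sum_{e\in\mathcal E\setminus\mathcal F}e$ puts both sides in the intersection of the two spans, which is $\{0\}$. So $\sum_{e\in\mathcal F}e=0$. Writing $\mathcal F$ as $\{P_r:r\in I'\}\cup\{-Q_j:j\in J'\}$ gives a subexchange $(I',J')$ with $\sum_{r\in I'}P_r=\sum_{j\in J'}Q_j$.

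Since $\mathcal F$ is nonempty and proper, $(I',J')$ is a nonempty proper subexchange. (By nonnegativity both sides are then nonempty, as recalled after Definition~\ref{def:exchange}, though this is not needed.) This contradicts irreducibility, so $\mathcal E$ is connected. The only care needed is that multisets may contain repeated tensors. Submultisets are handled index-wise, so this causes no problem.

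\emph{Conclusion of (i).} By the contrapositive of Theorem~\ref{thm:LP-splitting}, connectedness gives $\dim\spanof\mathcal E>\kappa$. Hence $\kappa<p+q-1$, that is, $\kappa\le p+q-2$.

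\emph{Part (ii).} Each $m\in H$ has $\delta_m-1\ge1$, and modes outside $H$ contribute $0$. Summing gives $|H|\le\sum_{m\in H}(\delta_m-1)=\kappa$. Then
\[
\sum_{m\in H}\delta_m=\kappa+|H|\le2\kappa.
\]

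The main obstacle is connectedness, specifically matching the direct-sum definition of splitting to the notion of a subexchange. The key point is that the zero-sum relation restricted to $\mathcal F$ must itself vanish.
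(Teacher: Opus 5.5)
Your proof is correct and follows the paper's argument: you show connectedness of the signed multiset from the zero-sum relation and irreducibility, apply the contrapositive of the Lovitz--Petrov splitting theorem, and use the same counting for part (ii). The only cosmetic difference is in the dimension count. You bound $\dim\spanof\mathcal E\le p+q-1$ directly from the nontrivial relation $\sum_{e\in\mathcal E}e=0$, whereas the paper writes $\dim(\mathscr U+\mathscr V)=\dim\mathscr U+\dim\mathscr V-h$ and uses $h\ge1$ because the common block sum is a nonzero element of $\mathscr U\cap\mathscr V$; both give the same bound.
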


\begin{proof}
Steps~\ref{step:budget-connected} and~\ref{step:budget-bound} are Steps~1 and~3 in the proof of Theorem~4 of \citet{WangYuan2026}, with the scattering term omitted.  That paper states the resulting inequality only inside its proof, so we include the short argument.

\stepcounter{proofstepblock}
\proofstep[step:budget-connected]{irreducibility implies that $\mathcal E$ is connected}
Suppose $\spanof\mathcal E=\spanof\mathcal E_1\oplus\spanof\mathcal E_2$ for a partition $\mathcal E=\mathcal E_1\sqcup\mathcal E_2$ into nonempty parts.  The sum of all elements of $\mathcal E$ is zero, so the sums of the elements of $\mathcal E_1$ and of $\mathcal E_2$ add to zero, and directness forces each to vanish.  Each part contains terms from both sides of the exchange, because a nonempty sum of nonzero nonnegative tensors does not vanish.  Hence either part is a nonempty proper subexchange, contradicting irreducibility.

\proofstep[step:budget-bound]{the bound \eqref{eq:master-joint-budget}}
Let $\mathscr U=\spanof\{P_r:r\in I\}$, $\mathscr V=\spanof\{Q_j:j\in J\}$, and $h=\dim(\mathscr U\cap\mathscr V)$.  The mode-$m$ factor span of $\mathcal E$ is $U_m+V_m$ and therefore has dimension $\delta_m$, because the multiplication of one factor of each $Q_j$ by $-1$ does not change factor spans.  The total span of $\mathcal E$ is $\mathscr U+\mathscr V$.  Since $\mathcal E$ is connected, the contrapositive of Theorem~\ref{thm:LP-splitting}, together with integrality of the dimensions, gives
\[
\sum_{m=1}^{d}(\delta_m-1)
\le
\dim(\mathscr U+\mathscr V)-1.
\]
The left side is $\kappa$, while
\[
\dim(\mathscr U+\mathscr V)
=
\dim\mathscr U+\dim\mathscr V-h.
\]
Hence $\kappa\le\dim\mathscr U+\dim\mathscr V-h-1$.  The common block sum is a nonzero tensor belonging to $\mathscr U\cap\mathscr V$, so $h\ge1$.  Moreover
\[
\dim\mathscr U\le |I|,
\qquad
\dim\mathscr V\le |J|.
\]
Substitution gives \eqref{eq:master-joint-budget}.

\proofstep{active modes}
Every $m\in H$ satisfies $\delta_m-1\ge1$, so
\[
|H|
\le
\sum_{m\in H}(\delta_m-1)
\le
\kappa.
\]
Since the inactive modes contribute zero to $\kappa$,
\[
\sum_{m\in H}\delta_m
=
\sum_{m\in H}(\delta_m-1)+|H|
=
\kappa+|H|
\le2\kappa.
\]
This proves \eqref{eq:master-active-budget}.
\end{proof}

Theorem~\ref{thm:budget} is the step that removes the ambient tensor order from the subsequent variable count: every irreducible obstruction has a core whose total effective dimension is controlled only by the numbers of terms on its two sides.  Irreducibility is essential: for a reducible exchange the effective order is not bounded by any function of $p$ and $q$ (Example~\ref{ex:reducible} in Appendix~\ref{app:necessity}).
Although the dimension inequality used in Theorem~\ref{thm:budget} is implicit in the proof of \citet{WangYuan2026}, the present paper isolates it as an independent structural result and uses it in a different role: it bounds the dimension of every irreducible obstruction searched in complete certification. Combined with localization and the profile representation below, it yields a bounded-dimensional exact decision-and-search procedure, extending the earlier sufficient criteria to complete certification. The sharpness results in Section~\ref{sec:sharpness} show that this dimension bound is attained by explicit families.

\begin{remark}\label{rem:scattering}
Let $\sigma_m=\dim U_m$, and let $\tau(I)$ be the positive scattering term of the supplied terms $P_r$, $r\in I$, defined in \citet[Definition~4]{WangYuan2026}.  By Step~2 in the proof of Theorem~4 of \citet{WangYuan2026}, the irreducible exchange \eqref{eq:block-exchange} satisfies
\[
\sum_{m=1}^{d}(\delta_m-\sigma_m)\ge\tau(I).
\]
Since $\kappa=\beta(I)+\sum_m(\delta_m-\sigma_m)$ with $\beta(I)=\sum_m(\sigma_m-1)$, it gives $\kappa\ge\beta(I)+\tau(I)$.  Together with \eqref{eq:master-joint-budget}, this shows that there is no irreducible positive exchange \eqref{eq:block-exchange} with $p+q-2<\beta(I)+\tau(I)$.  This is the criterion of \citet{WangYuan2026}, see Section~\ref{sec:discussion}.
\end{remark}

\section{Exact reduction to bounded-dimensional semialgebraic geometry}\label{sec:normalized}

In this section we fix the set $I$ of supplied terms and the number $q$ of competing terms, and we describe the localized replacements on $I$ with $q$ terms by polynomial systems in the sense of Definition~\ref{def:polysystem} below.  The number of variables in these systems does not depend on $d$ or on $n_1,\ldots,n_d$.  The construction has three parts.  In every mode $m$ with $\delta_m>1$, only $\delta_m$ rows of the competing factor matrix are variables, and every other row is a linear combination of these rows (Figure~\ref{fig:rows}).  The modes with $\delta_m=1$ contribute only scalars (Figure~\ref{fig:core}).  Finally, every column is normalized to coordinate sum one.  Section~\ref{subsec:profiles} defines profiles and the core, and Remark~\ref{rem:why-added-rows} explains why a profile must allow added rows.  Section~\ref{subsec:derivation} derives the conditions which an irreducible localized replacement satisfies, and these conditions define the system in Section~\ref{subsec:system}.  Section~\ref{subsec:equivalence} proves that the system is equivalent to the existence of a localized replacement (Theorem~\ref{thm:representation}).  The system still contains one coefficient row for each ambient row, and Section~\ref{sec:lifting} eliminates these variables.

\paragraph{Polynomial systems.} All systems in Sections~\ref{sec:normalized} and~\ref{sec:one-shot} are of the following kind.
\begin{definition}\label{def:polysystem}
Let $F=(F_1,\ldots,F_s)$ be a family of polynomials in $\mathbb Z[y_1,\ldots,y_k]$.
\begin{enumerate}[label=\textup{(\roman*)}]
\item A \emph{basic condition} on $y$ is a finite system of equations $F_i=0$ and of inequalities $F_i>0$, $F_i\ge0$, $F_i<0$, or $F_i\le0$.
\item A \emph{polynomial system} is a finite list of conditions on $y$.  Every condition of the list is either a basic condition or an implication between two basic conditions $\mathsf A$ and $\mathsf B$, i.e. a condition which holds at $y$ if $\mathsf A$ does not hold at $y$ or $\mathsf B$ holds at $y$.
\item A \emph{solution} of a polynomial system is a point $y\in\R^k$ at which all conditions of the list hold.
\item The \emph{length} of a polynomial system is the total number of equations and inequalities which occur in the list.
\end{enumerate}
\end{definition}

A system whose polynomials have rational coefficients becomes a polynomial system after every polynomial has been multiplied by a positive common denominator of its coefficients. A solution of a polynomial system can have irrational coordinates, see Section~\ref{sec:model}.  Exact representations of such solutions are defined in Section~\ref{subsec:exact-computation}.

Let $I\subseteq[R]$ be nonempty, write $p=|I|$, fix an ordering of $I$, and put
\[
A_I^{(m)}=[a_r^{(m)}:r\in I]\in\Qpos^{n_m\times p}.
\]
Fix an integer $q\ge1$.

\subsection{Profiles, selected rows, and the core}\label{subsec:profiles}

\begin{definition}\label{def:profile}
Fix $I\subseteq[R]$, $p=|I|$, and $q\ge1$.  For each mode put
\[
\sigma_m=\rank A_I^{(m)}
\]
and fix one row basis $S_m^{0}\subseteq[n_m]$ of $A_I^{(m)}$.  To make the choice unique, we take the first row basis in the lexicographic order.  A \emph{profile} of the pair $(I,q)$ consists of integers $\delta_m$, $m\in[d]$, and, for every mode with $\delta_m>1$, of a row set $K_m$ satisfying
\begin{align}
&\sigma_m\le\delta_m\le\min\{\sigma_m+q,\,n_m\},
\label{eq:profile-dimension-range}\\
&\kappa:=\sum_{m=1}^{d}(\delta_m-1)\le p+q-2,
\label{eq:profile-budget}\\
&K_m\subseteq[n_m]\setminus S_m^{0},\qquad
|K_m|=\delta_m-\sigma_m.
\label{eq:profile-rows}
\end{align}
The modes in $H=\{m:\delta_m>1\}$ are called \emph{active}, and the other modes are called \emph{inactive}.  For $m\in H$ put
\[
S_m=S_m^{0}\cup K_m,
\qquad
X_m=A_I^{(m)}[S_m,:]\in\Qpos^{\delta_m\times p}.
\]
The rows in $S_m$ are the \emph{selected rows} of mode $m$, and those in $[n_m]\setminus S_m$ are its \emph{omitted rows}.  For $m\notin H$ we have $\delta_m=1$ by the definition of $H$, and hence $\sigma_m=1$ by \eqref{eq:profile-dimension-range}.  In such a mode choose a fixed reference index $r_0\in I$, put $u_m=a_{r_0}^{(m)}$, and write uniquely
\begin{equation}\label{eq:inactive-ratios}
a_r^{(m)}=c_{mr}u_m,
\qquad r\in I,
\end{equation}
where every $c_{mr}$ is a positive rational number.
\end{definition}

\begin{remark}\label{rem:tau-profile}
By Remark~\ref{rem:scattering}, the condition $\sum_{m\in H}|K_m|\ge\tau(I)$ can be added to \eqref{eq:profile-dimension-range}--\eqref{eq:profile-rows} without the loss of any irreducible localized replacement, because $|K_m|=\delta_m-\sigma_m$.  This reduces the number of profiles.
\end{remark}

\begin{remark}[Why added rows are needed]\label{rem:why-added-rows}
The rows $K_m$ cannot be omitted in general, because the competing factors need not lie in the column space of the supplied factor matrices.  For the row decomposition of the cycle matrix $C_R$ with $R\ge4$ even, every term-disjoint balanced replacement has $|K_2|=1$, and a search restricted to $K_m=\varnothing$ would miss the nonuniqueness of the supplied decomposition (Corollary~\ref{cor:added-rows} and Remark~\ref{rem:compression} in Appendix~\ref{app:necessity}).
\end{remark}

\begin{figure}[t]
\centering
\begin{tikzpicture}[x=1cm,y=1cm]
\path[known]   (0,0) rectangle (1.9,5.0);
\path[knownsel](0,3.4) rectangle (1.9,5.0);
\path[unknown] (1.9,3.4) rectangle (3.8,5.0);
\path[knownsel](0,1.45) rectangle (1.9,1.70);
\path[unknown] (1.9,1.45) rectangle (3.8,1.70);
\draw (0,1.45) rectangle (3.8,1.70);
\draw[thick] (0,0) rectangle (3.8,5.0);
\draw (1.9,0) -- (1.9,5.0);
\draw[densely dashed] (0,4.0) -- (3.8,4.0);
\draw[thick] (0,3.4) -- (3.8,3.4);
\draw[brace] (0,5.12) -- node[above=4pt,lab]{$A_I^{(m)}$} (1.9,5.12);
\draw[brace] (1.9,5.12) -- node[above=4pt,lab]{$B^{(m)}$} (3.8,5.12);
\node[slab] at (0.95,-0.3) {$p$ columns};
\node[slab] at (2.85,-0.3) {$q$ columns};
\draw[brace] (-0.12,4.0) -- node[left=4pt,lab,align=right]{$S_m^{0}$\\[-2pt]{\footnotesize $\sigma_m$ rows}} (-0.12,5.0);
\draw[brace] (-0.12,3.4) -- node[left=4pt,lab,align=right]{$K_m$\\[-2pt]{\footnotesize $\delta_m-\sigma_m$ rows}} (-0.12,4.0);
\draw[brace] (-0.12,0) -- node[left=4pt,lab,align=right]{omitted rows\\[-2pt]{\footnotesize $n_m-\delta_m$ rows}} (-0.12,3.4);
\node[slab,anchor=west] at (3.85,1.575) {row $i$};
\begin{scope}[shift={(7.0,3.0)}]
\path[knownsel](0,0) rectangle (1.9,1.6);
\path[unknown] (1.9,0) rectangle (3.8,1.6);
\draw[thick] (0,0) rectangle (3.8,1.6);
\draw (1.9,0) -- (1.9,1.6);
\node[lab,fill=black!45,inner sep=1.5pt] at (0.95,0.8) {\color{white}$X_m$};
\node[lab,fill=white,inner sep=1.5pt] at (2.85,0.8) {$Y_m$};
\draw[brace] (3.92,1.6) -- node[right=4pt,lab]{$\delta_m$ rows} (3.92,0);
\node[slab] at (0.95,1.85) {known};
\node[slab] at (2.85,1.85) {unknown};
\end{scope}
\draw[arr] (3.9,4.2) -- node[above,slab,sloped]{rows $S_m=S_m^{0}\cup K_m$} (6.9,3.9);
\draw[arr] (7.6,2.9) .. controls (7.0,1.6) and (5.6,1.575) .. (4.75,1.575);
\node[lab,anchor=west] at (6.3,1.35) {$\ell_{m,i}$};
\node[lab,anchor=west,align=left] at (5.0,0.35)
{$A_I^{(m)}[i,:]=\ell_{m,i}X_m$,\qquad $B^{(m)}[i,:]=\ell_{m,i}Y_m\ \ge 0$};
\end{tikzpicture}
\caption{One active mode $m$ of a localized replacement whose competing terms have the factor matrix $B^{(m)}$.  The left block of $[\,A_I^{(m)}\ \,B^{(m)}\,]$ is known, and the right block is unknown.  The rows in $S_m$ form a row basis, and every omitted row $i$ is the combination $\ell_{m,i}[\,X_m\ \,Y_m\,]$ of the selected rows, where $\ell_{m,i}Y_m\ge0$.}
\label{fig:rows}
\end{figure}

\begin{lemma}\label{lem:core}
For a profile of $(I,q)$, every column of $X_m$, $m\in H$, is nonzero.  Define
\[
s_{mr}=\mathbf1^{\mathsf T}X_m[:,r]>0,
\qquad
\bar X_m[:,r]=s_{mr}^{-1}X_m[:,r]
\]
and
\begin{equation}\label{eq:supplied-weight}
\alpha_r
=
\left(\prod_{m\notin H}c_{mr}\right)
\left(\prod_{m\in H}s_{mr}\right)>0.
\end{equation}
Then all $\bar X_m$ and all $\alpha_r$ are rational, and they are computed from the input.  The number $\Omega_I=\sum_{r\in I}\alpha_r$ is positive.
\end{lemma}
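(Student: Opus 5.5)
The plan is to check the claims directly from Definition~\ref{def:profile}; the only real content is that each column of $X_m$ is nonzero.

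\emph{Columns of $X_m$ are nonzero.} Fix $m\in H$ and $r\in I$. The rows in $S_m^{0}$ form a row basis of $A_I^{(m)}$. Hence every row $i$ of $A_I^{(m)}$ is a linear combination $\ell_iA_I^{(m)}[S_m^{0},:]$. Suppose the column $X_m[:,r]$ vanished. Then in particular $A_I^{(m)}[S_m^{0},r]=0$, so every entry $A_I^{(m)}[i,r]=\ell_iA_I^{(m)}[S_m^{0},r]$ is zero. That makes the whole column $a_r^{(m)}$ zero, contradicting the input assumption of no zero column in~\eqref{eq:input-factors}.

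\emph{The sums $s_{mr}$ are positive.} The entries of $X_m$ are nonnegative rationals, because $X_m$ is a row submatrix of $A_I^{(m)}\in\Qpos^{n_m\times p}$. A nonzero nonnegative vector has positive coordinate sum, so $s_{mr}>0$. Since $s_{mr}$ is a sum of rationals, it is rational. The normalized matrix $\bar X_m$ is then rational, being $X_m$ with each column divided by a positive rational.

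\emph{The ratios $c_{mr}$ and the weights $\alpha_r$.} For $m\notin H$, the existence and positivity of $c_{mr}$ is already asserted in~\eqref{eq:inactive-ratios}. To justify it: $\sigma_m=1$, so each $a_r^{(m)}$ is a scalar multiple of $u_m=a_{r_0}^{(m)}$, and $u_m$ is nonzero. Both vectors are nonzero and nonnegative, so the scalar is positive. It is the ratio of two rational entries in any coordinate where $u_m$ is positive, hence rational. It follows that $\alpha_r$ is a finite product of positive rationals, so it is a positive rational. Then $\Omega_I=\sum_{r\in I}\alpha_r$ is a nonempty sum of positive rationals, hence positive.

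\emph{Computability.} The quantities are obtained from the input as follows. The lexicographically first row basis $S_m^{0}$ is found by Gaussian elimination over $\Q$. The matrix $X_m$ is read off from $S_m=S_m^{0}\cup K_m$. The numbers $s_{mr}$, $c_{mr}$, $\alpha_r$ and $\Omega_I$ then come from rational arithmetic.

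No step is a real obstacle. The one point needing care is the nonvanishing of the columns of $X_m$, which relies on $S_m$ containing a full row basis $S_m^{0}$ rather than on the added rows $K_m$.
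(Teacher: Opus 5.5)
Your proposal is correct and follows the paper's proof essentially step for step. You use the same row-basis argument for nonzero columns of $X_m$, the same positivity of $s_{mr}$ from nonnegativity, and the same positive rational ratios $c_{mr}$ in the inactive modes. You only add a remark on computing $S_m^{0}$ by Gaussian elimination, which the paper leaves implicit here.
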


\begin{proof}
The rows indexed by $S_m^{0}$ span every row of $A_I^{(m)}$.  If the restriction of some supplied column to those rows were zero, every row of the full matrix would have zero in that column, contradicting the assumption that every supplied factor is nonzero.  The same column remains nonzero after the additional rows in $K_m$ are appended.  Since the column is nonnegative, its sum $s_{mr}$ over the selected rows is strictly positive.

The matrices $X_m$ have rational entries, so the $s_{mr}$ and the normalized matrices $\bar X_m$ are rational.  In an inactive mode, the one-dimensional space $\col A_I^{(m)}$ is generated by the nonzero nonnegative rational vector $u_m$.  Every other nonzero nonnegative rational column on that line is a unique positive rational multiple of $u_m$, proving that the numbers $c_{mr}$ in \eqref{eq:inactive-ratios} are positive and rational.  Formula \eqref{eq:supplied-weight} and the positivity of its factors prove the remaining assertions.
\end{proof}

We call the tensor
\begin{equation}\label{eq:core}
\sum_{r\in I}\alpha_r\bigotimes_{m\in H}\bar X_m[:,r]
\ \in\ \bigotimes_{m\in H}\R^{\delta_m}
\end{equation}
the \emph{core} of $\sum_{r\in I}P_r$ for the given profile. If $H=\varnothing$, the empty tensor product is interpreted as the scalar one.  By \eqref{eq:inactive-ratios} and \eqref{eq:supplied-weight}, the restriction of $\sum_{r\in I}P_r$ to the selected rows of the active modes is the tensor product of the core and of the vectors $u_m$, $m\notin H$, see Figure~\ref{fig:core}.  The core has $\prod_{m\in H}\delta_m$ entries.  This number is bounded by a function of $p$ and $q$, because $\sum_{m\in H}\delta_m=|H|+\kappa\le2\kappa\le2(p+q-2)$.

\subsection{The equations of a localized replacement}\label{subsec:derivation}

Let \eqref{eq:localized-replacement} be an irreducible localized replacement on $I$ with $q$ terms.  Write
\[
Q_j=b_j^{(1)}\otimes\cdots\otimes b_j^{(d)},
\qquad
B^{(m)}=[b_1^{(m)},\ldots,b_q^{(m)}].
\]
For each mode let
\[
U_m=\col A_I^{(m)},\qquad
V_m=\col B^{(m)},\qquad
\delta_m=\dim(U_m+V_m).
\]
Theorem~\ref{thm:budget}\ref{item:budget-effective-order} gives
\[
\kappa=\sum_m(\delta_m-1)\le p+q-2,
\]
which is \eqref{eq:profile-budget}.  The $q$ columns of $B^{(m)}$ increase the rank by at most $q$, so $\sigma_m\le\delta_m\le\sigma_m+q$.  Since $\delta_m\le n_m$, this gives \eqref{eq:profile-dimension-range}.

If $\delta_m>1$, extend the fixed row basis $S_m^{0}$ of $A_I^{(m)}$ to a row basis $S_m=S_m^{0}\cup K_m$ of the joint matrix $[\,A_I^{(m)}\ B^{(m)}\,]$ (Figure~\ref{fig:rows}).  The number of added rows is $\delta_m-\sigma_m$, so the resulting data satisfy \eqref{eq:profile-rows}.  Hence the integers $\delta_m$ and the row sets $K_m$ form a profile of $(I,q)$.  Below, $H$, $S_m$, $X_m$, $s_{mr}$, $\bar X_m$, $u_m$, $c_{mr}$, $\alpha_r$, and $\Omega_I$ are the data of this profile.  Put
\[
Y_m=B^{(m)}[S_m,:].
\]
Every column of $X_m$ and $Y_m$ is nonzero.  Indeed, the selected rows span every row of the joint matrix.  Therefore, a column which vanishes on all selected rows would vanish on every row, and this contradicts the assumption that all factors are nonzero.

We use the following facts on the normalization of columns.  Let $X\in\Rpos^{\delta\times p}$ and $Y\in\Rpos^{\delta\times q}$ have no zero column.  Let $D_X$ and $D_Y$ be the diagonal matrices of the column sums $s_r=\mathbf1^{\mathsf T}X[:,r]$ and $t_j=\mathbf1^{\mathsf T}Y[:,j]$, which are positive, and put $\bar X=XD_X^{-1}$ and $\bar Y=YD_Y^{-1}$.  Then every column of $\bar X$ and $\bar Y$ is nonnegative and has coordinate sum one.  Moreover $\rank[\,X\ Y\,]=\rank[\,\bar X\ \bar Y\,]$, and for every row $a\in\R^{1\times p}$ and every row vector $\ell$,
\begin{equation}\label{eq:normalized-rows}
\ell X=a\ \text{ and }\ \ell Y\ge0
\quad\Longleftrightarrow\quad
\ell\bar X=aD_X^{-1}\ \text{ and }\ \ell\bar Y\ge0,
\end{equation}
because $D_X$ and $D_Y$ are diagonal matrices with positive diagonal entries. 

We apply this normalization to $X_m$ and $Y_m$ and write
\[
\bar X_m=X_mD_{X,m}^{-1},
\qquad
\bar Y_m=Y_mD_{Y,m}^{-1},
\]
where the diagonal entries of $D_{X,m}$ and $D_{Y,m}$ are the column sums $s_{mr}$ and $t_{mj}$, respectively.  Then the columns of the normalized matrices are nonnegative and have sum one, and $\rank[\,\bar X_m\ \bar Y_m\,]=\delta_m$.  By Lemma~\ref{lem:product-equality}, the factors of a term $Q_j$ are unique only up to positive scalars, and the normalization chooses these scalars.

If $\delta_m=1$, every supplied and every competing factor in this mode is a positive multiple of the same vector.  With the reference vector $u_m$ of Definition~\ref{def:profile}, write
\[
a_r^{(m)}=c_{mr}u_m,
\qquad
b_j^{(m)}=z_{mj}u_m,
\qquad c_{mr},z_{mj}>0.
\]
Let $\alpha_r$ be as in \eqref{eq:supplied-weight}, and define the weights
\begin{equation}\label{eq:actual-weights}
\omega_j
=
\left(\prod_{m\notin H}z_{mj}\right)
\left(\prod_{m\in H}t_{mj}\right).
\end{equation}
We restrict \eqref{eq:localized-replacement} to the selected rows of the active modes (Figure~\ref{fig:core}).  Both sides of the restricted equation are tensor products with the nonzero tensor $\bigotimes_{m\notin H}u_m$, and we remove this common factor.  After we insert $X_m[:,r]=s_{mr}\bar X_m[:,r]$ and $Y_m[:,j]=t_{mj}\bar Y_m[:,j]$, we obtain the \emph{core equation}
\begin{equation}\label{eq:core-equation}
\sum_{r\in I}\alpha_r
\bigotimes_{m\in H}\bar X_m[:,r]
=
\sum_{j=1}^{q}\omega_j
\bigotimes_{m\in H}\bar Y_m[:,j]
\end{equation}
with the weights \eqref{eq:actual-weights}.  Taking the total entry sum of that equation gives
\[
\sum_j\omega_j
=\sum_r\alpha_r
=\Omega_I.
\]
All weights $\omega_j$ are strictly positive.

Every omitted row of the joint matrix is a linear combination of the selected rows.  If $\ell_{m,i}$ is the row of its coefficients, then
\[
\ell_{m,i}X_m=A_I^{(m)}[i,:],
\qquad
\ell_{m,i}Y_m=B^{(m)}[i,:]\ge0.
\]
Put
\[
\bar a_{m,i}=A_I^{(m)}[i,:]
\operatorname{diag}(s_{mr}:r\in I)^{-1}.
\]
By \eqref{eq:normalized-rows}, the same row $\ell_{m,i}$ satisfies
\begin{equation}\label{eq:row-lift}
\ell_{m,i}\bar X_m=\bar a_{m,i},
\qquad
\ell_{m,i}\bar Y_m\ge0.
\end{equation}

Thus the matrices $\bar Y_m$ have nonnegative columns with sum one, the weights $\omega_j$ are positive with sum $\Omega_I$, the core equation \eqref{eq:core-equation} holds, and for every $m\in H$ and every omitted row $i\notin S_m$ there is a row vector $\ell_{m,i}$ with \eqref{eq:row-lift}.

\begin{figure}[t]
\centering
\begin{tikzpicture}[x=1cm,y=1cm]
\def\dx{1.1}\def\dy{0.7}
\path[fill=black!6] (0,4) -- ++(\dx,\dy) -- ++(4,0) -- (4,4) -- cycle;
\path[fill=black!10] (4,0) -- ++(\dx,\dy) -- ++(0,4) -- (4,4) -- cycle;
\path[fill=white] (0,0) rectangle (4,4);
\foreach \y in {3.35,2.45,1.05}{ \path[fill=black!18] (0,\y) rectangle (4,\y+0.3); }
\foreach \x in {0.6,2.3}{ \path[fill=black!18] (\x,0) rectangle (\x+0.3,4); }
\foreach \y in {3.35,2.45,1.05}{ \foreach \x in {0.6,2.3}{ \path[fill=black!70] (\x,\y) rectangle (\x+0.3,\y+0.3); } }
\draw[thick] (0,0) rectangle (4,4);
\draw[thick] (0,4) -- ++(\dx,\dy) -- ++(4,0) -- (4,4);
\draw[thick] (4,0) -- ++(\dx,\dy) -- ++(0,4);
\node[lab,rotate=90] at (-0.85,2) {mode $1$ (active)};
\node[lab] at (2,-0.75) {mode $2$ (active)};
\node[lab,anchor=north west,align=left] at (4.25,-0.05) {mode $3$ (inactive),\\[-1pt] all factors are multiples of $u_3$};
\foreach \y in {3.5,2.6,1.2}{ \draw (-0.12,\y) -- (0,\y); }
\node[slab,anchor=east] at (-0.15,2.6) {$S_1$};
\foreach \x in {0.75,2.45}{ \draw (\x,-0.12) -- (\x,0); }
\node[slab,anchor=north] at (1.6,-0.12) {$S_2$};
\begin{scope}[shift={(10.0,2.35)}]
\foreach \i in {0,1,2}{ \foreach \j in {0,1}{ \path[fill=black!70,draw=white,line width=1.2pt] (0.55*\j,0.55*\i) rectangle ++(0.55,0.55); } }
\draw[thick] (0,0) rectangle (1.1,1.65);
\draw[brace] (1.2,1.65) -- node[right=4pt,slab]{$\delta_1$} (1.2,0);
\draw[brace] (0,1.75) -- node[above=4pt,slab]{$\delta_2$} (1.1,1.75);
\end{scope}
\draw[arr] (5.9,3.55) -- node[above,slab,align=center]{restriction to $S_1\times S_2$,\\ normalization of the columns} (9.6,3.55);
\draw[arr] (9.6,2.75) -- node[below,slab]{$\Lambda_1\otimes\Lambda_2\otimes u_3$} (5.9,2.75);
\node[lab,anchor=west] at (5.6,1.15)
{$\displaystyle\sum_{r\in I}\alpha_r\,\bar X_1[:,r]\otimes\bar X_2[:,r]
=\sum_{j=1}^{q}\omega_j\,\bar Y_1[:,j]\otimes\bar Y_2[:,j]$};
\end{tikzpicture}
\caption{The tensor $\sum_{r\in I}P_r$ for $d=3$ and $H=\{1,2\}$.  In mode $3$ all factors are positive multiples of $u_3$, and in modes $1$ and $2$ only the rows $S_1$ and $S_2$ are used.  The dark entries form the core, and $\Lambda_1\otimes\Lambda_2\otimes u_3$ maps a solution of the core equation \ref{cond:core} back to the format $n_1\times n_2\times n_3$, see Theorem~\ref{thm:representation}.}
\label{fig:core}
\end{figure}

\subsection{The system}\label{subsec:system}

The competing terms are unknown.  We now replace the matrices $\bar Y_m$ and the weights $\omega_j$ of Section~\ref{subsec:derivation} by variables, which we denote by the same symbols.

For $m\in H$, introduce $q(\delta_m-1)$ variables $y_{mij}$, $i\in[\delta_m-1]$, $j\in[q]$, and define the matrix $\bar Y_m$ with $\delta_m$ rows and $q$ columns by
\begin{equation}\label{eq:simplex}
\bar Y_m(i,j)=y_{mij}\quad\text{for $i<\delta_m$},\qquad \text{and}
\qquad
\bar Y_m(\delta_m,j)=1-\sum_{i=1}^{\delta_m-1}y_{mij}.
\end{equation}
Thus the inequalities $\bar Y_m(i,j)\ge0$ for all $i,j$ describe exactly the matrices with nonnegative columns of coordinate sum one.

Introduce $q-1$ weight variables $\omega_1,\ldots,\omega_{q-1}$ and define
\begin{equation}\label{eq:last-weight}
\omega_q
=\Omega_I-\sum_{j=1}^{q-1}\omega_j.
\end{equation}
When $q=1$, this means simply $\omega_1=\Omega_I$ and no weight variable is introduced.

\begin{definition}\label{def:system}
For a profile of $(I,q)$, the system $\mathcal N_{I,q}$ consists of the following conditions on the variables in \eqref{eq:simplex} and \eqref{eq:last-weight}.
\begin{enumerate}[label=\textup{(N\arabic*)},leftmargin=13mm]
\item\label{cond:simplex}
Every entry of every $\bar Y_m$, $m\in H$, is nonnegative.
\item\label{cond:weights}
$\omega_j>0$ for every $j\in[q]$.
\item\label{cond:core}
The variables satisfy the core equation
\begin{equation*}\tag{\ref{eq:core-equation}}
\sum_{r\in I}\alpha_r
\bigotimes_{m\in H}\bar X_m[:,r]
=
\sum_{j=1}^{q}\omega_j
\bigotimes_{m\in H}\bar Y_m[:,j].
\end{equation*}

\item\label{cond:lifting}
For every $m\in H$ and every omitted row $i\notin S_m$, put
\[
\bar a_{m,i}=A_I^{(m)}[i,:]
\operatorname{diag}(s_{mr}:r\in I)^{-1}.
\]
There is a row vector $\ell_{m,i}\in\R^{1\times\delta_m}$, which we call a \emph{coefficient row}, with
\begin{equation*}\tag{\ref{eq:row-lift}}
\ell_{m,i}\bar X_m=\bar a_{m,i},
\qquad
\ell_{m,i}\bar Y_m\ge0.
\end{equation*}
\end{enumerate}
When $q=p$, the \emph{term-disjoint system} $\mathcal N^{\mathrm{td}}_{I,p}$ consists of the conditions of $\mathcal N_{I,p}$ and of the inequalities
\begin{equation}\label{eq:td}
(\omega_j-\alpha_r)^2
+
\sum_{m\in H}
\|\bar Y_m[:,j]-\bar X_m[:,r]\|_2^2
>0
\end{equation}
for every $r\in I$ and $j\in[p]$.  For $H=\varnothing$, the sum over $m$ is empty.
\end{definition}

Condition \ref{cond:lifting} contains the additional unknowns $\ell_{m,i}$, which are eliminated in Section~\ref{sec:lifting}.  The equation $\rank[\,\bar X_m\ \bar Y_m\,]=\delta_m$ of Section~\ref{subsec:derivation} is not a condition of the system, see Remark~\ref{rem:rank-of-solution}.  Table~\ref{tab:notation} lists the notation of this section.

\begin{table}[t]
\centering
\small
\begin{tabularx}{\textwidth}{@{}lXl@{}}
\toprule
Symbol & Meaning & Status\\
\midrule
$I$, $p=|I|$, $q$ & supplied terms which are replaced, number of competing terms & fixed\\
$A_I^{(m)}$, $\sigma_m$ & supplied factor matrix of mode $m$ with columns in $I$, its rank & input\\
$S_m^{0}$ & fixed row basis of $A_I^{(m)}$ & computed from the input\\
$\delta_m$, $K_m$ & number of selected rows, added rows & profile\\
$H$, $\kappa$ & active modes $\{m:\delta_m>1\}$, effective order $\sum_m(\delta_m-1)$ & given by the profile\\
$S_m=S_m^{0}\cup K_m$ & selected rows of an active mode & given by the profile\\
$X_m$, $\bar X_m$, $s_{mr}$ & $A_I^{(m)}[S_m,:]$, its normalized columns, its column sums & rational data\\
$u_m$, $c_{mr}$ & inactive mode, reference vector and ratios $a_r^{(m)}=c_{mr}u_m$ & rational data\\
$\alpha_r$, $\Omega_I$ & supplied weights and their sum & rational data\\
$\bar Y_m$, $\omega_j$ & normalized competing columns on $S_m$, competing weights & variables\\
$\ell_{m,i}$, $\Lambda_m$ & coefficient row of an omitted row $i$, coefficient matrix & eliminated in Section~\ref{sec:lifting}\\
\bottomrule
\end{tabularx}
\caption{Notation of Section~\ref{sec:normalized} for a fixed pair $(I,q)$ and a fixed profile.}
\label{tab:notation}
\end{table}

\subsection{Equivalence with localized replacements}\label{subsec:equivalence}

\begin{theorem}[Polynomial representation of localized replacements]\label{thm:representation}
Fix a nonempty set $I\subseteq[R]$, $p=|I|$, and $q\ge1$.
\begin{enumerate}[label=\textup{(\roman*)}]
\item\label{item:completeness}
Suppose that
\begin{equation}\label{eq:actual-exchange}
\sum_{r\in I}P_r=\sum_{j=1}^{q}Q_j
\end{equation}
is an irreducible localized replacement.  Then there is a profile of $(I,q)$ with $\kappa\le p+q-2$ such that the system $\mathcal N_{I,q}$ of this profile has a solution.  If $q=p$ and the replacement is term-disjoint, then this solution also satisfies \eqref{eq:td}, i.e. it is a solution of $\mathcal N^{\mathrm{td}}_{I,p}$.
\item\label{item:soundness}
Conversely, for every profile of $(I,q)$, every solution of $\mathcal N_{I,q}$ gives a localized replacement
\begin{equation}\label{eq:lifted-exchange}
\sum_{r\in I}P_r=\sum_{j=1}^{q}\widetilde Q_j
\end{equation}
by nonzero nonnegative rank-one tensors.  If the solution satisfies \eqref{eq:td}, then this replacement is term-disjoint.
\end{enumerate}
\end{theorem}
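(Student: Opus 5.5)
Part~\ref{item:completeness} is essentially already done in Section~\ref{subsec:derivation}, so the plan there is to assemble that derivation. Given an irreducible localized replacement \eqref{eq:actual-exchange}, Theorem~\ref{thm:budget}\ref{item:budget-effective-order} gives $\kappa\le p+q-2$. The rank count gives \eqref{eq:profile-dimension-range}. Extending $S_m^0$ to a row basis of $[\,A_I^{(m)}\ B^{(m)}\,]$ gives $K_m$ with \eqref{eq:profile-rows}, so we have a profile. We then take the variables to be the first $\delta_m-1$ rows of the normalized $\bar Y_m=Y_mD_{Y,m}^{-1}$ and the weights $\omega_1,\ldots,\omega_{q-1}$ from \eqref{eq:actual-weights}. \ref{cond:simplex} holds by normalization. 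For \ref{cond:weights}, positivity is clear, and $\omega_q$ agrees with \eqref{eq:last-weight} because the entry sum of the core equation gives $\sum_j\omega_j=\Omega_I$. \ref{cond:core} is \eqref{eq:core-equation}, and \ref{cond:lifting} is \eqref{eq:row-lift} via \eqref{eq:normalized-rows}. For the term-disjoint claim, suppose \eqref{eq:td} fails for some $(r,j)$. Then $\omega_j=\alpha_r$ and $\bar Y_m[:,j]=\bar X_m[:,r]$ for all $m\in H$. We must deduce $Q_j=P_r$, contradicting term-disjointness. On selected rows the normalized factors agree, and every omitted row of each factor is $\ell_{m,i}$ applied to the selected rows. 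Hence $b_j^{(m)}$ and $a_r^{(m)}$ are both determined by their selected-row restrictions through the same $\ell_{m,i}$ (up to the scalars $t_{mj},s_{mr}$), so $b_j^{(m)}=(t_{mj}/s_{mr})a_r^{(m)}$ for $m\in H$. Inactive modes give $b_j^{(m)}=(z_{mj}/c_{mr})a_r^{(m)}$. The product of all scalars is $\omega_j/\alpha_r=1$, so $Q_j=P_r$.

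For part~\ref{item:soundness}, take a solution and choose coefficient rows $\ell_{m,i}$ from \ref{cond:lifting}. Assemble $\Lambda_m\in\R^{n_m\times\delta_m}$, whose rows indexed by $S_m$ form the identity and whose omitted rows are $\ell_{m,i}$. Then $\Lambda_m\bar X_m$ reproduces $A_I^{(m)}\operatorname{diag}(s_{mr})^{-1}$: on selected rows this holds by definition, and on omitted rows by \eqref{eq:row-lift}. Define
\[
\widetilde Q_j=\omega_j\Bigl(\bigotimes_{m\in H}\Lambda_m\bar Y_m[:,j]\Bigr)\otimes\bigotimes_{m\notin H}u_m,
\]
with modes in their natural order. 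Each factor is nonnegative: selected rows by \ref{cond:simplex}, omitted rows by \eqref{eq:row-lift}. Each factor is nonzero since column sums on selected rows equal one, and $\omega_j>0$. Applying the linear map $\bigotimes_{m\in H}\Lambda_m\otimes\bigotimes_{m\notin H}u_m$ to both sides of the core equation \ref{cond:core} gives \eqref{eq:lifted-exchange}. The left side becomes $\sum_r\alpha_r\bigotimes_m s_{mr}^{-1}a_r^{(m)}\otimes\bigotimes_{m\notin H}c_{mr}^{-1}a_r^{(m)}=\sum_rP_r$ by \eqref{eq:supplied-weight}.

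For the term-disjoint direction we argue by contraposition. Suppose $\widetilde Q_j=P_r$. Restricting to selected rows and using Lemma~\ref{lem:product-equality}, the active-mode factors $\bar Y_m[:,j]$ and $\bar X_m[:,r]$ are positive multiples of one another. Both have coordinate sum one, so they are equal. Comparing total weights then forces $\omega_j=\alpha_r$, so \eqref{eq:td} fails for $(r,j)$. I expect the main subtlety to be exactly this last step, together with the analogous step in part~\ref{item:completeness}: the scaling bookkeeping that shows normalized data determine the term. This requires care that the restriction to selected rows is injective on the relevant factors, which is guaranteed by the row-basis property and the lifting matrices $\Lambda_m$.
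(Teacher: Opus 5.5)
Your proposal is correct and follows the paper's proof. Part~(i) assembles the derivation of Section~\ref{subsec:derivation} and uses the same shared coefficient rows (the matrix $\Lambda_m$) to get $Q_j=(\omega_j/\alpha_r)P_r=P_r$, and part~(ii) lifts the core equation by the same $\Lambda_m$. The only difference is cosmetic: for term-disjointness in part~(ii) you restrict $\widetilde Q_j=P_r$ to the selected rows, which is a left inverse of the lift, instead of invoking injectivity of $\Lambda_H$. You therefore obtain just the needed implication rather than the full equivalence \eqref{eq:term-equality}.
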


\begin{proof}
\pfpart{Step 1, every irreducible localized replacement gives a solution}
By Section~\ref{subsec:derivation}, the replacement \eqref{eq:actual-exchange} has a profile of $(I,q)$ with $\kappa\le p+q-2$, and its matrices $\bar Y_m$ and weights $\omega_j$ have the form \eqref{eq:simplex} and \eqref{eq:last-weight} and satisfy \ref{cond:simplex}--\ref{cond:lifting}.  Hence they are a solution of $\mathcal N_{I,q}$.

Assume finally that $q=p$, that the replacement \eqref{eq:actual-exchange} is term-disjoint, and, for contradiction, that some polynomial in \eqref{eq:td} vanishes at this solution, say
\[
\omega_j=\alpha_r,
\qquad
\bar Y_m[:,j]=\bar X_m[:,r],\quad m\in H.
\]
The selected rows $S_m$ form a row basis of the joint matrix, so in every active mode there is one coefficient matrix $\Lambda_m$ with $A_I^{(m)}=\Lambda_mX_m$ and $B^{(m)}=\Lambda_mY_m$.  Then, for every active mode,
\[
b_j^{(m)}
=\Lambda_mY_m[:,j]
=t_{mj}\,\Lambda_m\bar Y_m[:,j]
=t_{mj}\,\Lambda_m\bar X_m[:,r]
=\frac{t_{mj}}{s_{mr}}\,a_r^{(m)},
\]
and for every inactive mode $b_j^{(m)}=z_{mj}u_m=(z_{mj}/c_{mr})\,a_r^{(m)}$.  Multiplying over all modes,
\begin{align*}
Q_j
&=\Biggl(\prod_{m\notin H}\frac{z_{mj}}{c_{mr}}\Biggr)
\Biggl(\prod_{m\in H}\frac{t_{mj}}{s_{mr}}\Biggr)P_r
=\frac{\omega_j}{\alpha_r}\,P_r
=P_r,
\end{align*}
which contradicts the assumption that the replacement is term-disjoint.  Thus it is also a solution of $\mathcal N^{\mathrm{td}}_{I,p}$.

\pfpart{Step 2, every solution gives a localized replacement}
Fix a solution.  For an active mode $m$, choose one row $\ell_{m,i}$ satisfying \eqref{eq:row-lift} for every omitted row $i\notin S_m$.  Let $\Lambda_m\in\R^{n_m\times\delta_m}$ be the matrix whose rows on $S_m$ are the rows of the identity matrix and whose row $i\notin S_m$ is $\ell_{m,i}$.  We call $\Lambda_m$ the \emph{coefficient matrix} of the mode $m$.  Then
\begin{equation}\label{eq:normalized-supplied-factor}
\Lambda_m\bar X_m
=A_I^{(m)}\operatorname{diag}(s_{mr}:r\in I)^{-1}
=:\bar A_m,
\end{equation}
and
\begin{equation}\label{eq:normalized-competing-factor}
\bar B_m:=\Lambda_m\bar Y_m\ge0.
\end{equation}
Every column of $\bar B_m$ is nonzero because its restriction to the selected rows is a column of $\bar Y_m$, whose coordinate sum is one. For an inactive mode we keep the fixed nonzero nonnegative vector $u_m$.  Let
\[
u_{H^c}=\bigotimes_{m\notin H}u_m.
\]
Apply the linear map
\[
\Lambda_H:
\bigotimes_{m\in H}\R^{\delta_m}
\longrightarrow
\bigotimes_{m=1}^{d}\R^{n_m},
\qquad
Z\longmapsto
\left(\bigotimes_{m\in H}\Lambda_m\right)Z
\otimes u_{H^c},
\]
with the factors inserted in their original mode positions, to the core equation \eqref{eq:core-equation}.  On its supplied side, the $r$th term becomes
\begin{align*}
&\alpha_r
\left(\bigotimes_{m\in H}\bar A_m[:,r]\right)
\otimes u_{H^c}\\
&\quad=
\left(\prod_{m\notin H}c_{mr}\right)
\left(\prod_{m\in H}s_{mr}\right)
\left(\bigotimes_{m\in H}s_{mr}^{-1}a_r^{(m)}\right)
\otimes
\left(\bigotimes_{m\notin H}u_m\right)\\
&\quad=
\bigotimes_{m=1}^{d}a_r^{(m)}
=P_r.
\end{align*}
On the competing side, define
\begin{equation}\label{eq:reconstructed-term}
\widetilde Q_j
=
\omega_j
\left(\bigotimes_{m\in H}\bar B_m[:,j]\right)
\otimes u_{H^c}.
\end{equation}
Every $\widetilde Q_j$ is a nonzero nonnegative rank-one tensor.  The scalar $\omega_j$ may be absorbed into any one of its nonzero mode factors.  Therefore the image of the core equation under $\Lambda_H$ is exactly \eqref{eq:lifted-exchange}.

\pfpart{Step 3, comparison of terms}
The map $\Lambda_H$ is injective.  If $H\ne\varnothing$, each $\Lambda_m$ is injective because it contains the identity rows on $S_m$; hence their tensor product is injective.  Tensoring with the fixed nonzero tensor $u_{H^c}$ preserves injectivity.  For $H=\varnothing$, $\Lambda_H$ is the map from $\R$ sending a scalar $z$ to $zu_{H^c}$, which is injective because $u_{H^c}\ne0$.

By construction, $P_r$ and $\widetilde Q_j$ are the images under $\Lambda_H$ of the two tensors $\alpha_r\bigotimes_{m\in H}\bar X_m[:,r]$ and $\omega_j\bigotimes_{m\in H}\bar Y_m[:,j]$.  Injectivity gives
\begin{equation}\label{eq:core-term-equivalence}
P_r=\widetilde Q_j
\quad\Longleftrightarrow\quad
\alpha_r\bigotimes_{m\in H}\bar X_m[:,r]
=\omega_j\bigotimes_{m\in H}\bar Y_m[:,j].
\end{equation}

All columns of $\bar X_m$ and $\bar Y_m$ have coordinate sum one.  Hence the two sides of the equation on the right of \eqref{eq:core-term-equivalence} have entry sums $\alpha_r$ and $\omega_j$.  Equality therefore forces $\omega_j=\alpha_r$.  After cancelling this positive scalar, equality of the two nonzero product tensors and Lemma~\ref{lem:product-equality} give positive scalars $\lambda_m$ with
\[
\bar Y_m[:,j]=\lambda_m\bar X_m[:,r],
\quad m\in H,
\qquad
\prod_{m\in H}\lambda_m=1.
\]
Taking coordinate sums gives $1=\lambda_m$ for every active mode.  Thus
\begin{equation}\label{eq:term-equality}
P_r=\widetilde Q_j
\quad\Longleftrightarrow\quad
\omega_j=\alpha_r
\ \text{and}\ 
\bar Y_m[:,j]=\bar X_m[:,r]
\text{ for every }m\in H.
\end{equation}
For $H=\varnothing$, the same conclusion reduces to equality of the two weights.  Consequently the left side of \eqref{eq:td} vanishes exactly when $P_r=\widetilde Q_j$.  If all these expressions are positive, then the replacement \eqref{eq:lifted-exchange} is term-disjoint.  This completes the proof of part \ref{item:soundness}.
\end{proof}

\begin{corollary}\label{cor:criteria}
Let $\T=\sum_{r=1}^{R}P_r$ be the supplied decomposition.
\begin{enumerate}[label=\textup{(\roman*)}]
\item The supplied decomposition is nonminimal if and only if there are a set $I\subseteq[R]$, $p=|I|\ge2$, an integer $q<p$, and a profile of $(I,q)$ for which $\mathcal N_{I,q}$ has a solution.
\item\label{item:criteria-uniqueness} Assume that the supplied decomposition is minimal.  It is not essentially unique if and only if there are a set $I\subseteq[R]$, $p=|I|\ge2$, and a profile of $(I,p)$ for which $\mathcal N^{\mathrm{td}}_{I,p}$ has a solution.
\end{enumerate}
\end{corollary}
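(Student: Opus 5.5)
The corollary follows by combining Theorem~\ref{thm:localized-witness} with Theorem~\ref{thm:representation}. For each part I would prove the two directions separately. The forward direction goes from an obstruction, through an irreducible localized witness, to a solvable system. The converse goes from a solution of the system to a localized replacement, and from there to the obstruction.

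For part (i), necessity runs as follows. If the supplied decomposition is nonminimal, Theorem~\ref{thm:localized-witness}\ref{item:localized-minimality} gives an irreducible shortening localized replacement on some $I$ with $p=|I|\ge2$ and $q<p$. Theorem~\ref{thm:representation}\ref{item:completeness} then provides a profile of $(I,q)$ for which $\mathcal N_{I,q}$ has a solution. For sufficiency, a solution of $\mathcal N_{I,q}$ for some profile with $q<p$ yields, by Theorem~\ref{thm:representation}\ref{item:soundness}, a localized replacement $\sum_{r\in I}P_r=\sum_{j=1}^q\widetilde Q_j$ by nonzero nonnegative rank-one tensors. This replacement is shortening because $q<p$. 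The sufficiency half of Theorem~\ref{thm:localized-witness}\ref{item:localized-minimality} then gives nonminimality: substituting the replacement produces a decomposition of length $R-p+q<R$.

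For part (ii), assume minimality. If the decomposition is not essentially unique, Theorem~\ref{thm:localized-witness}\ref{item:localized-uniqueness} supplies an irreducible term-disjoint balanced alternative with $2\le p=q$. The last clause of Theorem~\ref{thm:representation}\ref{item:completeness} then gives a solution of $\mathcal N^{\mathrm{td}}_{I,p}$ for a suitable profile. Conversely, a solution of $\mathcal N^{\mathrm{td}}_{I,p}$ yields, by Theorem~\ref{thm:representation}\ref{item:soundness}, a balanced term-disjoint localized replacement.

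The one point needing care is showing that this replacement is a localized alternative, meaning that the two multisets differ. Term-disjointness says the multisets $\{P_r:r\in I\}$ and $\{\widetilde Q_j\}$ have empty intersection. Since $I$ is nonempty, both multisets are nonempty, so they cannot be equal. The sufficiency half of Theorem~\ref{thm:localized-witness}\ref{item:localized-uniqueness} then shows the decomposition is not essentially unique.

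I expect no real obstacle here. The only step to check is that term-disjointness implies distinctness, which needs $p\ge1$. The hypothesis $p\ge2$ in the statement comes from the necessity direction and is harmless in the converse, since the converse only requires $p\ge1$.
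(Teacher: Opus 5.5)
Your proposal is correct and matches the paper's proof: each direction of both parts combines Theorem~\ref{thm:localized-witness} with the completeness or soundness half of Theorem~\ref{thm:representation}. In the converse of part (ii) you make the same extra observation as the paper, namely that two nonempty term-disjoint multisets are unequal, so the reconstructed replacement is a localized alternative.
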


\begin{proof}
Suppose first that the supplied decomposition is nonminimal.  By Theorem~\ref{thm:localized-witness}\ref{item:localized-minimality}, it has an irreducible shortening localized replacement on some $I$ with $q<p$ terms.  By Theorem~\ref{thm:representation}\ref{item:completeness}, the system $\mathcal N_{I,q}$ of some profile has a solution.  Conversely, a solution of $\mathcal N_{I,q}$ gives by Theorem~\ref{thm:representation}\ref{item:soundness} a localized replacement with $q<p$ terms, and the supplied decomposition is nonminimal by Theorem~\ref{thm:localized-witness}\ref{item:localized-minimality}.

Now assume that the supplied decomposition is minimal.  If it is not essentially unique, Theorem~\ref{thm:localized-witness}\ref{item:localized-uniqueness} gives an irreducible, term-disjoint, balanced localized replacement on some $I$ with $p\ge2$.  By Theorem~\ref{thm:representation}\ref{item:completeness}, the system $\mathcal N^{\mathrm{td}}_{I,p}$ of some profile has a solution.  Conversely, a solution of $\mathcal N^{\mathrm{td}}_{I,p}$ gives a term-disjoint balanced localized replacement.  Because both sides are nonempty and term-disjoint, their multisets of rank-one terms are unequal; hence the replacement is a localized alternative.  The supplied decomposition is therefore not essentially unique by Theorem~\ref{thm:localized-witness}\ref{item:localized-uniqueness}.
\end{proof}

\begin{remark}\label{rem:rank-of-solution}
A solution of $\mathcal N_{I,q}$ need not satisfy $\rank[\,\bar X_m\ \bar Y_m\,]=\delta_m$.  The integer $\delta_m$ of a profile is only the number of selected rows, and the rank can be smaller.  This does not affect Theorem~\ref{thm:representation}.  Part~\ref{item:soundness} holds for every solution, because its proof uses only the conditions \ref{cond:simplex}--\ref{cond:lifting} and not the rank of $[\,\bar X_m\ \bar Y_m\,]$.  The rank is used only in part~\ref{item:completeness}.  There $S_m$ is a row basis of the joint matrix, and therefore every omitted row is a linear combination of the selected rows.  Thus a profile whose integers $\delta_m$ are larger than necessary remains valid; it merely yields a system with more variables.
\end{remark}

\subsection{Eliminating ambient coordinates}\label{sec:lifting}

Condition \ref{cond:lifting} of the system $\mathcal N_{I,q}$ contains the unknown coefficient rows $\ell_{m,i}$.  Their number grows with the dimensions $n_m$, while the running time in Theorem~\ref{thm:sampling} is exponential in the number of variables.  We therefore replace \ref{cond:lifting} by conditions involving only the matrices $\bar Y_m$.  We use the Farkas lemma (Proposition~\ref{prop:lifting}) and the vectors of Lemma~\ref{lem:cofactor} (Lemma~\ref{lem:circuit-elimination}).  Condition~\ref{cond:lifting} cannot be dropped: in Example~\ref{ex:lifting} of Appendix~\ref{app:necessity} the decomposition is essentially unique, but without this condition Phase~III would report that it is not.

\paragraph{Tools.}
We use three standard facts.  First, a polyhedral cone contained in the nonnegative orthant is pointed, and therefore it is generated by its finitely many extreme rays \citep[Section~8.8]{Schrijver1986}.  Second, we use the following form of the Farkas lemma.

\begin{lemma}[Farkas lemma]\label{lem:farkas}
Let $\Cm\in\R^{e\times q}$ and $g\in\R^{1\times q}$.  Exactly one of the following two statements is true.
\begin{enumerate}[label=\textup{(\alph*)}]
\item\label{item:farkas-primal} There is $z\in\R^{1\times e}$ with $g+z\Cm\ge0$.
\item\label{item:farkas-dual} There is $\eta\in\Rpos^{q}$ with $\Cm\eta=0$ and $g\eta<0$.
\end{enumerate}
\end{lemma}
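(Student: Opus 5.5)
The plan is to prove the Farkas lemma in this form by reducing it to the standard Farkas alternative for the system $A x = b$, $x\ge0$. First, both statements cannot hold at once. Suppose $g+z\Cm\ge0$ and $\eta\in\Rpos^{q}$ with $\Cm\eta=0$. Multiplying on the right by $\eta\ge0$ gives
\[
0\le (g+z\Cm)\eta=g\eta+z(\Cm\eta)=g\eta,
\]
which contradicts $g\eta<0$. This direction is immediate.

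Second, at least one statement holds. I would argue by the separating hyperplane theorem for a closed convex cone. Consider the set
\[
\mathcal C=\{\,z\Cm+w : z\in\R^{1\times e},\ w\in\Rpos^{1\times q}\,\}\subseteq\R^{1\times q}.
\]
This is a finitely generated cone. Its generators are the rows of $\Cm$, their negatives, and the unit rows. Statement \ref{item:farkas-primal} holds exactly when $-g\in\mathcal C$, because $g+z\Cm=w\ge0$ is the same as $-g=z\Cm-w$, and $-z$ ranges over all of $\R^{1\times e}$.

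The key fact is that a finitely generated cone is closed; this is Weyl--Minkowski, or Carathéodory together with the closedness of the cone spanned by linearly independent vectors. This closedness is the only nontrivial ingredient, and I would cite it from \citet{Schrijver1986} rather than reprove it.

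Given closedness, suppose $-g\notin\mathcal C$. Strict separation of the point $-g$ from the closed convex cone $\mathcal C$ gives a column vector $\eta\in\R^{q}$ with two properties:
\begin{itemize}
\item $x\eta\ge0$ for all $x\in\mathcal C$, since a separating functional bounded on a cone must be nonnegative on it;
\item $-g\eta>0$, that is, $g\eta<0$.
\end{itemize}
I would then read off statement \ref{item:farkas-dual} from the generators of $\mathcal C$:
\begin{itemize}
\item Taking $x$ to be the rows of $\Cm$ and their negatives gives $\Cm\eta=0$.
\item Taking $x$ to be the unit rows gives $\eta\ge0$.
\end{itemize}
So \ref{item:farkas-dual} holds, and the alternative is established.

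The main obstacle is closedness of $\mathcal C$. If it is preferred to avoid it, an alternative route is Fourier--Motzkin elimination of the variables $z$ one at a time, which directly yields nonnegative combinations certifying infeasibility. I would nonetheless take the separation route and quote the standard result.
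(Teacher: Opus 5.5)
Your disjointness argument matches the paper's, and your strategy for the other direction is sound. As written, however, it rests on two sign errors that happen to cancel.

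\emph{First error.} Statement (a) is equivalent to $g\in\mathcal C$, not to $-g\in\mathcal C$. From $g+z\Cm=w\ge0$ you get $g=(-z)\Cm+w\in\mathcal C$. By contrast, $-g=z\Cm-w$ lies in $-\mathcal C$: flipping the sign of $z$ does not flip the sign of $w$.

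\emph{Second error.} Separating a point $p\notin\mathcal C$ from a closed convex cone gives $\eta$ with $x\eta\ge0$ on $\mathcal C$ and $p\eta<0$. It cannot give $x\eta\ge0$ on $\mathcal C$ together with $(-g)\eta>0$ for the separated point $-g$.

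For a concrete failure, take $e=q=1$, $\Cm=(0)$, $g=(1)$. Then $\mathcal C=\Rpos$, statement (a) holds although $-g\notin\mathcal C$, and no $\eta$ satisfies both $\eta\ge0$ and $-\eta>0$. So both of your intermediate claims are false there.

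The repair is to replace $-g$ by $g$ throughout. If (a) fails, then $g\notin\mathcal C$, and separation yields $x\eta\ge0$ for all $x\in\mathcal C$ and $g\eta<0$. Your reading-off from the generators then gives $\Cm\eta=0$ and $\eta\ge0$, which is (b).

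With that correction the argument is complete. It differs from the paper's proof only in where the citation enters. The paper quotes the inequality form of the Farkas lemma (Corollary~7.1e in Schrijver's book) and applies it to $-\Cm^{\mathsf T}x\le g^{\mathsf T}$ with $x=z^{\mathsf T}$. You instead derive that form from the separation theorem and the closedness of finitely generated cones. Your route is more self-contained, but the real content now sits in the closedness step (via Carath\'eodory or Weyl's theorem), which you also cite. So the two proofs rest on essentially the same standard fact, packaged differently.
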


\begin{proof}
If \ref{item:farkas-primal} and \ref{item:farkas-dual} were both true, then we would have $0\le(g+z\Cm)\eta=g\eta<0$, which is impossible.  If \ref{item:farkas-primal} is false, then \ref{item:farkas-dual} is true by the variant of the Farkas lemma in \citet[Corollary~7.1e]{Schrijver1986}, applied to the system $-\Cm^{\mathsf T}x\le g^{\mathsf T}$ with $x=z^{\mathsf T}$.
\end{proof}

The third fact is the following consequence of the Laplace expansion of determinants \citep[Section~0.3.1]{HornJohnson2013}.

\begin{lemma}\label{lem:cofactor}
Let $N\in\R^{(k-1)\times k}$ have rank $k-1$, and for $j\in[k]$ put $D_j=\det N[:,[k]\setminus\{j\}]$ and $\Delta_j=(-1)^jD_j$.  Then $\Delta=(\Delta_1,\ldots,\Delta_k)^{\mathsf T}$ is nonzero and $\ker N=\R\Delta$.
\end{lemma}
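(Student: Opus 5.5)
The plan is the standard Laplace-expansion argument: the vector $\Delta$ of signed maximal minors is orthogonal to every row of $N$, it is nonzero, and it therefore spans the kernel.

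\emph{First step.} Since $\rank N=k-1$ and $N$ has $k$ columns, rank--nullity gives $\dim\ker N=1$. It then suffices to show that $\Delta\in\ker N$ and $\Delta\ne0$, because a nonzero vector in a one-dimensional space spans it.

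\emph{Second step ($\Delta\ne0$).} The column rank of $N$ equals $k-1$, so some $k-1$ of its $k$ columns are linearly independent. Say these are all columns except column $j$. Then the square submatrix $N[:,[k]\setminus\{j\}]$ is invertible, so $D_j\ne0$ and hence $\Delta_j\ne0$.

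\emph{Third step ($N\Delta=0$).} Fix a row index $i\in[k-1]$ and form the $k\times k$ matrix $M_i$ by stacking the row $N[i,:]$ on top of $N$. This matrix has two equal rows, so $\det M_i=0$. Expand $\det M_i$ along its first row. The cofactor of entry $(1,j)$ is
\[
(-1)^{1+j}\det N[:,[k]\setminus\{j\}]=(-1)^{1+j}D_j .
\]
The expansion therefore reads
\[
0=\sum_{j=1}^{k}N[i,j](-1)^{1+j}D_j=-\sum_{j=1}^{k}N[i,j]\Delta_j=-(N\Delta)_i .
\]
Since $i$ was arbitrary, $N\Delta=0$, and combined with the first two steps this gives $\ker N=\R\Delta$.

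No step here is a genuine obstacle. The only point needing care is the sign convention: $\Delta_j=(-1)^jD_j$ differs from the cofactor sign $(-1)^{1+j}$ by a global factor $-1$, and this factor does not affect membership in the kernel.
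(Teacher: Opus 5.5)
Your proposal is correct and follows essentially the same argument as the paper: rank--nullity gives a one-dimensional kernel, and some $(k-1)\times(k-1)$ minor is nonzero. Then, for each $i$, stacking $N[i,:]$ on top of $N$ gives a $k\times k$ matrix with two equal rows, and its Laplace expansion along the first row yields $(N\Delta)_i=0$.
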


\begin{proof}
Since $N$ has rank $k-1$, its kernel has dimension one and some $D_j$ is nonzero, hence $\Delta\ne0$.  Fix a row index $i$, and let $N'$ be the $k\times k$ matrix which is obtained from $N$ by adding the $i$th row of $N$ as a new first row.  Two rows of $N'$ are equal, so $\det N'=0$.  The Laplace expansion of $\det N'$ along the first row gives $\sum_{j=1}^{k}(-1)^{1+j}N[i,j]D_j=0$, i.e. $\sum_{j=1}^{k}N[i,j]\Delta_j=0$.  This holds for every $i$.  Hence $N\Delta=0$, and $\Delta$ spans $\ker N$.
\end{proof}

\subsubsection{Reduction to the extreme rays of a cone}\label{subsec:cone}

Fix matrices
\[
X\in\R^{\delta\times p},
\qquad
Y\in\R^{\delta\times q},
\]
and let $a\in\R^{1\times p}$ belong to the row space of $X$.  We ask whether there is a row vector $\ell\in\R^{1\times\delta}$ with
\begin{equation}\label{eq:basic-lift}
\ell X=a,
\qquad
\ell Y\ge0.
\end{equation}
Choose one solution $w$ of $wX=a$.  Let the rows of $N$ form a basis of the left kernel of $X$, and put
\begin{equation}\label{eq:g-C}
g=wY\in\R^{1\times q},
\qquad
\Cm=NY\in\R^{e\times q},
\end{equation}
where $e=\delta-\rank X$.  We write
\[
\mathcal K(\Cm)=\ker\Cm\cap\Rpos^q .
\]
This is a polyhedral cone in the nonnegative orthant.

\begin{proposition}\label{prop:lifting}
The following statements hold.
\begin{enumerate}[label=\textup{(\roman*)}]
\item\label{item:lifting-dual} There is a row vector $\ell$ with \eqref{eq:basic-lift} if and only if
\begin{equation}\label{eq:dual-lift}
g\eta\ge0
\quad\text{for every }\eta\in\mathcal K(\Cm).
\end{equation}
\item\label{item:lifting-rays} Condition \eqref{eq:dual-lift} holds if and only if $g\eta\ge0$ for every vector $\eta$ which generates an extreme ray of $\mathcal K(\Cm)$.
\item\label{item:lifting-support} Let $\eta$ generate an extreme ray of $\mathcal K(\Cm)$, and let $J=\supp(\eta)$.  Then $\ker\Cm_J=\R\eta_J$.  In particular, $\rank\Cm_J=|J|-1$ and $|J|\le\rank\Cm+1$.
\end{enumerate}
\end{proposition}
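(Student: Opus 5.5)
For part~\ref{item:lifting-dual}, I would parametrize all solutions of $\ell X=a$ and then apply Lemma~\ref{lem:farkas}. Since the rows of $N$ form a basis of the left kernel of $X$, the solutions are exactly $\ell=w+zN$ with $z\in\R^{1\times e}$. Hence \eqref{eq:basic-lift} is solvable if and only if there is $z$ with $g+z\Cm=wY+zNY\ge0$.

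Lemma~\ref{lem:farkas} then says that this fails exactly when some $\eta\in\Rpos^q$ has $\Cm\eta=0$ and $g\eta<0$. Equivalently, it fails exactly when some $\eta\in\mathcal K(\Cm)$ has $g\eta<0$, which is the negation of \eqref{eq:dual-lift}. If $e=0$, I would read $\Cm$ as the empty matrix, so that $\mathcal K(\Cm)=\Rpos^q$ and the equivalence still holds. I would also check that the condition does not depend on the choice of $w$: two choices differ by an element of the left kernel of $X$, so $g$ changes by an element $z\Cm$, which vanishes on $\ker\Cm$.

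For part~\ref{item:lifting-rays}, I would use the first standard fact. The cone $\mathcal K(\Cm)$ lies in the nonnegative orthant, so it is pointed and generated by its extreme rays. Every $\eta$ in the cone is then a nonnegative combination of ray generators, and linearity of $\eta\mapsto g\eta$ gives the nontrivial direction. The converse is immediate.

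For part~\ref{item:lifting-support}, which I expect to be the only step needing care, let $\eta$ generate an extreme ray and put $J=\supp(\eta)$. Clearly $\eta_J\in\ker\Cm_J$ and $\eta_J>0$. Suppose $\xi\in\ker\Cm_J$ is not a multiple of $\eta_J$, and extend it by zero to $\hat\xi\in\R^q$.
\begin{itemize}
\item For small $\varepsilon>0$, both $\eta\pm\varepsilon\hat\xi$ lie in $\mathcal K(\Cm)$, because $\eta_J$ is strictly positive on $J$ and both vectors vanish off $J$.
\item Their sum is $2\eta$.
\item Extremality forces both to be multiples of $\eta$, hence $\hat\xi$ is a multiple of $\eta$, a contradiction.
\end{itemize}
Thus $\ker\Cm_J=\R\eta_J$, and rank-nullity gives $\rank\Cm_J=|J|-1$. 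Finally, $\Cm_J$ is a column submatrix of $\Cm$, so $|J|-1=\rank\Cm_J\le\rank\Cm$.
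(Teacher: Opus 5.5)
Your proposal is correct and follows essentially the same route as the paper. Part (i) parametrizes $\ell=w+zN$ and applies Lemma~\ref{lem:farkas}; part (ii) uses that the pointed cone is generated by its extreme rays; part (iii) perturbs $\eta$ by $\pm\varepsilon$ times a nonproportional kernel vector of $\Cm_J$ to contradict extremality. Your remarks on the case $e=0$ and on independence from the choice of $w$ are harmless extras.
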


\begin{proof}
Every solution of $\ell X=a$ has the form
\[
\ell=w+zN.
\]
Thus there is $\ell$ with \eqref{eq:basic-lift} if and only if there is $z\in\R^{1\times e}$ with
\begin{equation}\label{eq:z-lift}
g+z\Cm\ge0.
\end{equation}
By Lemma~\ref{lem:farkas}, there is no such $z$ exactly when there is an $\eta\ge0$ with
\[
\Cm\eta=0,
\qquad
g\eta<0.
\]
This proves part \ref{item:lifting-dual}.  For $e=0$ the statement is trivial, because then $\ell=w$ and both conditions say that $g\ge0$.

The cone $\mathcal K=\mathcal K(\Cm)$ is pointed and polyhedral, and it is generated by its extreme rays (see the facts recalled at the beginning of this subsection).  Therefore, a linear functional is nonnegative on $\mathcal K$ exactly when it is nonnegative on its extreme rays.  This proves part \ref{item:lifting-rays}. Let $\eta$ generate an extreme ray and put $J=\supp(\eta)$.  Every entry of $\eta_J$ is positive.  If $\dim\ker \Cm_J\ge2$, choose $v\in\ker \Cm_J$ not proportional to $\eta_J$.  For sufficiently small $\varepsilon>0$, both $\eta_J\pm\varepsilon v$ remain nonnegative, and their zero extensions give two nonproportional vectors of $\mathcal K$ whose average is $\eta$, contradicting extremality.  Hence $\dim\ker \Cm_J=1$, so $\ker\Cm_J=\R\eta_J$,
\[
\rank \Cm_J=|J|-1
\]
and $|J|\le\rank \Cm+1$.
\end{proof}

\begin{figure}[t]
\centering
\begin{tikzpicture}[x={(-1.470cm,-0.774cm)},y={(2.264cm,-0.503cm)},z={(0cm,2.537cm)},line join=round,line cap=round]
\def\conepanel#1#2{%
\coordinate (O)  at (0,0,0);
\coordinate (e1) at (1,0,0); \coordinate (e2) at (0,1,0); \coordinate (e3) at (0,0,1);
\coordinate (p1) at #1; \coordinate (p2) at #2;
\coordinate (P1) at (0.22,0,0.78); \coordinate (P2) at (0,0.30,0.70);
\draw[densely dotted] (O)--(e1) (O)--(e2) (O)--(e3);
\path[fill=black!30] (O)--(p1)--(p2)--cycle;
\draw[line width=0.9pt] (O)--(p1) (O)--(p2);
\path[fill=black!6,fill opacity=0.55] (e1)--(e2)--(e3)--cycle;
\path[pattern=north east lines,pattern color=black!40] (e3)--(P1)--(P2)--cycle;
\draw[thick] (e1)--(e2)--(e3)--cycle;
\draw[densely dashed] ($(P1)!-0.28!(P2)$)--($(P2)!-0.28!(P1)$);
\path[fill=black!30,fill opacity=0.85] (p1)--($(O)!1.30!(p1)$)--($(O)!1.30!(p2)$)--(p2)--cycle;
\draw[line width=0.9pt,-{Stealth[length=4.5pt]}] (p1)--($(O)!1.42!(p1)$);
\draw[line width=0.9pt,-{Stealth[length=4.5pt]}] (p2)--($(O)!1.42!(p2)$);
\draw[-{Stealth[length=4pt]}] (e1)--(1.40,0,0);
\draw[-{Stealth[length=4pt]}] (e2)--(0,1.32,0);
\draw[-{Stealth[length=4pt]}] (e3)--(0,0,1.32);
\node[slab,anchor=north east,inner sep=1pt] at (1.40,0,0) {$\eta_1$};
\node[slab,anchor=west,inner sep=2pt] at (0,1.32,0) {$\eta_2$};
\node[slab,anchor=south,inner sep=2pt] at (0,0,1.32) {$\eta_3$};
\draw[line width=1.6pt] (p1)--(p2);
\fill (p1) circle (2pt); \fill (p2) circle (2pt);
\fill (O) circle (1.1pt);
\node[slab,anchor=north west,inner sep=2.5pt] at (e1) {$\mathbf e_1$};
\node[slab,anchor=north,inner sep=3.5pt] at (e2) {$\mathbf e_2$};
\node[slab,anchor=east,inner sep=3pt] at (e3) {$\mathbf e_3$};
\node[slab,anchor=east] at ($(P1)!-0.30!(P2)$) {$g\eta=0$};
\node[slab,anchor=west] at ($(e3)+(0.30cm,-0.20cm)$) {$g\eta<0$};
}
\begin{scope}
\conepanel{(0.6,0,0.4)}{(0,0.4286,0.5714)}
\node[slab,anchor=east]  at ($(p1)+(-0.12cm,-0.16cm)$) {$\eta^{(1)}$, support $\{1,3\}$};
\node[slab,anchor=west]  at ($(p2)+(0.12cm,-0.16cm)$) {$\eta^{(2)}$, support $\{2,3\}$};
\node[slab,anchor=north] at ($(O)+(0.15cm,-0.06cm)$) {$0$};
\node[slab] at (0.35cm,-1.75cm) {(a)\ \ signs of $(\gamma_1,\gamma_2,\gamma_3)$: $(+,+,-)$};
\end{scope}
\begin{scope}[shift={(8.1cm,0cm)}]
\conepanel{(0.6667,0.3333,0)}{(0.6,0,0.4)}
\node[slab,anchor=north west] at ($(p1)+(0.16cm,-0.10cm)$) {$\eta^{(1)}$, support $\{1,2\}$};
\node[slab,anchor=east]  at ($(p2)+(-0.12cm,-0.16cm)$) {$\eta^{(2)}$, support $\{1,3\}$};
\node[slab,anchor=west] at ($(O)+(0.06cm,0.10cm)$) {$0$};
\node[slab] at (0.35cm,-1.75cm) {(b)\ \ signs of $(\gamma_1,\gamma_2,\gamma_3)$: $(+,-,-)$};
\end{scope}
\end{tikzpicture}
\caption{The cone $\mathcal K(\Cm)$ in the nonnegative orthant for $q=3$ and one row $\Cm=(\gamma_1,\gamma_2,\gamma_3)$.  The triangle is the simplex $\{\eta\ge0,\ \eta_1+\eta_2+\eta_3=1\}$, and the thick segment is its section with $\mathcal K(\Cm)$. The endpoints of the segment generate the extreme rays.  The condition $g\eta\ge0$ holds on $\mathcal K(\Cm)$ if and only if it holds at the two endpoints.  The supports of the endpoints depend on the signs of $\gamma_1,\gamma_2,\gamma_3$.}
\label{fig:cone}
\end{figure}

\subsubsection{Explicit conditions}\label{subsec:explicit}

Let $\Cm\in\R^{e\times q}$ and $g\in\R^{1\times q}$ be fixed.  For an integer $k\ge1$, a set $J=\{j_1<\cdots<j_k\}\subseteq[q]$, and a row set $E\subseteq[e]$ of size $k-1$, define
\begin{equation}\label{eq:circuit-cofactor}
\Delta^{E,J}_t
=(-1)^t\det \Cm[E,J\setminus\{j_t\}],
\qquad t\in[k].
\end{equation}
By Lemma~\ref{lem:cofactor}, the vector $\Delta^{E,J}=(\Delta^{E,J}_1,\ldots,\Delta^{E,J}_k)^{\mathsf T}$ spans the kernel of the $(k-1)\times k$ matrix $\Cm[E,J]$ whenever this matrix has rank $k-1$.  In all determinants the rows and the columns are taken in increasing order.  For $k=1$ we have $E=\varnothing$.  The determinant of the empty matrix is one, and hence $\Delta^{\varnothing,J}_1=-1$.

\begin{lemma}\label{lem:circuit-elimination}
Let $\Cm\in\R^{e\times q}$ and $g\in\R^{1\times q}$.  The condition
\begin{equation}\label{eq:farkas-condition}
g\eta\ge0
\quad\text{for every }\eta\in\mathcal K(\Cm)
\end{equation}
holds if and only if the following two implications hold for every integer $k$ with $1\le k\le\min\{q,e+1\}$, every set $J=\{j_1<\cdots<j_k\}\subseteq[q]$, and every set $E\subseteq[e]$ with $|E|=k-1$.
\begin{align}
&\text{if all minors of order $k$ of $\Cm_J$ are zero and } \Delta^{E,J}_t>0\text{ for every } t\in[k],\text{ then }
\sum_{t=1}^{k}g_{j_t}\Delta^{E,J}_t\ge0,
\label{eq:positive-circuit}\\
&\text{if all minors of order $k$ of $\Cm_J$ are zero and }\Delta^{E,J}_t<0\text{ for every }t\in[k],\text{ then }
\sum_{t=1}^{k}g_{j_t}\Delta^{E,J}_t\le0.
\label{eq:negative-circuit}
\end{align}
For $k=e+1$ the matrix $\Cm_J$ has no minors of order $k$, and the condition on the minors is empty. Both \eqref{eq:positive-circuit} and \eqref{eq:negative-circuit} are implications between basic conditions in the sense of Definition~\ref{def:polysystem}.  The number of their polynomials depends only on $(e,q)$, and the degrees of these polynomials in the entries of $(g,\Cm)$ are at most $e+1$.
\end{lemma}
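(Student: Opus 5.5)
By Proposition~\ref{prop:lifting}\ref{item:lifting-rays}, condition \eqref{eq:farkas-condition} is equivalent to $g\eta\ge0$ for every generator $\eta$ of an extreme ray of $\mathcal K(\Cm)$. The plan is therefore to show that the vectors $\Delta^{E,J}$ singled out by the hypotheses of \eqref{eq:positive-circuit}--\eqref{eq:negative-circuit}, after flipping the sign in the negative case, are exactly elements of $\mathcal K(\Cm)$, and that every extreme ray is generated by one of them. Both directions follow from this.

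\emph{Soundness of the implications.} Fix $k$, $J$ and $E$ with $|E|=k-1$, all order-$k$ minors of $\Cm_J$ zero, and all $\Delta^{E,J}_t$ of one strict sign. Some $\Delta^{E,J}_t$ is nonzero, so $\Cm[E,J]$ has rank $k-1$. The vanishing of the order-$k$ minors gives $\rank\Cm_J\le k-1$, hence $\rank\Cm_J=k-1$. Every row of $\Cm_J$ therefore lies in the row space of $\Cm[E,J]$, so $\ker\Cm_J=\ker\Cm[E,J]$. By Lemma~\ref{lem:cofactor}, this kernel is $\R\Delta^{E,J}$.

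Let $\eta$ be the zero extension of $\pm\Delta^{E,J}$, with the sign chosen so that all entries are positive. Then $\Cm\eta=0$ and $\eta\ge0$, so $\eta\in\mathcal K(\Cm)$. Condition \eqref{eq:farkas-condition} gives $g\eta\ge0$, which is exactly \eqref{eq:positive-circuit} or \eqref{eq:negative-circuit}. The case $k=e+1$, where there are no order-$k$ minors, is covered by the same argument, and so is $k=1$ with $\Delta=-1$.

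\emph{Completeness.} Let $\eta$ generate an extreme ray, with $J=\supp(\eta)$ and $k=|J|$. By Proposition~\ref{prop:lifting}\ref{item:lifting-support}, $\rank\Cm_J=k-1$, so $k\le e+1$; also $k\le q$. All order-$k$ minors of $\Cm_J$ therefore vanish. Choose $E$ to be $k-1$ rows on which $\Cm_J$ has rank $k-1$. Then $\ker\Cm[E,J]=\ker\Cm_J=\R\eta_J$, so by Lemma~\ref{lem:cofactor} $\Delta^{E,J}=\lambda\eta_J$ with $\lambda\ne0$. Since $\eta_J>0$, all entries of $\Delta^{E,J}$ share the sign of $\lambda$, and the corresponding implication applies. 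It gives $\lambda\,g\eta\ge0$ when $\lambda>0$, or $\lambda\,g\eta\le0$ when $\lambda<0$; in both cases $g\eta\ge0$. Proposition~\ref{prop:lifting}\ref{item:lifting-rays} then yields \eqref{eq:farkas-condition}.

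\emph{Remaining claims.} The hypotheses and conclusions of the implications are finite lists of sign conditions on minors and on $\sum_t g_{j_t}\Delta_t$, so each implication is between basic conditions in the sense of Definition~\ref{def:polysystem}. The number of triples $(k,J,E)$, and of minors for each, depends only on $(e,q)$. The minors have degree at most $k\le e+1$. The cofactors have degree $k-1$, so $\sum_t g_{j_t}\Delta^{E,J}_t$ has degree $k\le e+1$.

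The main point requiring care is the identification $\ker\Cm_J=\ker\Cm[E,J]$. It is used in both directions and relies on the vanishing-minor hypothesis, so I would check it first, including the edge cases $k=1$ and $k=e+1$.
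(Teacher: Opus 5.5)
Your proposal is correct and follows essentially the same route as the paper's proof: build a nonnegative kernel vector from $\pm\Delta^{E,J}$ for soundness, and use Proposition~\ref{prop:lifting}\ref{item:lifting-rays}--\ref{item:lifting-support} to show that every extreme ray is captured for completeness. The only differences are minor. You get $\ker\Cm_J=\ker\Cm[E,J]$ from equality of row spaces, whereas the paper uses a one-dimensional containment. You also absorb $k=1$ into the general case through the empty-determinant convention, whereas the paper treats it separately.
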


If the hypothesis of \eqref{eq:positive-circuit} holds, then $\Delta^{E,J}$ is a strictly positive vector which spans $\ker\Cm_J$, and if the hypothesis of \eqref{eq:negative-circuit} holds, then $-\Delta^{E,J}$ is such a vector (Step~\ref{step:circuit-forward} of the proof).  The extension $\eta$ of this vector by zero outside $J$ belongs to $\mathcal K(\Cm)$, and the conclusion of the implication is the inequality $g\eta\ge0$.  Every extreme ray of $\mathcal K(\Cm)$ is generated by such a vector $\eta$ (Step~\ref{step:circuit-reverse} of the proof).  For $e=1$ the conditions of the lemma are given in Example~\ref{ex:one-row}.  The proof is given in Appendix~\ref{app:proofs}.

\begin{example}\label{ex:one-row}
Let $e=1$.  Then $\Cm=(\gamma_1,\ldots,\gamma_q)$ is a row and $z$ in \eqref{eq:z-lift} is a scalar.  The inequality $g_j+z\gamma_j\ge0$ is the lower bound $z\ge-g_j/\gamma_j$ if $\gamma_j>0$, the upper bound $z\le-g_j/\gamma_j$ if $\gamma_j<0$, and the condition $g_j\ge0$ if $\gamma_j=0$.  Hence $z$ exists if and only if $g_j\ge0$ whenever $\gamma_j=0$, and $g_j|\gamma_{j'}|+g_{j'}\gamma_j\ge0$ whenever $\gamma_j>0>\gamma_{j'}$, i.e. every lower bound is at most every upper bound. In Figure~\ref{fig:cone}, the endpoint with support $\{j,j'\}$ is the vector $\eta=(|\gamma_{j'}|\mathbf e_j+\gamma_j\mathbf e_{j'})/(\gamma_j+|\gamma_{j'}|)$, and the inequality $g\eta\ge0$ at this endpoint is $g_j|\gamma_{j'}|+g_{j'}\gamma_j\ge0$. These are the conditions of Lemma~\ref{lem:circuit-elimination} for $k=1$ and $k=2$.  

In \eqref{eq:g-C-system} below, the entries of $g$ and $\Gamma$ are polynomials in the variables $y_{mij}$.  The signs of the $\gamma_j$ then depend on the values of the variables, and therefore the conditions are stated as implications for every $j$ and every pair $(j,j')$, see Figure~\ref{fig:cone}.
\end{example}

We now apply Lemma~\ref{lem:circuit-elimination} to condition \ref{cond:lifting}.  Fix a profile of $(I,q)$.  For every $m\in H$ fix a rational matrix $N_m$ whose rows form a basis of the left kernel of $\bar X_m$, and for every omitted row $i\notin S_m$ fix a rational row $w_{m,i}$ with $w_{m,i}\bar X_m=\bar a_{m,i}$.  Put
\begin{equation}\label{eq:g-C-system}
g=w_{m,i}\bar Y_m,
\qquad
\Cm=N_m\bar Y_m .
\end{equation}
The matrix $\Cm$ has $e_m=\delta_m-\sigma_m$ rows, and \eqref{eq:profile-dimension-range} gives $e_m\le q$.  Thus the supports and minors appearing in Lemma~\ref{lem:circuit-elimination} have size bounded in terms of the number of competing terms, not the ambient mode dimension.  The entries of $g$ and $\Cm$ are polynomials of degree at most one in the variables $y_{mij}$.  By Proposition~\ref{prop:lifting}\ref{item:lifting-dual} and Lemma~\ref{lem:circuit-elimination}, condition \ref{cond:lifting} for the row $i$ holds if and only if the conditions of Lemma~\ref{lem:circuit-elimination} hold for \eqref{eq:g-C-system}.  These are conditions on $\bar Y_m$ only.  From now on, $\mathcal N_{I,q}$ and $\mathcal N^{\mathrm{td}}_{I,p}$ denote the systems in which \ref{cond:lifting} has been replaced in this way.  They are polynomial systems in the sense of Definition~\ref{def:polysystem}.  Theorem~\ref{thm:representation} and Corollary~\ref{cor:criteria} remain true for them, because the two forms of \ref{cond:lifting} are equivalent.  Thus the ambient dimensions now affect the number of polynomial conditions but not the number of real variables.

\subsubsection{The number of variables}\label{subsec:count}

\begin{corollary}\label{cor:count}
For every profile of $(I,q)$, the polynomial system $\mathcal N_{I,q}$ has exactly
\begin{equation}\label{eq:exact-count}
q\kappa+q-1
\end{equation}
variables, and therefore at most
\begin{equation}\label{eq:variable-bound}
q(p+q-2)+q-1
\end{equation}
variables.  The same holds for $\mathcal N^{\mathrm{td}}_{I,p}$ with $q=p$.  No variable belongs to an inactive mode or to an omitted row.
\end{corollary}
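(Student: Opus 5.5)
The count follows from listing the variables introduced in Section~\ref{subsec:system} and checking that the elimination of Section~\ref{sec:lifting} adds none. I will first fix the variable set after the elimination. The only real unknowns are the entries $y_{mij}$ of \eqref{eq:simplex} for $m\in H$, $i\in[\delta_m-1]$, $j\in[q]$, together with the weight variables $\omega_1,\ldots,\omega_{q-1}$ of \eqref{eq:last-weight}. The last row of each $\bar Y_m$ and the weight $\omega_q$ are affine expressions in these unknowns, not new variables.

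The coefficient rows $\ell_{m,i}$ no longer appear. By Proposition~\ref{prop:lifting}\ref{item:lifting-dual} and Lemma~\ref{lem:circuit-elimination}, condition~\ref{cond:lifting} has been replaced by implications among polynomials in the entries of $g=w_{m,i}\bar Y_m$ and $\Cm=N_m\bar Y_m$. The matrices $N_m$ and the rows $w_{m,i}$ are fixed rational data computed from $\bar X_m$, so these polynomials involve only the $y_{mij}$. Likewise, condition~\eqref{eq:td} in $\mathcal N^{\mathrm{td}}_{I,p}$ uses only $\omega_j$ and $\bar Y_m$, so it introduces nothing new either.

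Next comes the count itself. The active modes contribute $\sum_{m\in H}q(\delta_m-1)$ variables. Since $\delta_m-1=0$ for $m\notin H$, this sum equals $q\sum_{m=1}^d(\delta_m-1)=q\kappa$. Adding the $q-1$ weight variables gives exactly $q\kappa+q-1$, which is \eqref{eq:exact-count}. This formula also covers $q=1$, where no weight variable appears. The bound \eqref{eq:variable-bound} then follows from the profile condition \eqref{eq:profile-budget}, namely $\kappa\le p+q-2$. Setting $q=p$ gives the statement for the term-disjoint system.

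The last claim is read off from the same list. Every $y_{mij}$ is indexed by an active mode $m$ and by a row index $i\le\delta_m-1$ of the selected-row block $S_m$, and weights are indexed by terms. Hence no variable belongs to an inactive mode or to an omitted row. The main point needing care is the claim that the replaced form of \ref{cond:lifting} truly involves no auxiliary unknowns. This requires checking that in Lemma~\ref{lem:circuit-elimination} the minors and the cofactor vectors $\Delta^{E,J}$ are polynomials in the entries of $\Cm$ and $g$ alone, which the lemma states explicitly.
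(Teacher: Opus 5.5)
Your proposal is correct and follows the paper's own proof. You count the $q(\delta_m-1)$ free entries of each $\bar Y_m$ for $m\in H$ and the $q-1$ weight variables, note that \ref{cond:simplex}--\ref{cond:core}, \eqref{eq:td}, and the Farkas-based replacement of \ref{cond:lifting} (polynomials in the entries of $g=w_{m,i}\bar Y_m$ and $\Cm=N_m\bar Y_m$, as stated in Lemma~\ref{lem:circuit-elimination}) add no further unknowns, and deduce the bound from \eqref{eq:profile-budget}.
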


\begin{proof}
For an active mode $m$, the matrix $\bar Y_m$ contains exactly $q(\delta_m-1)$ variables by \eqref{eq:simplex}. Summing over the active modes gives
\[
q\sum_{m\in H}(\delta_m-1)=q\kappa
\]
variables.  By \eqref{eq:last-weight}, there are exactly $q-1$ weight variables.  The conditions \ref{cond:simplex}--\ref{cond:core}, the inequalities \eqref{eq:td}, and the conditions which replace \ref{cond:lifting} contain no further variable.  The bound \eqref{eq:variable-bound} follows from \eqref{eq:profile-budget}.
\end{proof}

\begin{remark}\label{rem:counts}
For a balanced localized replacement with $q=p$, the bound \eqref{eq:variable-bound} is
\[
p(p+p-2)+p-1=2p^2-p-1.
\]
For a shortening localized replacement with $q<p$, the function
\[
q\longmapsto q(p+q-2)+q-1=q(p+q-1)-1
\]
is strictly increasing for $q\ge1$, so its maximum over $q<p$ occurs at $q=p-1$ and equals
\[
(p-1)(2p-3)+(p-2)=2p^2-4p+1
=2(p-1)^2-1.
\]
Thus the largest numbers of variables are
\[
\begin{array}{c|rrrr}
p&2&3&4&5\\ \hline
\text{balanced, $q=p$}&5&14&27&44\\
\text{shortening, $q=p-1$}&1&7&17&31
\end{array}
\]
\end{remark}

\section{Exact certification and reconstruction}\label{sec:one-shot}

By Theorem~\ref{thm:rank-localization}, a realization of some disjoint shortening pattern gives a decomposition of $\T$ of length $\rankp(\T)$.  We therefore combine the block systems of a pattern into a single system.  For reference, Table~\ref{tab:objects} collects the objects of Sections~\ref{sec:localization} and~\ref{sec:normalized} which are used in this section.

\subsection{Exact real-algebraic computation}\label{subsec:exact-computation}

We represent the solutions of polynomial systems exactly and use the following algorithm for deciding their feasibility.

\begin{definition}\label{def:exact-representation}
Let $\theta$ be a real algebraic number, and let $f\in\mathbb Z[T]$ be a nonzero polynomial with $f(\theta)=0$.
\begin{enumerate}[label=\textup{(\roman*)}]
\item The \emph{Thom encoding} of $\theta$ with respect to $f$ is the list of the signs of the derivatives at $\theta$, 
\[
\big( \sgn f'(\theta),\sgn f''(\theta),\ldots, \sgn f^{(\deg f)}(\theta) \big).
\] 
These signs determine $\theta$ among the real roots of $f$ \citep{CosteRoy1988}, \citep[Proposition~2.28 and Definition~2.29]{BasuPollackRoy2006}.  We say that $\theta$ is \emph{given by} $f$ and its Thom encoding.
\item An \emph{exact representation} of a point $y\in\Q(\theta)^k$ is a list $(f,\theta,g_0,g_1,\ldots,g_k)$, in which $\theta$ is given by $f$ and its Thom encoding, and $g_0,g_1,\ldots,g_k\in\mathbb Z[T]$ are polynomials with 
\[
g_0(\theta)\ne0,\qquad \text{and}\qquad y_i=\frac{g_i(\theta)}{g_0(\theta)}\quad \text{for every $i\in[k]$}.
\]
\item\label{item:exact-representation-bits} The \emph{bit length} of an exact representation is the sum of the bit lengths of all coefficients of $f,g_0,g_1,\ldots,g_k$ and of all entries of the Thom encoding of $\theta$.
\item\label{item:quotient-bits} Every element of $\Q(\theta)$ is a quotient $g_1(\theta)/g_0(\theta)$ with $g_0,g_1\in\mathbb Z[T]$ and $g_0(\theta)\ne0$.  The \emph{bit length} of such a quotient is the sum of the bit lengths of all coefficients of $g_0$ and $g_1$.
\end{enumerate}
\end{definition}

Unlike the bit length of a polynomial, the bit lengths in Definition~\ref{def:exact-representation}\ref{item:exact-representation-bits} and \ref{item:quotient-bits} are sums over all coefficients, so they also bound the degrees of the polynomials.  The implications allowed in Definition~\ref{def:polysystem} cause no additional quantifiers: each implication $\mathsf A\Rightarrow\mathsf B$ is the quantifier-free Boolean condition $\neg\mathsf A\vee\mathsf B$ in the same family of polynomial sign conditions.  Thus the number of variables is unchanged and the size of the Boolean formula is bounded by a constant multiple of the length defined above.  A polynomial in $k$ variables of degree at most $D$ has at most $\binom{D+k}{k}$ monomials, so the dense encoding size of a system of length $s$ is bounded by $s\binom{D+k}{k}(b+O(1))$, which is itself $(s+D+b+2)^{O(k+1)}$.  The following theorem is a form of the sampling and existential-theory results of \citet{BasuPollackRoy2006}, which we recall precisely.  Their Algorithm~13.2 (Sampling) takes $s$ polynomials of degree at most $D$ in $k$ variables, computes a set of real univariate representations whose points meet every semialgebraically connected component of every realizable sign condition, and reports the signs of all the polynomials at these points.  By \citet[Theorem~13.11]{BasuPollackRoy2006}, it performs $s^{k+1}D^{O(k)}$ arithmetic operations in the ring generated by the coefficients, the univariate representations have degree at most $O(D)^k$, and for integer coefficients of bit size at most $b$ their coefficients have bit size at most $bD^{O(k)}$.  A sentence $(\exists y)\,F(y)$ with $F$ a quantifier-free Boolean combination of the sign conditions is decided by reading these signs, as in \citet[Theorem~13.13]{BasuPollackRoy2006}.  The passage from the arithmetic complexity to a bit complexity of the form \eqref{eq:sampling-complexity} uses the bit-size analyses of the subroutines of Algorithm~13.2 in the same reference; one bounds the bit size of every intermediate integer by a quantity of the form $(s+D+b+2)^{O(k+1)}$, so that each arithmetic operation costs at most polynomially many bit operations in that quantity.

\begin{theorem}\label{thm:sampling}
There are an absolute constant $C_{\mathrm{sa}}$ and a deterministic algorithm with the following property.  For a polynomial system of length $s$ in $k$ variables whose polynomials have degree at most $D$ and bit length at most $b$, the algorithm decides whether the system has a solution, and it computes one solution in exact representation if a solution exists.  The number of bit operations is at most
\begin{equation}\label{eq:sampling-complexity}
(s+D+b+2)^{C_{\mathrm{sa}}(k+1)}.
\end{equation}
The degree of the polynomial $f$ of the computed solution is at most
\begin{equation}\label{eq:sampling-degree}
(D+1)^{C_{\mathrm{sa}}(k+1)},
\end{equation}
and the bit length of the exact representation of the computed solution is at most \eqref{eq:sampling-complexity}.
\end{theorem}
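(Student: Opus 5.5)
The plan is to reduce Theorem~\ref{thm:sampling} to the sampling algorithm of \citet{BasuPollackRoy2006} as recalled above, and then to extract one rational-function representation of a solution from the real univariate representation that this algorithm produces. First, rewrite the polynomial system as a quantifier-free Boolean formula. Each implication $\mathsf A\Rightarrow\mathsf B$ becomes $\neg\mathsf A\vee\mathsf B$, so the formula uses only the $s$ polynomials of the system, in the same $k$ variables, with degree at most $D$ and bit length at most $b$. The dense input size is $s\binom{D+k}{k}(b+O(1))\le(s+D+b+2)^{O(k+1)}$.

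Next, run Algorithm~13.2 (Sampling) on the family $F_1,\ldots,F_s$. The solution set is a union of realizable sign conditions. Every nonempty such condition has a sample point among the outputs, so the system is solvable if and only if some output point has a sign vector satisfying the Boolean formula. This is decided by reading the signs, as in Theorem~13.13. By Theorem~13.11 the cost is $s^{k+1}D^{O(k)}$ ring operations. The univariate representations have degree $O(D)^k$, which gives \eqref{eq:sampling-degree} after enlarging the constant, and their coefficient bit sizes are $bD^{O(k)}$.

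To convert arithmetic cost to bit cost, I would bound every intermediate integer in the subroutines (subresultants, Thom encodings, sign determinations) by $(s+D+b+2)^{O(k+1)}$ bits, using the bit-size analyses of the chapters cited. Multiplying the operation count by a polynomial in this size keeps the form \eqref{eq:sampling-complexity}.

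It remains to produce the exact representation. A real univariate representation gives $f\in\mathbb Z[T]$, a Thom encoding of a root $\theta$, and polynomials $g_0,g_1,\ldots,g_k$ with $y_i=g_i(\theta)/g_0(\theta)$. The algorithm's output guarantees $g_0(\theta)\ne0$, since $g_0$ is coprime to $f$ (or a derivative-based denominator nonvanishing at the root). This matches Definition~\ref{def:exact-representation} directly, with bit length bounded by the output size and hence by \eqref{eq:sampling-complexity}.

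The main obstacle is the bit-complexity bookkeeping: the cited Theorem~13.11 is stated arithmetically. I would rely on the book's companion bit-size bounds for each subroutine rather than redo them, and check that the exponent remains linear in $k+1$ uniformly in $s$, $D$ and $b$.
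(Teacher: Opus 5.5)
Your proposal is correct and follows the paper's own justification step for step. Implications become $\neg\mathsf A\vee\mathsf B$ over the same $s$ polynomials, Algorithm~13.2 with the bounds of Theorem~13.11 supplies sample points together with their sign vectors, and feasibility is read off as in Theorem~13.13. The arithmetic count is converted to bit operations through the bit-size bounds on intermediate integers, and the real univariate representation (with $g_0$ coprime to $f$) is exactly the exact representation of Definition~\ref{def:exact-representation}.

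The paper adds only the degenerate case $k=0$. This case lies outside the sampling algorithm: feasibility is decided by evaluating the constant sign conditions, and the empty tuple is represented by, for example, $f(T)=T$ with root $\theta=0$.
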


For $k=0$, feasibility is decided by evaluating the constant sign conditions. A feasible system has the empty coordinate tuple, represented, for example, by $f(T)=T$ and its root $\theta=0$. This case satisfies the same bounds.

\subsection{Combined systems}\label{subsec:combined}

Fix a disjoint shortening pattern $\mathfrak D=((I_\nu,q_\nu))_{\nu=1}^{c}$, and for every block fix a profile of $(I_\nu,q_\nu)$ of Definition~\ref{def:profile}, with effective order $\kappa_\nu$.  Let $\mathcal N_{I_\nu,q_\nu}$ be the polynomial system of Section~\ref{sec:lifting} for this profile. 

\begin{table}[t]
\centering
\small
\begin{tabularx}{\textwidth}{@{}llX@{}}
\toprule
Object & Defined in & Meaning\\
\midrule
block $(I,q)$ & Definition~\ref{def:pattern} & the supplied terms $P_r$, $r\in I$, are replaced by $q$ terms\\
pattern $\mathfrak D=((I_\nu,q_\nu))_{\nu=1}^{c}$ & Definition~\ref{def:pattern} & blocks with pairwise disjoint $I_\nu$ and $q_\nu<p_\nu$\\
realization of $\mathfrak D$ & Definition~\ref{def:pattern} & the new terms $Q_{\nu j}$ of all blocks, see \eqref{eq:local-replacements}\\
profile of $(I,q)$ & Definition~\ref{def:profile} & the numbers $\delta_m$ and the row sets $K_m$ of one block\\
system $\mathcal N_{I,q}$ & Definition~\ref{def:system}, Section~\ref{sec:lifting} & polynomial system of block $(I,q)$ for one profile\\
\bottomrule
\end{tabularx}
\caption{The objects of Sections~\ref{sec:localization} and \ref{sec:normalized} used in Section~\ref{sec:one-shot}. }
\label{tab:objects}
\end{table}

\begin{definition}\label{def:combined}
The \emph{combined system} $\mathcal N_{\mathfrak D}$ is the polynomial system whose variables are the variables of the systems $\mathcal N_{I_\nu,q_\nu}$, $\nu\in[c]$, where different blocks have different variables, and whose conditions are all conditions of these systems.  A solution of $\mathcal N_{\mathfrak D}$ consists of one solution of every $\mathcal N_{I_\nu,q_\nu}$, so $\mathcal N_{\mathfrak D}$ has a solution if and only if every $\mathcal N_{I_\nu,q_\nu}$ has a solution.  For the empty pattern, $\mathcal N_{\mathfrak D}$ has no variable and no condition.
\end{definition}

\begin{proposition}\label{prop:one-shot}
For a disjoint shortening pattern $\mathfrak D=((I_\nu,q_\nu))_{\nu=1}^{c}$ put $p_\nu=|I_\nu|$.
\begin{enumerate}[label=\textup{(\roman*)}]
\item\label{item:one-shot-completeness}
If $\rankp(\T)<R$, then there are a pattern $\mathfrak D$ with $\rho(\mathfrak D)=\rankp(\T)$ and, for every block of $\mathfrak D$, a profile with $\kappa_\nu\le p_\nu+q_\nu-2$, such that $\mathcal N_{\mathfrak D}$ has a solution.
\item\label{item:one-shot-soundness}
For every pattern $\mathfrak D$ and every choice of profiles, every solution of $\mathcal N_{\mathfrak D}$ gives a nonnegative decomposition of $\T$ of length $\rho(\mathfrak D)$.
\item\label{item:one-shot-rank-formula}
Consequently,
\begin{equation}\label{eq:feasible-rank-formula}
\rankp(\T)
=\min\Bigl(\{R\}\cup
\bigl\{\rho(\mathfrak D):
\mathcal N_{\mathfrak D}\text{ has a solution for some choice of profiles}\bigr\}
\Bigr).
\end{equation}
\end{enumerate}
\end{proposition}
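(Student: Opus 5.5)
The plan is to assemble the three parts from Theorem~\ref{thm:rank-localization}, Theorem~\ref{thm:representation}, and Lemma~\ref{lem:combine}. Parts \ref{item:one-shot-completeness} and \ref{item:one-shot-soundness} give the two inequalities, and part \ref{item:one-shot-rank-formula} is their combination.

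For part \ref{item:one-shot-completeness}, assume $\rho^*=\rankp(\T)<R$ and fix a minimal decomposition $\T=\sum_{j=1}^{\rho^*}Q_j$. Theorem~\ref{thm:rank-localization} gives a disjoint shortening pattern $\mathfrak D=((I_\nu,q_\nu))_{\nu=1}^{c}$ with $\rho(\mathfrak D)=\rho^*$. It also gives a realization in which every localized replacement $\sum_{r\in I_\nu}P_r=\sum_{j=1}^{q_\nu}Q_{\nu j}$ is irreducible. Since $\rho^*<R$, the pattern is nonempty, although the argument does not need this. I apply Theorem~\ref{thm:representation}\ref{item:completeness} to each block separately. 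For each block this yields a profile of $(I_\nu,q_\nu)$ with $\kappa_\nu\le p_\nu+q_\nu-2$ for which $\mathcal N_{I_\nu,q_\nu}$ has a solution. The variables of different blocks are disjoint, so by Definition~\ref{def:combined} the tuple of block solutions is a solution of $\mathcal N_{\mathfrak D}$. One point needs checking: the systems of Section~\ref{sec:lifting}, with \ref{cond:lifting} replaced via Proposition~\ref{prop:lifting} and Lemma~\ref{lem:circuit-elimination}, must still satisfy Theorem~\ref{thm:representation}. The paper states this explicitly, so it is enough to cite it.

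For part \ref{item:one-shot-soundness}, take any pattern and any choice of profiles, and a solution of $\mathcal N_{\mathfrak D}$. Its restriction to the variables of block $\nu$ solves $\mathcal N_{I_\nu,q_\nu}$. Theorem~\ref{thm:representation}\ref{item:soundness}, which holds for every profile and requires no bound on $\kappa_\nu$ (Remark~\ref{rem:rank-of-solution}), then gives a localized replacement $\sum_{r\in I_\nu}P_r=\sum_{j=1}^{q_\nu}\widetilde Q_{\nu j}$ by nonzero nonnegative rank-one tensors. Together these replacements form a realization of $\mathfrak D$, and Lemma~\ref{lem:combine} turns it into a nonnegative decomposition of $\T$ of length $\rho(\mathfrak D)$. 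For the empty pattern, the conclusion is just the supplied decomposition of length $R$.

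For part \ref{item:one-shot-rank-formula}, denote the right side of \eqref{eq:feasible-rank-formula} by $\mu$. The set in \eqref{eq:feasible-rank-formula} is finite, because there are finitely many patterns and finitely many profiles for each. Part \ref{item:one-shot-soundness} gives $\rankp(\T)\le\rho(\mathfrak D)$ for every feasible pattern, and also $\rankp(\T)\le R$; hence $\rankp(\T)\le\mu$. For the reverse inequality, if $\rankp(\T)=R$ then $\mu\le R=\rankp(\T)$. Otherwise part \ref{item:one-shot-completeness} provides a feasible pattern with $\rho(\mathfrak D)=\rankp(\T)$, so $\mu\le\rankp(\T)$. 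I expect no real obstacle in this proof. The only delicate step is confirming that completeness survives the Farkas elimination, and that is guaranteed by the equivalence of the two forms of \ref{cond:lifting} established in Section~\ref{subsec:explicit}.
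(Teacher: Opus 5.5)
Your proposal is correct and follows essentially the same route as the paper's proof. You combine the second statement of Theorem~\ref{thm:rank-localization} with Theorem~\ref{thm:representation}\ref{item:completeness} blockwise for completeness, and Theorem~\ref{thm:representation}\ref{item:soundness} with Lemma~\ref{lem:combine} for soundness, then take the two inequalities together for the rank formula.
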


\begin{proof}
For part~\ref{item:one-shot-completeness}, fix a minimal decomposition of $\T$.  By the second statement of Theorem~\ref{thm:rank-localization}, there are a pattern $\mathfrak D$ with $\rho(\mathfrak D)=\rankp(\T)$ and a realization of $\mathfrak D$ in which every localized replacement is irreducible.  By Theorem~\ref{thm:representation}\ref{item:completeness}, for every block there is a profile with $\kappa_\nu\le p_\nu+q_\nu-2$ such that the system of this profile has a solution.  Different blocks have different variables, so these solutions together form a solution of $\mathcal N_{\mathfrak D}$.

For part~\ref{item:one-shot-soundness}, fix a pattern $\mathfrak D$, profiles of its blocks, and a solution of $\mathcal N_{\mathfrak D}$.  It consists of one solution of every $\mathcal N_{I_\nu,q_\nu}$.  By Theorem~\ref{thm:representation}\ref{item:soundness}, the solution of $\mathcal N_{I_\nu,q_\nu}$ gives a localized replacement on $I_\nu$ with $q_\nu$ terms.  These replacements form a realization of $\mathfrak D$, and Lemma~\ref{lem:combine} gives a nonnegative decomposition of $\T$ of length $\rho(\mathfrak D)$.

For part~\ref{item:one-shot-rank-formula}, the supplied decomposition has length $R$, and by part~\ref{item:one-shot-soundness} every number $\rho(\mathfrak D)$ in the set on the right-hand side of \eqref{eq:feasible-rank-formula} is the length of a nonnegative decomposition of $\T$.  Hence $\rankp(\T)$ is at most the minimum in \eqref{eq:feasible-rank-formula}.  Conversely, the minimum is at most $R$, and if $\rankp(\T)<R$, then by part~\ref{item:one-shot-completeness} the set contains $\rho(\mathfrak D)=\rankp(\T)$.  In both cases the minimum is at most $\rankp(\T)$.
\end{proof}

\begin{lemma}\label{lem:patterns}
There are at most $G(R)$ disjoint shortening patterns on $[R]$, and they can be listed in time $G(R)$.
\end{lemma}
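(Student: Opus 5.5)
The plan is to bound the number of patterns by a crude combinatorial count and to enumerate them directly. A disjoint shortening pattern $\mathfrak D=((I_\nu,q_\nu))_{\nu=1}^{c}$ consists of pairwise disjoint subsets $I_\nu\subseteq[R]$ of size at least two, each paired with an integer $q_\nu$ satisfying $1\le q_\nu<p_\nu\le R$. Since the blocks are disjoint and each has at least two elements, $c\le\lfloor R/2\rfloor$.

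I would encode a pattern as a map $\phi:[R]\to\{0,1,\dots,R\}$, where $\phi(r)=0$ means that $r$ lies in no block and $\phi(r)=\nu$ means that $r\in I_\nu$. Together with $\phi$ I record the tuple $(q_1,\dots,q_c)\in[R]^c$. Every pattern arises from at least one such pair, because the definition treats the blocks as an indexed family. The pairs that fail the constraints (empty or singleton blocks, $q_\nu\ge p_\nu$, non-consecutive labels) are simply discarded. The number of pairs is at most $(R+1)^R\cdot\sum_{c\le R}R^{c}\le (R+1)^{2R+1}$, so we may take $G(R)=(R+1)^{2R+1}$ for the count.

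For the listing, the algorithm runs through all such pairs in lexicographic order. For each pair it checks in $O(R^2)$ word operations that the labels used are exactly $\{1,\dots,c\}$, that every block has size at least two, and that $1\le q_\nu<|I_\nu|$. It outputs the pattern if these checks pass. Optionally, it deduplicates patterns that differ only by relabeling blocks, for instance by keeping only the label maps in which block labels first appear in increasing order. This is not needed for the bound, since listing a pattern more than once only affects constants. The total time is $G(R)\cdot O(R^2)$, which is again a computable function of $R$ alone. This matches the convention that $G$ may change from occurrence to occurrence.

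None of these steps is a real obstacle. The only point to watch is that the enumeration must not depend on $L$, $d$ or $n_m$, and it does not, because patterns involve only index sets in $[R]$ and integers at most $R$. The empty pattern corresponds to $\phi\equiv0$ with $c=0$ and is included.
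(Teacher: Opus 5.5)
Your proof is correct and follows the paper's argument. A pattern is finite combinatorial data on $[R]$: pairwise disjoint blocks of size at least two, each with an integer $1\le q_\nu<p_\nu$. So exhaustive enumeration with the checks of Definition~\ref{def:pattern} bounds both the number of patterns and the listing time by functions of $R$ alone. Your label-map encoding only makes the paper's unspecified $G(R)$ explicit, for instance $G(R)=(R+1)^{2R+1}$.
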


\begin{proof}
A pattern is finite combinatorial data on the $R$ labels: it consists of pairwise disjoint subsets $I\subseteq[R]$ of size at least two, together with one integer $q\in[|I|-1]$ for each chosen subset.  Hence the number of patterns depends only on $R$.  Exhaustive enumeration and the checks in Definition~\ref{def:pattern} therefore take $G(R)$ time.
\end{proof}

\begin{proposition}\label{prop:dimension}
Let $\mathfrak D=((I_\nu,q_\nu))_{\nu=1}^{c}$ be a disjoint shortening pattern, put $p_\nu=|I_\nu|$ and $\rho=\rho(\mathfrak D)$, and fix profiles of its blocks with effective orders $\kappa_\nu$.  The number of variables of the combined system $\mathcal N_{\mathfrak D}$ is
\begin{equation}\label{eq:exact-total-variables}
\sum_{\nu=1}^{c}(q_\nu\kappa_\nu+q_\nu-1).
\end{equation}
If $\kappa_\nu\le p_\nu+q_\nu-2$ for every $\nu$, then this number is at most
\begin{equation}\label{eq:target-rank-variable-bound}
\rho(R+\rho-1)-1.
\end{equation}
If moreover $\mathfrak D$ is nonempty, then $\rho\le R-1$, and the number of variables is at most
\begin{equation}\label{eq:uniform-shorter-variable-bound}
2R^2-4R+1<2R^2.
\end{equation}
\end{proposition}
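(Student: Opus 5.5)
The exact count \eqref{eq:exact-total-variables} is immediate. By Definition~\ref{def:combined}, the variables of $\mathcal N_{\mathfrak D}$ are the disjoint union of the variables of the block systems. By Corollary~\ref{cor:count}, the block system $\mathcal N_{I_\nu,q_\nu}$ has exactly $q_\nu\kappa_\nu+q_\nu-1$ variables. Summing over $\nu$ gives \eqref{eq:exact-total-variables}.

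For \eqref{eq:target-rank-variable-bound}, I will insert $\kappa_\nu\le p_\nu+q_\nu-2$ to get the per-block bound $q_\nu(p_\nu+q_\nu-1)-1$. It remains to show
\[
\sum_\nu\bigl(q_\nu(p_\nu+q_\nu-1)-1\bigr)\le\rho(R+\rho-1)-1.
\]
For the empty pattern the left side is $0$ and $\rho=R\ge1$, so the right side is nonnegative (assuming $R\ge1$). For $c\ge1$, set $Q=\sum_\nu q_\nu$ and $P=\sum_\nu p_\nu\le R$. Then $\rho=R-P+Q\ge Q$, and also $p_\nu\le P\le R$.

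The natural route is to bound each $q_\nu(p_\nu+q_\nu-1)$ by $q_\nu(R+\rho-1)$. This needs $p_\nu+q_\nu\le R+\rho$, which holds because $p_\nu\le R$ and $q_\nu\le Q\le\rho$. Summing gives
\[
\sum_\nu q_\nu(p_\nu+q_\nu-1)\le Q(R+\rho-1)\le\rho(R+\rho-1),
\]
and the constants then give $-c\le-1$ because $c\ge1$. So the whole left side is at most $\rho(R+\rho-1)-1$.

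For \eqref{eq:uniform-shorter-variable-bound}, a nonempty pattern has some block with $p_\nu-q_\nu\ge1$, so $\rho\le R-1$. The map $\rho\mapsto\rho(R+\rho-1)$ is increasing for $\rho\ge0$, so the bound is at most
\[
(R-1)(2R-2)-1=2R^2-4R+1.
\]
This is less than $2R^2$ since $R\ge1$.

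The only delicate point is the bookkeeping of the $-1$ terms: each block contributes $-1$, and one of them must be kept to match the final $-1$. That works because $c\ge1$ in the nonempty case, and the empty case is checked separately.
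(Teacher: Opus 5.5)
Your proposal is correct and follows the paper's proof essentially step for step: the exact count from Corollary~\ref{cor:count} together with disjointness of block variables, the bounds $p_\nu\le R$ and $q_\nu\le\sum_\nu q_\nu\le\rho$ giving the factor $R+\rho-1$, keeping one $-1$ because $c\ge1$, and evaluating the increasing map $\rho\mapsto\rho(R+\rho-1)$ at $\rho=R-1$. Your explicit check that $2R^2-R-1\ge0$ in the empty case is a small point the paper leaves implicit when it notes only that the empty pattern has no variables.
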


\begin{proof}
By Corollary~\ref{cor:count}, the system of the block $\nu$ has $q_\nu\kappa_\nu+q_\nu-1$ variables, and different blocks have different variables.  This gives \eqref{eq:exact-total-variables}.

The sets $I_\nu$ are pairwise disjoint subsets of $[R]$, so $\sum_\nu p_\nu\le R$, and by \eqref{eq:resulting-length} $\sum_\nu q_\nu=\rho-R+\sum_\nu p_\nu\le\rho$.  Hence $p_\nu+q_\nu-1\le R+\rho-1$ for every $\nu$.  If $\kappa_\nu\le p_\nu+q_\nu-2$ for every $\nu$, then
\[
\sum_{\nu=1}^{c}(q_\nu\kappa_\nu+q_\nu-1)
\le\sum_{\nu=1}^{c}q_\nu(p_\nu+q_\nu-1)-c
\le(R+\rho-1)\sum_{\nu=1}^{c}q_\nu-c
\le\rho(R+\rho-1)-c.
\]
For $c\ge1$ this proves \eqref{eq:target-rank-variable-bound}, and for $c=0$ the number of variables is $0$.

A nonempty pattern has at least one block, and every block has $q_\nu<p_\nu$, so $\rho\le R-1$.  The number $\rho(R+\rho-1)$ increases with $\rho$, and for $\rho=R-1$ it equals $(R-1)(2R-2)$.  Hence the number of variables is at most $(R-1)(2R-2)-1=2R^2-4R+1$.
\end{proof}

\subsection{The size of the systems}\label{subsec:size}

\begin{proposition}\label{prop:one-shot-format}
Let the supplied decomposition have length $R$ and total bit length $L$.
\begin{enumerate}[label=\textup{(\roman*)}]
\item\label{item:format-single}
Fix $I\subseteq[R]$, $p=|I|\ge2$, and $q\ge1$.  There are at most $G(p,q)L^{O(p+q)}$ profiles of $(I,q)$, and they can be listed in time $G(p,q)L^{O(p+q)}$.  For every profile, $\mathcal N_{I,q}$ and, for $q=p$, $\mathcal N^{\mathrm{td}}_{I,p}$ are polynomial systems of length at most $G(p,q)(L+1)$ in at most $q(p+q-2)+q-1$ variables.  Their polynomials have degree at most $p+q-1$ and bit length at most $G(p,q)(L+1)^{C}$.
\item\label{item:format-combined}
There are at most $G(R)L^{O(R)}$ choices of a nonempty disjoint shortening pattern $\mathfrak D$ on $[R]$ and of profiles for its blocks, and they can be listed in time $G(R)L^{O(R)}$.  For every such choice, $\mathcal N_{\mathfrak D}$ is a polynomial system of length at most $G(R)(L+1)$ in fewer than $2R^2$ variables.  Its polynomials have degree at most $2R-2$ and bit length at most $G(R)(L+1)^{C}$.
\end{enumerate}
\end{proposition}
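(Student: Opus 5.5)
Part~\ref{item:format-single} is a direct count over the data of Definition~\ref{def:profile}, followed by a bound on the size of each of the conditions \ref{cond:simplex}--\ref{cond:core}, \eqref{eq:td}, and the replacement of \ref{cond:lifting} from Section~\ref{sec:lifting}. Part~\ref{item:format-combined} then follows by summing these bounds over the blocks of a pattern and invoking Lemma~\ref{lem:patterns} and Proposition~\ref{prop:dimension}.

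\emph{Counting profiles.} By \eqref{eq:profile-budget} at most $p+q-2$ modes are active. Since $d\le L$, the active set $H$ can be chosen in at most $L^{p+q-2}$ ways. Each $\delta_m$ lies in a range of size at most $q+1$, so these integers contribute a factor depending only on $(p,q)$. The sets $K_m$ satisfy $\sum_m|K_m|\le\kappa\le p+q-2$ and $n_m\le L$, so they contribute at most $L^{O(p+q)}$ choices. To list the profiles, one computes $\sigma_m$ and the lexicographically first row basis $S_m^{0}$ in every mode by exact Gaussian elimination over $\Q$, which takes polynomial time in $L$. The cost of the listing is then dominated by the number of profiles.

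\emph{Size of each system.} The data $\bar X_m$, $\alpha_r$, $\Omega_I$, $N_m$, and $w_{m,i}$ are rational numbers obtained by Gaussian elimination and by products of at most $d$ rationals of the input. Their bit sizes are therefore polynomial in $L$, and clearing denominators gives integer coefficients of bit length $G(p,q)(L+1)^C$.
\begin{itemize}
\item Condition \ref{cond:simplex} consists of $q\sum_{m\in H}\delta_m\le 2q(p+q-2)$ linear inequalities.
\item Condition \ref{cond:weights} consists of $q$ linear inequalities.
\item Condition \ref{cond:core} equates $\prod_{m\in H}\delta_m$ entries. This number is bounded in terms of $p,q$, since $\sum_{m\in H}\delta_m\le 2(p+q-2)$. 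Each entry polynomial has degree $|H|+1\le p+q-1$.
\item The inequalities \eqref{eq:td} number $p^2$ and have degree $2$.
\item The elimination of \ref{cond:lifting} contributes, for each active mode and each omitted row, the implications of Lemma~\ref{lem:circuit-elimination} for $(g,\Cm)$ in \eqref{eq:g-C-system}. Since $e_m\le q$, their number depends only on $q$, and their degree is at most $e_m+1\le q+1$.
\end{itemize}
The main step is therefore controlling the lifting conditions. The number of omitted rows is at most $\sum_m n_m\le L$, so \ref{cond:lifting} contributes $G(p,q)L$ conditions, and the total length is $G(p,q)(L+1)$. I expect the degree bound to be the delicate point. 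The minors have degree at most $e_m\le\delta_m-\sigma_m\le\min\{q,\kappa\}$, and the products $g_{j_t}\Delta_t$ have degree at most $e_m+1\le\kappa+1\le p+q-1$. So all degrees are at most $p+q-1$, using $\delta_m-1\le\kappa$. The bit length of the coefficients of these polynomials also has to be checked: expanding a determinant of linear forms with coefficients of bit size $(L+1)^C$ yields coefficient bit size $G(p,q)(L+1)^{C}$. The variable count is Corollary~\ref{cor:count}.

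\emph{Combined systems.} A nonempty pattern has $c\le R/2$ blocks, and by Lemma~\ref{lem:patterns} there are $G(R)$ patterns. Choosing one profile for each block gives $\prod_\nu G(p_\nu,q_\nu)L^{O(p_\nu+q_\nu)}\le G(R)L^{O(R)}$ choices, because $\sum_\nu(p_\nu+q_\nu)\le 2R$. The length of $\mathcal N_{\mathfrak D}$ is the sum of the block lengths, hence at most $G(R)(L+1)$. The bit lengths stay at most $G(R)(L+1)^C$. Degrees are at most $\max_\nu(p_\nu+q_\nu-1)\le R+(R-1)-1=2R-2$, using $q_\nu<p_\nu\le R$. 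By Proposition~\ref{prop:dimension}, with $\kappa_\nu\le p_\nu+q_\nu-2$ enforced by \eqref{eq:profile-budget}, the number of variables is fewer than $2R^2$.
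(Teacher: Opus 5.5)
Your proposal is correct and follows essentially the same bookkeeping as the paper's proof. It counts $H$, the integers $\delta_m$, and the sets $K_m$ using $d\le L$ and $\sum_m n_m\le L$, and bounds the rational data via Gaussian elimination. It then counts and bounds the degrees of \ref{cond:simplex}--\ref{cond:core}, \eqref{eq:td}, and the lifting implications over at most $L$ omitted rows. Finally it sums over the blocks of a pattern using Lemma~\ref{lem:patterns} and Proposition~\ref{prop:dimension}.

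The only differences are cosmetic. You bound the range of each $\delta_m$ by $q+1$ and the lifting degrees by $e_m+1\le q+1\le p+q-1$ (using $p\ge2$), whereas the paper uses $e_m\le\delta_m-1\le p+q-2$.
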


The proof is routine bookkeeping and is given in Appendix~\ref{app:complexity}.

\subsection{The algorithm}\label{subsec:algorithm}

Algorithm~\ref{alg:main} is the algorithm of Theorem~\ref{thm:main}.  It lists sets, patterns, and profiles in lexicographic order.  For this purpose, a set is written as the increasing sequence of its elements, and a disjoint shortening pattern as the sequence of its blocks $(I_\nu,q_\nu)$, ordered by the smallest element of $I_\nu$.  A profile is written as the increasing sequence of its active modes $m$, followed by the pairs $(\delta_m,K_m)$ in the same order.  In a reconstructed decomposition, the terms $Q_{\nu j}$ of the blocks come first, in the order $\nu=1,\ldots,c$ and, within the block $\nu$, in the order $j=1,\ldots,q_\nu$.  The unchanged supplied terms $P_r$ follow in increasing order of $r$.

The algorithm uses two procedures.  For a polynomial system, $\textsc{Solve}$ decides whether the system has a solution and, if it has one, returns one solution in exact representation, see Theorem~\ref{thm:sampling}.  The coordinates of this solution lie in a field $\Q(\theta)$.  From a solution of $\mathcal N_{\mathfrak D}$ or of $\mathcal N^{\mathrm{td}}_{I,p}$, $\textsc{Reconstruct}$ computes the corresponding decomposition of $\T$.  Since the coefficient rows $\ell_{m,i}$ were eliminated in Section~\ref{sec:lifting}, they are not part of the solution, and $\textsc{Reconstruct}$ first computes them with entries in $\Q(\theta)$ by Lemma~\ref{lem:field-lifting}\ref{item:field-linear-systems} in Appendix~\ref{app:complexity}.  Then it forms the localized replacement of every block by Theorem~\ref{thm:representation}\ref{item:soundness}, and it replaces the supplied terms of every block by the new terms, as in \eqref{eq:combined-decomposition}.

\begin{algorithm}[t]
\small
\caption{The algorithm of Theorem~\ref{thm:main}}
\label{alg:main}
\begin{algorithmic}[1]
\Require an explicit rational list $A^{(m)}\in\Qpos^{n_m\times R}$, $m\in[d]$, with no zero column
\Ensure $\rankp(\T)$; an exact minimal decomposition $\mathcal D^*$; the two decisions of Theorem~\ref{thm:main}\ref{item:main-decisions}; the exact competitors of Theorem~\ref{thm:main}\ref{item:main-witnesses}
\If{$R=1$}\label{line:base}
\State \Return $\rankp(\T)=1$, the supplied decomposition, and the decisions \textsc{minimal} and \textsc{unique}
\EndIf
\State $\rho^*\gets R$\Comment{Phase I: computation of the rank}\label{line:init}
\For{$\rho=1,2,\ldots,R-1$}\label{line:loop-rho}
\For{every nonempty disjoint shortening pattern $\mathfrak D$ with $\rho(\mathfrak D)=\rho$}\label{line:loop-pattern}
\For{every choice of profiles for the blocks of $\mathfrak D$}\label{line:loop-profile}
\State form the combined system $\mathcal N_{\mathfrak D}$\Comment{Definition~\ref{def:combined} and Proposition~\ref{prop:one-shot-format}}\label{line:build}
\If{$\textsc{Solve}(\mathcal N_{\mathfrak D})$ returns a solution $y$ with coordinates in $\Q(\theta)$}\label{line:sample}
\State $\rho^*\gets\rho$ and \textbf{go to} line~\ref{line:reconstruct}\label{line:accept}
\EndIf
\EndFor
\EndFor
\EndFor\label{line:rank-R}
\If{$\rho^*=R$}\Comment{Phase II: exact reconstruction}\label{line:reconstruct}
\State $\mathcal D^*\gets{}$the supplied decomposition
\Else
\State $\mathcal D^*\gets\textsc{Reconstruct}(y)$, a nonnegative decomposition of length $\rho^*$ with entries in $\Q(\theta)$
\EndIf
\State \textbf{output} $\rankp(\T)=\rho^*$, the decomposition $\mathcal D^*$, and \textsc{minimal} exactly if $\rho^*=R$\label{line:decide-min}
\If{$\rho^*=R$}\Comment{Phase III: essential uniqueness at length $R$}\label{line:unique-branch}
\For{every $I\subseteq[R]$ with $|I|\ge2$, by increasing size and then lexicographically}\label{line:loop-subset}
\For{every profile of the pair $(I,|I|)$}\label{line:loop-td-profile}
\State form the system $\mathcal N^{\mathrm{td}}_{I,|I|}$\Comment{Definition~\ref{def:system} and Corollary~\ref{cor:criteria}}\label{line:build-td}
\If{$\textsc{Solve}(\mathcal N^{\mathrm{td}}_{I,|I|})$ returns a solution $y'$}\label{line:sample-td}
\State \textbf{output} \textsc{not unique} and \Return $\textsc{Reconstruct}(y')$\label{line:witness-unique}
\EndIf
\EndFor
\EndFor
\State \textbf{output} \textsc{unique} and \Return\label{line:unique}
\Else\label{line:nonmin-branch}
\State \textbf{output} \textsc{not unique}, with $\mathcal D^*$ as the shorter competitor, and \Return\label{line:short-witness}
\EndIf
\end{algorithmic}
\end{algorithm}

\begin{proof}[Proof of Theorem~\ref{thm:main}]
\stepcounter{proofstepblock}
\proofstep{the case $R=1$}
This is line~\ref{line:base} of Algorithm~\ref{alg:main}.  A nonzero nonnegative rank-one tensor has nonnegative rank one, and its one-term decomposition is essentially unique by Lemma~\ref{lem:product-equality}.  The algorithm returns the supplied decomposition; henceforth assume $R\ge2$.

\proofstep[step:main-rank]{computation of the rank}
This is Phase~I, lines~\ref{line:init}--\ref{line:rank-R} of Algorithm~\ref{alg:main}.  For $\rho=1,\ldots,R-1$, the algorithm goes through the nonempty disjoint shortening patterns $\mathfrak D$ with $\rho(\mathfrak D)=\rho$ and the choices of profiles for their blocks, as in Proposition~\ref{prop:one-shot-format}\ref{item:format-combined}.  For each of them it applies $\textsc{Solve}$ to the combined system $\mathcal N_{\mathfrak D}$, and it stops when $\textsc{Solve}$ returns a solution.  Hence $\rho^*$ is the smallest resulting length $\rho(\mathfrak D)<R$ of a combined system which has a solution, and $\rho^*=R$ if there is no such system.  By Proposition~\ref{prop:one-shot}\ref{item:one-shot-rank-formula}, in both cases $\rho^*=\rankp(\T)$.

\proofstep[step:main-reconstruct]{exact reconstruction of a minimal decomposition}
This is Phase~II, line~\ref{line:reconstruct}.  If $\rho^*=R$, the algorithm outputs the supplied decomposition, which is minimal by Step~\ref{step:main-rank}. Otherwise, Step~\ref{step:main-rank} gives a combined system $\mathcal N_{\mathfrak D}$ with $\rho(\mathfrak D)=\rho^*$ and a solution $y$ of it whose coordinates lie in one field $\Q(\theta)$. The coordinates of $y$ are the values of the variables $y_{mij}$ and $\omega_1,\ldots,\omega_{q-1}$ of all blocks, so by \eqref{eq:simplex} and \eqref{eq:last-weight} the point $y$ gives the matrices $\bar Y_m$ and the weights $\omega_1,\ldots,\omega_q$ of every block.  $\textsc{Reconstruct}$ computes $\mathcal D^*$ from $y$ as follows.
\begin{enumerate}[label=\textup{(\arabic*)}]
\item\label{item:reconstruct-rows} For every block, every active mode $m$ of its profile, and every omitted row $i$, it computes a coefficient row $\ell_{m,i}$ with entries in $\Q(\theta)$.  These rows are not part of $y$, because they were eliminated in Section~\ref{sec:lifting}.  At the point $y$, condition \eqref{eq:row-lift} is a system of linear equations and inequalities in $\ell_{m,i}$ with coefficients in $\Q(\theta)$.  It has a real solution, because $y$ satisfies the conditions which replace \ref{cond:lifting}, so Lemma~\ref{lem:field-lifting}\ref{item:field-linear-systems} gives such a row.
\item\label{item:reconstruct-terms} For every block, it forms the terms \eqref{eq:reconstructed-term} of the localized replacement of Theorem~\ref{thm:representation}\ref{item:soundness}.  The matrix $\Lambda_m$ has the rows of the identity matrix on $S_m$ and the rows $\ell_{m,i}$ elsewhere, and
\[
\widetilde Q_j=\omega_j\Bigl(\bigotimes_{m\in H}\Lambda_m\bar Y_m[:,j]\Bigr)\otimes\Bigl(\bigotimes_{m\notin H}u_m\Bigr),\qquad j\in[q].
\]
Only additions and multiplications in $\Q(\theta)$ are used.
\item\label{item:reconstruct-list} It lists the terms of these replacements, followed by the unchanged supplied terms, as in \eqref{eq:combined-decomposition}.
\end{enumerate}
All entries of $\mathcal D^*$ lie in the same field $\Q(\theta)$.  By Proposition~\ref{prop:one-shot}\ref{item:one-shot-soundness} and Step~\ref{step:main-rank}, $\mathcal D^*$ has length $\rho(\mathfrak D)=\rho^*=\rankp(\T)$, so it is minimal.

\proofstep{minimality and essential uniqueness of the supplied decomposition}
These are lines~\ref{line:decide-min}--\ref{line:short-witness} of Algorithm~\ref{alg:main}.  By Step~\ref{step:main-rank}, the supplied decomposition is minimal exactly when $\rho^*=R$, and line~\ref{line:decide-min} outputs \textsc{minimal} exactly in this case.  If $\rho^*<R$, the decomposition $\mathcal D^*$ of Step~\ref{step:main-reconstruct} is an exact shorter competitor, and the supplied decomposition is not essentially unique by Lemma~\ref{lem:nonminimal-nonunique}.  If $\rho^*=R$, the algorithm applies $\textsc{Solve}$ to the systems $\mathcal N^{\mathrm{td}}_{I,|I|}$ for all $I\subseteq[R]$ with $|I|\ge2$ and all profiles of $(I,|I|)$.  If none of them has a solution, the supplied decomposition is essentially unique by Corollary~\ref{cor:criteria}\ref{item:criteria-uniqueness}.  If one of them has a solution $y'$, then $\textsc{Reconstruct}(y')$, as in Step~\ref{step:main-reconstruct}, gives a balanced localized replacement on $I$ with entries in a field $\Q(\theta)$.  It is term-disjoint by Theorem~\ref{thm:representation}\ref{item:soundness}, because $y'$ satisfies the inequalities \eqref{eq:td}, so it is a localized alternative.  Replacing the terms $P_r$, $r\in I$, by its terms gives an exact length-$R$ decomposition, which is not equivalent to the supplied decomposition by the proof of Theorem~\ref{thm:localized-witness}\ref{item:localized-uniqueness}.

\proofstep{running time and output length}
By Proposition~\ref{prop:one-shot-format}, Algorithm~\ref{alg:main} solves at most $G(R)(L+1)^{CR}$ polynomial systems, each with fewer than $2R^2$ variables, of length at most $G(R)(L+1)$, of degree at most $2R$, and of bit length at most $G(R)(L+1)^{C}$.  By Theorem~\ref{thm:sampling}, one call of $\textsc{Solve}$ costs at most $G(R)(L+1)^{CR^2}$ bit operations, and all calls together cost the same order after an increase of $C$.  Listing and forming the systems and the single call of $\textsc{Reconstruct}$ cost no more.  This gives \eqref{eq:running-time} and the bound on the output in Remark~\ref{rem:main-format}; the detailed accounting is in Appendix~\ref{app:complexity}.
\end{proof}

\begin{remark}
\label{rem:main-format}
The entries of the output decompositions in \ref{item:main-decomposition} and \ref{item:main-witnesses} can be irrational.  Every entry is given as a quotient $g_1(\theta)/g_0(\theta)$ as in Definition~\ref{def:exact-representation}\ref{item:quotient-bits}, with one real algebraic number $\theta$ for all entries of one decomposition, and $\theta$ is given by a polynomial $f$ and its Thom encoding. 
The degree of the polynomial $f$ of $\theta$ is at most $(2R)^{2C_{\mathrm{sa}}R^2}$ by \eqref{eq:output-degree}, with $C_{\mathrm{sa}}$ from Theorem~\ref{thm:sampling}.  The algorithm writes the output within the running time of Theorem~\ref{thm:main}, so this also bounds the bit length of the output.  These facts are established in Appendix~\ref{app:complexity}.  We have not tried to determine the best possible function $G$, see Section~\ref{sec:discussion}.  Every ingredient of Section~\ref{sec:one-shot} (the number of patterns and profiles, the length, degree, and coefficient size of the systems, and the arithmetic in $\Q(\theta)$) is bounded by $2^{\mathrm{poly}(R)}$ times a polynomial in $L$ of the stated form, see Remark~\ref{rem:field-bound}, so $G(R)=2^{\mathrm{poly}(R)}$ suffices.   
\end{remark}

\begin{remark}[Computational model]\label{rem:model}
Theorem~\ref{thm:main} is a statement in the bit model with rational input, and all running times count bit operations.  The algorithm of \citet{BasuPollackRoy2006} used in Theorem~\ref{thm:sampling} is stated by its number of arithmetic operations, and the exact output representation of Definition~\ref{def:exact-representation} and the degree bounds are bit-model notions.  The structural part of the reduction, namely the numbers of patterns, profiles, and variables, depends only on $R$, $d$, and the mode dimensions, and it does not depend on the model.  We do not carry out an analysis in the real-number model of \citet{BlumShubSmale1989}, in which the factor entries would be real numbers, and we make no claim there.
\end{remark}

\section{Sharpness of the effective-order and variable bounds}\label{sec:sharpness}

The exact reduction of Section~\ref{sec:normalized} separates the dimensions of the ambient tensor from the number of real variables. We now show that its effective-order and variable bounds are attained for both shortening and balanced exchanges at every supplied length $R\ge2$. Appendix~\ref{app:necessity} complements this by showing that the remaining ingredients of the reduction, namely the added rows $K_m$, irreducibility, nonnegative lifting, and term disjointness, cannot be dropped. Throughout this section we use the profiles and polynomial systems already defined in Section~\ref{sec:normalized}. For matrices, we identify $u\otimes v$ with $uv^{\mathsf T}$.

\subsection{Matrix families and cycle decompositions}

For $n\ge3$, let $P_n\in\Q^{n\times n}$ have the entries $P_n[i,i+1]=1$ for $i=1,\ldots,n-1$ and $P_n[n,1]=1$, and zeros elsewhere, that is,
\[
 P_n=\begin{pmatrix}0&1&&\\&\ddots&\ddots&\\&&0&1\\1&&&0\end{pmatrix}.
\]
We call the matrix
\begin{equation}\label{eq:Cn}
 C_n=\Id_n+P_n
\end{equation}
\emph{the incidence matrix of the cycle of length $n$} \citep[pp.~289--290]{GrossmanKulkarniSchochetman1994}.  Its rows correspond to the vertices $1,\ldots,n$ of the cycle and its columns to the edges $\{n,1\},\{1,2\},\ldots,\{n-1,n\}$.  An entry is $1$ if the vertex of its row is an endpoint of the edge of its column, and $0$ otherwise. We call
\begin{equation}\label{eq:cycle-row}
 C_n=\sum_{i=1}^{n}\be_iC_n[i,:]
 \qquad\text{and}\qquad
 C_n=\sum_{j=1}^{n}C_n[:,j]\be_j^{\mathsf T}
\end{equation}
the \emph{row decomposition} and the \emph{column decomposition} of $C_n$.

\begin{lemma}\label{lem:cycle}
For every $n\ge3$,
\begin{equation}\label{eq:Cn-ranks}
 \rankp(C_n)=n,
 \qquad
 \rank C_n=
 \begin{cases}
  n-1,&n\text{ even},\\
  n,&n\text{ odd}.
 \end{cases}
\end{equation}
Every minimal nonnegative decomposition of $C_n$ is equivalent to its row or its column decomposition.  These two decompositions are term-disjoint, and their exchange is irreducible.
\end{lemma}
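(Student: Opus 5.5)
The plan is to treat the ordinary rank first, then the nonnegative rank and the classification of minimal decompositions, and to finish with term-disjointness and irreducibility. For the rank, I would use $\det(\Id_n+P_n)=1-(-1)^n$: the eigenvalues of $P_n$ are the $n$th roots of unity, so $-1$ is an eigenvalue exactly when $n$ is even. When $n$ is even, the alternating vector $(1,-1,1,\ldots)$ spans the left kernel, since every column has exactly two adjacent ones. This gives \eqref{eq:Cn-ranks} for the ordinary rank, because the cofactor minors are nonzero.

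The core is a support argument. Let $C_n=\sum_{k=1}^{\rho}x_ky_k^{\mathsf T}$ be nonnegative. Each rectangle $\supp x_k\times\supp y_k$ lies inside $\supp C_n$, which is the set of vertex--edge incidences. A rectangle $S\times T$ with $S$ a set of vertices and $T$ a set of edges lies in the incidence pattern only if every vertex of $S$ is incident to every edge of $T$. For $n\ge3$ this forces one of two cases: $|S|\le1$ (one vertex and at most two edges), or $|T|\le1$ (one edge and its two endpoints). The case $|S|=2$, $|T|\ge2$ would need two distinct edges sharing the same two endpoints, which is impossible for $n\ge3$.

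\textbf{Counting.} Each term is either a \emph{vertex-type} term, supported in one row, or an \emph{edge-type} term, supported in one column. The $2n$ ones of $C_n$ must be covered. Suppose some row $i$ is not the support row of any vertex-type term. Then both of its ones, in edges $e$ and $e'$, are covered by edge-type terms in columns $e$ and $e'$. Fix a minimal decomposition, so $\rho\le n$. I would argue that $\rho\ge n$ and that equality forces a pure type, as follows. Let $V$ be the set of rows carrying a vertex-type term and $E$ the set of columns carrying an edge-type term; then $\rho\ge|V|+|E|$. Every incidence $(i,e)$ needs $i\in V$ or $e\in E$, so $V\cup E$ is a vertex cover of the bipartite incidence graph of the cycle, which is itself a cycle of length $2n$. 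A vertex cover of a cycle of length $2n$ has at least $n$ vertices, with equality only for the two color classes, namely all vertices or all edges.

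This gives $\rankp(C_n)=n$. If $\rho=n$, then each row, or each column, carries exactly one term, and that term must equal the corresponding row or column of $C_n$. So every minimal decomposition is equivalent to the row decomposition or the column decomposition.

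\textbf{Term-disjointness and irreducibility.} The row terms have single-row support containing two ones, and the column terms have single-column support containing two ones. No term can be of both kinds, so the two decompositions are term-disjoint. For irreducibility, take a subexchange $\sum_{i\in I'}\be_iC_n[i,:]=\sum_{j\in J'}C_n[:,j]\be_j^{\mathsf T}$. Its support is $I'\times$ (edges at $I'$), and it must equal $J'$ together with the endpoints of $J'$. If vertex $i\in I'$, then both edges at $i$ lie in $J'$, so both endpoints of each of those edges lie in $I'$. Connectivity of the cycle then gives $I'=[n]$ whenever $I'$ is nonempty.

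\textbf{Main obstacle.} The step I expect to need the most care is the vertex-cover inequality with its equality case. I would state it directly: the bipartite incidence graph is an even cycle, its only minimum vertex covers are its two color classes, and minimality forces no term to be left over beyond the cover.
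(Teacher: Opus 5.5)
Your proof is correct, but it reaches the two central claims by a different route than the paper.

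\emph{Ordinary rank.} The paper solves $C_nx=0$ directly: the equations $x_i+x_{i+1}=0$ force $x_i=(-1)^{i-1}x_1$, and the wrap-around equation kills $x_1$ exactly when $n$ is odd. This gives the kernel dimension in one step. Your spectral computation $\det(\Id_n+P_n)=1-(-1)^n$ settles the odd case just as quickly. For even $n$, however, you assert a nonzero $(n-1)$-minor rather than exhibit one. Deleting row $n$ and column $1$ leaves a bidiagonal matrix with unit diagonal, so supply that.

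\emph{Nonnegative rank and classification.} The paper counts: each term covers at most two of the $2n$ ones, so $\rankp(C_n)\ge n$. Equality forces pairwise disjoint supports of size exactly two. A propagation around the cycle, starting from the entry $(1,1)$, then shows that all terms are rows or all are columns.

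Your vertex-cover argument replaces both the count and the propagation with one standard fact: the minimum vertex covers of the even cycle $C_{2n}$ (the vertex--edge incidence graph) are exactly its two color classes. The inequality $\rho\ge|V|+|E|$ and its equality case then show directly that each row, or each column, carries exactly one term. You never need to establish disjointness of supports first. This is more conceptual, and it gives the bound $\rho\ge$ vertex cover number for any support pattern whose rectangles all lie in a single row or column. The paper's count stays fully elementary.

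Two small points to make explicit:
\begin{enumerate}
\item A term supported on a single entry is of both types. The inequality $\rho\ge|V|+|E|$ holds only if each such term is assigned to exactly one type. Your final step works with any such assignment, but say so.
\item In the irreducibility argument, note that $I'=\varnothing$ forces $J'=\varnothing$ by nonnegativity.
\end{enumerate}
Apart from these, your term-disjointness and irreducibility arguments match the paper's.
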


\begin{proof} 
\pfpart{Ordinary rank}
The equations $C_nx=0$ read $x_i+x_{i+1}=0$. Hence
$x_i=(-1)^{i-1}x_1$, and the equation $x_n+x_1=0$ gives
$((-1)^{n-1}+1)x_1=0$. For even $n$ the kernel is the line spanned by
$(1,-1,\ldots,1,-1)^{\mathsf T}$; for odd $n$ it is zero. This proves the formula for $\rank C_n$.

\pfpart{Nonnegative rank and minimal decompositions}
In a nonnegative decomposition of $C_n$, every term $uv^{\mathsf T}$ has support $\supp(u)\times\supp(v)$ contained in $\supp(C_n)$. For $n\ge3$, two distinct rows of $C_n$ are both positive in at most one column.  Hence $\supp(u)\times\supp(v)$ lies in one row or in one column. Hence each term covers at most two of the $2n$ positive entries of $C_n$, so $\rankp(C_n)\ge n$, and the row decomposition gives equality.

Suppose $C_n=\sum_{j=1}^{n}u_jv_j^{\mathsf T}$ is a nonnegative decomposition with $n$ terms.  Nonnegativity gives
\[
 2n=|\supp(C_n)|
 =\Bigl|\bigcup_{j=1}^{n}\supp(u_jv_j^{\mathsf T})\Bigr|
\le\sum_{j=1}^{n}\bigl|\supp(u_jv_j^{\mathsf T})\bigr|\le2n.
\]
Equality forces each term to cover exactly two entries of one row or of one column and the supports
to be pairwise disjoint. The term covering $(1,1)$ covers row $1$ or column $1$.  If it covers row $1$, the term covering $(2,2)$ cannot cover $(1,2)$ as well, so it covers row $2$, and continuing in this way every term covers a row.  If it covers column $1$, the same argument shows that every term covers a column. Since each positive entry of $C_n$ lies in only one term, each term equals $1$ at its two positions.  Hence the decomposition is the row decomposition or the column decomposition, up to the order of its terms.  A term of the row decomposition has two nonzero columns, whereas a term of the column decomposition has only one, so the two decompositions have no common term, and therefore are term-disjoint.

\pfpart{Irreducibility}
A subexchange consists of index sets $I',J'\subseteq[n]$ with
\[
 \sum_{i\in I'}\be_iC_n[i,:]=\sum_{j\in J'}C_n[:,j]\be_j^{\mathsf T}.
\]
At a positive entry $(i,j)$, the left side is $1$ if $i\in I'$ and $0$ otherwise, and the right side is $1$ if $j\in J'$ and $0$ otherwise.  So for $i<n$ the entries $(i,i)$ and $(i,i+1)$ give $i\in J'\Leftrightarrow i\in I'\Leftrightarrow i+1\in J'$, and the entry $(n,n)$ gives $n\in I'\Leftrightarrow n\in J'$.  Hence $I'$ and $J'$ are both empty or both equal to $[n]$.
\end{proof}

For $R\ge2$ put
\[
 B_R=\begin{pmatrix}\Id_{R-1}\\ \mathbf1^{\mathsf T}\end{pmatrix}\in\Q^{R\times(R-1)},
 \qquad T_2=\begin{pmatrix}1&1\\0&1\end{pmatrix}.
\]
We supply the row decomposition $M=\sum_i\be_iM[i,:]$. A profile is \emph{feasible} if its polynomial system has a solution: $\mathcal N_{I,q}$ for shortening replacements and $\mathcal N^{\mathrm{td}}_{I,p}$ for term-disjoint balanced replacements.

\subsection{Sharpness of the effective-order and variable bounds}

\begin{theorem}\label{thm:attainment}
Let $R\ge2$.
\begin{enumerate}[label=\textup{(\roman*)}]
\item\label{item:attainment-short} Every shortening localized replacement for the row decomposition of $B_R$ has $I=[R]$ and $q=R-1$, and every feasible profile for this block has $\kappa=2R-3$.
\item\label{item:attainment-balanced} The row decomposition of $C_R$, or of $T_2$ if $R=2$, is minimal and not essentially unique.  Every term-disjoint balanced localized replacement for it has $I=[R]$, and every feasible profile for this block has $\kappa=2R-2$.
\end{enumerate}
In both parts $\kappa=p+q-2$, so the bounds of Theorem~\ref{thm:budget} and Remark~\ref{rem:counts} are attained.
\end{theorem}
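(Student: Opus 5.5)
The plan for both parts is to identify all relevant localized replacements from the explicit structure of the matrices. Then I pin down the profile integers $\delta_m$ from two sides. The upper bound is $\delta_m\le n_m$ from \eqref{eq:profile-dimension-range}. The lower bound is a general observation about solutions. By Theorem~\ref{thm:representation}\ref{item:soundness}, any solution of a system yields a replacement whose mode-$m$ supplied and competing factors are the columns of $\Lambda_m\bar X_m$ and $\Lambda_m\bar Y_m$ (up to positive scaling). Both therefore lie in $\col\Lambda_m$, which has dimension at most $\delta_m$. Hence every feasible profile satisfies $\delta_m\ge\dim(U_m+V_m)$, where $U_m,V_m$ are the factor spans of the replacement produced by the solution. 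Throughout, mode $1$ carries the vectors $\be_i$, so $\sigma_1=n_1=R$ and $\delta_1=R$ in every profile with $I=[R]$.

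For part \ref{item:attainment-short}, let $I\subseteq[R]$ and write $M_I=\sum_{i\in I}\be_iB_R[i,:]$.
\begin{itemize}
\item If $R\notin I$, then $M_I$ consists of distinct unit rows, so $\rankp(M_I)\ge\rank M_I=|I|$.
\item If $R\in I$ but $I\ne[R]$, the nonzero rows of $M_I$ are $|I|-1$ unit rows and the all-ones row. The all-ones row has an entry in a column not covered by the unit rows, so these rows are linearly independent, and again $\rankp\ge\rank=|I|$.
\end{itemize}
Hence a shortening replacement forces $I=[R]$. Then $R-1=\rank B_R\le q<R$ gives $q=R-1$. For the profile, $\sigma_2=\rank B_R=R-1=n_2$, so $\delta_2=R-1$ is forced by \eqref{eq:profile-dimension-range}, and $\kappa=(R-1)+(R-2)=2R-3=p+q-2$. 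Feasibility of at least one profile follows from the column decomposition $B_R=\sum_j B_R[:,j]\be_j^{\mathsf T}$ via Theorem~\ref{thm:representation}\ref{item:completeness}. That theorem needs irreducibility, which holds because a proper block would be a shortening replacement on a proper subset, and we just excluded those. (For $R=2$ this reads $p=2$, $q=1$, $\kappa=1$.)

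For part \ref{item:attainment-balanced} with $R\ge3$, Lemma~\ref{lem:cycle} gives minimality and a term-disjoint alternative, namely the column decomposition, so the decomposition is not essentially unique.
\begin{itemize}
\item \emph{Forcing $I=[R]$.} Given a term-disjoint balanced replacement on $I$, substituting it for $\{P_r:r\in I\}$ yields a length-$R$ decomposition inequivalent to the row decomposition. By Lemma~\ref{lem:cycle} it is equivalent to the column decomposition. That decomposition shares no term with the row decomposition, yet it would contain the untouched terms $P_r$, $r\notin I$. Hence $I=[R]$.
\item \emph{Identifying the solution.} The replacement produced by any solution of $\mathcal N^{\mathrm{td}}_{[R],R}$ is then equivalent to the column decomposition. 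So its mode-$2$ competing factors are positive multiples of $\be_1,\dots,\be_R$, and $\dim(U_2+V_2)=R$.
\item \emph{Forcing $\delta_2=R$.} By the lower bound above, $\delta_2\ge R=n_2$, so $\delta_2=R$. This is the key point when $R$ is even: there $\sigma_2=R-1$, and the profile with $K_2=\varnothing$ is admissible but infeasible.
\end{itemize}
Thus $\kappa=(R-1)+(R-1)=2R-2=p+q-2$, and a feasible profile exists by Theorem~\ref{thm:representation}\ref{item:completeness} applied to the irreducible exchange of Lemma~\ref{lem:cycle}.

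For $R=2$ I would treat $T_2$ directly.
\begin{itemize}
\item $\rankp(T_2)=\rank T_2=2$, so the row decomposition is minimal.
\item $T_2=\be_1\be_1^{\mathsf T}+(1,1)^{\mathsf T}\be_2^{\mathsf T}$ is a term-disjoint alternative, so the row decomposition is not essentially unique.
\item A replacement on a singleton $I$ is trivial by Lemma~\ref{lem:product-equality}, hence not term-disjoint, so $I=[2]$.
\item $\sigma_1=\sigma_2=2=n_m$ forces $\delta_1=\delta_2=2$, giving $\kappa=2$.
\end{itemize}

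The main obstacle is the lower bound $\delta_m\ge\dim(U_m+V_m)$ for \emph{every} feasible profile, not just the profile constructed in Section~\ref{subsec:derivation}. I would prove it exactly as sketched, reading off from the proof of Theorem~\ref{thm:representation}\ref{item:soundness} that all lifted factors lie in $\col\Lambda_m$.
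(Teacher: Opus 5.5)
Your proposal is correct and follows the paper's proof essentially step by step: the same rank argument excludes proper subsets for $B_R$, the same use of Lemma~\ref{lem:cycle} forces $I=[R]$, and your $\col\Lambda_m$ bound is the paper's inequality $R=\rank(\Lambda_2\bar Y_2)\le\rank\bar Y_2\le\delta_2$, which forces $\delta_2=R$. The only thing left implicit is that a feasible profile exists for $T_2$, which the attainment claim needs; it follows at once from Corollary~\ref{cor:criteria}\ref{item:criteria-uniqueness}, or from irreducibility of the two-term exchange as in the paper.
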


\begin{proof}
\pfpart{Part~\ref{item:attainment-short}}
The column decomposition $B_R=\sum_{j=1}^{R-1}B_R[:,j]\be_j^{\mathsf T}$ has $R-1$ terms, and $\rank B_R=R-1$, so $\rankp(B_R)=R-1$.  Any $R-1$ rows of $B_R$ are linearly independent. Hence, for a proper subset $I$ of $[R]$, the sum $\sum_{i\in I}\be_iB_R[i,:]$ has rank $|I|$, so it has no shorter decomposition. Thus every shortening replacement has $I=[R]$ and therefore $q=R-1$. 

The exchange between the row and the column decomposition is irreducible, because a proper subexchange and its complement would split the $R$ terms of the row decomposition and the $R-1$ terms of the column decomposition into two parts, one of which would replace a proper subset of the former by fewer terms, which is impossible. Applied to this irreducible replacement, Theorem~\ref{thm:representation}\ref{item:completeness} gives a feasible profile of $([R],R-1)$.  In each mode the supplied factors span the whole space, that is, $(\sigma_1,\sigma_2)=(R,R-1)=(n_1,n_2)$, so the bounds $\sigma_m\le\delta_m\le n_m$ of \eqref{eq:profile-dimension-range} give $(\delta_1,\delta_2)=(R,R-1)$ for every profile.  Hence $\kappa=(R-1)+(R-2)=2R-3$.

\pfpart{Part~\ref{item:attainment-balanced} for $R\ge3$}
By Lemma~\ref{lem:cycle}, the row decomposition of $C_R$ is minimal, and the column decomposition is a term-disjoint minimal decomposition whose exchange with it is irreducible.  Hence the row decomposition is not essentially unique, and by Theorem~\ref{thm:representation}\ref{item:completeness} the block $([R],R)$ has a feasible profile.

Suppose that, for a nonempty proper subset $I$ of $[R]$, the terms $\be_iC_R[i,:]$ with $i\in I$ have a term-disjoint balanced replacement.  Replacing them gives a nonnegative decomposition of $C_R$ with $R$ terms, which is minimal and hence, by Lemma~\ref{lem:cycle}, equivalent to the row or to the column decomposition. It is not the row decomposition, because the replacement is term-disjoint.  It is also not the column decomposition, because it keeps a term $\be_iC_R[i,:]$ with $i\notin I$, and the row and column decompositions are term-disjoint.  Hence $I=[R]$.

Now take a feasible profile.  Then Theorem~\ref{thm:representation}\ref{item:soundness} gives a term-disjoint replacement of all terms of the row decomposition, that is, a minimal decomposition of $C_R$ term-disjoint from the row decomposition. By Lemma~\ref{lem:cycle} it is equivalent to the column decomposition, so its factors in mode~$2$ are positive multiples of distinct coordinate vectors. By \eqref{eq:reconstructed-term}, these factors are positive multiples of the columns of $\Lambda_2\bar Y_2$, hence
\[
 R=\rank(\Lambda_2\bar Y_2)\le\rank\bar Y_2\le\delta_2\le R.
\] 
The supplied factors span $\R^R$ in mode~$1$, so $\delta_1=R$ as well, and $\kappa=2R-2$.

\pfpart{Part~\ref{item:attainment-balanced} for $R=2$}
Since $\rank T_2=2$, its row decomposition $T_2=\be_1(1,1)+\be_2(0,1)$ is minimal. Its column decomposition $T_2=(1,0)^{\mathsf T}\be_1^{\mathsf T}+(1,1)^{\mathsf T}\be_2^{\mathsf T}$ is term-disjoint from it, and their exchange is irreducible, because a proper subexchange would consist of one term on each side with equal values, but the two decompositions have no common term. Hence the row decomposition is not essentially unique, and the block $([2],2)$ has a feasible profile. A single term has no term-disjoint replacement by one term, so $I=[2]$. The supplied factors span $\R^2$ in both modes, so $(\delta_1,\delta_2)=(2,2)$ and $\kappa=2$.

\pfpart{The bounds}
In part~\ref{item:attainment-short}, $p=R$ and $q=R-1$; in part~\ref{item:attainment-balanced}, $p=q=R$.  In both cases $\kappa=p+q-2$, which attains the bound of Theorem~\ref{thm:budget}.  By \eqref{eq:exact-count}, the systems have $q\kappa+q-1$ variables, namely $2R^2-4R+1$ and $2R^2-R-1$, which are the bounds of Remark~\ref{rem:counts}.
\end{proof}

\begin{remark}[Scope of the sharpness statement]\label{rem:sharpness-scope}
Theorem~\ref{thm:attainment} proves sharpness of the effective-order bound and of the variable counts for the semialgebraic representation developed here. It is not a lower bound on the number of variables required by every possible exact algorithm.
\end{remark}

\section{Moment tensors, quadrature, and algebraic output complexity}\label{sec:moments}

Moment tensors connect nonnegative decomposition with positive quadrature. This connection provides explicit families whose rank and uniqueness can be determined, illustrates exact reconstruction by Algorithm~\ref{alg:main}, and yields a factorial lower bound on the degree of a common field for exact outputs. The quadrature ingredients are classical; their tensor interpretation brings together rank, identifiability, and algebraic output complexity.

\subsection{Moment tensors and positive quadrature}\label{subsec:moment}

For $t\in\R$ put $v(t)=(1-t,\,t)^{\mathsf T}$.  For a finite positive Borel measure $\mu$ on $[0,1]$ and $d\ge2$, the \emph{moment tensor}
\begin{equation}\label{eq:moment-tensor}
 M_d(\mu)=\int v(t)^{\otimes d}\,d\mu(t)\in(\R^2)^{\otimes d}
\end{equation}
has the entry $\int(1-t)^{d-s}t^s\,d\mu(t)$ at every index $(i_1,\ldots,i_d)$ with exactly $s$ coordinates $i_m$ equal to $2$. Nodes $\xi_1,\ldots,\xi_q\in[0,1]$ and weights $w_1,\ldots,w_q>0$ form a \emph{positive quadrature rule of degree $d$} for $\mu$ if $\sum_jw_jh(\xi_j)=\int h\,d\mu$ for every polynomial $h$ of degree at most $d$.  The entries of $v(t)^{\otimes d}$ are polynomials of degree $d$ in $t$, and they span all polynomials of degree at most $d$.  Therefore every such rule gives the nonnegative decomposition
\begin{equation}\label{eq:quadrature}
 M_d(\mu)=\sum_{j=1}^{q}w_j\,v(\xi_j)^{\otimes d}
\end{equation}
of length $q$, and every nonnegative decomposition of $M_d(\mu)$ whose terms are symmetric tensors arises in this way.  We put $\mathcal P_d=\int_0^1v(t)^{\otimes d}\,dt$, the moment tensor of the Lebesgue measure on $[0,1]$. By Euler's Beta integral \citep[(5.12.1)]{DLMF}, the entry of $\mathcal P_d$ with $s$ coordinates equal to $2$ is
\[
 \int_0^1(1-t)^{d-s}t^s\,dt=\frac{\Gamma(s+1)\Gamma(d-s+1)}{\Gamma(d+2)}=\frac{s!\,(d-s)!}{(d+1)!}.
\]
Here $\Gamma$ is the Gamma function. This entry is the probability of a fixed sequence of $d$ draws, $s$ of them black, from P\'olya's urn started with one white and one black ball \citep[Section~V.2]{Feller1968}.

For $\rho\ge1$, the \emph{$\rho$-point Gauss--Legendre rule} is the Gaussian quadrature rule with $\rho$ nodes for the Lebesgue measure on $[0,1]$.  Its nodes are the zeros of the shifted Legendre polynomial of degree $\rho$, which are distinct and lie in $(0,1)$, and it is a positive quadrature rule of degree $2\rho-1$ \citep[\S3.5(v), \S18.2(vi), Table~18.3.1 and (18.7.10)]{DLMF}.

The following statement restates classical facts on Gaussian quadrature \citep[\S1.4]{Gautschi2004} in terms of nonnegative decompositions. Its odd-order uniqueness conclusion applies to all length-$\rho$ tensor decompositions, without imposing symmetry on the competing terms.

\begin{proposition}\label{prop:polya}
Let $d\ge2$ and $\rho=\lfloor d/2\rfloor+1$.  Then $\rankp(\mathcal P_d)=\rank(\mathcal P_d)=\rho$.
\begin{enumerate}[label=\textup{(\roman*)}]
\item\label{item:polya-odd} If $d=2\rho-1$, then $\mathcal P_d$ has exactly one nonnegative decomposition of length $\rho$.  It is \eqref{eq:quadrature} for the $\rho$-point Gauss--Legendre rule.
\item\label{item:polya-even} If $d=2\rho-2$, then $\mathcal P_d$ has infinitely many pairwise inequivalent nonnegative decompositions of length $\rho$.
\end{enumerate}
\end{proposition}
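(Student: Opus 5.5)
The plan is to argue in three stages. First compute ordinary and nonnegative rank via a balanced unfolding and the Gauss--Legendre rule. Then show that in odd order every length-$\rho$ nonnegative decomposition is forced to be symmetric, and finally reduce both parts to classical uniqueness and nonuniqueness of $\rho$-point positive quadrature.

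\emph{Rank.} For the upper bound, the $\rho$-point Gauss--Legendre rule has positive weights, nodes in $(0,1)$, and degree $2\rho-1\ge d$. Hence \eqref{eq:quadrature} gives $\rankp(\mathcal P_d)\le\rho$. For the lower bound, unfold along a set $S$ of $\lfloor d/2\rfloor=\rho-1$ modes. Every column of the unfolding lies in the symmetric subspace $\mathrm{Sym}^{\rho-1}(\R^2)$, which has dimension $\rho$. That subspace is spanned by $v(t)^{\otimes(\rho-1)}$ for distinct values of $t$. The same holds on the complementary side, which has $d-\rho+1\ge\rho-1$ modes. In the basis $v(t)^{\otimes k}\leftrightarrow(1-t,t)$-monomials, the compressed unfolding is the moment (Hankel) matrix
\[
\Bigl(\int_0^1(1-t)^{\rho-1-s}t^{s}\,(1-t)^{d-\rho+1-s'}t^{s'}\,dt\Bigr)_{s,s'}.
\]
Its $\rho\times\rho$ principal part is the Gram matrix $\int h_sh_{s'}\,dt$ of the linearly independent polynomials $h_s=(1-t)^{\rho-1-s}t^s$, and is therefore positive definite. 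So the unfolding has rank $\rho$. Then \eqref{eq:unfolding-rank} and $\rank\le\rankp$ give $\rho\le\rank(\mathcal P_d)\le\rankp(\mathcal P_d)\le\rho$.

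\emph{Odd order: forcing symmetry.} Let $d=2\rho-1$ and let $\mathcal P_d=\sum_{j=1}^\rho\bigotimes_m b_j^{(m)}$ be any nonnegative decomposition. Fix a set $S$ of $\rho-1$ modes; its complement then has $\rho$ modes. The unfolding has rank $\rho$ and is the sum of $\rho$ rank-one matrices $(\bigotimes_{m\in S}b_j^{(m)})(\bigotimes_{m\notin S}b_j^{(m)})^{\mathsf T}$. Hence the vectors $\bigotimes_{m\in S}b_j^{(m)}$ are linearly independent and span the column space. That column space lies in the symmetric tensors on the modes in $S$, so each $\bigotimes_{m\in S}b_j^{(m)}$ is a nonzero symmetric rank-one tensor. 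By Lemma~\ref{lem:product-equality}, all its factors are then positive multiples of one vector. The same argument applies to the complementary side of $\rho$ modes. Since $\rho\ge2$, every pair of modes lies inside some set of size $\rho-1$ or $\rho$. Therefore each term is $w_j\,c_j^{\otimes d}$ with $c_j\ge0$, and after scaling $c_j=v(\xi_j)$ for some $\xi_j\in[0,1]$ and $w_j>0$. The decomposition is thus a positive $\rho$-point quadrature rule of degree $2\rho-1$ for Lebesgue measure. The classical uniqueness of the Gauss rule \citep[\S1.4]{Gautschi2004} identifies it. The argument is that $\prod(t-\xi_j)$ is orthogonal to all polynomials of degree at most $\rho-1$, so it is the shifted Legendre polynomial, and the weights are then determined by a Vandermonde system.

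\emph{Even order.} Let $d=2\rho-2$. It suffices to exhibit infinitely many positive $\rho$-point rules of degree $2\rho-2$ with distinct node sets. Distinct node sets give distinct multisets of terms, because the vectors $v(\xi)^{\otimes d}$ are pairwise nonproportional. I would use the quasi-orthogonal polynomials $\pi_c=\tilde P_\rho+c\,\tilde P_{\rho-1}$, where $\tilde P_k$ is the shifted Legendre polynomial of degree $k$. For every real $c$, $\pi_c$ has $\rho$ real simple zeros, and the interpolatory rule on these zeros has degree $2\rho-2$ with positive weights. For $|c|$ small, continuity from the Gauss rule at $c=0$ keeps all zeros in $(0,1)$ and all weights positive. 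This yields a continuum of pairwise inequivalent length-$\rho$ decompositions. The main obstacle I anticipate is this step: checking positivity of the weights and confinement of the nodes to $[0,1]$. The perturbation argument near $c=0$ handles both without needing the full classical range of admissible $c$.
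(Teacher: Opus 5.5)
Your proof is correct. The rank computation follows the paper: you flatten across $\rho-1$ versus $d-\rho+1$ modes and reduce to a positive definite moment matrix. The paper compresses with the monomial maps $\Psi_k$, while you restrict to one representative index per symmetric class, which amounts to the same thing. For (i) and (ii), however, you take genuinely different routes.

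\textbf{Part (i).} The paper checks $\beta(I)=d\ge 2|I|-1$ and invokes the Lovitz--Petrov criterion. That gives uniqueness even among real decompositions of arbitrary sign, and it ties the example to the paper's $\beta(I)$ budget. You instead observe that the flattening has rank exactly $\rho$ across every split into $\rho-1$ and $\rho$ modes. This forces each partial product $\bigotimes_{m\in S}b_j^{(m)}$ into the column space of that flattening, hence into the symmetric subspace. Consequently every length-$\rho$ nonnegative competitor is automatically a symmetric positive quadrature rule, and the classical orthogonality argument identifies it as Gauss--Legendre. This is elementary and self-contained, and it explains why a competitor must be a quadrature rule. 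As written it covers only the nonnegative case, which is all the statement asks.

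\textbf{Part (ii).} The paper applies the implicit function theorem to the $2\rho-1$ moment equations in $2\rho$ unknowns, with full Jacobian rank obtained from a double-root count. It then uses a Vandermonde system to show that different points of the curve have different node sets. Your quasi-orthogonal family $\tilde P_\rho+c\,\tilde P_{\rho-1}$ is explicit. Exactness in degree $2\rho-2$ is one line of division with remainder, and distinct $c$ visibly give distinct node sets. Your continuity step at $c=0$ plays the role of the implicit-function-theorem neighbourhood. The paper's argument needs no orthogonal-polynomial structure, whereas yours gives a concrete description of the family of minimal decompositions.

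\textbf{Two small slips, neither affecting the conclusions.}
\begin{itemize}
\item For odd $d$, the leading $\rho\times\rho$ block of your compressed unfolding has entries $\int_0^1(1-t)^{2\rho-1-s-s'}t^{s+s'}\,dt$. This is the Gram matrix of the $h_s$ for the measure $(1-t)\,dt$, not $dt$. It is still positive definite, so the rank conclusion stands.
\item For $\rho\le\rank(\mathcal P_d)$ you need the ordinary-rank fact that flattening does not increase rank, not the nonnegative inequality \eqref{eq:unfolding-rank}.
\end{itemize}
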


\begin{proof}[Proof sketch]
For the lower bound, a flattening of $\mathcal P_d$ is transformed by the evaluation map of the monomials into a Hankel matrix of moments of Lebesgue measure, whose leading $\rho\times\rho$ block is positive definite.  The upper bound is the Gauss--Legendre rule.  For odd $d=2\rho-1$, every set $I\subseteq[\rho]$ with $|I|\ge2$ has $\beta(I)=d\ge2|I|-1$, so uniqueness follows from the Lovitz--Petrov criterion.  For even $d=2\rho-2$, the quadrature conditions are $d+1=2\rho-1$ equations in $2\rho$ unknowns with a Jacobian of full rank at the Gauss--Legendre rule, and the implicit function theorem gives a curve of positive quadrature rules.  The full proof is in Appendix~\ref{app:proofs}.
\end{proof}

\subsection{Exact reconstruction and the binomial-mixture interpretation}\label{subsec:quadrature-reconstruction}

\paragraph{From Simpson's rule to Gaussian quadrature.}
Simpson's rule, with nodes $0,\frac12,1$ and weights $\frac16,\frac23,\frac16$, is a positive quadrature rule of degree $3$ for the Lebesgue measure \citep[(3.5.6)]{DLMF}.  By \eqref{eq:quadrature} it gives the rational supplied decomposition
\[
 \mathcal P_3=\tfrac16\,v(0)^{\otimes3}+\tfrac23\,v(\tfrac12)^{\otimes3}+\tfrac16\,v(1)^{\otimes3},
\]
with $R=3$, where the weights are absorbed into the mode-$1$ factors.  By Proposition~\ref{prop:polya}, $\rankp(\mathcal P_3)=2$ and the minimum-length decomposition is unique.  The following example runs Algorithm~\ref{alg:main} on this input.

\begin{example}[Simpson's rule]\label{ex:simpson}
\emph{Phase~I.}  In every mode any two of $v(0),v(\frac12),v(1)$ span $\R^2$, so $\sigma_m=2$ and $\beta(I)=3$ for every $I\subseteq[3]$ with $|I|\ge2$.  Every profile of a block $(I,q)$ satisfies $\beta(I)\le\kappa\le p+q-2$ by \eqref{eq:profile-dimension-range} and \eqref{eq:profile-budget}, so $p+q\ge5$, and among the shortening blocks only $([3],2)$ qualifies.  Since $\sigma_m\le\delta_m\le n_m=2$, its only profile has $\delta_m=2$, $H=[3]$, and $K_m=\varnothing$.  There are no omitted rows and no inactive modes, the core is $\mathcal P_3$ itself, and $\mathcal N_{[3],2}$ consists of the core equation
\[
 \mathcal P_3=\sum_{j=1}^{2}\omega_j\bigotimes_{m=1}^{3}\begin{pmatrix}y_{m1j}\\1-y_{m1j}\end{pmatrix},
 \qquad
 \omega_2=1-\omega_1,
\]
and the inequalities $0\le y_{m1j}\le1$ and $0<\omega_1<1$.  These are $8$ equations in $q\kappa+q-1=7$ variables.

\emph{Phase~I, solving.}  A lexicographic Gr\"obner basis \citep{Buchberger1965} of the core equation, for $y_{111}\succ y_{112}\succ y_{211}\succ y_{212}\succ y_{311}\succ y_{312}\succ\omega_1$, consists up to constant factors of
\[
 y_{m11}+y_{312}-1\ (m\in[3]),\qquad y_{m12}-y_{312}\ (m\in[2]),\qquad 6y_{312}^2-6y_{312}+1,\qquad 2\omega_1-1.
\]
Hence $\omega_1=\omega_2=\frac12$, and with $\xi_\pm:=\frac12\pm\frac{\sqrt3}{6}$ the core equation has exactly two solutions, $\bar Y_1=\bar Y_2=\bar Y_3=[\,v(\xi_+)\ \ v(\xi_-)\,]$ and the same matrices with the two columns interchanged.  Both satisfy the inequalities and give the same decomposition.  For the first one, $\theta=y_{312}=\xi_+$ is given by $f(T)=6T^2-6T+1$ and its Thom encoding $(\sgn f'(\theta),\sgn f''(\theta))=(+,+)$, and every coordinate is $g_i(\theta)/2$ with $g_i\in\{2-2T,\,2T,\,1\}$.  All coordinates lie in $\Q(\theta)=\Q(\sqrt3)$.

\emph{Phase~II.}  Since $n_m=\delta_m$, every coefficient matrix $\Lambda_m$ is the identity, and the output is
\[
 \mathcal P_3=\tfrac12\,v(\xi_+)^{\otimes3}+\tfrac12\,v(\xi_-)^{\otimes3}.
\]
Thus $\rankp(\mathcal P_3)=2$ and the supplied decomposition is not minimal, so Phase~III is not run and the algorithm reports \textsc{not unique}, with this decomposition as the shorter competitor.  The output is the two-point Gauss--Legendre rule, whose nodes are the zeros of the shifted Legendre polynomial $6t^2-6t+1=f(t)$.
\end{example}

\begin{remark}
Comparing the entries at an index with $s$ coordinates equal to $2$, and multiplying by $\binom ds$, we see that a decomposition $\mathcal P_d=\sum_{j=1}^{\rho}w_j\,v(\xi_j)^{\otimes d}$ as in~\eqref{eq:quadrature} amounts to
\[
 \sum_{j=1}^{\rho}w_j\binom ds\xi_j^s(1-\xi_j)^{d-s}=\binom ds\frac{s!\,(d-s)!}{(d+1)!}=\frac1{d+1},\qquad s=0,\ldots,d .
\]
That is, it represents the uniform distribution on $\{0,\ldots,d\}$ as a mixture of the binomial distributions with $d$ trials and success probabilities $\xi_j$, with weights $w_j$. Mixtures of $\rho$ binomial distributions with $d$ trials are identifiable if and only if $d\ge2\rho-1$ \citep{Teicher1963}.  For the uniform distribution, part~\ref{item:polya-odd} of Proposition~\ref{prop:polya} gives uniqueness at $d=2\rho-1$, and part~\ref{item:polya-even} gives non-uniqueness at $d=2\rho-2$.
\end{remark}

\begin{remark}
For $d=4$, for instance, the rule with nodes $0,\frac{6\pm\sqrt6}{10}$ and weights $\frac19,\frac49\mp\frac{\sqrt6}{36}$ is a positive quadrature rule of degree $4$.  It and the three-point Gauss--Legendre rule give two inequivalent minimal nonnegative decompositions of $\mathcal P_4$.  For $d=5$, Boole's rule \citep[(25.4.14)]{AbramowitzStegun1964}, with nodes $0,\frac14,\frac12,\frac34,1$ and weights $\frac7{90},\frac{32}{90},\frac{12}{90},\frac{32}{90},\frac7{90}$, is a positive quadrature rule of degree $5$ and gives a supplied decomposition of $\mathcal P_5$ with $R=5$.  By Proposition~\ref{prop:polya}\ref{item:polya-odd}, Algorithm~\ref{alg:main} returns the three-point Gauss--Legendre rule on this input.
\end{remark}

\subsection{Necessary algebraic output complexity}\label{subsec:output-lower}\label{subsec:degree}

The output decompositions of Theorem~\ref{thm:main} have entries in a field $\Q(\theta)$, and the polynomial $f$ of $\theta$ has degree at most $(2R)^{2C_{\mathrm{sa}}R^2}$, see Remark~\ref{rem:main-format}.  If $\T$ has only one minimal nonnegative decomposition, then $\Q(\theta)$ contains the entries of its terms.  Hence $\deg f$ is at least the degree of the field which these entries generate \citep[Proposition~1.20]{MilneFT}.

\begin{theorem}\label{thm:degree}
For every integer $\rho\ge2$ there is a tensor $\T\in(\Q^2)^{\otimes(2\rho-1)}$ such that
\begin{enumerate}[label=\textup{(\roman*)}]
\item $\rankp(\T)=\rho$, and $\T$ has exactly one nonnegative decomposition of length $\rho$.  The entries of the terms of this decomposition generate a field of degree $\rho!$ over $\Q$.
\item $\T$ has a rational nonnegative decomposition of length $R$ for every $R\ge2\rho$.
\end{enumerate}
\end{theorem}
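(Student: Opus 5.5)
The construction uses moment tensors of a measure whose Gaussian nodes are the Laguerre zeros, as the introduction indicates. Fix $\rho\ge2$. Let $\lambda_1<\dots<\lambda_\rho$ be the zeros of the Laguerre polynomial $L_\rho$, the orthogonal polynomial for $e^{-x}\,dx$ on $[0,\infty)$. I would map $[0,\infty)$ to $[0,1)$ by $t=x/(1+x)$. Under this map, $(1-t)^{d-s}t^s=x^s/(1+x)^d$ with $d=2\rho-1$. This suggests taking the measure $\mu$ on $[0,1]$ as the pushforward of $(1+x)^{d}e^{-x}\,dx$. Then the entries of $M_d(\mu)$ are $\int_0^\infty x^s e^{-x}\,dx=s!$, which are integers. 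So $\T:=M_d(\mu)$ is rational, with entry $s!$ at every index containing $s$ twos.

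\emph{Rank and uniqueness.} The Gauss--Laguerre rule is exact for $\int x^s e^{-x}\,dx$, $0\le s\le 2\rho-1$. Transporting its nodes to $\xi_j=\lambda_j/(1+\lambda_j)\in(0,1)$ with weights $w_j(1+\lambda_j)^d$ gives a positive quadrature rule of degree $d$ for $\mu$. By \eqref{eq:quadrature} this is a length-$\rho$ decomposition. For the remaining claims I would copy the proof of Proposition~\ref{prop:polya}.
\begin{itemize}
\item Lower bound: a flattening is congruent to the positive definite Hankel matrix $(\int x^{i+j}e^{-x}\,dx)_{i,j<\rho}$, so $\rank\T\ge\rho$.
\item Uniqueness: the nodes are distinct and $d=2\rho-1$, so the Lovitz--Petrov/Kruskal criterion with $\beta(I)=d\ge2|I|-1$ gives uniqueness among all length-$\rho$ decompositions. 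This requires only that any two factors $v(\xi_j)$ be independent, which holds.
\end{itemize}

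\emph{Field degree.} The terms are $w_jv(\xi_j)^{\otimes d}$. The field generated by their entries contains each $\xi_j$, as a ratio of two entries, and hence each $\lambda_j$. Conversely, it is contained in $\Q(\lambda_1,\dots,\lambda_\rho)$, since the weights are rational functions of the nodes. So the field is the splitting field of $L_\rho$. By Schur's theorem, $L_\rho$ is irreducible with Galois group $S_\rho$, so this field has degree $\rho!$. This is the main input, and I would cite it rather than reprove it. I also need to check that the degree of the field generated by the multiset of terms does not depend on the scaling of factors; entry ratios settle this.

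\emph{Rational decompositions of every length $R\ge2\rho$.} I need positive quadrature rules of degree $2\rho-1$ for $\mu$ with rational nodes in $[0,1]$, rational weights, and exactly $R$ nodes. Equivalently, I need positive rules for $e^{-x}dx$ on $[0,\infty]$ exact for degrees up to $d$ after the transform, possibly with a node at $t=1$, i.e.\ $x=\infty$. This is the main obstacle. My plan has three steps.
\begin{enumerate}
\item Pick $R$ distinct rational nodes such that the moment vector $(s!)_{s\le d}$ lies in the interior of the cone spanned by the vectors $v(\xi)^{\otimes d}$, i.e.\ in the interior of the moment cone of degree $d$. A Tchakaloff-type argument gives a positive rule on some node set. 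Perturbing an interior representation with more than $d+1=2\rho$ nodes keeps positivity.
\item The weights solve a rational linear system. Fix $R$ rational nodes chosen near a positive rule with at least $2\rho$ nodes, so that the $(d+1)\times R$ Vandermonde-type matrix has full row rank $2\rho\le R$.
\item The solution set of that system is then an affine space over $\Q$ of dimension $R-2\rho$. When it meets the open positive orthant, it contains rational positive points.
\end{enumerate}
I would make step 1 concrete by starting from the Gauss rule itself, written at the real points $\xi_j$, and adding further rational nodes. I would then move each $\xi_j$ to a nearby rational point and re-solve for positive weights. Openness of positivity under small perturbations handles this, provided the full-rank condition holds, which needs at least $2\rho$ nodes in total — hence the hypothesis $R\ge2\rho$.
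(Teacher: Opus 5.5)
Your part (i) is correct. It differs from the paper only in the choice of measure. The paper puts unit masses at the scaled Laguerre zeros $\zeta_j/c$ and gets rational entries from symmetric functions of the roots. Your pushforward of $(1+x)^{d}e^{-x}\,dx$ has integer entries $s!$, and its unique length-$\rho$ decomposition is the transported Gauss--Laguerre rule. Those weights lie in $\Q(\lambda_1,\ldots,\lambda_\rho)$, so the Schur step is unchanged.

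The gap is in part (ii). Your perturbation step needs a starting rule with strictly positive weights on at least $2\rho$ distinct nodes, and nothing you describe produces one. The Gauss rule has only $\rho$ nodes, and the rational nodes you add to it carry weight $0$. That weight lies on the boundary of the positive orthant, so openness of positivity says nothing about it. Concretely, for $R=2\rho$ the re-solved weights are unique, and those at the added nodes tend to $0$ with the displacement of the $\xi_j$. To first order they depend on that displacement through an invertible linear map, because the vectors $\psi_{2\rho-1}(\xi_j)$ and their derivatives form a basis. So reversing a displacement reverses their signs, and nothing in your argument selects a displacement that makes them all positive. Appealing to Tchakaloff does not help either, because such a rule can have fewer than $2\rho$ nodes, as the Gauss rule itself shows.

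The missing ingredient is that the moment vector $\eta$ is an interior point of the moment cone $\mathcal U$ generated by the vectors $\psi_{2\rho-1}(t)$, $t\in[0,1]$. For your $\mu$ this is immediate: a polynomial that is nonnegative on $[0,1]$ and has $\mu$-integral zero vanishes on $(0,1)$. The paper, whose $\mu$ is atomic, needs an even-multiplicity argument here instead.

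With interiority you could repair your route. For distinct rational $\zeta_k$ and small $\varepsilon>0$, write $\eta=\varepsilon\sum_{k=1}^{2\rho}\psi_{2\rho-1}(\zeta_k)+\bigl(\eta-\varepsilon\sum_{k=1}^{2\rho}\psi_{2\rho-1}(\zeta_k)\bigr)$, where the second summand still lies in $\mathcal U$. This supplies a strictly positive starting rule, though its number of nodes would still have to be adjusted to exactly $R$.

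The paper's route is shorter:
\begin{enumerate}
\item The cone $\mathcal U_\Q$ generated by $\psi_{2\rho-1}(t)$ with rational $t$ is dense in $\mathcal U$, so it has the same interior and contains $\eta$.
\item Carath\'eodory then gives linearly independent rational nodes. The weights are the unique solution of a rational linear system, hence rational, and there are at most $2\rho$ terms.
\item Splitting a term into two halves reaches every length $R\ge2\rho$.
\end{enumerate}
Length counts terms, not distinct nodes. Your insistence on exactly $R$ distinct nodes is therefore unnecessary, and it is the source of the difficulty.
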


\begin{proof}[Proof of the field degree in~(i)]
The tensor is the moment tensor $\T=M_{2\rho-1}(\mu)$ of the measure $\mu$ with unit mass at each of the points $\xi_j=\zeta_j/c$, where $\zeta_1<\cdots<\zeta_\rho$ are the zeros of the Laguerre polynomial $L_\rho(x)=\sum_{k=0}^{\rho}\binom{\rho}{k}(-x)^k/k!$ and $c$ is an integer larger than all of them.  Its moments are power sums of the zeros of a polynomial with rational coefficients, so they and the entries of $\T$ are rational.  The construction of $\T$, the rank and uniqueness assertions in~(i), and the rational decomposition in~(ii) are proved in Appendix~\ref{app:proofs}.  We prove here that the entries of the terms generate a field of degree $\rho!$.  The zeros of $L_\rho$ generate a field of degree $\rho!$ \citep[pp.~443--444]{Schur1930}.

  Put $d=2\rho-1$.  Every entry of the $j$th term $v(\xi_j)^{\otimes d}$ is a product of $d$ factors, each equal to $1-\xi_j$ or $\xi_j$, so it lies in $\Q(\xi_j)$. Conversely, the entries of this term whose first index is $2$ sum to
\[
 \sum_{i_2,\ldots,i_d\in\{1,2\}}\xi_j\,v(\xi_j)_{i_2}\cdots v(\xi_j)_{i_d}=\xi_j\bigl((1-\xi_j)+\xi_j\bigr)^{d-1}=\xi_j .
\]
Hence the entries of all terms and the numbers $\xi_1,\ldots,\xi_\rho$ generate the same field.  The zeros of $L_\rho$ are $c\xi_1,\ldots,c\xi_\rho$ with $c$ a positive integer, so they generate the same field, which has degree $\rho!$.

The same holds for the factor entries, which is the format of an output decomposition.  By Lemma~\ref{lem:product-equality}, the factors of the $j$th term of the unique length-$\rho$ decomposition are $b^{(m)}=\lambda_mv(\xi_j)$ for positive scalars $\lambda_m$, which may be irrational.  The ratio $b^{(m)}[2]/b^{(m)}[1]=\xi_j/(1-\xi_j)$ does not depend on $\lambda_m$, so every field containing the entries of these factors contains $\xi_j$.  Hence every field containing the entries of the factor matrices of a length-$\rho$ decomposition of $\T$ contains $\xi_1,\ldots,\xi_\rho$ and has degree at least $\rho!$.
\end{proof}

\begin{corollary}[Superpolynomial common-field degree]\label{cor:output-degree-lower}
For every $R\ge4$ there is a rational supplied decomposition of length $R$ such that every field containing all entries of the factor matrices of an exact minimum-length decomposition has degree at least $\lfloor R/2\rfloor!$ over $\Q$. Consequently, no polynomial in $R$ bounds the degree of the common output field in Theorem~\ref{thm:main}.
\end{corollary}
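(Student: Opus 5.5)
Given $R\ge4$, the plan is to take $\rho=\lfloor R/2\rfloor\ge2$, so that $R\ge2\rho$, and apply Theorem~\ref{thm:degree} with this $\rho$. It yields a tensor $\T\in(\Q^2)^{\otimes(2\rho-1)}$ with $\rankp(\T)=\rho$, whose unique length-$\rho$ nonnegative decomposition has term entries generating a field of degree $\rho!$. By part~(ii) of Theorem~\ref{thm:degree}, $\T$ also has a rational nonnegative decomposition of length exactly $R$. We take this as the supplied decomposition. Its factor matrices are rational and nonnegative, and we discard any zero columns by construction: a nonzero rank-one term has nonzero factors. So it is a legitimate input in the sense of \eqref{eq:input-factors}, with length $R$.

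An exact minimum-length decomposition of $\T$ has length $\rankp(\T)=\rho$. By uniqueness in part~(i), its multiset of rank-one terms is exactly $\{v(\xi_j)^{\otimes d}\text{ scaled}\}$. The final paragraph of the proof of Theorem~\ref{thm:degree} already handles factor matrices rather than terms. By Lemma~\ref{lem:product-equality}, any factor of the $j$th term is a positive multiple of $v(\xi_j)$, so the ratio of its two coordinates is $\xi_j/(1-\xi_j)$. Hence any field containing all factor entries contains every $\xi_j$, and therefore the field generated by the zeros of the Laguerre polynomial $L_\rho$, which has degree $\rho!=\lfloor R/2\rfloor!$. A subfield inclusion gives degree at least $\rho!$ by the tower law \citep[Proposition~1.20]{MilneFT}.

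For the consequence, we need to argue that the output field of Theorem~\ref{thm:main} must contain all entries of an exact minimum-length decomposition, namely the output $\mathcal D^*$ in item~\ref{item:main-decomposition}. Its degree is therefore at least $\lfloor R/2\rfloor!$. Since $\lfloor R/2\rfloor!$ grows faster than any polynomial in $R$, no polynomial bound can hold.

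The only delicate point is verifying that Theorem~\ref{thm:degree}(ii) applies, which requires $R\ge2\rho$; this holds since $2\lfloor R/2\rfloor\le R$. Beyond that, the argument is a direct assembly of results already established. The substantive work, namely the rational length-$R$ decomposition and uniqueness, is deferred to the appendix proof of Theorem~\ref{thm:degree}, which we assume.
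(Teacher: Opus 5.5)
Your proposal is correct and follows the paper's proof essentially step for step. You take $\rho=\lfloor R/2\rfloor$ and supply the rational length-$R$ decomposition from Theorem~\ref{thm:degree}(ii). Uniqueness of the length-$\rho$ decomposition together with the coordinate-ratio argument then forces every $\xi_j$, and hence the degree-$\rho!$ field generated by the zeros of $L_\rho$, into any field containing the factor entries of $\mathcal D^*$. The paper additionally pairs this lower bound with \eqref{eq:output-degree} to display $2^{\Theta(R\log R)}\le\deg f\le 2^{O(R^2\log R)}$, which is not needed for the stated conclusion.
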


\begin{proof}
Take $\rho=\lfloor R/2\rfloor$ and a tensor from Theorem~\ref{thm:degree}, supplied by a rational decomposition of length $R$. By the proof of Theorem~\ref{thm:degree}, every field containing the entries of the factor matrices of its unique minimum-length decomposition has degree at least $\rho!$. Therefore the common field $\Q(\theta)$ of any exact output has degree at least $\rho!$. Together with \eqref{eq:output-degree},
\[
 2^{\Theta(R\log R)}=\lfloor R/2\rfloor!\le\deg f\le(2R)^{2C_{\mathrm{sa}}R^2}=2^{O(R^2\log R)}.
\]
With the normalization $\lambda_m=1$, the entries of each individual term lie in $\Q(\xi_j)$, whose degree is at most $\rho$; the factorial lower bound is caused by requiring one field to contain the entries of the entire decomposition.
\end{proof}

\section{Computational evidence}\label{sec:experiments}

\subsection{Setting and exact certification}

The results of this paper are theoretical; this section provides computational evidence that the bounded-variable formulation can resolve instances left inconclusive by existing sufficient criteria.  We also record structural features of the reduction, such as added rows and larger exchanges, that occur in these instances.  We do not claim that the implementation is competitive or that the timings reflect the worst-case bound.

We ran an implementation of Algorithm~\ref{alg:main}, with the general-purpose solver Z3 in the role of $\textsc{Solve}$, on the 114 supplied decompositions which remain inconclusive in \citet[Sections~8.2 and~8.3]{WangYuan2026}, all with $d=3$, $n_1=n_2=n_3=6$, and $R=5$.\footnote{The data are available at \url{https://github.com/hw3129/exact-algorithm}.} The support of each factor is drawn
coordinatewise from a Bernoulli distribution, and the entries on the
support are independent uniform integers in $\{1,\ldots,999\}$.
No competing decomposition was found for these 114 instances in
\citet[Section~8.3]{WangYuan2026}. We label them $001,\ldots,114$ as in the released data.

The implementation uses exact rational preprocessing and Z3 for the remaining real polynomial systems. Appendix~\ref{app:experiments} describes the rank bounds, redundant-profile elimination, support splitting, and rank-one-curve parametrization, with proofs that these simplifications preserve feasibility. The reported witnesses are exact outputs of the implementation's exact solver, and infeasibility is decided by that solver procedure, including the feasibility-preserving reductions described in the appendix.  Z3 is not the procedure of Theorem~\ref{thm:sampling}, so the running times below are empirical and are not governed by the bound \eqref{eq:running-time}; the experiments test the reduction, not the worst-case complexity.

\subsection{Results}\label{subsec:exp-results}

\begin{figure}[t]
\centering
\includegraphics[width=\linewidth]{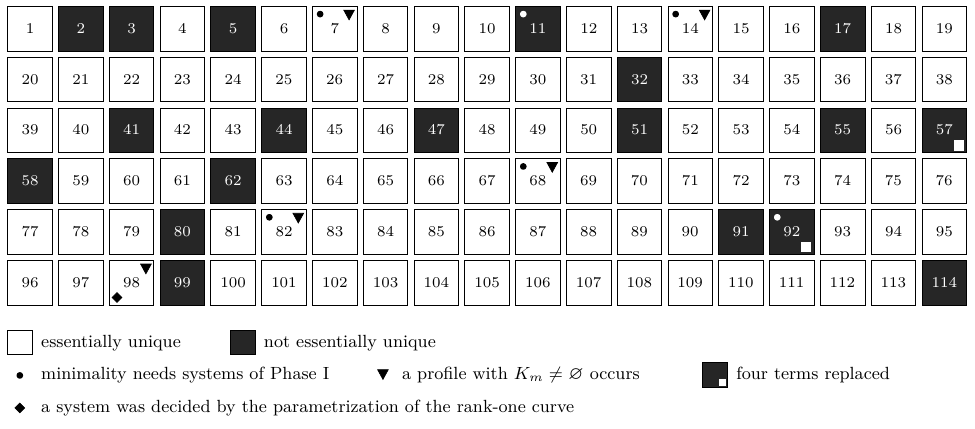}
\caption{The 114 instances, numbered without leading zeros; all are minimal.  White/black cells: essentially unique/nonunique.  Dots: Phase~I systems needed; triangles: $K_m\ne\varnothing$ occurs; small white squares: four rather than three terms replaced; diamond: Proposition~\ref{prop:exp-curve} used.}
\label{fig:exp-outcomes}
\end{figure}

Table~\ref{tab:exp-summary} summarizes the computation. Across the 114 instances, the implementation considered 525 profiles with $5$--$44$ variables per system; the maximum 44 attains the bound in Remark~\ref{rem:counts} for $p=q=5$. One redundant profile was skipped. Of the remaining 524 systems, 505 were proved infeasible and 19 yielded rational decompositions inequivalent to the corresponding supplied decompositions. Because Phase~III stops at the first feasible profile for an instance, these 19 systems came from 19 distinct inputs. Thus all 114 supplied decompositions are minimal, 95 are essentially unique, and 19 are nonunique; see Figure~\ref{fig:exp-outcomes}.
Notably, two of the 19 nonunique instances, 057 and 092, require four-term localized exchanges: every term-disjoint system on at most three terms is infeasible for these inputs.

\begin{table}[H]
\centering
\caption{Headline results for the 114 previously inconclusive supplied decompositions. The 525 profiles include one redundant profile skipped before solving.}
\label{tab:exp-summary}
\begin{tabular}{lr}
\toprule
Quantity & Result\\
\midrule
Supplied decompositions & 114\\
Profiles considered & 525\\
Redundant profiles skipped & 1\\
Infeasible systems & 505\\
Feasible systems & 19\\
Minimal supplied decompositions & 114\\
Essentially unique & 95\\
Minimal but nonunique & 19\\
Variables per system & $5$--$44$\\
Median running time & $0.12$ seconds\\
Maximum running time & $283$ seconds\\
\bottomrule
\end{tabular}
\end{table}

\subsection{Structural consequences}

The largest system has 44 variables, attaining the bound for $p=q=5$ in Remark~\ref{rem:counts}. Each of the 19 feasible systems yields a rational length-five decomposition inequivalent to the supplied one. Seventeen witnesses replace three terms and use 14 variables; the other two, for instances 057 and 092, replace four terms and use 27 variables. For these two inputs every term-disjoint system on at most three terms is infeasible. Thus witnesses to nonuniqueness can require larger localized exchanges even at this small supplied length.

The role of the inequalities is visible in the subsums illustrated in Figure~\ref{fig:exp-families}, whose coordinate descriptions and exact families remain in Appendix~\ref{subsec:exp-families}. The subsums for instances 098 and 041 have the same mode ranks up to permutation and the same numbers of variables and equations, yet the former is essentially unique while the latter admits a one-parameter family of inequivalent decompositions. The difference comes from the nonnegativity and lifting inequalities, rather than from ranks alone.

Only instance 098 required the rank-one-curve parametrization of Proposition~\ref{prop:exp-curve}. Without it, two systems were not decided within one hour; with it, each was proved infeasible in under 15 seconds. The detailed profile counts and timing analysis are in Appendix~\ref{subsec:exp-details}. These instances provide computational evidence that the bounded-variable formulation can decide cases left open by existing sufficient criteria while retaining the structural distinctions responsible for the exact decisions.

\section{Discussion, related work, and open problems}\label{sec:discussion}

\paragraph{Nonnegative tensor geometry and identifiability.}
\citet{QiComonLim2016} study the semialgebraic geometry of tensors of nonnegative rank at most $r$, including typical ranks, identifiability, and uniqueness. This geometry explains why real-algebraic methods are natural, but does not by itself control the number of variables in an exact decision procedure. Deterministic identifiability results such as the theorem of \citet{Kruskal1977}, the Lovitz--Petrov criterion \citep{LovitzPetrov2023}, and the positive-scattering criterion of \citet{WangYuan2026} give sufficient conditions under which a supplied decomposition is minimal or unique. Their role is complementary to the complete decision-and-search procedure developed here.

The connection with \citet{WangYuan2026} is especially direct. That paper certifies minimality when $\beta(I)+\tau(I)\ge2|I|-2$ for every $I\subseteq[R]$ with $|I|\ge2$, and uniqueness when $\beta(I)+\tau(I)\ge2|I|-1$. Here $\beta(I)=\sum_m(\sigma_m-1)$ is the Lovitz--Petrov budget, with $\sigma_m=\rank A_I^{(m)}$, and $\tau(I)$ is the positive-scattering term; see Remark~\ref{rem:scattering}. For every profile on $I$, \eqref{eq:profile-dimension-range} gives $\kappa=\beta(I)+\sum_m(\delta_m-\sigma_m)\ge\beta(I)$. Hence sufficiently large $\beta(I)$ already makes the relevant profile lists empty. With the additional condition $\sum_{m\in H}|K_m|\ge\tau(I)$ of Remark~\ref{rem:tau-profile}, the same argument gives $\kappa\ge\beta(I)+\tau(I)$. Thus whenever the sufficient criterion of \citet{WangYuan2026} applies, Algorithm~\ref{alg:main} certifies the corresponding conclusion without invoking real-algebraic solving; when it fails, the algorithm searches the remaining low-effective-order obstructions.

\paragraph{Nonnegative matrix rank.}
A nonnegative matrix factorization is exactly an order-two nonnegative CP decomposition:
\[
M=UV^{\mathsf T}=\sum_{r=1}^{R}U[:,r]V[:,r]^{\mathsf T}.
\]
Thus Theorem~\ref{thm:main} immediately gives the following specialization.

\begin{corollary}\label{cor:nmf}
Fix $R$. Given a rational factorization
\[
M=UV^{\mathsf T},
\qquad
U\in\Qpos^{m\times R},
\quad
V\in\Qpos^{n\times R},
\]
with no zero factor column, there is a deterministic algorithm running in
\[
G(R)\,\bits(U,V)^{CR^2}
\]
time that computes $\rankp(M)$, outputs an exact minimum-length NMF, decides whether the supplied factorization is minimal and essentially unique, outputs an exact shorter factorization when minimality fails, and, when the supplied factorization is minimal but not essentially unique, outputs an exact inequivalent length-$R$ factorization.
\end{corollary}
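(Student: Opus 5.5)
This is a direct specialization of Theorem~\ref{thm:main} to order $d=2$, so the plan is to check that the matrix data fit the theorem's input model and that its outputs translate into the claimed matrix outputs.

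\emph{Encoding.} Set $A^{(1)}=U$ and $A^{(2)}=V$. Then the supplied decomposition is $\T=\sum_{r=1}^{R}U[:,r]\otimes V[:,r]$, and under the identification $u\otimes v=uv^{\mathsf T}$ used in Section~\ref{sec:sharpness} this tensor is $M$. The factor columns are nonzero, nonnegative and rational, so the input satisfies \eqref{eq:input-factors}. Its bit length is $L=\bits(U,V)$ up to the fixed encoding conventions, which changes $L$ by at most a constant factor. That factor is absorbed into $G(R)$ and $C$, because $(c_0L)^{CR^2}\le c_0^{CR^2}L^{CR^2}$ and $c_0^{CR^2}$ depends only on $R$.

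\emph{Translation of concepts.} A nonnegative CP decomposition of $\T$ of length $q$ is exactly a factorization $M=B^{(1)}B^{(2)\mathsf T}$ with $q$ nonzero nonnegative columns in each factor. Hence $\rankp(\T)=\rankp(M)$, and minimal decompositions of $\T$ are exactly minimum-length NMFs of $M$. By Lemma~\ref{lem:product-equality}, equivalence of decompositions (equal multisets of rank-one terms) is the same as equality of the factorizations up to column permutation and positive column scaling $U[:,r]\mapsto\lambda U[:,r]$, $V[:,r]\mapsto\lambda^{-1}V[:,r]$. This is the usual notion of essential uniqueness for NMF. So minimality and essential uniqueness mean the same thing in both settings.

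\emph{Conclusion.} Applying Theorem~\ref{thm:main} with $d=2$ gives all the listed outputs: $\rankp(M)$, an exact minimum-length NMF, the two decisions, and the exact shorter or inequivalent length-$R$ witnesses. The running time is $G(R)L^{CR^2}$, and the entries are exact real algebraic numbers in the sense of Remark~\ref{rem:main-format}.

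There is no real obstacle. The only points needing care are the constant-factor comparison between $L$ and $\bits(U,V)$ and the check that essential uniqueness for the matrix coincides with equality of term multisets. Both follow directly from Lemma~\ref{lem:product-equality} and the conventions of Section~\ref{subsec:notation}.
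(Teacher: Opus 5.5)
Your proposal is correct and matches the paper's argument: the paper obtains the corollary as the specialization $d=2$ of Theorem~\ref{thm:main}, with $A^{(1)}=U$, $A^{(2)}=V$ and $M=\sum_r U[:,r]V[:,r]^{\mathsf T}$. Your remarks on the constant-factor comparison between $L$ and $\bits(U,V)$, and on equivalence via Lemma~\ref{lem:product-equality}, supply the routine bookkeeping that the paper leaves implicit.
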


In particular, when the supplied factorization is minimal, Corollary~\ref{cor:nmf} allows $\rank M<\rankp(M)=R$; no full-rank assumption is imposed. The variable counts at small supplied lengths are given in Remark~\ref{rem:counts}.

Uniqueness of nonnegative matrix factorizations is classically studied through separability-type conditions \citep{DonohoStodden2003,LaurbergEtAl2008}; for $d=2$ the uniqueness criterion of \citet{WangYuan2026} is two-sided separability, and Corollary~\ref{cor:nmf} decides essential uniqueness without any such hypothesis.

For comparison, exact nonnegative matrix factorization is NP-hard \citep{Vavasis2009}, and \citet{CohenRothblum1993} observed its decidability by quantifier elimination over the reals. \citet{AroraGeKannanMoitra2012} gave an exact algorithm for every fixed target rank and proved, under the exponential time hypothesis, that deciding $\rankp(M)\le r$ cannot be done in time $(mn)^{o(r)}$ in their input model. \citet{Moitra2016} sharpened the upper bound to $(mn)^{O(r^2)}$, up to bit-complexity and parameter factors, by reducing a naive $r(m+n)$-variable formulation to one with $2r^2$ real variables. This variable-reduction viewpoint is particularly close to ours: real-algebraic feasibility becomes useful only after the number of variables has been separated from the ambient dimensions.

Corollary~\ref{cor:nmf} has an $R^2$ exponent in the supplied length while additionally deciding essential uniqueness and returning exact algebraic competitors. The input models are nevertheless different. The algorithms of \citet{AroraGeKannanMoitra2012} and \citet{Moitra2016} receive $M$ and a target rank $r$, whereas our algorithm receives a factorization of length $R$; its parameter is the supplied length, not the unknown value $\rankp(M)$. Applying our theorem to a trivial factorization of an arbitrary matrix therefore does not recover Moitra's parameter dependence.

\paragraph{Exact algebraic output.}
Irrationality is unavoidable already for rational nonnegative matrix factorization inputs \citep{ChistikovEtAl2017,Shitov2016}. Theorem~\ref{thm:degree} and Corollary~\ref{cor:output-degree-lower} show a stronger phenomenon in the supplied tensor setting: although each term in the unique minimal decomposition may lie in a field of degree only $O(R)$, a single field containing all terms can require degree $\lfloor R/2\rfloor!$. Thus the algebraic output representation in Theorem~\ref{thm:main} is not merely a technical replacement for rational output.

\paragraph{Unrestricted tensor decomposition.}
Here $\rank(\T)$ denotes ordinary tensor rank over the indicated field, whereas $\rankp(\T)$ denotes nonnegative tensor rank. Tensor rank and many related tensor problems are NP-hard in general \citep{HillarLim2013}, and the algorithmic landscape for ordinary tensor rank in the low-rank regime has changed substantially in recent years. \citet{BhargavaSarafVolkovich2021} gave reconstruction algorithms for constant-rank tensors; \citet{PelegShpilkaVolk2024} extended reconstruction beyond constant rank; and \citet{BhargavaShringi2025} gave a deterministic $2^{k^{O(1)}}\operatorname{poly}(n,d)$ algorithm for decomposing a $d$-way tensor of ordinary rank at most $k$ over $\mathbb R$ or $\mathbb C$. These results concern decomposition over a field and do not impose nonnegativity, so they do not decide nonnegative rank: in general $\rank(\T)\le\rankp(\T)$, possibly strictly, and an unrestricted rank decomposition need not yield a nonnegative competitor. Accordingly, the FPT result for unrestricted tensor decomposition and the XP result here concern different problems and input models; neither implies a parameterized complexity separation between them.

\paragraph{Open questions.}
Theorem~\ref{thm:attainment} establishes sharpness for the shortening and balanced families, but the general equality cases of $\kappa=p+q-2$ remain to be characterized. Can additional geometric or positivity assumptions strengthen the effective-order bound? The examples in Section~\ref{sec:sharpness} show that uniform improvements cannot follow from supplied length alone. Structural questions also remain about which localized exchanges control nonuniqueness and when the feasible sets contain positive-dimensional families, as in Sections~\ref{sec:moments} and~\ref{sec:experiments}.

Theorem~\ref{thm:main} places the supplied-decomposition problem in XP parameterized by $R$.  The following questions concern its parameterized complexity.
\begin{itemize}
\item[$\bullet$] Is the problem of Theorem~\ref{thm:main} fixed-parameter tractable in $R$, that is, can it be solved in time $G(R)L^{O(1)}$?
\item[$\bullet$] Can the exponent $O(R^2)$ in Theorem~\ref{thm:main} be reduced, for example to $O(R)$?  Is there a matching parameterized lower bound?
\item[$\bullet$] What is the smallest function $G$ for which Theorem~\ref{thm:main} holds?  The proof gives $G(R)=2^{\mathrm{poly}(R)}$.  Can $G$ be taken singly exponential in a small power of $R$?
\item[$\bullet$] The ETH lower bound of \citet{AroraGeKannanMoitra2012} applies to a different matrix input model and does not settle these questions.  A further question concerns inputs without a supplied decomposition.
Can $\rankp(\T)$ be computed with complexity parameterized by $\rankp(\T)$ alone, when only the tensor $\T$ is given?
\end{itemize}

A trivial decomposition can have length as large as $\min_m\prod_{m'\ne m}n_{m'}$, so Theorem~\ref{thm:main} does not yield a useful parameterization by $\rankp(\T)$.  Finally, can the special sparse, multilinear, and determinantal structure of the systems be exploited to reduce the variable count or to extend exact computation beyond the supplied lengths currently accessible in Section~\ref{sec:experiments}?

More broadly, the results suggest that exact certification is governed by the geometry of low-dimensional positive exchanges: sharp structural bounds lead to bounded-variable representations, and those representations lead to exact algorithms and computationally effective certificates.

\appendix

\section{Limits of matricization-based certification}\label{app:matricization}

Let $d=3$, and let $\T_{(1)}$ be the matricization of $\T$, as in Section~\ref{sec:introduction}, whose rows are indexed by mode~$1$ and whose columns are indexed by the pairs $(i_2,i_3)\in[n_2]\times[n_3]$ in lexicographic order.  In this section, for $x\in\R^{n_2}$ and $y\in\R^{n_3}$, we identify $x\otimes y$ with the column vector in $\R^{n_2n_3}$ whose entries $x[i_2]y[i_3]$ are ordered like the columns of $\T_{(1)}$.  Every nonnegative decomposition $\T=\sum_{j=1}^{q}b_j^{(1)}\otimes b_j^{(2)}\otimes b_j^{(3)}$ gives the nonnegative factorization
\begin{equation}\label{eq:mat-factorization}
\T_{(1)}=UV^{\mathsf T},
\qquad
U=[b_1^{(1)},\ldots,b_q^{(1)}],
\qquad
V=[b_1^{(2)}\otimes b_1^{(3)},\ldots,b_q^{(2)}\otimes b_q^{(3)}].
\end{equation}

Since a tensor is determined by its matricization, a shorter decomposition of $\T$ gives a shorter factorization of $\T_{(1)}$, and an inequivalent decomposition of $\T$ of length $R$ gives an inequivalent factorization of $\T_{(1)}$ of length $R$.  This gives the following proposition.

\begin{proposition}\label{prop:matricization}
If the factorization \eqref{eq:mat-factorization} given by the supplied decomposition is minimal or essentially unique, then so is the supplied decomposition.
\end{proposition}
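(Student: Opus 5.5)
The plan is to argue by contraposition, separately for minimality and essential uniqueness, using the map \eqref{eq:mat-factorization}. It sends a length-$q$ nonnegative CP decomposition of $\T$ to a length-$q$ nonnegative factorization of $\T_{(1)}$. First I would check that this map is well defined and length preserving. Each $b_j^{(1)}$ is nonzero and nonnegative, and each $b_j^{(2)}\otimes b_j^{(3)}$ is a nonzero nonnegative vector, because the Kronecker product of nonzero vectors is nonzero. The identity $\T_{(1)}=UV^{\mathsf T}$ holds because the entry of $\T$ at $(i_1,i_2,i_3)$ is $\sum_j b_j^{(1)}[i_1]\,b_j^{(2)}[i_2]\,b_j^{(3)}[i_3]$. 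This is exactly the $(i_1,(i_2,i_3))$ entry of $\sum_j b_j^{(1)}(b_j^{(2)}\otimes b_j^{(3)})^{\mathsf T}$.

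\emph{Minimality.} Suppose the supplied decomposition is not minimal, so there is a competitor \eqref{eq:competitor} with $q<R$. Its image under \eqref{eq:mat-factorization} is a nonnegative factorization of $\T_{(1)}$ of length $q<R$. Hence the factorization induced by the supplied decomposition is not minimal.

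\emph{Essential uniqueness.} Suppose the supplied decomposition is not essentially unique, so there is a length-$R$ competitor whose multiset of rank-one terms differs from $\{P_r\}$. Its image is a length-$R$ factorization of $\T_{(1)}$. What remains is to show that this factorization is inequivalent to the supplied one, that is, that the multisets of matrix terms $b_j^{(1)}(b_j^{(2)}\otimes b_j^{(3)})^{\mathsf T}$ differ. The matricization map $Z\mapsto Z_{(1)}$ is a linear bijection from $\R^{n_1}\otimes\R^{n_2}\otimes\R^{n_3}$ onto $\R^{n_1\times n_2n_3}$, and it sends each rank-one tensor term to the corresponding matrix term. Multisets are preserved and reflected under an injective map of elements. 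So if the two multisets of matrix terms were equal, applying the inverse bijection would make the multisets of tensor terms equal, which contradicts inequivalence.

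The only step needing care is this last one. The point is that equivalence is defined through multisets of terms, so we may use injectivity of matricization on terms rather than compare factors up to scaling. Lemma~\ref{lem:product-equality} is not needed for this. Combining the two contrapositives proves the proposition.
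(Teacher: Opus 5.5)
Your proof is correct and follows the paper's own argument. The paper also argues by contraposition in both cases: a shorter or an inequivalent length-$R$ decomposition of $\T$ maps under \eqref{eq:mat-factorization} to a shorter or an inequivalent factorization of $\T_{(1)}$, since a tensor is determined by its matricization. Your explicit use of the injectivity of matricization on individual terms is the detailed form of that one-line remark.
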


However, we will see in Example~\ref{ex:matricization} that the supplied tensor decomposition can still be minimal and essentially unique even if the matrix factorization \eqref{eq:mat-factorization} which it gives is neither minimal nor essentially unique. We say that a nonnegative factorization $\T_{(1)}=UV^{\mathsf T}$ satisfies the \emph{rank-one condition} if every column of $V$, written as an $n_2\times n_3$ matrix, has rank at most one.

\begin{proposition}\label{prop:rank-one}
The following two statements hold.
\begin{enumerate}[label=\textup{(\roman*)}]
\item\label{item:rank-one-mat} Every factorization \eqref{eq:mat-factorization} satisfies the rank-one condition.
\item\label{item:rank-one-converse} If a nonnegative factorization $\T_{(1)}=UV^{\mathsf T}$ satisfies the rank-one condition, then its nonzero terms are the matricizations of the terms of a nonnegative decomposition of $\T$.
\end{enumerate}
\end{proposition}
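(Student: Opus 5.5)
Both parts reduce to the identification \eqref{eq:mat-factorization}: the column of $\T_{(1)}$ indexed by $(i_2,i_3)$ is the mode-$1$ fiber at $(i_2,i_3)$. Reshaping a vector of $\R^{n_2n_3}$ with the lexicographic column order therefore sends $x\otimes y$ to the $n_2\times n_3$ matrix $xy^{\mathsf T}$. With this identification, part~\ref{item:rank-one-mat} is immediate. The $j$th column of $V$ is $b_j^{(2)}\otimes b_j^{(3)}$, whose reshaping is $b_j^{(2)}(b_j^{(3)})^{\mathsf T}$, and this has rank at most one.

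For part~\ref{item:rank-one-converse}, write $\T_{(1)}=\sum_{j}U[:,j]V[:,j]^{\mathsf T}$, where $U$ and $V$ are nonnegative. I would keep only the indices $j$ for which the term $U[:,j]V[:,j]^{\mathsf T}$ is nonzero; for these, both $U[:,j]\ne0$ and $V[:,j]\ne0$. By the rank-one condition, the reshaping of $V[:,j]$ is a nonzero matrix of rank exactly one, so it equals $xy^{\mathsf T}$ for some vectors $x$ and $y$.

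The one point needing care is that $x$ and $y$ can be chosen nonnegative. Since $xy^{\mathsf T}$ is nonzero and nonnegative, pick an index $(k,l)$ with $x_ky_l>0$ and flip the signs of both $x$ and $y$ if necessary, so that $x_k>0$ and $y_l>0$. For every $i$, nonnegativity of $x_iy_l$ forces $x_i\ge0$, and symmetrically every $y_{i'}\ge0$. Both vectors are nonzero. Set $b_j^{(1)}=U[:,j]$, $b_j^{(2)}=x$ and $b_j^{(3)}=y$. Then $V[:,j]=b_j^{(2)}\otimes b_j^{(3)}$, so the term $U[:,j]V[:,j]^{\mathsf T}$ is exactly the matricization of $b_j^{(1)}\otimes b_j^{(2)}\otimes b_j^{(3)}$.

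Finally, matricization is a linear bijection between tensors of format $n_1\times n_2\times n_3$ and matrices of format $n_1\times n_2n_3$. Summing the terms, $\T_{(1)}$ is the matricization of $\sum_j b_j^{(1)}\otimes b_j^{(2)}\otimes b_j^{(3)}$, so this sum equals $\T$. This is a nonnegative decomposition of $\T$ whose terms have the stated matricizations. I expect no real obstacle; the only substantive step is the sign normalization of the rank-one factor, which uses nothing beyond nonnegativity.
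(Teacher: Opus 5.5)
Your proof is correct and follows the paper's argument: part (i) by reshaping $b_j^{(2)}\otimes b_j^{(3)}$ to $b_j^{(2)}b_j^{(3)\mathsf T}$, and part (ii) by writing each nonzero rank-one column as a tensor product of nonnegative vectors and then using linearity and injectivity of matricization. The only difference is that you prove the sign-normalization step directly, namely that a nonzero nonnegative rank-one matrix factors into nonnegative vectors. The paper instead cites this fact as Lemma~1(ii) of \citet{WangYuan2026}.
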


\begin{proof}
Written as an $n_2\times n_3$ matrix, $b_j^{(2)}\otimes b_j^{(3)}$ is $b_j^{(2)}b_j^{(3)\mathsf T}$, which proves part~\ref{item:rank-one-mat}.  For part~\ref{item:rank-one-converse}, every nonzero column $V[:,j]$ has the form $x_j\otimes y_j$ with nonnegative $x_j$ and $y_j$ \citep[Lemma~1(ii)]{WangYuan2026}.  Hence every nonzero term $U[:,j]V[:,j]^{\mathsf T}$ is the matricization of the nonnegative rank-one tensor $U[:,j]\otimes x_j\otimes y_j$.  Since the sum of these matricizations is $\T_{(1)}$, the sum of these tensors is $\T$.
\end{proof}

Moitra decides whether $\rankp(M)\le q$ for a nonnegative matrix $M$ by a formulation with at most $2q^2$ real variables \citep[Section~2.4]{Moitra2016}.  Every solution of this formulation gives a nonnegative factorization $M=UV^{\mathsf T}$ in which $U$ and $V$ have $q$ columns \citep[Lemma~2.15]{Moitra2016}.  In this factorization, the $i$th row of $V$ is a function of the $i$th column of $M$, and this function is the same for all $i$ \citep[Claim~2.13]{Moitra2016}.  Hence two equal columns of $M$ give two equal rows of $V$.  For an arbitrary nonnegative factorization $M=UV^{\mathsf T}$, however, the $i$th and $k$th columns of $M$ can be equal while the $i$th and $k$th rows of $V$ differ.  Nevertheless, every nonnegative factorization $M=UV^{\mathsf T}$ in which $U$ and $V$ have $q$ columns can be transformed into another such factorization which is given by a solution \citep[Lemmas~2.6 and~2.14]{Moitra2016}.  Hence the formulation has a solution if and only if $M$ has a nonnegative factorization in which $U$ and $V$ have $q$ columns, that is, if and only if $\rankp(M)\le q$ \citep[Theorem~2.16]{Moitra2016}.  However, for $M=\T_{(1)}$, this transformation can break the rank-one condition, and then the new factorization is not given by any nonnegative decomposition of $\T$.

We consider two approaches which try to decide with this formulation whether the supplied decomposition is minimal.
\begin{enumerate}[label=\textup{(\alph*)}]
\item\label{item:first-approach} The first approach computes $\rankp(\T_{(1)})$.  If $\rankp(\T_{(1)})=R$, then the supplied decomposition is minimal by Proposition~\ref{prop:matricization}.
\item\label{item:second-approach} The second approach takes Moitra's formulation which decides whether $\rankp(\T_{(1)})\le R-1$, and requires in addition that the factorization $\T_{(1)}=UV^{\mathsf T}$ given by a solution satisfies the rank-one condition.  If the resulting system has a solution, then the supplied decomposition is not minimal by Proposition~\ref{prop:rank-one}\ref{item:rank-one-converse}.  The rank-one condition adds conditions but no variables, so this system has at most $2(R-1)^2$ real variables, independently of $n_1$, $n_2$, and $n_3$.
\end{enumerate}
The converses of these two statements fail, and even together the two approaches do not decide minimality, as the following example shows.

\begin{example}\label{ex:matricization}
Let $\be_1,\be_2$ be the standard basis of $\R^2$.  We call the tensor
\begin{equation}\label{eq:W}
\T=\be_1\otimes\be_1\otimes\be_2+\be_1\otimes\be_2\otimes\be_1+\be_2\otimes\be_1\otimes\be_1
\end{equation}
\emph{the W tensor}, as in \citet[Example~1]{WangYuan2026}, where the decomposition \eqref{eq:W} is shown to be minimal and essentially unique.  Nevertheless, the following two statements hold.
\begin{enumerate}[label=\textup{(\roman*)}]
\item\label{item:W-rank} Every matricization of $\T$ has nonnegative rank two.
\item\label{item:W-moitra} For $M=\T_{(1)}$ and every $q\le3$, the formulation of \citet[Section~2.4]{Moitra2016} with the rank-one condition added has no solution.
\end{enumerate}
By part~\ref{item:W-rank}, approach~\ref{item:first-approach} fails to verify that \eqref{eq:W} is minimal; and by part~\ref{item:W-moitra}, approach~\ref{item:second-approach} fails to verify that the decomposition with $R=4$ obtained by splitting the last term of \eqref{eq:W} into two halves is not minimal.
\end{example}

\begin{proof}
First, the support of $\T$ consists of $(1,1,2)$, $(1,2,1)$, and $(2,1,1)$. The support of a nonnegative rank-one term is a Cartesian product of factor supports. A box containing any two of these three indices also contains an index outside $\supp(\T)$, since each pair differs in two coordinates. Hence every nonzero term in a nonnegative decomposition of $\T$ is supported at just one of these indices. At least three terms are required; with exactly three, each term is the unit tensor at its index. This proves directly that \eqref{eq:W} is minimal and essentially unique.

For the decomposition \eqref{eq:W}, the factorization \eqref{eq:mat-factorization} is
\begin{equation}\label{eq:W-factorization}
\T_{(1)}=\begin{pmatrix}0&1&1&0\\1&0&0&0\end{pmatrix}=U_0V_0^{\mathsf T},
\qquad
U_0=\begin{pmatrix}1&1&0\\0&0&1\end{pmatrix},
\qquad
V_0=\begin{pmatrix}0&0&1\\1&0&0\\0&1&0\\0&0&0\end{pmatrix}.
\end{equation}
The columns of $V_0$ are $\be_1\otimes\be_2$, $\be_2\otimes\be_1$, and $\be_1\otimes\be_1$.

\pfpart{Part~\ref{item:W-rank}}
Since $\T_{(1)}$ has two rows and rank two, $\rankp(\T_{(1)})=2$.  The same holds for every matricization of $\T$, since $\T$ does not change when its modes are permuted.

\pfpart{Part~\ref{item:W-moitra}}
Suppose that, for some $q\le3$, the system with rank-one condition has a solution, and let $\T_{(1)}=UV^{\mathsf T}$ be the factorization which it gives.  By Proposition~\ref{prop:rank-one}\ref{item:rank-one-converse}, its nonzero terms are the matricizations of the terms of a nonnegative decomposition of $\T$ of length at most $q$.  Since \eqref{eq:W} is minimal and essentially unique, this decomposition has length $q=3$, and its terms are those of \eqref{eq:W}.  By Lemma~\ref{lem:product-equality}, $V$ is therefore obtained from $V_0$ by permuting its columns and multiplying them by positive numbers.  The second and third rows of $V_0$ are nonzero in different columns, so the second and third rows of $V$ differ.  On the other hand, the second and third columns of $\T_{(1)}$ are equal, so the second and third rows of $V$ are equal \citep[Claim~2.13]{Moitra2016}.  This contradiction proves part~\ref{item:W-moitra}.
\end{proof}

The first two terms of \eqref{eq:W-factorization} have the same column $\be_1$ of $U_0$, so they add up to a single term, and $\T_{(1)}=\be_1(0,1,1,0)+\be_2(1,0,0,0)$.  The term $\be_1(0,1,1,0)$ is not the matricization of a rank-one tensor, since $(0,1,1,0)$, written as a $2\times2$ matrix, has rank two.  Hence $\rankp(\T_{(1)})=2$, although $\T$ has no nonnegative decomposition of length two. The systems of Section~\ref{sec:normalized} do not merge terms in this way, since every irreducible localized replacement itself gives a solution by Theorem~\ref{thm:representation}\ref{item:completeness}.

\section{Necessity of the structural ingredients}\label{app:necessity}

This appendix shows that the ingredients of the reduction in Section~\ref{sec:normalized}, namely the added rows $K_m$ of a profile, the restriction to irreducible exchanges, the nonnegative lifting condition~\ref{cond:lifting}, and the term-disjointness inequalities \eqref{eq:td}, cannot be omitted.  It uses the cycle matrices $C_n$ and the matrices $B_R$ and $T_2$ of Section~\ref{sec:sharpness}.
\paragraph{Added rows and joint spans.}
The factors of a nonnegative decomposition cannot always be restricted to the column and row spaces of the matrix. Consequently the profile row sets $K_m$ in Definition~\ref{def:profile} cannot in general be omitted.

\begin{corollary}\label{cor:added-rows}
Let $R\ge4$ be even, and supply the row decomposition of $C_R$.
Every feasible profile of a term-disjoint balanced replacement uses $I=[R]$ and satisfies
\[
 \sigma_1=\delta_1=R,\qquad
 \sigma_2=R-1,\qquad \delta_2=R,
 \qquad K_1=\varnothing,\quad |K_2|=1.
\]
Therefore, if $K_m=\varnothing$ were required in every mode, Phase~III would find no competitor, although the row decomposition is not essentially unique.
\end{corollary}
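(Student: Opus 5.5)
The plan is to build on Theorem~\ref{thm:attainment}\ref{item:attainment-balanced}. For even $R\ge4$ the row decomposition of $C_R$ is minimal and not essentially unique. Every term-disjoint balanced replacement has $I=[R]$. Every feasible profile for $([R],R)$ has $\delta_1=\delta_2=R$. So what remains is to read off $\sigma_m$ and $K_m$ from Definition~\ref{def:profile}, and then to draw the consequence for a search restricted to $K_m=\varnothing$.

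First we compute the supplied ranks. In mode~$1$ the supplied factors are $\be_1,\ldots,\be_R$, so $\sigma_1=R=n_1$. By \eqref{eq:profile-dimension-range}, $\delta_1=R$, and \eqref{eq:profile-rows} gives $|K_1|=\delta_1-\sigma_1=0$, hence $K_1=\varnothing$. In mode~$2$ the supplied factors are the rows $C_R[i,:]^{\mathsf T}$, which span the row space of $C_R$. By Lemma~\ref{lem:cycle} this space has dimension $\rank C_R=R-1$ for even $R$, so $\sigma_2=R-1$. The proof of Theorem~\ref{thm:attainment} gives $\delta_2=R$ for every feasible profile. The argument there is that the reconstructed competitor must be equivalent to the column decomposition, whose mode-$2$ factors are multiples of the $R$ distinct coordinate vectors, which forces $R\le\rank\bar Y_2\le\delta_2$. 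Then \eqref{eq:profile-rows} gives $|K_2|=\delta_2-\sigma_2=1$.

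For the consequence, a profile with $K_2=\varnothing$ would have $\delta_2=\sigma_2=R-1<R$. By the above, no such profile is feasible for the block $([R],R)$. Blocks with $I\ne[R]$ admit no term-disjoint balanced replacement at all, so by Theorem~\ref{thm:representation}\ref{item:soundness} their systems $\mathcal N^{\mathrm{td}}$ have no solution for any profile. Hence Phase~III, restricted to profiles with all $K_m=\varnothing$, would return no competitor. This happens even though the row decomposition is not essentially unique, which we know from Lemma~\ref{lem:cycle} together with the column decomposition.

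The only delicate step is to check that the rank bound $R\le\delta_2$ holds for \emph{every} feasible profile and not just for the one produced by completeness. That is exactly the soundness argument in the proof of Theorem~\ref{thm:attainment}, so I would cite it rather than redo it. I would also note that $R\ge4$ even is needed for $\rank C_R=R-1$, which makes $|K_2|=1$ rather than $0$.
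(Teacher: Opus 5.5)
Your proposal is correct and follows essentially the same route as the paper: you read off $\sigma_1=R$ and $\sigma_2=R-1$ from Lemma~\ref{lem:cycle}, take $I=[R]$ and $\delta_1=\delta_2=R$ from Theorem~\ref{thm:attainment}\ref{item:attainment-balanced}, and then use $|K_m|=\delta_m-\sigma_m$. The differences are cosmetic. The paper gets $\delta_1=\delta_2=R$ directly from the stated value $\kappa=2R-2$ together with $\delta_m\le R$, whereas you obtain $\delta_1$ from \eqref{eq:profile-dimension-range} and $\delta_2$ from the rank argument inside that theorem's proof; you also spell out, via soundness, why blocks with $I\ne[R]$ contribute no feasible system, which the paper leaves implicit.
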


\begin{proof}
The supplied factor matrices are $\Id_R$ and $C_R^{\mathsf T}$, so
Lemma~\ref{lem:cycle} gives their ranks $R$ and $R-1$. By Theorem~\ref{thm:attainment}\ref{item:attainment-balanced}, every feasible term-disjoint balanced profile has $I=[R]$ and $\kappa=(\delta_1-1)+(\delta_2-1)=2R-2$, which forces $\delta_1=\delta_2=R$ because $\delta_m\le R$.
The identities for $K_m$ follow from
$|K_m|=\delta_m-\sigma_m$ in Definition~\ref{def:profile}. If $K_m=\varnothing$ is required, Phase~III tries only profiles with $\delta_2=\sigma_2=R-1$, and none of them is feasible.  Since the column decomposition is an inequivalent competitor, the row decomposition is nevertheless not essentially unique.
\end{proof}

\begin{remark}\label{rem:compression}
For ordinary rank, let $r=\rank M$ and let $M=\sum_{j=1}^{r}u_jv_j^{\mathsf T}$ be a minimal decomposition.  The column space of $M$ is $\spanof\{u_1,\ldots,u_r\}$, and likewise for the rows and $v_j$. Let the columns of $U$ and $V$ be bases of these two spans.  Writing $u_j=Ua_j$ and $v_j=Vb_j$ gives
\[
 M=UGV^{\mathsf T},
 \qquad
 G=\sum_{j=1}^{r}a_jb_j^{\mathsf T}.
\]
Since $U$ and $V$ have linearly independent columns, $G$ is determined by $M$ alone, so every minimal decomposition of $M$ gives a decomposition of the same matrix $G$.  Conversely, every decomposition of $G$ gives one of $M$.  Thus one may restrict the factors to these spans and decompose the smaller matrix $G$.

For nonnegative decompositions this restriction can fail.  For $C_4$, both column and row spaces equal $\{x\in\R^4:x_1+x_3=x_2+x_4\}$, and the nonnegative vectors of this space form a cone $\mathcal C$.  Every nonzero $x\in\mathcal C$ has $x_1+x_3>0$ and satisfies
\[
 x=\sum_{a\in\{1,3\}}\sum_{b\in\{2,4\}}\frac{x_ax_b}{x_1+x_3}\,(\be_a+\be_b),
\]
so $\mathcal C$ is generated by the four vectors $\be_a+\be_b$ with $a\in\{1,3\}$ and $b\in\{2,4\}$, which are the rows of $C_4$.  A vector of $\mathcal C$ with support in $\{a,b\}$ has $x_a=x_b$, so it is a multiple of $\be_a+\be_b$.  Hence each of these four vectors spans an extreme ray of $\mathcal C$.

Now suppose $C_4=\sum_ju_jv_j^{\mathsf T}$ is a nonnegative decomposition whose factors $v_j$ lie in this space, so that $v_j\in\mathcal C$.  Row $i$ of $C_4$ is $\sum_ju_j[i]\,v_j^{\mathsf T}$ and has the form $\be_a+\be_b$.  Since every summand is nonnegative, each $v_j$ with $u_j[i]>0$ has support in $\{a,b\}$ and is therefore a multiple of $\be_a+\be_b$, that is, of row $i$. If some $u_j$ had two nonzero coordinates, say $i$ and $i'$, then $v_j$ would be parallel to both the $i$th and the $i'$th row of $C_4$, which is impossible since distinct rows of $C_4$ are not parallel.  Hence each $u_j$ has only one nonzero entry, so it does not lie in $\mathcal C$.  Thus every nonnegative decomposition of $C_4$ has a factor $u_j$ outside the column space or a factor $v_j$ outside the row space. This is why the profiles contain the rows $K_m$, see Corollary~\ref{cor:added-rows}.
\end{remark}

\paragraph{Irreducibility.}
\begin{example}\label{ex:reducible}
Without irreducibility, the effective order of an exchange is not
bounded by any function of $p$ and $q$ alone.
For $d\ge3$, put $u=\be_1+\be_2$ and $w=\be_2+\be_3$ in $\R^3$,
and define
\[
\begin{aligned}
 P_1&=\be_1\otimes u^{\otimes(d-1)},&
 P_2&=\be_3\otimes u^{\otimes(d-1)},\\
 P_3&=w\otimes\be_1\otimes w^{\otimes(d-2)},&
 P_4&=w\otimes\be_3\otimes w^{\otimes(d-2)},\\
 Q_1&=(\be_1+\be_3)\otimes u^{\otimes(d-1)},&
 Q_2&=w\otimes(\be_1+\be_3)\otimes w^{\otimes(d-2)}.
\end{aligned}
\]
Then the full exchange $P_1+P_2+P_3+P_4=Q_1+Q_2$ has $p=4$ and $q=2$.  Its factors span $\spanof\{\be_1,\be_3,w\}=\R^3$ in mode~$1$, $\spanof\{u,\be_1,\be_3\}=\R^3$ in mode~$2$, and $\spanof\{u,w\}$ in every later mode.  Hence $\delta_1=\delta_2=3$ and $\delta_m=2$ for $m\ge3$, so
\[
 \kappa=2+2+(d-2)=d+2,
\]
although $p+q-2=4$ is fixed.  The full exchange is not irreducible, since it splits into $P_1+P_2=Q_1$ and $P_3+P_4=Q_2$. 
\end{example}

\paragraph{Nonnegative lifting and term disjointness.}
The core equations alone do not ensure a nonnegative lift to the omitted rows. Likewise, excluding the supplied terms is necessary when searching for a witness to failure of essential uniqueness.

\begin{example}
\label{ex:lifting}
Consider the supplied decomposition $\T=\sum_{r=1}^{2}A^{(1)}[:,r]\otimes A^{(2)}[:,r]$ with the factor matrices
\begin{equation}\label{eq:lifting-input}
A^{(1)}=\begin{pmatrix}2&1\\1&1\\0&1\\1&0\end{pmatrix},\qquad A^{(2)}=\Id_2,
\end{equation}
so that $\T=A^{(1)}$.  This decomposition is minimal and essentially unique by the two-sided separability \citep[Corollary~3]{WangYuan2026}. Nevertheless, if either of the following two changes is made, the system $\mathcal N^{\mathrm{td}}_{[2],2}$ has a solution, so Phase~III would wrongly report that the decomposition is not essentially unique.
\begin{enumerate}[label=\textup{(\roman*)}]
\item\label{item:necessity-lifting}
condition~\ref{cond:lifting} is omitted but
\eqref{eq:td} and all other conditions are retained;
\item\label{item:necessity-td}
\eqref{eq:td} is omitted but all of
\ref{cond:simplex}--\ref{cond:lifting} are retained.
\end{enumerate}
\end{example}

\begin{proof}
Phase~III considers the block $([2],2)$, for which both supplied factor matrices have rank two, i.e. $\sigma_1=\sigma_2=2$.  Since $\delta_m\ge\sigma_m$ by \eqref{eq:profile-dimension-range}, the effective-order bound $\kappa=(\delta_1-1)+(\delta_2-1)\le2$ of Theorem~\ref{thm:budget} forces $\delta_1=\delta_2=2$, and hence $|K_m|=\delta_m-\sigma_m=0$. In both modes, the first row basis in the lexicographic order consists of rows $1$ and $2$.  The normalized supplied data are
\[
 \bar X_1=\begin{pmatrix}2/3&1/2\\1/3&1/2\end{pmatrix},
 \qquad \bar X_2=\Id_2,\qquad
 (\alpha_1,\alpha_2)=(3,2),\qquad \Omega_I=5.
\]
Hence the core \eqref{eq:core} is the matrix formed by the first two rows of $A^{(1)}$,
\[
 \bar X_1\operatorname{diag}(\alpha)\,\bar X_2^{\mathsf T}=\begin{pmatrix}2&1\\1&1\end{pmatrix}.
\]

\pfpart{Part~\ref{item:necessity-lifting}}
For $0<t\le1/2$,
\[
 \begin{pmatrix}2&1\\1&1\end{pmatrix}=\begin{pmatrix}2&1-2t\\1&1-t\end{pmatrix}\begin{pmatrix}1&0\\t&1\end{pmatrix}^{\mathsf T}=:Y_1Y_2^{\mathsf T}.
\]
Both factors are nonnegative and have positive column sums. Normalizing the columns of $Y_1$ and $Y_2$ gives
\begin{equation}\label{eq:false-core-competitors}
\begin{gathered}
 \bar Y_1=
 \begin{pmatrix}
 2/3&(1-2t)/(2-3t)\\
 1/3&(1-t)/(2-3t)
 \end{pmatrix},
 \qquad
 \bar Y_2=
 \begin{pmatrix}
 1/(1+t)&0\\ t/(1+t)&1
 \end{pmatrix},\\
 \omega_1=3(1+t),\qquad \omega_2=2-3t.
\end{gathered}
\end{equation}
They satisfy~\ref{cond:simplex}--\ref{cond:core}.
Moreover, $\omega_1>3$ and $0<\omega_2<2$, so neither competing
weight equals either supplied weight. Consequently every polynomial
on the left of~\eqref{eq:td} is strictly positive.

The omitted row $A^{(1)}[4,:]=(1,0)$ is the first selected row minus the
second. Since $\bar X_1$ has rank two, its coefficient row in
\eqref{eq:row-lift} is uniquely $\ell_{1,4}=(1,-1)$. For this $\bar Y_1$,
\[
 \ell_{1,4}\bar Y_1
 =\left(\frac13,\,-\frac{t}{2-3t}\right),
\]
which has a negative entry. Hence condition~\ref{cond:lifting}
excludes these points, whereas deleting it would make the
term-disjoint system feasible for an essentially unique input.

\pfpart{Part~\ref{item:necessity-td}}
The values $\bar Y_m=\bar X_m$ and $\omega=\alpha$ describe the supplied decomposition itself, so they satisfy~\ref{cond:simplex}--\ref{cond:core}.  For the omitted rows $i=3,4$, \eqref{eq:row-lift} gives $\ell_{1,i}\bar Y_1=\ell_{1,i}\bar X_1=\bar a_{1,i}\ge0$, so~\ref{cond:lifting} also holds. Only \eqref{eq:td} fails, since each competing term equals a supplied term.  Without \eqref{eq:td}, Phase~III would return the supplied decomposition itself as a competitor.
\end{proof}

These examples explain why the exact reduction must encode the joint spans, nonnegative lifting, and term disjointness, in addition to the core tensor identity.

\section{Deferred proofs}\label{app:proofs}
\subsection{Proof of Lemma~\ref{lem:block-decomposition}}

\begin{proof}
Choose a nonempty subexchange minimal under inclusion; it is irreducible. Remove it and repeat on the remaining exchange. Finiteness gives a partition into irreducible blocks. If both original decompositions are minimal, each block side is minimal by Lemma~\ref{lem:localized-minimal-subsums}, so both sides have length equal to the nonnegative rank of their common sum.
\end{proof}

\subsection{Proof of Lemma~\ref{lem:localized-minimal-subsums}}

\begin{proof}
The subsum has a nonnegative decomposition of length $|K|$.  If it had a decomposition of length $s<|K|$, replacing the terms $Z_i$, $i\in K$, by those $s$ terms would give a nonnegative decomposition of $\sum_{i=1}^{n}Z_i$ of length $n-|K|+s<n$, a contradiction.
\end{proof}

\subsection{Proof of Lemma~\ref{lem:nonminimal-nonunique}}

\begin{proof}
Let $\T=\sum_{j=1}^{\rho^*}Q_j$ be a minimum-length decomposition, where $\rho^*<R$.  No two terms of this decomposition are positive multiples of one another, because two such terms could be combined into a single nonzero nonnegative rank-one term, contradicting minimality.  Split $Q_1$ into $R-\rho^*+1\ge2$ positive multiples whose coefficients sum to one.  Two different multisets of positive coefficients give two length-$R$ decompositions of $\T$.  Since no other $Q_j$ is parallel to $Q_1$, these two decompositions have different multisets of rank-one terms and are therefore inequivalent.
\end{proof}

\subsection{Proof of Proposition~\ref{prop:blocks}}

\begin{proof}
Every block has both sides nonempty, so $q_\nu\ge1$.  Lemma~\ref{lem:localized-minimal-subsums}, applied to $Q_1,\ldots,Q_{\rho^*}$, gives $\rankp\bigl(\sum_{j\in J_\nu}Q_j\bigr)=q_\nu$.  By \eqref{eq:blocks-minimal}, this is the nonnegative rank of $\sum_{r\in I_\nu}P_r$, which has a decomposition with $p_\nu$ terms, so $q_\nu\le p_\nu$.  Part~\ref{item:blocks-count} follows from $\sum_\nu p_\nu=R$ and $\sum_\nu q_\nu=\rho^*$.  If $\rho^*<R$, part~\ref{item:blocks-count} forces $p_\nu-q_\nu>0$ for some $\nu$, and then $p_\nu>q_\nu\ge1$.
\end{proof}

\subsection{Proof of Lemma~\ref{lem:combine}}

\begin{proof}
The sets $I_\nu$ are pairwise disjoint.  Subtracting the supplied subsums indexed by these sets from $\T$ and inserting the equal right-hand sides in \eqref{eq:local-replacements} gives \eqref{eq:combined-decomposition}.  Every displayed term is a nonzero nonnegative rank-one tensor.  The number of terms is
\[
R-\sum_\nu p_\nu+\sum_\nu q_\nu
=R-\sum_\nu(p_\nu-q_\nu)
=\rho(\mathfrak D).
\]
\end{proof}

Figure~\ref{fig:pattern} illustrates Definition~\ref{def:pattern} and the decomposition \eqref{eq:combined-decomposition} for a pattern with two blocks.

\begin{figure}[t]
\centering
\begin{tikzpicture}[x=1cm,y=1cm]
\foreach \k/\style in {1/knownsel,2/knownsel,3/knownsel,4/knownsel,5/knownsel,6/known,7/known}{
\pgfmathsetmacro{\x}{1.35*(\k-1)}
\path[\style] (\x,2.6) rectangle ++(1.2,0.8);
\draw (\x,2.6) rectangle ++(1.2,0.8);
}
\foreach \k in {1,...,5}{ \pgfmathsetmacro{\x}{1.35*(\k-1)+0.6} \node[lab,text=white] at (\x,3.0) {$P_{\k}$}; }
\foreach \k in {6,7}{ \pgfmathsetmacro{\x}{1.35*(\k-1)+0.6} \node[lab] at (\x,3.0) {$P_{\k}$}; }
\draw[brace] (0,3.52) -- node[above=4pt,lab]{$I_1$,\ \ $p_1=3$} (3.9,3.52);
\draw[brace] (4.05,3.52) -- node[above=4pt,lab]{$I_2$,\ \ $p_2=2$} (6.6,3.52);
\foreach \x/\name in {0.675/{Q_{11}},2.025/{Q_{12}},4.725/{Q_{21}}}{
\path[unknown] (\x,0) rectangle ++(1.2,0.8);
\draw (\x,0) rectangle ++(1.2,0.8);
\node[lab,fill=white,inner sep=1.5pt] at (\x+0.6,0.4) {$\name$};
}
\foreach \k in {6,7}{
\pgfmathsetmacro{\x}{1.35*(\k-1)}
\path[known] (\x,0) rectangle ++(1.2,0.8);
\draw (\x,0) rectangle ++(1.2,0.8);
\node[lab] at (\x+0.6,0.4) {$P_{\k}$};
}
\draw[brace] (3.225,-0.12) -- node[below=4pt,lab]{$q_1=2$} (0.675,-0.12);
\draw[brace] (5.925,-0.12) -- node[below=4pt,lab]{$q_2=1$} (4.725,-0.12);
\draw[brace] (9.3,-0.12) -- node[below=4pt,lab]{unchanged} (6.75,-0.12);
\draw[arr] (1.95,2.5) -- (1.95,0.9);
\node[slab,anchor=west] at (2.05,1.7) {$\sum\limits_{r\in I_1}\!P_r=Q_{11}+Q_{12}$};
\draw[arr] (5.325,2.5) -- (5.325,0.9);
\node[slab,anchor=west] at (5.425,1.7) {$\sum\limits_{r\in I_2}\!P_r=Q_{21}$};
\draw[arr,densely dashed] (7.35,2.5) -- (7.35,0.9);
\draw[arr,densely dashed] (8.7,2.5) -- (8.7,0.9);
\node[lab,anchor=east,align=right] at (-0.25,3.0) {supplied decomposition,\\[-1pt] length $R=7$};
\node[lab,anchor=east,align=right] at (-0.25,0.4) {competitor,\\[-1pt] length $\rho(\mathfrak D)=5$};
\end{tikzpicture}
\caption{A disjoint shortening pattern $\mathfrak D=((I_1,2),(I_2,1))$ for $R=7$ and one realization.  The terms in $I_1$ and in $I_2$ are replaced at the same time by fewer terms, and the other terms are unchanged.  This gives a competitor of length $\rho(\mathfrak D)=R-(p_1-q_1)-(p_2-q_2)=5$.}
\label{fig:pattern}
\end{figure}

\subsection{Proof of Lemma~\ref{lem:circuit-elimination}}

\begin{proof}[Proof of Lemma~\ref{lem:circuit-elimination}]
\stepcounter{proofstepblock}
\proofstep[step:circuit-kernel]{the vector $\Delta^{E,J}$ spans $\ker\Cm_J$}
Fix $k\ge2$, a set $J=\{j_1<\cdots<j_k\}$, and a set $E\subseteq[e]$ with $|E|=k-1$, and put $\Delta=\Delta^{E,J}$.  Suppose $\rank \Cm_J=k-1$ and $\rank \Cm[E,J]=k-1$.  By Lemma~\ref{lem:cofactor}, the vector $\Delta$ is nonzero and spans $\ker\Cm[E,J]$.  Since $\rank\Cm_J=k-1$, the kernel of $\Cm_J$ is one-dimensional, and it is contained in $\ker\Cm[E,J]$ because the rows of $\Cm[E,J]$ are rows of $\Cm_J$.  Hence $\ker\Cm_J=\R\Delta$.
For $k=1$, $\Delta=(-1)$ by convention, and $\ker \Cm_J=\R\Delta$ holds exactly when the single column $\Cm[:,j_1]$ is zero, that is, when $\rank \Cm_J=0=k-1$.

\proofstep[step:circuit-forward]{condition \eqref{eq:farkas-condition} implies \eqref{eq:positive-circuit} and \eqref{eq:negative-circuit}}
Assume \eqref{eq:farkas-condition}, and fix $(k,J,E)$ such that the hypothesis of \eqref{eq:positive-circuit} holds.  All $\Delta_t$ are positive, so some minor $\det \Cm[E,J\setminus\{j_t\}]$ is nonzero, and $\rank \Cm_J\ge\rank \Cm[E,J]\ge k-1$. Since all minors of order $k$ of $\Cm_J$ are zero, $\rank \Cm_J\le k-1$. Hence $\rank \Cm_J=k-1$ and, by Step~\ref{step:circuit-kernel}, $\ker \Cm_J=\R\Delta$ with $\Delta$ strictly positive.  Let $\eta\in\R^q$ be the extension of $\Delta$ by zero outside $J$.  Then $\eta\ge0$ and $\Cm\eta=\Cm_J\Delta=0$, so \eqref{eq:farkas-condition} gives
\[
0\le g\eta=\sum_{t=1}^{k}g_{j_t}\Delta_t,
\]
which is the conclusion of \eqref{eq:positive-circuit}.  If instead the hypothesis of \eqref{eq:negative-circuit} holds, all $\Delta_t$ are negative.  The same argument applied to $-\Delta$ gives $g(-\eta)\ge0$, that is, $\sum_t g_{j_t}\Delta_t\le0$, which is the conclusion of \eqref{eq:negative-circuit}.

\proofstep[step:circuit-reverse]{the conditions \eqref{eq:positive-circuit} and \eqref{eq:negative-circuit} imply \eqref{eq:farkas-condition}}
Assume that \eqref{eq:positive-circuit} and \eqref{eq:negative-circuit} hold for all $(k,J,E)$ as in the lemma.  By Proposition~\ref{prop:lifting}\ref{item:lifting-rays}, it is enough to prove $g\eta\ge0$ for every $\eta$ which generates an extreme ray of $\mathcal K(\Cm)$.  Let $\eta$ be such a vector, with support $J$ of size $k$.  By Proposition~\ref{prop:lifting}\ref{item:lifting-support}, $k\le q$ and $k\le\rank \Cm+1\le e+1$, $\rank \Cm_J=k-1$, $\ker \Cm_J=\R\eta_J$, and $\eta_J$ is strictly positive.  In particular, all minors of order $k$ of $\Cm_J$ are zero. If $k=1$, take $E=\varnothing$; then $\Delta_1=-1<0$, the hypothesis of \eqref{eq:negative-circuit} holds, and the conclusion gives $-g_{j_1}\le0$, that is, $g\eta=g_{j_1}\eta_{j_1}\ge0$.  If $k\ge2$, choose $E\subseteq[e]$ with $|E|=k-1$ and $\rank \Cm[E,J]=k-1$; such $E$ exists because $\rank \Cm_J=k-1$.  By Step~\ref{step:circuit-kernel}, $\Delta$ spans $\ker \Cm_J=\R\eta_J$, so $\Delta=\lambda\,\eta_J$ for some $\lambda\ne0$, and all entries of $\Delta$ have the sign of $\lambda$.  If $\lambda>0$, the hypothesis of \eqref{eq:positive-circuit} holds, and its conclusion gives
\[
0\le\sum_{t=1}^{k}g_{j_t}\Delta_t=\lambda\sum_{t=1}^{k}g_{j_t}\eta_{j_t}=\lambda\,g\eta,
\]
so $g\eta\ge0$.  If $\lambda<0$, the hypothesis of \eqref{eq:negative-circuit} holds, and its conclusion gives $\lambda\,g\eta\le0$, again implying $g\eta\ge0$.  Thus $g$ is nonnegative on every extreme ray, and \eqref{eq:farkas-condition} holds.

\proofstep{number and degrees of the polynomials}
Regard the entries of $g$ and $\Gamma$ as variables.  The number of triples $(k,J,E)$ is at most $\sum_k\binom{q}{k}\binom{e}{k-1}\le 2^{q+e}$, which depends only on $(e,q)$, and every triple gives two implications.  Their polynomials are of three kinds.  The minors of order $k$ of $\Gamma_J$ have degree $k$, and they occur only for $k\le e$.  The numbers $\Delta_t$ are up to sign minors of order $k-1$, and they have degree $k-1\le e$.  The sums $\sum_tg_{j_t}\Delta_t$ have degree $1+(k-1)\le e+1$.  Hence every polynomial has degree at most $e+1$.
\end{proof}

\subsection{Proof of Proposition~\ref{prop:polya}}

\begin{proof}[Proof of Proposition~\ref{prop:polya}]
\pfpart{Lower bound} For $k\ge1$ put $\psi_k(t)=(1,t,\ldots,t^k)^{\mathsf T}$.  The entries of $v(t)^{\otimes k}$ span the polynomials of degree at most $k$, so there is a linear map $\Psi_k:(\R^2)^{\otimes k}\to\R^{k+1}$ with $\Psi_kv(t)^{\otimes k}=\psi_k(t)$ for all $t$. Let $F=\int_0^1v(t)^{\otimes(\rho-1)}\bigl(v(t)^{\otimes(d-\rho+1)}\bigr)^{\mathsf T}dt$ be the flattening of $\mathcal P_d$ whose rows are indexed by the first $\rho-1$ modes and whose columns by the others, with the tensors written as column vectors.  Then $\rank F\le\rank(\mathcal P_d)$, since flattening does not increase the rank \citep[\S3.4]{Landsberg2012}.  Moreover,
\[
 \Psi_{\rho-1}F\,\Psi_{d-\rho+1}^{\mathsf T}=\int_0^1\psi_{\rho-1}(t)\,\psi_{d-\rho+1}(t)^{\mathsf T}\,dt=\Bigl[\int_0^1t^{i+l}\,dt\Bigr]_{0\le i\le\rho-1,\ 0\le l\le d-\rho+1}.
\]
Since $2\rho-2\le d$, this matrix contains the block with $0\le i,l\le\rho-1$, and the block is positive definite because for every nonzero $c\in\R^{\rho}$,
\[
 \sum_{i,l=0}^{\rho-1}c_ic_l\int_0^1t^{i+l}\,dt=\int_0^1\Bigl(\sum_{i=0}^{\rho-1}c_it^i\Bigr)^2dt>0.
\]
Hence $\rho\le\rank\bigl(\Psi_{\rho-1}F\,\Psi_{d-\rho+1}^{\mathsf T}\bigr)\le\rank F\le\rank(\mathcal P_d)\le \rankp(\mathcal P_d)$.

\pfpart{Upper bound} Let $\xi_1,\ldots,\xi_\rho$ be the nodes of the $\rho$-point Gauss--Legendre rule.  Since its degree $2\rho-1$ is at least $d$, \eqref{eq:quadrature} gives a nonnegative decomposition of $\mathcal P_d$ of length $\rho$, so $\rankp(\mathcal P_d)\le\rho$.

\pfpart{Part~\ref{item:polya-odd}} Let $I\subseteq[\rho]$ with $|I|\ge2$.  The nodes $\xi_j$ are distinct, so the factors $v(\xi_j)$, $j\in I$, span $\R^2$ in every mode.  Hence $\beta(I)=d=2\rho-1\ge2|I|-1$, since $|I|\le\rho$.  By the criterion of \citet[Theorem~2]{LovitzPetrov2023}, the decomposition given by the Gauss--Legendre rule is the only decomposition of $\mathcal P_d$ of length $\rho$, even among decompositions with real factors of arbitrary sign.

\pfpart{Part~\ref{item:polya-even}}  For $w,\xi\in\R^\rho$ put $\Phi(w,\xi)=\bigl(\sum_jw_j\xi_j^i\bigr)_{i=0}^{d}$.  By linearity, nodes $\xi_j\in[0,1]$ and weights $w_j>0$ form a positive quadrature rule of degree $d$ if and only if $\sum_jw_jh(\xi_j)=\int_0^1h(t)\,dt$ for $h=1,t,\ldots,t^d$, that is,
\[
 \Phi(w,\xi)=\Bigl(\int_0^1t^i\,dt\Bigr)_{i=0}^{d}.
\]
These are $d+1=2\rho-1$ equations in $2\rho$ unknowns, and the Gauss--Legendre rule $(w^*,\xi^*)$ solves them, since its degree $2\rho-1$ exceeds $d$.  The Jacobian of $\Phi$ at $(w^*,\xi^*)$ has rank $d+1$.  Indeed, let $y\in\R^{d+1}$ be orthogonal to all its columns, and put $p(t)=\sum_{i=0}^{d}y_it^i$.  Then
\[
 p(\xi_j^*)=y^{\mathsf T}\frac{\partial\Phi}{\partial w_j}=0,\qquad w_j^*\,p'(\xi_j^*)=y^{\mathsf T}\frac{\partial\Phi}{\partial\xi_j}=0,\qquad j\in[\rho].
\]
Since $w_j^*>0$, each node $\xi_j^*$ is a double root of $p$, so $p$ has at least $2\rho>d$ roots counted with multiplicity.  Hence $p=0$, that is, $y=0$. By the implicit function theorem \citep[Theorem~9.28]{Rudin1976}, near $(w^*,\xi^*)$ one of the $2\rho$ unknowns determines the others as continuously differentiable functions, so the solutions form a curve through $(w^*,\xi^*)$. Choose the nodes of the initial rule in increasing order and restrict each node to a small neighborhood disjoint from the neighborhoods of the other nodes. Near $(w^*,\xi^*)$ the weights stay positive and the nodes stay distinct and ordered in $(0,1)$, so the points of this curve are positive quadrature rules of degree $d$ with $\rho$ nodes. For fixed distinct nodes, the equations with $i<\rho$ form a square linear system for $w$.  Its matrix $[\xi_j^{\,i}]_{0\le i<\rho,\ j\in[\rho]}$ is a Vandermonde matrix with distinct nodes and is therefore invertible.  Hence the nodes determine the weights, so different points of this restricted curve have different unordered node sets. By~\eqref{eq:quadrature}, each point of the curve gives a nonnegative decomposition $\mathcal P_d=\sum_jw_jv(\xi_j)^{\otimes d}$ of length $\rho$.  The term $w_jv(\xi_j)^{\otimes d}$ has the entry $w_j(1-\xi_j)^d$ at the index $(1,\ldots,1)$ and the entry $w_j\xi_j(1-\xi_j)^{d-1}$ at the index $(2,1,\ldots,1)$.  Their ratio $\xi_j/(1-\xi_j)$ determines $\xi_j$.  Hence the infinitely many points of the curve give pairwise inequivalent decompositions.
\end{proof}

\subsection{Proof of Theorem~\ref{thm:degree}}

The proof uses the following classical fact.  Let $a_\rho x^\rho+\cdots+a_1x+a_0$ be a polynomial with $a_0,\ldots,a_\rho\in\Q$ and $a_\rho\ne0$, and let $x_1,\ldots,x_\rho$ be its complex zeros, listed with multiplicity.  Then
\[
 a_\rho x^\rho+\cdots+a_1x+a_0=a_\rho\prod_{j=1}^{\rho}(x-x_j)=a_\rho\sum_{k=0}^{\rho}(-1)^ke_k\,x^{\rho-k},
\]
where $e_0=1$ and $e_k$ is the sum of all products of $k$ distinct $x_j$.  Comparing coefficients shows that every $e_k=(-1)^ka_{\rho-k}/a_\rho$ is rational.  The power sum $x_1^i+\cdots+x_\rho^i$ does not change when the $x_j$ are permuted, so it is a polynomial with integer coefficients in $e_1,\ldots,e_\rho$ \citep[Theorem~2.2]{MilneANT}.  Hence it is rational.

\begin{proof}[Proof of the construction, the rank, and the uniqueness in Theorem~\ref{thm:degree}, and of part~(ii)]
\pfpart{The tensor}  Let
\[
 L_\rho(x)=\sum_{k=0}^{\rho}\binom{\rho}{k}\frac{(-x)^k}{k!}
\]
be the Laguerre polynomial of degree $\rho$ \citep[(18.5.12)]{DLMF}.  Its zeros are distinct and positive \citep[\S18.2(vi) and Table~18.3.1]{DLMF}.  Fix an integer $c$ which is larger than all these zeros, and let $\xi_1<\cdots<\xi_\rho$ be the zeros divided by $c$.  Then $\xi_j\in(0,1)$. Let $\mu$ be the measure on $[0,1]$ with mass one at every $\xi_j$.  Since the $\xi_j$ are the zeros of the polynomial $L_\rho(cx)$, whose coefficients are rational, the moments $\mu_i=\sum_j\xi_j^{\,i}$ of $\mu$ are rational by the fact recalled above. Put $\T=M_{2\rho-1}(\mu)=\sum_jv(\xi_j)^{\otimes(2\rho-1)}$.  For $d=2\rho-1$, the entry of $\T$ at an index with exactly $s$ coordinates equal to $2$ is
\[
 \int(1-t)^{d-s}t^s\,d\mu(t)=\sum_{k=0}^{d-s}(-1)^k\binom{d-s}{k}\mu_{s+k},
\]
an integer combination of moments of $\mu$, and hence rational.

\pfpart{A rational supplied decomposition}  Let $\mathcal U$ be the convex cone generated by the vectors $\psi_{2\rho-1}(t)=(1,t,\ldots,t^{2\rho-1})^{\mathsf T}$ with $t\in[0,1]$. 
The cone $\mathcal U$ is closed and full-dimensional. Indeed, the convex hull $K$ of the compact curve $\{\psi_{2\rho-1}(t):t\in[0,1]\}$ is compact, and every point of $K$ has first coordinate one. Thus $\mathcal U=\{a x:a\ge0,\ x\in K\}$ is closed: for a convergent sequence $a_nx_n$, its first coordinates $a_n$ converge, and compactness of $K$ supplies a convergent subsequence of $x_n$. Moreover, the vectors at any $2\rho$ distinct points of $[0,1]$ are linearly independent by the Vandermonde determinant, so $\mathcal U$ has nonempty interior in $\R^{2\rho}$.

We claim that the moment vector $\eta:=(\mu_0,\ldots,\mu_{2\rho-1})^{\mathsf T}=\sum_j\psi_{2\rho-1}(\xi_j)$ is an interior point of $\mathcal U$.  Otherwise $\eta$ lies on the boundary of $\mathcal U$, so there is a nonzero $\lambda\in\R^{2\rho}$ with $\lambda^{\mathsf T}u\ge\lambda^{\mathsf T}\eta$ for all $u\in\mathcal U$ \citep[\S2.5.2]{BoydVandenberghe2004}.  Since $0$ and $2\eta$ lie in the cone $\mathcal U$, we have $\lambda^{\mathsf T}\eta=0$. Hence $h(t):=\lambda^{\mathsf T}\psi_{2\rho-1}(t)$ is a nonzero polynomial of degree at most $2\rho-1$ with $\sum_jh(\xi_j)=\lambda^{\mathsf T}\eta=0$, and $h\ge0$ on $[0,1]$ because every $\psi_{2\rho-1}(t)$ with $t\in[0,1]$ lies in $\mathcal U$. Since every $h(\xi_j)\ge0$, this gives $h(\xi_j)=0$ for all $j$.  A zero of odd multiplicity at an interior point of $[0,1]$ would make $h$ change sign there, so every $\xi_j\in(0,1)$ is a zero of even multiplicity, and $h(t)$ is divisible by
\[
 (t-\xi_1)^2(t-\xi_2)^2\cdots(t-\xi_\rho)^2 .
\]
This polynomial has degree $2\rho>2\rho-1\ge\deg h$, which is impossible since $h\ne0$.

Let $\mathcal U_\Q\subseteq\mathcal U$ be the convex cone generated by the vectors $\psi_{2\rho-1}(t)$ with rational $t\in[0,1]$.  Every element of $\mathcal U$ is a nonnegative combination of finitely many vectors $\psi_{2\rho-1}(t)$ with $t\in[0,1]$ \citep[\S2.1.5]{BoydVandenberghe2004}.  Replacing each such $t$ by a nearby rational number changes this combination arbitrarily little, because $\psi_{2\rho-1}$ is continuous, so $\mathcal U$ lies in the closure of $\mathcal U_\Q$. Hence $\mathcal U$ and $\mathcal U_\Q$ have the same relative interior \citep[Proposition~1.3.5(c)]{Bertsekas2009}, and the interior point $\eta$ of $\mathcal U$ lies in it, so $\eta\in\mathcal U_\Q$. 

Carath\'eodory's theorem \citep[Proposition~1.2.1(a)]{Bertsekas2009} gives
\[
\eta=\sum_{i=1}^{R'}w'_i\psi_{2\rho-1}(\xi'_i),\qquad R'\le2\rho,
\]
where $\xi'_i\in[0,1]$ are rational, $w'_i>0$, and the vectors $\psi_{2\rho-1}(\xi'_1),\ldots,\psi_{2\rho-1}(\xi'_{R'})$ are linearly independent. By linear independence, the weights $w'_i$ are the unique solution of the linear system $\sum_iw'_i\psi_{2\rho-1}(\xi'_i)=\eta$, whose coefficients and right-hand side are rational.  Gaussian elimination computes this solution with rational arithmetic, so the $w'_i$ are rational. Since $\psi_{2\rho-1}(t)$ lists the monomials $1,t,\ldots,t^{2\rho-1}$, the identity 
\[
\sum_{i=1}^{R'}w'_i\psi_{2\rho-1}(\xi'_i)=\eta=\sum_{j=1}^\rho\psi_{2\rho-1}(\xi_j)
\]
says that the nodes $\xi'_i$ and weights $w'_i$ form a positive quadrature rule of degree $2\rho-1$ for $\mu$, see the proof of Proposition~\ref{prop:polya}\ref{item:polya-even}.  By \eqref{eq:quadrature}, $\T=\sum_iw'_i\,v(\xi'_i)^{\otimes(2\rho-1)}$, and this is a rational nonnegative decomposition of length $R'$, since $\xi'_i\in[0,1]$ and $w'_i>0$ are rational. Repeatedly replacing a term by two terms, each with half its weight, we obtain such a decomposition of every length $R\ge R'$, in particular of every length $R\ge2\rho$, since $R'\le2\rho$.

\pfpart{The minimal decomposition} Flatten $\T$ between its first $\rho-1$ modes and its remaining $\rho$ modes, and apply the maps $\Psi_{\rho-1}$ and $\Psi_\rho$ from the proof of Proposition~\ref{prop:polya}. The resulting matrix is $[\mu_{i+l}]_{0\le i\le\rho-1,\ 0\le l\le\rho}$. Its leading $\rho\times\rho$ block is positive definite.  Indeed, for every nonzero $y\in\R^\rho$,
\[
 \sum_{i,l=0}^{\rho-1}y_iy_l\int t^{i+l}\,d\mu=\int\Bigl(\sum_{i=0}^{\rho-1}y_it^i\Bigr)^2d\mu=\sum_{j=1}^{\rho}\Bigl(\sum_{i=0}^{\rho-1}y_i\xi_j^{\,i}\Bigr)^2>0,
\]
since a nonzero polynomial of degree at most $\rho-1$ cannot vanish at all $\rho$ points $\xi_j$. Thus the flattening has rank at least $\rho$. Together with the displayed length-$\rho$ decomposition, this proves $\rank(\T)=\rankp(\T)=\rho$. For every subset $I\subseteq[\rho]$ with $|I|\ge2$, the distinct nodes imply that $v(\xi_j)$, $j\in I$, span $\R^2$ in every mode. Consequently $\beta(I)=2\rho-1\ge2|I|-1$. The criterion of \citet[Theorem~2]{LovitzPetrov2023} therefore makes $\T=\sum_jv(\xi_j)^{\otimes(2\rho-1)}$ the unique length-$\rho$ decomposition, even when competing real factors need not be nonnegative or symmetric.
\end{proof}

\section{Complexity analysis}\label{app:complexity}

This appendix contains the bookkeeping behind Proposition~\ref{prop:one-shot-format} and the running time and output length in Theorem~\ref{thm:main}.

\subsection{Proof of Proposition~\ref{prop:one-shot-format}}

\begin{proof}
\stepcounter{proofstepblock}
\proofstep[step:format-profiles]{the number of profiles}
A profile consists of the set of active modes $H$, the numbers $\delta_m$, and the sets $K_m$, $m\in H$.  We count the choices of each.  The set $H$ is a subset of $[d]$ with $|H|\le\kappa\le p+q-2$, because every $m\in H$ has $\delta_m-1\ge1$.  By Section~\ref{sec:model}, $d\le L$, so there are at most $(d+1)^{p+q-2}\le(L+1)^{p+q-2}$ choices of $H$.  For a fixed $H$, every $\delta_m$, $m\in H$, is an integer with $2\le\delta_m\le p+q-1$, so there are at most $(p+q-2)^{p+q-2}$ choices of the numbers $\delta_m$.  Since $\sigma_m\ge1$ for every mode,
\[
\sum_{m\in H}|K_m|
=\sum_{m\in H}(\delta_m-\sigma_m)
\le\kappa\le p+q-2,
\]
and by Section~\ref{sec:model}, $\sum_mn_m\le L$, so there are at most $(L+1)^{p+q-2}$ choices of the sets $K_m$.  Hence the number of profiles is at most
\[
(L+1)^{p+q-2}(p+q-2)^{p+q-2}(L+1)^{p+q-2}
\le G(p,q)L^{O(p+q)}.
\]
The same bound holds for the number of candidates $(H,\delta,K)$ which are examined.  Listing a candidate and checking \eqref{eq:profile-dimension-range}--\eqref{eq:profile-rows}, which needs the ranks $\sigma_m$, takes time polynomial in $L$, so the profiles are listed in time $G(p,q)L^{O(p+q)}$.

\proofstep[step:format-data]{the rational data}
The coefficients of the systems are computed from $\bar X_m$, $N_m$, and $w_{m,i}$, $m\in H$, and from the weights $\alpha_r$.  By Lemma~\ref{lem:core} and Section~\ref{subsec:explicit}, with $S_m=S_m^0\cup K_m$ and $D_{X,m}=\operatorname{diag}(s_{mr}:r\in I)$ as in Section~\ref{subsec:derivation},
\[
\bar X_m=A_I^{(m)}[S_m,:]\,D_{X,m}^{-1},
\qquad
N_m\bar X_m=0,
\qquad
w_{m,i}\bar X_m=A_I^{(m)}[i,:]\,D_{X,m}^{-1},\quad i \notin S_m,
\]
where the rows of $N_m$ form a basis of the left kernel of $\bar X_m$.  The row basis $S_m^0$, the matrix $N_m$, and the rows $w_{m,i}$ are computed by Gaussian elimination over $\Q$.  This takes time polynomial in $L$, and all intermediate numbers have bit length at most $(L+1)^{C}$ \citep[Theorem~3.3 and Corollary~3.3a]{Schrijver1986}.  By \eqref{eq:supplied-weight}, $\alpha_r$ is the product of the numbers $c_{mr}$, $m\notin H$, and $s_{mr}$, $m\in H$.  By \eqref{eq:inactive-ratios} and Lemma~\ref{lem:core},
\[
c_{mr}=\frac{a^{(m)}_r[i]}{a^{(m)}_{r_0}[i]}\quad(m\notin H),
\qquad
s_{mr}=\sum_{i\in S_m}a^{(m)}_r[i]\qquad m\in H,
\]
where $i$ is any row index with $a^{(m)}_{r_0}[i]\ne0$.  The bit length of a sum, a product, or a quotient of rational numbers is at most $C$ times the sum of their bit lengths~\citep[pp.~30--31]{GLS1988}.  By the formulas above, $c_{mr}$ and $s_{mr}$ are computed from entries of $A^{(m)}$ alone, so
\[
\bits(\alpha_r)
\le C\Bigl(\sum_{m\notin H}\bits(c_{mr})+\sum_{m\in H}\bits(s_{mr})\Bigr)
\le C\sum_{m=1}^{d}\bits(A^{(m)})
\le C\bits(A)=CL.
\]
Since $CL<(L+1)^{C}$ for $C\ge1$, all these data have bit length at most $(L+1)^{C}$.

\proofstep[step:format-length]{the length}
The system $\mathcal N_{I,q}$ consists of \ref{cond:simplex}--\ref{cond:core} and of the conditions of Lemma~\ref{lem:circuit-elimination} which replace \ref{cond:lifting}, and $\mathcal N^{\mathrm{td}}_{I,p}$ contains in addition the inequalities \eqref{eq:td}.  The numbers of equations and inequalities in \ref{cond:simplex}, \ref{cond:weights}, \ref{cond:core}, and \eqref{eq:td} are $q\sum_{m\in H}\delta_m$, $q$, $\prod_{m\in H}\delta_m$, and $p^2$.  Since $|H|\le\kappa\le p+q-2$ by Step~\ref{step:format-profiles} and $\delta_m\le2^{\delta_m-1}$,
\[
\sum_{m\in H}\delta_m=|H|+\kappa\le2(p+q-2),
\qquad
\prod_{m\in H}\delta_m\le2^{\kappa}\le2^{p+q-2}.
\]
For one omitted row $i\notin S_m$, the number of polynomials in the conditions of Lemma~\ref{lem:circuit-elimination} depends only on $(e_m,q)$.  Since $e_m=\delta_m-\sigma_m\le\delta_m-1\le p+q-2$, this number is at most $G(p,q)$.  By Section~\ref{sec:model}, there are at most $\sum_mn_m\le L$ omitted rows.  Hence the length is at most
\[
2q(p+q-2)+q+2^{p+q-2}+p^2+G(p,q)L\le G(p,q)(L+1).
\]

\proofstep[step:format-degree]{the degrees}
By \eqref{eq:simplex} and \eqref{eq:last-weight}, every entry of $\bar Y_m$ and every weight $\omega_j$ is a polynomial of degree at most one in the variables.  Hence \ref{cond:simplex} and \ref{cond:weights} consist of inequalities of degree at most one.  By \eqref{eq:core-equation}, in every equation of \ref{cond:core} the left-hand side is a number, and every term on the right-hand side is the product of one weight and of one entry of $\bar Y_m$ for every $m\in H$.  Hence these equations have degree at most $|H|+1\le p+q-1$, see Step~\ref{step:format-profiles}.  The entries of $g=w_{m,i}\bar Y_m$ and $\Gamma=N_m\bar Y_m$ in \eqref{eq:g-C-system} also have degree at most one, so by Lemma~\ref{lem:circuit-elimination} the polynomials in its conditions have degree at most $e_m+1\le p+q-1$, see Step~\ref{step:format-length}.  The polynomials in \eqref{eq:td} have degree two.  Since $p+q-1\ge2$, every polynomial of the system has degree at most $p+q-1$.

\proofstep[step:format-bits]{the bit lengths}
By Corollary~\ref{cor:count}, the system has at most $q(p+q-2)+q-1$ variables, and by Step~\ref{step:format-degree} its polynomials have degree at most $p+q-1$.  Hence every polynomial has at most $G(p,q)$ coefficients.  We multiply out \eqref{eq:core-equation}, \eqref{eq:simplex}--\eqref{eq:td}, \eqref{eq:g-C-system}, and the determinants in Lemma~\ref{lem:circuit-elimination}, which have order at most $p+q-1$.  Then every coefficient is a sum of at most $G(p,q)$ terms, and every term is, up to sign, a product of at most $p+q-1$ of the data of Step~\ref{step:format-data}.  By Step~\ref{step:format-data}, such a product has bit length at most $(p+q-1)(L+1)^{C}$, so every coefficient has bit length at most
\[
C\,G(p,q)\,(p+q-1)(L+1)^{C}\le G(p,q)(L+1)^{C}.
\]
Multiplying a polynomial by the product of the denominators of its coefficients gives integer coefficients, as required in Definition~\ref{def:polysystem}, of bit length at most $G(p,q)(L+1)^{C}$.  This proves part~\ref{item:format-single}.

\proofstep{combined systems}
By Lemma~\ref{lem:patterns}, there are at most $G(R)$ disjoint shortening patterns on $[R]$, and they are listed in time $G(R)$.  Fix a nonempty pattern.  The sets $I_\nu$ are pairwise disjoint subsets of $[R]$, and every block has $q_\nu<p_\nu$, so
\begin{equation}\label{eq:total-budget}
\sum_\nu(p_\nu+q_\nu-2)\le2\sum_\nu p_\nu-2c\le2R-2.
\end{equation}
By Step~\ref{step:format-profiles}, the number of choices of profiles for all blocks is at most
\[
\prod_\nu(p_\nu+q_\nu-2)^{p_\nu+q_\nu-2}(L+1)^{2(p_\nu+q_\nu-2)}
\le(2R-2)^{2R-2}(L+1)^{4R-4}
\le G(R)L^{O(R)},
\]
and they are listed in time $G(R)L^{O(R)}$.  By Definition~\ref{def:combined}, the combined system consists of the systems $\mathcal N_{I_\nu,q_\nu}$ of the blocks.  Since $p_\nu,q_\nu\le R$ and $c\le R$, part~\ref{item:format-single} shows that its length is at most $cG(R)(L+1)\le G(R)(L+1)$, and that its polynomials have degree at most $p_\nu+q_\nu-1\le2R-2$ and bit length at most $G(R)(L+1)^{C}$.  By Proposition~\ref{prop:dimension}, it has fewer than $2R^2$ variables.  This proves part~\ref{item:format-combined}.
\end{proof}

\subsection{Arithmetic in a real algebraic number field}

The next lemma collects the facts on computations in a field $\Q(\theta)$ which are needed for $\textsc{Reconstruct}$.  For two elements $g_1(\theta)/g_0(\theta)$ and $g_1'(\theta)/g_0'(\theta)$ of $\Q(\theta)$, the sum or difference, the product, and the quotient are
\[
\frac{(g_1g_0'\pm g_1'g_0)(\theta)}{(g_0g_0')(\theta)},\qquad
\frac{(g_1g_1')(\theta)}{(g_0g_0')(\theta)},\qquad
\frac{(g_1g_0')(\theta)}{(g_0g_1')(\theta)},
\]
so these operations need only additions, subtractions, and multiplications of integer polynomials.  We say that a quotient $g_1(\theta)/g_0(\theta)$ is \emph{bounded} by $M$ if $g_0$ and $g_1$ have degree at most $M$ and all their coefficients have bit length at most $M$.  If two elements are bounded by $M$, then their sum, difference, product, and quotient are bounded by $4M$, because the degrees at most double and every new coefficient is a sum of at most $2(M+1)$ products of two old coefficients.

\begin{lemma}\label{lem:field-lifting}
Let $\theta$ be a real algebraic number which is given by $f\in\mathbb Z[T]$ and its Thom encoding, let $h$ be the sum of the bit lengths of all coefficients of $f$ and of all entries of its Thom encoding, and let $t\ge1$.  Let the given elements of $\Q(\theta)$ have bit length at most $h'$.
\begin{enumerate}[label=\textup{(\alph*)}]
\item\label{item:field-arithmetic} Starting from given elements, a sequence of at most $t$ additions, subtractions, multiplications, divisions by nonzero elements, and comparisons in $\Q(\theta)$ costs at most
\[
G(t)(\deg f+1)^C(h+h'+1)^C
\]
bit operations, and every result has bit length at most this number.
\item\label{item:field-linear-systems} Consider at most $t$ linear equations and inequalities of the form $\ge$ in at most $t$ unknowns, whose coefficients are given elements of $\Q(\theta)$.  If they have a real solution, then they have a solution with coordinates in $\Q(\theta)$, and such a solution can be computed within bounds of the same form as in \ref{item:field-arithmetic}.
\end{enumerate}
\end{lemma}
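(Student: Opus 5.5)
For part~\ref{item:field-arithmetic}, the plan is to represent every intermediate element as a reduced quotient $g_1(\theta)/g_0(\theta)$ and to keep degrees bounded by reducing numerators and denominators modulo $f$. Concretely, after each operation from the displayed formulas I will replace $g_0,g_1$ by their remainders modulo $f$, after clearing denominators by the leading coefficient of $f$ (pseudo-division). This keeps both degrees below $\deg f$. Pseudo-division by $f$ costs polynomially many bit operations in $\deg f$, $h$ and the current coefficient sizes, and it multiplies coefficient bit lengths by at most a factor polynomial in $(\deg f+1)(h+1)$. The bounding rule stated before the lemma (bounds grow by a factor $4$ per operation) then gives, after $t$ steps, a growth factor of at most $4^t$ times a polynomial in $(\deg f+1)(h+h'+1)$. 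That is of the claimed form $G(t)(\deg f+1)^C(h+h'+1)^C$.

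Division needs nonzeroness, and comparison needs a sign. Both reduce to determining the sign of $g(\theta)$ for an integer polynomial $g$ of degree less than $\deg f$ with bounded coefficients. For comparison, the sign of $g_1(\theta)/g_0(\theta)$ is $\sgn g_1(\theta)\cdot\sgn g_0(\theta)$. I will invoke the sign-determination algorithm at a Thom-encoded root of \citet[Chapter~10 and Algorithm~10.13]{BasuPollackRoy2006}. Its bit complexity is polynomial in $\deg f$, $\deg g$ and the coefficient sizes, so each such call fits the bound.

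For part~\ref{item:field-linear-systems}, the plan is to use the vertex/basic-solution structure of polyhedra. If the system $Mx=b$, $M'x\ge b'$ has a real solution, then its solution set is a nonempty polyhedron $P$, which contains a minimal face. A minimal face is an affine subspace determined by setting some subset of the inequalities to equality, and any point of it is obtained by solving a square nonsingular subsystem (after eliminating free directions by fixing variables to $0$) \citep[Section~8.5]{Schrijver1986}. Such a point has coordinates given by Cramer's rule in the entries, hence lies in $\Q(\theta)$.

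Algorithmically, I will enumerate all at most $2^t$ subsets of inequalities to tighten, together with the choices of which variables to set to zero. For each choice I will run Gaussian elimination over $\Q(\theta)$ using part~\ref{item:field-arithmetic}, then check the remaining inequalities by comparisons, and return the first candidate that passes. Each candidate uses at most $G(t)$ field operations and comparisons, so the total is bounded by part~\ref{item:field-arithmetic} with $t$ replaced by $G(t)$. Completeness holds because a point of a minimal face with the right variables zeroed is one of the enumerated candidates: pick a basis of the equality system restricted to that face and set the nonbasic coordinates to zero, after translating if necessary. I will instead use the standard fact that every nonempty polyhedron $\{x: Ax\le b\}$ has a point solving a nonsingular subsystem $A'x=b'$ of rows together with coordinate equations $x_j=0$, citing Schrijver's minimal-face theorem.

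The main obstacle is keeping the bit-size bookkeeping for sign determination and pseudo-division within a bound of the form $(\deg f+1)^C(h+h'+1)^C$, uniform in $t$ except through $G(t)$. I would handle this by reducing modulo $f$ after every operation, so that degrees never exceed $\deg f$.
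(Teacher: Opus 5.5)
Your outline works, but one estimate in part~(a) has the wrong shape. You say that each pseudo-division by $f$ multiplies coefficient bit lengths by a factor polynomial in $(\deg f+1)(h+1)$. If that were all you knew, reducing after each of the $t$ operations would give growth $\bigl((\deg f+1)(h+1)\bigr)^{O(t)}$. Its exponent depends on $t$, so it is not of the form $G(t)(\deg f+1)^C(h+h'+1)^C$ with an absolute $C$, and your ``$4^t$ times a polynomial'' does not follow.

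What you need is that pseudo-division adds bits rather than multiplying them. Reducing a polynomial of degree below $2\deg f$ takes at most $\deg f$ elimination steps. Each step $g\leftarrow \mathrm{lc}(f)\,g-c\,T^kf$ adds at most $\tau_f+1$ bits, where $\tau_f\le h$ bounds the coefficient sizes of $f$. So one reduction adds $O((\deg f+1)(h+1))$ bits. Since each field operation at most doubles bit sizes, up to a logarithmic term, $t$ steps give $2^{O(t)}$ times a polynomial, as required.

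Simpler still is the paper's choice: do not reduce at all. Every element arising in $t$ operations is then bounded in degree and coefficient size by $M=4^t(h'+1)$, and this factor is absorbed into $G(t)$. The sign of $g(\theta)$ for such $g$ is obtained in $(\deg f+h+M)^C$ bit operations by the cited sign-determination algorithms, which accept $\deg g>\deg f$. Your reduction keeps the sign-determination inputs of degree below $\deg f$ independently of $t$, but a bound of the stated form does not need this. Also, division needs no zero test, since the lemma assumes the divisor is nonzero.

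For part~(b), the paper's main proof uses Fourier--Motzkin elimination with back-substitution. It uses only field operations and comparisons, and their number depends only on $t$, although possibly doubly exponentially. Your minimal-face argument is the alternative the paper gives in Remark~\ref{rem:field-bound}. Some set of inequalities, tightened to equality together with the equations, has a nonempty solution set contained in the polyhedron. So you can enumerate the at most $2^t$ such sets, solve each by Gaussian elimination over $\Q(\theta)$, and test the remaining inequalities. This yields a solution in $\Q(\theta)$ and the better $G(t)=2^{O(t^3)}$. Your extra enumeration over which coordinates to set to zero is harmless but unnecessary. Every solution of the tightened system already lies in the polyhedron, and Gaussian elimination with free variables set to zero produces one.
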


\begin{proof}
For part \ref{item:field-arithmetic}, note that the given elements are bounded by $h'+1$, because each of them has at most $h'$ coefficients, each of bit length at most $h'$.  Hence every element which occurs in a sequence of at most $t$ operations is bounded by $M=4^t(h'+1)$.  Such an element has at most $2(M+1)$ coefficients of bit length at most $M$, so its bit length is at most
\[
2(M+1)\cdot M\le4M^2=4^{2t+1}(h'+1)^2.
\]
To compare two elements, we compute the sign of their difference.  By the first formula before the lemma,
\[
\sgn\Bigl(\frac{g_1(\theta)}{g_0(\theta)}-\frac{g_1'(\theta)}{g_0'(\theta)}\Bigr)
=\sgn\bigl((g_1g_0'-g_1'g_0)(\theta)\bigr)\cdot\sgn\bigl((g_0g_0')(\theta)\bigr).
\]
By \citet[Algorithms~8.1 and~8.2]{BasuPollackRoy2006}, an addition or multiplication of polynomials of degree at most $M$ with coefficients of bit length at most $M$ costs a number of bit operations which is polynomial in $M$.  By \citet[Algorithms~10.14 and~10.15]{BasuPollackRoy2006}, the sign of $g(\theta)$ for such a polynomial $g$ costs at most $(\deg f+h+M)^C$ bit operations, so the whole sequence costs at most
\[
t\bigl(\deg f+h+4^t(h'+1)\bigr)^{C}\le G(t)(\deg f+1)^C(h+h'+1)^C
\]
bit operations.

For part \ref{item:field-linear-systems}, we replace every equation by two inequalities of the form $\ge$ and apply Fourier--Motzkin elimination with back-substitution \citep[Section~12.2]{Schrijver1986}.  This method uses only additions, subtractions, multiplications, divisions, and comparisons of the coefficients, so it computes a solution with coordinates in $\Q(\theta)$ whenever the system has a real solution. Since the method starts with at most $2t$ inequalities and eliminates at most $t$ unknowns, the number of operations and comparisons in $\Q(\theta)$ which it uses depends only on $t$, and part \ref{item:field-linear-systems} follows from part \ref{item:field-arithmetic}.
\end{proof}

\begin{remark}\label{rem:field-bound}
If a system as in Lemma~\ref{lem:field-lifting}\ref{item:field-linear-systems} has a real solution, then there is a set of inequalities of the system such that the equations of the system and the inequalities of this set with $\ge$ replaced by~$=$ have a real solution, and every such solution is a solution of the system \citep[Theorem~2.4]{Schrijver1995}. For each of the at most $2^t$ sets of inequalities, we solve the equations of the system together with the inequalities of the set with $\ge$ replaced by~$=$, by Gaussian elimination over $\Q(\theta)$, and test whether the solution found, if any, satisfies all inequalities of the system.  This takes at most $Ct^3$ operations and comparisons in $\Q(\theta)$ for each set.  By the proof of Lemma~\ref{lem:field-lifting}\ref{item:field-arithmetic}, a sequence of at most $s$ operations and comparisons costs at most $2^{O(s)}(\deg f+1)^C(h+h'+1)^C$ bit operations, and every result has bit length at most this number.  Hence the computation for one set costs at most $2^{O(t^3)}(\deg f+1)^C(h+h'+1)^C$ bit operations, the at most $2^t$ sets together cost at most $2^t\cdot2^{O(t^3)}(\deg f+1)^C(h+h'+1)^C$ bit operations, and Lemma~\ref{lem:field-lifting}\ref{item:field-linear-systems} holds with $G(t)=2^{O(t^3)}$.
\end{remark}
\subsection{Running time and output length in Theorem~\ref{thm:main}}

\begin{proof}[Proof of the running time and output bounds]
We bound the cost of every line of Algorithm~\ref{alg:main} in three parts and then add them.

\costpart{item:cost-forming}~\emph{Listing and forming.}  By Proposition~\ref{prop:one-shot-format} and Steps~\ref{step:format-data} and~\ref{step:format-bits} of its proof, listing all choices of a pattern and of profiles costs $G(R)L^{O(R)}$ bit operations, and forming one system costs $G(R)(L+1)^{C}$ bit operations.

\costpart{item:cost-solve}~\emph{Solve.}  The algorithm calls $\textsc{Solve}$ for the combined systems of Phase~I and for the term-disjoint systems $\mathcal N^{\mathrm{td}}_{I,p}$ of Phase~III, where $q=p=|I|\le R$.  For a system with $k$ variables and length $s$ whose polynomials have degree at most $D$ and bit length at most $b$, one call costs at most $(s+D+b+2)^{C_{\mathrm{sa}}(k+1)}$ bit operations by Theorem~\ref{thm:sampling}.  For a combined system, Proposition~\ref{prop:one-shot-format}\ref{item:format-combined} gives
\[
k<2R^2,
\qquad
s\le G(R)(L+1),
\qquad
D+1\le2R,
\qquad
b+1\le G(R)(L+1)^{C}.
\]
For a term-disjoint system, Proposition~\ref{prop:one-shot-format}\ref{item:format-single} with $q=p$ gives the same bounds, because
\begin{gather*}
k\le p(2p-2)+p-1=2p^2-p-1<2R^2,
\qquad
D+1\le(p+q-1)+1=2p\le2R,
\\
s\le G(p,p)(L+1)\le G(R)(L+1),
\qquad
b+1\le G(p,p)(L+1)^{C}\le G(R)(L+1)^{C},
\end{gather*}
where $G(R)$ is the largest value of $G(p,p)$ for $p\le R$.  Hence one call of $\textsc{Solve}$ costs at most
\[
T_{\mathrm{one}}\le(s+D+b+2)^{C_{\mathrm{sa}}(k+1)}\le G(R)(L+1)^{CR^2},
\]
after an increase of the absolute constant $C$.  By Proposition~\ref{prop:one-shot-format}\ref{item:format-combined}, there are at most $G(R)L^{O(R)}$ combined systems.  There are at most $2^R$ sets $I$, and by Proposition~\ref{prop:one-shot-format}\ref{item:format-single} each set has at most $G(R)L^{O(R)}$ profiles, so there are at most $2^RG(R)L^{O(R)}$ term-disjoint systems.  Thus the number of calls of $\textsc{Solve}$ is at most $N_{\mathrm{sys}}\le G(R)(L+1)^{CR}$, and all calls together cost at most
\[
N_{\mathrm{sys}}T_{\mathrm{one}}
\le G(R)(L+1)^{CR^2+CR}
\le G(R)(L+1)^{CR^2},
\]
after a further increase of $C$.

\costpart{item:cost-reconstruct}~\emph{Reconstruct.}  The algorithm calls $\textsc{Reconstruct}$ at most once, for $y$ in Phase~II or for $y'$ in Phase~III, and we write $y$ in both cases.  It first computes the coefficient rows as in item~\ref{item:reconstruct-rows} of Step~\ref{step:main-reconstruct}.  Let $f$ be the polynomial of $y$, and let $h$ be the bit length of the exact representation of $y$.  Since $D+1\le2R$ and $k+1\le2R^2$, Theorem~\ref{thm:sampling} gives
\begin{gather}
\deg f\le(D+1)^{C_{\mathrm{sa}}(k+1)}\le(2R)^{2C_{\mathrm{sa}}R^2},\label{eq:output-degree}\\
h\le(s+D+b+2)^{C_{\mathrm{sa}}(k+1)}\le G(R)(L+1)^{CR^2}.\nonumber
\end{gather}
Fix a block, an active mode $m$, and an omitted row $i$.  The coefficients of the system \eqref{eq:row-lift} are the entries of $\bar X_m$ and $\bar a_{m,i}$, which have bit length at most $(L+1)^{C}$ by Step~\ref{step:format-data} of the proof of Proposition~\ref{prop:one-shot-format}, and the entries of $\bar Y_m$, which by \eqref{eq:simplex} are coordinates of $y$ or one minus a sum of at most $2R$ coordinates of $y$.  Hence every coefficient is obtained by at most $2R$ additions and subtractions from given elements of bit length at most $h+(L+1)^{C}$, and by Lemma~\ref{lem:field-lifting}\ref{item:field-arithmetic} with $t=2R$ it has bit length at most
\[
h':=G(2R)(\deg f+1)^{C}\bigl(2h+(L+1)^{C}+1\bigr)^{C}\le G(R)(L+1)^{CR^2}.
\]
The system has $\delta_m\le p+q-1\le2R$ unknowns, $p\le R$ equations, and $q\le R$ inequalities, and it has a real solution by Step~\ref{step:main-reconstruct}.  Hence Lemma~\ref{lem:field-lifting}\ref{item:field-linear-systems} applies with $t=4R$.  There is one such system for every block, every active mode, and every omitted row, so there are at most $R\sum_mn_m\le RL$ systems.  Therefore the computation of all coefficient rows costs at most
\[
RL\cdot G(4R)(\deg f+1)^{C}(h+h'+1)^{C}\le G(R)(L+1)^{CR^2}
\]
bit operations.

Then $\textsc{Reconstruct}$ forms the output as in items~\ref{item:reconstruct-terms} and~\ref{item:reconstruct-list} of Step~\ref{step:main-reconstruct}.  The output has at most $R\sum_mn_m\le L$ factor entries, see Section~\ref{sec:model}.  Each of them is computed by $G(R)$ operations in $\Q(\theta)$ from the coordinates of $y$, the coefficient rows, and rational data, which have bit length at most $G(R)(L+1)^{CR^2}$ by the previous paragraph.  By Lemma~\ref{lem:field-lifting}\ref{item:field-arithmetic}, each output entry therefore has bit length at most $G(R)(L+1)^{CR^2}$ and costs at most $G(R)(L+1)^{CR^2}$ bit operations, so all output entries cost at most $L\cdot G(R)(L+1)^{CR^2}$ bit operations.

\textup{(d)}~\emph{Total.} Hence the whole algorithm costs at most
\[
\underbrace{G(R)L^{O(R)}+N_{\mathrm{sys}}G(R)(L+1)^{C}}_{\text{\ref{item:cost-forming} listing, forming}}
+\underbrace{N_{\mathrm{sys}}T_{\mathrm{one}}}_{\text{\ref{item:cost-solve} }\textsc{Solve}}
+\underbrace{\underbrace{G(R)(L+1)^{CR^2}}_{\text{coefficient rows}}
+\underbrace{L\cdot G(R)(L+1)^{CR^2}}_{\text{output entries}}}_{\text{\ref{item:cost-reconstruct} }\textsc{Reconstruct}}
\le G(R)L^{CR^2}
\]
bit operations, after an increase of $C$ and since $L\ge1$.  This is \eqref{eq:running-time}.  All output entries lie in the field $\Q(\theta)$ of $y$, where $\deg f\le(2R)^{2C_{\mathrm{sa}}R^2}$, and the cost of writing them is included in this bound.  This proves Remark~\ref{rem:main-format}.
\end{proof}

\section{Implementation and detailed experimental evidence}\label{app:experiments}

Section~\ref{sec:experiments} gives the experimental setting and headline results. This appendix records the exact simplifications, equivalent solver formulations, and detailed instance analyses.

For a profile of a block $(I,q)$, we denote the core \eqref{eq:core} by $\mathcal C$.  We write $\T_I=\sum_{r\in I}P_r$ for $I\subseteq[R]$, and $\T_{(m)}$ for the mode-$m$ matricization of $\T$, and likewise for $\T_I$ and $\mathcal C$.  As in Section~\ref{sec:normalized}, the supplied columns $\bar X_m[:,r]$ are indexed by $r\in I$, and the competing columns $\bar Y_m[:,j]$ by $j\in[q]$.

\subsection{Implementation}\label{subsec:exp-setup}

All computations outside $\textsc{Solve}$ are exact over $\Q$. To reduce the number and size of the systems passed to
$\textsc{Solve}$, we use the following 
simplifications. They preserve the minimality and
uniqueness decisions.
\begin{enumerate}[label=\textup{(\alph*)}]
\item\label{item:exp-bounds} \emph{Rank lower bounds.}  Put $\rho_0=\max_m\rank\T_{(m)}$ and $\rho_I=\max_m\rank(\T_I)_{(m)}$. 
By \eqref{eq:unfolding-rank}, $\rho_0\le\rankp(\T)$ and $\rho_I\le\rankp(\T_I)$.  A solution on $(I,q)$ gives decompositions of $\T_I$ and $\T$ of lengths $q$ and $R-p+q$ by Theorem~\ref{thm:representation}\ref{item:soundness} and Lemma~\ref{lem:combine}, so Phase~I need only consider 
\[
\max\{\rho_I,p-R+\rho_0\}\le q<p.
\]
If $\rho_0=R$, the supplied decomposition is minimal, as in Proposition~\ref{prop:matricization}.
\item\label{item:exp-skip} \emph{Redundant profiles.}  Profiles with an active mode $m$ such that $K_m\ne\varnothing$ and $\rank\mathcal C_{(m)}=q$ are omitted in both phases; they are redundant by Lemma~\ref{lem:exp-redundant} below.
\end{enumerate}

Fix a profile of $(I,q)$ with active modes $H$.  For $m\in H$, write $M_{m,j}=\bigotimes_{m'\in H\setminus\{m\}}\bar Y_{m'}[:,j]$ as a column in the order of $\mathcal C_{(m)}$, and put $\mathbf M_m=[M_{m,1}\ \cdots\ M_{m,q}]$.  Then the core equation \eqref{eq:core-equation} gives
\begin{equation}\label{eq:exp-matricized}
\mathcal C_{(m)}=\sum_{j=1}^q\omega_j\bar Y_m[:,j]M_{m,j}^{\mathsf T}=\bar Y_m\operatorname{diag}(\omega_1,\ldots,\omega_q)\mathbf M_m^{\mathsf T},
\quad\text{and}\quad
\col\mathcal C_{(m)}\subseteq\col\bar X_m.
\end{equation}
The inclusion follows by matricizing the left side of \eqref{eq:core-equation} in the same way.

\begin{lemma}\label{lem:exp-rank-q}
If $\rank\mathcal C_{(m)}=q$, every solution satisfies $\col\bar Y_m=\col\mathcal C_{(m)}\subseteq\col\bar X_m$, and the columns of $\mathbf M_m$ form a basis of $\col\mathcal C_{(m)}^{\mathsf T}$.
\end{lemma}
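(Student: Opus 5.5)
The plan is to read everything off the factorization \eqref{eq:exp-matricized}, $\mathcal C_{(m)}=\bar Y_m\operatorname{diag}(\omega)\mathbf M_m^{\mathsf T}$, using only rank inequalities for a product of three matrices. The left factor $\bar Y_m$ has $q$ columns, the middle factor is $q\times q$, and $\mathbf M_m$ has $q$ columns. By condition \ref{cond:weights} every $\omega_j>0$, so $\operatorname{diag}(\omega)$ is invertible.

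The steps, in order, are as follows.

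First, from $\mathcal C_{(m)}=\bar Y_m\operatorname{diag}(\omega)\mathbf M_m^{\mathsf T}$ we get
\[
q=\rank\mathcal C_{(m)}\le\min\{\rank\bar Y_m,\rank\mathbf M_m\}\le q.
\]
Hence $\bar Y_m$ and $\mathbf M_m$ both have full column rank $q$.

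Second, the same factorization gives $\col\mathcal C_{(m)}\subseteq\col\bar Y_m$. Both spaces have dimension $q$, so they are equal. Combined with the inclusion $\col\mathcal C_{(m)}\subseteq\col\bar X_m$ recorded after \eqref{eq:exp-matricized}, this gives $\col\bar Y_m=\col\mathcal C_{(m)}\subseteq\col\bar X_m$.

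Third, transposing gives $\mathcal C_{(m)}^{\mathsf T}=\mathbf M_m\operatorname{diag}(\omega)\bar Y_m^{\mathsf T}$, so $\col\mathcal C_{(m)}^{\mathsf T}\subseteq\col\mathbf M_m$. The left side has dimension $\rank\mathcal C_{(m)}=q$, and $\mathbf M_m$ has $q$ independent columns by the first step. Hence the columns of $\mathbf M_m$ span $\col\mathcal C_{(m)}^{\mathsf T}$ and are linearly independent, so they form a basis of it.

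There is no real obstacle; the argument is elementary linear algebra. The only points to check are that positivity of all $\omega_j$, including $\omega_q$ defined through \eqref{eq:last-weight}, is part of the solution conditions, and that the matricization orderings in \eqref{eq:exp-matricized} are consistent, so that the factorization and the inclusion $\col\mathcal C_{(m)}\subseteq\col\bar X_m$ hold as written.
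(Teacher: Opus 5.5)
Your proposal is correct and follows essentially the same route as the paper: the rank sandwich $q=\rank\mathcal C_{(m)}\le\min\{\rank\bar Y_m,\rank\mathbf M_m\}\le q$ from the factorization \eqref{eq:exp-matricized}, then the two column-space inclusions from the factorization and its transpose, upgraded to equalities by dimension. Your remark that $\operatorname{diag}(\omega)$ is invertible is harmless but not needed, since the rank bound for a product holds without it.
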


\begin{proof}
By \eqref{eq:exp-matricized}, $q=\rank\mathcal C_{(m)}
\le\min\{\rank\bar Y_m,\rank\mathbf M_m\}\le q$. Thus both $\bar Y_m$ and $\mathbf M_m$ have rank $q$.
The factorization \eqref{eq:exp-matricized} and its transpose give
\[
\col\mathcal C_{(m)}\subseteq\col\bar Y_m,
\qquad
\col\mathcal C_{(m)}^{\mathsf T}\subseteq\col\mathbf M_m.
\]
Both inclusions are equalities by dimension. Since $\mathbf M_m$ has $q$ linearly independent columns,
these columns form the claimed basis.
\end{proof}

\begin{lemma}\label{lem:exp-redundant}
If a profile has an active mode $m_0$ with $K_{m_0}\ne\varnothing$ and $\rank\mathcal C_{(m_0)}=q$, every solution gives a solution for another profile of the same $(I,q)$ with smaller $\sum_{m=1}^{d}\delta_m$, preserving \eqref{eq:td} when $q=p$.  Such profiles may therefore be omitted without changing Corollary~\ref{cor:criteria} or Proposition~\ref{prop:one-shot}\ref{item:one-shot-rank-formula}.
\end{lemma}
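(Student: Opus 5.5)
Fix a solution of the profile with $K_{m_0}\ne\varnothing$ and $\rank\mathcal C_{(m_0)}=q$. By Lemma~\ref{lem:exp-rank-q}, $\col\bar Y_{m_0}\subseteq\col\bar X_{m_0}$. I will shrink the selected rows in mode $m_0$ from $S_{m_0}=S^0_{m_0}\cup K_{m_0}$ to $S^0_{m_0}$; that is, the new profile takes $\delta'_{m_0}=\sigma_{m_0}$ and $K'_{m_0}=\varnothing$ and leaves the other modes unchanged. If $\sigma_{m_0}=1$, mode $m_0$ becomes inactive, and this case has to be handled separately. The profile conditions \eqref{eq:profile-dimension-range}--\eqref{eq:profile-rows} still hold, since $\kappa$ only decreases, and $\sum_m\delta_m$ drops by $|K_{m_0}|\ge1$.

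\emph{Construction of the new solution.} Since $S^0_{m_0}$ is a row basis of $A_I^{(m_0)}$, the rows $K_{m_0}$ of $\bar X_{m_0}$ are combinations $\bar X_{m_0}[K_{m_0},:]=E\,\bar X_{m_0}[S^0_{m_0},:]$. Because every column of $\bar Y_{m_0}$ lies in $\col\bar X_{m_0}$, the same relation $\bar Y_{m_0}[K_{m_0},:]=E\,\bar Y_{m_0}[S^0_{m_0},:]$ holds. Restricting to $S^0_{m_0}$ is therefore injective on $\col\bar X_{m_0}$.

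I then renormalize. Set $Z=\bar Y_{m_0}[S^0_{m_0},:]$, whose columns are nonnegative. They are nonzero by injectivity and the coordinate-sum-one property, so they can be rescaled to sum one; I absorb the column sums $t'_j$ into the weights via $\omega'_j=\omega_j t'_j/\,(\text{analogous supplied factor})$. The supplied side gets the new data $\bar X'_{m_0}$ and $\alpha'_r$ of the smaller profile, and consistency holds because the same factor is applied to both sides.

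For \ref{cond:core}, I restrict the old core equation along mode $m_0$ to the rows $S^0_{m_0}$ and then rescale. For \ref{cond:lifting}, I treat the omitted rows of the new profile in mode $m_0$, which now include $K_{m_0}$. Each such row $i$ has $\ell'_{m_0,i}$ obtained by composing: the old coefficient row on $S_{m_0}$, followed by substituting rows $K_{m_0}$ via $E$. This gives a vector acting on $S^0_{m_0}$ that reproduces $\bar a'$ and satisfies $\ell'\bar Y'\ge0$. Row $i\in K_{m_0}$ itself is handled by $E$: its lift equals the old $\bar Y_{m_0}[i,:]\ge0$ up to positive scaling. All other modes are unchanged.

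For \eqref{eq:td}, I note that both the old and the new solutions reconstruct, by Theorem~\ref{thm:representation}\ref{item:soundness}, the same tensors $\widetilde Q_j$. The reason is that $\Lambda'_{m_0}Z=\Lambda_{m_0}\bar Y_{m_0}$, because both equal the lifted full columns, which are determined by their restriction to the row basis $S^0_{m_0}$ of the joint span. By \eqref{eq:term-equality}, term-disjointness is then a property of the tensors, so it is preserved.

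This same observation gives the cleanest route: rather than checking \ref{cond:core}--\ref{cond:lifting} directly, I would lift the old solution to an actual replacement, which requires only nonnegativity, and then re-derive the new profile's solution as in Section~\ref{subsec:derivation}. That derivation works whenever $S^0_{m_0}$ spans the rows of the joint matrix $[A_I^{(m_0)}\ B^{(m_0)}]$, and this holds because $\col B^{(m_0)}=\Lambda_{m_0}\col\bar Y_{m_0}\subseteq\Lambda_{m_0}\col\bar X_{m_0}=\col A_I^{(m_0)}$. The rank remark (Remark~\ref{rem:rank-of-solution}) confirms that part~(i) of the derivation uses only this spanning property, not irreducibility. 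The main obstacle is exactly this point: verifying that the derivation in Section~\ref{subsec:derivation} goes through for a possibly reducible replacement with $\delta_m$ possibly larger than the joint rank in other modes. I would check it line by line, noting that only the row-basis property in each active mode is used.

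The final sentence of the statement follows from the Corollary~\ref{cor:criteria} and Proposition~\ref{prop:one-shot} equivalences. Feasibility of some profile of each $(I,q)$ is unchanged, since iterating the reduction terminates: $\sum\delta_m$ strictly decreases, so it ends at a non-omitted profile.
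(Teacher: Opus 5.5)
Your lifting route is essentially the paper's proof: you reconstruct the replacement by Theorem~\ref{thm:representation}\ref{item:soundness}, get $\col B^{(m_0)}\subseteq\col A_I^{(m_0)}$ from Lemma~\ref{lem:exp-rank-q}, and rerun Section~\ref{subsec:derivation}, where irreducibility is needed only for the budget \eqref{eq:profile-budget}, which holds because $\kappa$ decreases, and Step~1 of the proof of Theorem~\ref{thm:representation} carries \eqref{eq:td} over. The only difference is that the paper passes to the natural profile of the reconstructed replacement (joint rank $\delta'_m\le\delta_m$ in every mode), whereas you keep $S_m$ for $m\ne m_0$ and rely on spanning sufficing, which is fine; this route also settles the case $\sigma_{m_0}=1$ (which forces $q=1$ and makes $m_0$ inactive) that your direct construction leaves open, and in that direct construction the weights are simply $\omega'_j=\omega_jt'_j$, with no supplied factor to divide by.
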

\begin{proof}
By Theorem~\ref{thm:representation}\ref{item:soundness}, the solution gives a localized replacement \eqref{eq:lifted-exchange}. For each mode $m$, let $\delta'_m$ be the dimension of the span of its supplied and new mode-$m$ factors, as in Section~\ref{subsec:derivation}.  In an active mode $m$, the new factors are positive multiples of the columns of $\bar B_m=\Lambda_m\bar Y_m$ by \eqref{eq:normalized-competing-factor}--\eqref{eq:reconstructed-term}, and \eqref{eq:normalized-supplied-factor} gives $A_I^{(m)}=\Lambda_m\bar X_mD_{X,m}$.  Since $\Lambda_m$ has $\delta_m$ columns, it follows that
\[
\delta'_m=\rank[\,A_I^{(m)}\ \ \bar B_m\,]=\rank\bigl(\Lambda_m[\,\bar X_mD_{X,m}\ \ \bar Y_m\,]\bigr)\le\delta_m.
\]
In an inactive mode $m$, all factors are multiples of $u_m$ by \eqref{eq:inactive-ratios} and \eqref{eq:reconstructed-term}, so $\delta'_m=1=\delta_m$.  For $m=m_0$, Lemma~\ref{lem:exp-rank-q} gives $\col\bar Y_{m_0}\subseteq\col\bar X_{m_0}$, hence $\col\bar B_{m_0}\subseteq\col(\Lambda_{m_0}\bar X_{m_0})=\col A_I^{(m_0)}$ and $\delta'_{m_0}=\rank A_I^{(m_0)}=\sigma_{m_0}$, which is less than $\delta_{m_0}$ because $K_{m_0}\ne\varnothing$.  Thus $\sum_m\delta'_m<\sum_m\delta_m$.

Theorem~\ref{thm:representation}\ref{item:completeness} assumes that the replacement is irreducible, which need not hold here.  Its proof, through Section~\ref{subsec:derivation}, uses irreducibility only to show that the profile of the replacement satisfies \eqref{eq:profile-budget}, and here $\sum_m(\delta'_m-1)<\sum_m(\delta_m-1)\le p+q-2$ shows this directly.  Hence Section~\ref{subsec:derivation} gives a solution for the profile of this replacement.  If the original solution satisfies \eqref{eq:td}, the replacement is term-disjoint by Theorem~\ref{thm:representation}\ref{item:soundness}, and Step~1 of the proof of Theorem~\ref{thm:representation} shows that the new solution satisfies \eqref{eq:td} as well.
\end{proof}

In Phase~I, Algorithm~\ref{alg:main} runs through $\rho=1,\ldots,R-1$, the patterns $\mathfrak D$ with $\rho(\mathfrak D)=\rho$, and the choices of profiles for their blocks, and solves the combined system $\mathcal N_{\mathfrak D}$ for each of them until one has a solution.  The implementation forms no combined system.  It removes blocks by \ref{item:exp-bounds} and profiles by \ref{item:exp-skip}, and then solves the block system $\mathcal N_{I,q}$ separately for every remaining block $(I,q)$ and every remaining profile.  The blocks removed by \ref{item:exp-bounds} have no feasible system by \ref{item:exp-bounds}, and a profile removed by \ref{item:exp-skip} can have a solution only if a remaining profile of the same block has one, by repeated application of Lemma~\ref{lem:exp-redundant}; the positive integer $\sum_m\delta_m$ strictly decreases at each application, so the process terminates.  Hence, if none of the solved systems has a solution, no block system has one, so by Definition~\ref{def:combined} no combined system has one, and Proposition~\ref{prop:one-shot}\ref{item:one-shot-rank-formula} gives $\rho^*=R$.  This occurred on all 114 inputs.  Phase~III follows line~\ref{line:loop-subset} of Algorithm~\ref{alg:main} with the remaining profiles, stopping at its first feasible system.

$\textsc{Solve}$ uses Z3~5.0.0 \citep{deMouraBjorner2008}, a widely used solver developed at Microsoft Research for Boolean combinations of polynomial equations and inequalities over the reals.  It returns a rational or real algebraic solution, or decides infeasibility.  This solver does not inherit Theorem~\ref{thm:sampling}'s complexity bound. After reconstruction, every returned solution was verified exactly, including the tensor equality at all $6\times 6\times 6$ indices, nonnegative nonzero factors, and inequivalence in Phase~III.

\subsection{Splitting and parametrizing the systems}\label{subsec:exp-solve}

Some systems are not decided by Z3 in reasonable time.  We therefore use two reformulations of the systems, which preserve feasibility.  First, using the supports of the competing terms, we replace a system by finitely many systems with additional linear conditions (Lemma~\ref{lem:exp-support}); if there are none, the system is infeasible.  Second, for two systems of instance 098, we reduce the number of variables by a parametrization (Proposition~\ref{prop:exp-curve}). Section~\ref{subsec:exp-setup} determines which systems are passed to $\textsc{Solve}$.  Section~\ref{subsec:exp-solve} describes the equivalent reformulations which $\textsc{Solve}$ applies to such a system before calling Z3.

\pfpart{Support} Fix a profile of $(I,q)$ with active modes $H$.  The systems $\mathcal N_{I,q}$ and $\mathcal N^{\mathrm{td}}_{I,p}$ do not depend on the order of the competing terms, so these terms can be relabeled.  A \emph{box} is a set $\mathcal B=\prod_m\mathcal B_m$ with nonempty $\mathcal B_m\subseteq[n_m]$.  A box $\mathcal B\subseteq\supp(\T_I)$ is \emph{maximal} if there is no box $\mathcal B'$ with $\mathcal B\subsetneq\mathcal B'\subseteq\supp(\T_I)$.

\begin{lemma}\label{lem:exp-support}
If $\T_I=\sum_{j=1}^qQ_j$ with nonzero nonnegative rank-one tensors $Q_j=\bigotimes_m b_j^{(m)}$, then $\supp(Q_j)=\prod_m\supp(b_j^{(m)})\subseteq\supp(\T_I)$, and there are maximal boxes $\mathcal B^1,\ldots,\mathcal B^q$, not necessarily distinct, with
\[
\supp(Q_j)\subseteq\mathcal B^j\quad \text{for $j\in[q]$},
\qquad
\supp(\T_I)=\bigcup_{j=1}^{q}\mathcal B^j.
\]
\end{lemma}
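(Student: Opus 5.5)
The plan is to argue directly from nonnegativity; no earlier machinery beyond elementary support calculus is needed.

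\emph{Support of each term.} The support of a rank-one tensor is computed first. For $Q_j=\bigotimes_m b_j^{(m)}$, the entry at $(i_1,\ldots,i_d)$ is $\prod_m b_j^{(m)}[i_m]$. This product is nonzero exactly when every factor is nonzero. Hence $\supp(Q_j)=\prod_m\supp(b_j^{(m)})$, and this is a box, since each $b_j^{(m)}$ is nonzero.

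\emph{Containment in $\supp(\T_I)$.} All $Q_j$ are nonnegative, so $\T_I=\sum_jQ_j\ge Q_j$ entrywise. A positive entry of $Q_j$ therefore forces a positive entry of $\T_I$, which gives $\supp(Q_j)\subseteq\supp(\T_I)$. Conversely, every positive entry of $\T_I$ is a positive entry of some $Q_j$, since the summands are nonnegative. This gives
\[
\supp(\T_I)=\bigcup_{j}\supp(Q_j).
\]

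\emph{Choice of maximal boxes.} For each $j$, enlarge the box $\supp(Q_j)$ inside $\supp(\T_I)$ to a maximal box $\mathcal B^j$. Such a box exists because the boxes contained in the finite set $\supp(\T_I)$ and containing $\supp(Q_j)$ form a nonempty finite family ordered by inclusion, so it has a maximal element. That element is maximal in the absolute sense as well, because any box strictly containing it also contains $\supp(Q_j)$. With this choice $\supp(Q_j)\subseteq\mathcal B^j\subseteq\supp(\T_I)$. The union of the $\mathcal B^j$ lies between $\bigcup_j\supp(Q_j)=\supp(\T_I)$ and $\supp(\T_I)$, so it equals $\supp(\T_I)$.

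None of these steps is difficult. The only point that needs care is the existence of maximal boxes, and it follows from finiteness. Distinct $j$ may yield the same maximal box, which the statement explicitly allows.
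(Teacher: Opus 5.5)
Your proposal is correct and follows the paper's proof step for step: the product formula for $\supp(Q_j)$, the identity $\supp(\T_I)=\bigcup_j\supp(Q_j)$ from nonnegativity, enlarging each $\supp(Q_j)$ to a maximal box inside $\supp(\T_I)$, and the sandwich argument for the union. The only cosmetic difference is that you choose an inclusion-maximal member of the finite family of intermediate boxes, whereas the paper chooses one with the largest number of elements; either choice immediately gives maximality.
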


\begin{proof}
First, $\supp(Q_j)=\prod_m\supp(b_j^{(m)})$, which is a box because every $b_j^{(m)}$ is nonzero.  All $Q_j$ are nonnegative, so $\supp(\T_I)=\bigcup_{j=1}^{q}\supp(Q_j)$, and $\supp(Q_j)\subseteq\supp(\T_I)$.  Among the boxes $\mathcal B$ with $\supp(Q_j)\subseteq\mathcal B\subseteq\supp(\T_I)$, let $\mathcal B^j$ be one with the largest number of elements.  Then $\mathcal B^j$ is maximal, and $\supp(\T_I)=\bigcup_j\supp(Q_j)\subseteq\bigcup_j\mathcal B^j\subseteq\supp(\T_I)$, so $\supp(\T_I)=\bigcup_j\mathcal B^j$.
\end{proof}

A solution of the system gives a decomposition $\T_I=\sum_{j=1}^q\widetilde Q_j$ by Theorem~\ref{thm:representation}\ref{item:soundness}.  So, by Lemma~\ref{lem:exp-support}, the system has no solution if no $q$ maximal boxes have union $\supp(\T_I)$.  This is the tensor version of the rectangle covering bound for nonnegative matrix rank \citep{Yannakakis1991,FioriniEtAl2013}.  Otherwise, $\textsc{Solve}$ runs through all multisets $\{\mathcal B^1,\ldots,\mathcal B^q\}$ of maximal boxes with union $\supp(\T_I)$.  Write $\mathcal B^j=\prod_m\mathcal B^j_m$ and $\mathcal P_j=\mathcal B^j\setminus\bigcup_{k\ne j}\mathcal B^k$, and suppose that $\supp(\widetilde Q_j)\subseteq\mathcal B^j$ for $j\in[q]$.  For $m\in H$, the mode-$m$ factor of $\widetilde Q_j$ is a positive multiple of $\bar B_m[:,j]$ by \eqref{eq:normalized-competing-factor}--\eqref{eq:reconstructed-term}.  By Lemma~\ref{lem:exp-support}, it follows that $\bar B_m[i,j]=0$ for $i\notin\mathcal B^j_m$.  The other terms vanish on $\mathcal P_j$, so $\widetilde Q_j[c]=\T_I[c]>0$ for $c=(c_1,\ldots,c_d)\in\mathcal P_j$.  This gives $\bar B_m[c_m,j]>0$, and 
\[
\bar B_m[c_m,j]\,\T_I[c']=\bar B_m[c'_m,j]\,\T_I[c]
\]
if $c'\in\mathcal P_j$ differs from $c$ only in the $m$th coordinate.  Row $i$ of $\bar B_m=\Lambda_m\bar Y_m$ is a row of $\bar Y_m$ if $i\in S_m$, and it is $w_{m,i}\bar Y_m$ if $i\notin S_m$ and $K_m=\varnothing$, by the proof of Proposition~\ref{prop:lifting}.  So if $K_m=\varnothing$, or if all rows in a condition lie in $S_m$, then the condition is linear in the variables.  For each multiset, $\textsc{Solve}$ adds these linear conditions to the system and passes the new system to Z3.  By Lemma~\ref{lem:exp-support}, after relabeling, every solution of the system solves one of the new systems.  Hence the system has a solution if and only if one of the new systems has one.

\pfpart{Parametrization}  We now consider profiles with
\begin{equation}\label{eq:exp-curve-case}
H=\{m_0,a,b\},\qquad\delta_a=2,\qquad\rank\mathcal C_{(m_0)}=q=\delta_b+1.
\end{equation}
For such a profile, $M_{m_0,j}=\bar Y_a[:,j]\otimes\bar Y_b[:,j]\in\R^{2\delta_b}$, and $\dim \ker\mathcal C_{(m_0)}=2\delta_b-q=\delta_b-1$.  Let $\Xi\in\R^{(\delta_b-1)\times 2\delta_b}$ be a matrix whose rows form a basis of $\ker\mathcal C_{(m_0)}$. Partition $\Xi=[\,\Xi_1\ \Xi_2\,]$, where
$\Xi_1,\Xi_2\in\R^{(\delta_b-1)\times\delta_b}$. For $s\in\R$, let $L(s)=s\Xi_1+(1-s)\Xi_2\in\R^{(\delta_b-1)\times\delta_b}$, equivalently $L(s)v=\Xi\bigl((s,1-s)^{\mathsf T}\otimes v\bigr)$ for $v\in\R^{\delta_b}$. Here $L(s)$ is a matrix pencil \citep{JaJa1979}; see also \citet[Chapter~XII]{Gantmacher1959}. We call $s$ \emph{special} if $\rank L(s)<\delta_b-1$.  If $\delta_b=2$, then $L(s)$ is one row, write as $(\alpha(s),\beta(s))$.  Since $\delta_a=2$, we write $\bar Y_a[:,j]=(s_j,1-s_j)^{\mathsf T}$ with $s_j\in[0,1]$.

\begin{proposition}\label{prop:exp-curve}
Assume \eqref{eq:exp-curve-case}.  Then every solution of $\mathcal N_{I,q}$ satisfies the following.
\begin{enumerate}[label=\textup{(\roman*)}]
\item\label{item:exp-curve-dual} The columns of $\mathbf M_{m_0}$ form a basis of $\col\mathcal C_{(m_0)}^{\mathsf T}$, and
\begin{equation}\label{eq:exp-cramer}
\bar Y_{m_0}\operatorname{diag}(\omega_1,\ldots,\omega_q)=\mathcal C_{(m_0)}\mathbf M_{m_0}(\mathbf M_{m_0}^{\mathsf T}\mathbf M_{m_0})^{-1}.
\end{equation}
\item\label{item:exp-curve-other} $\bar Y_{m_0}[:,j]$ depends only on the $M_{m_0,k}$ with $k\ne j$.
\item\label{item:exp-curve-param} $L(s_j)\bar Y_b[:,j]=0$.  If $s_j$ is not special, then $\dim\ker L(s_j)=1$; if moreover $\delta_b=2$, this gives
\begin{equation}\label{eq:exp-curve}
\bar Y_b[:,j]=\frac{1}{\beta(s_j)-\alpha(s_j)}\begin{pmatrix}\beta(s_j)\\-\alpha(s_j)\end{pmatrix}.
\end{equation}
\end{enumerate}
Conversely, let $\bar Y_a$ and $\bar Y_b$ have nonnegative columns with entry sum one, such that the columns of $\mathbf M_{m_0}$ form a basis of $\col\mathcal C_{(m_0)}^{\mathsf T}$ and every column of the right-hand side of
\eqref{eq:exp-cramer} is nonnegative and nonzero.  Let $\omega_j$ be the entry sum of the $j$th column
of the right-hand side of \eqref{eq:exp-cramer}, and define $\bar Y_{m_0}$ by \eqref{eq:exp-cramer}.  Then \ref{cond:simplex}--\ref{cond:core} hold.
\end{proposition}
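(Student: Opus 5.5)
The plan is to work entirely with the mode-$m_0$ matricization \eqref{eq:exp-matricized} of the core equation,
\[
\mathcal C_{(m_0)}=\bar Y_{m_0}\operatorname{diag}(\omega)\mathbf M_{m_0}^{\mathsf T},
\]
and to use the rank hypothesis $\rank\mathcal C_{(m_0)}=q$ through Lemma~\ref{lem:exp-rank-q}.

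For part~\ref{item:exp-curve-dual}, Lemma~\ref{lem:exp-rank-q} gives directly that $\mathbf M_{m_0}$ has $q$ independent columns spanning $\col\mathcal C_{(m_0)}^{\mathsf T}$. In particular $\mathbf M_{m_0}^{\mathsf T}\mathbf M_{m_0}$ is invertible. Multiplying the factorization on the right by $\mathbf M_{m_0}(\mathbf M_{m_0}^{\mathsf T}\mathbf M_{m_0})^{-1}$ then yields \eqref{eq:exp-cramer}.

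For part~\ref{item:exp-curve-other}, I would rewrite the $j$th column of \eqref{eq:exp-cramer} via the dual basis. Let $\mathbf M_{m_0}^{+}=(\mathbf M_{m_0}^{\mathsf T}\mathbf M_{m_0})^{-1}\mathbf M_{m_0}^{\mathsf T}$. Its $j$th row $\phi_j$ is the unique vector in $\col\mathbf M_{m_0}$ with $\phi_j M_{m_0,k}=\delta_{jk}$. Since $\mathcal C_{(m_0)}$ vanishes on the orthogonal complement of $\col\mathbf M_{m_0}$, we have $\mathcal C_{(m_0)}\phi_j^{\mathsf T}=\mathcal C_{(m_0)}\psi^{\mathsf T}$ for any $\psi$ with $\psi M_{m_0,k}=0$ for all $k\ne j$ and $\psi M_{m_0,j}=1$. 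The plan is to show that the ray of $\omega_j\bar Y_{m_0}[:,j]$ is determined by the $M_{m_0,k}$ with $k\ne j$. Within the $q$-dimensional space $\col\mathcal C_{(m_0)}^{\mathsf T}$, the vectors orthogonal to all $M_{m_0,k}$ with $k\ne j$ form a line, and this line depends only on those $k$. The column is $\mathcal C_{(m_0)}$ applied to that line. Its scale is then fixed by the requirement that the entry sum equal one, so $\bar Y_{m_0}[:,j]$ itself depends only on the $M_{m_0,k}$, $k\ne j$. This normalization is the step I would check most carefully.

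For part~\ref{item:exp-curve-param}, I would start from the fact that each $M_{m_0,j}$ lies in $\col\mathcal C_{(m_0)}^{\mathsf T}$, which is the orthogonal complement of $\ker\mathcal C_{(m_0)}$. Hence
\[
\Xi M_{m_0,j}=\Xi\bigl((s_j,1-s_j)^{\mathsf T}\otimes\bar Y_b[:,j]\bigr)=L(s_j)\bar Y_b[:,j]=0.
\]
If $s_j$ is not special, $L(s_j)$ has rank $\delta_b-1$, so its kernel is one-dimensional. For $\delta_b=2$ that kernel is spanned by $(\beta,-\alpha)^{\mathsf T}$, and normalizing the coordinate sum gives \eqref{eq:exp-curve}. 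The denominator $\beta-\alpha$ is nonzero because $\bar Y_b[:,j]$ is a nonzero kernel vector with coordinate sum one.

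For the converse, I would set $\bar Y_{m_0}$ and $\omega$ as prescribed. Then $\bar Y_{m_0}\operatorname{diag}(\omega)\mathbf M_{m_0}^{\mathsf T}=\mathcal C_{(m_0)}P$, where $P$ is the orthogonal projector onto $\col\mathbf M_{m_0}=\col\mathcal C_{(m_0)}^{\mathsf T}$. Hence this product equals $\mathcal C_{(m_0)}$, which is the core equation \ref{cond:core} in matricized form. For \ref{cond:simplex}, the columns of $\bar Y_{m_0}$ are nonnegative and nonzero by hypothesis and have sum one by the choice of $\omega_j$, while $\bar Y_a$ and $\bar Y_b$ are given in the simplex. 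For \ref{cond:weights}, each $\omega_j>0$ because the corresponding column is nonnegative and nonzero. Finally, $\sum_j\omega_j=\Omega_I$ follows by taking the total entry sum of \ref{cond:core}, since all normalized columns have sum one.
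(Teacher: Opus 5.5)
Your proposal is correct and follows the paper's proof essentially step for step: Lemma~\ref{lem:exp-rank-q} and right multiplication by $\mathbf M_{m_0}(\mathbf M_{m_0}^{\mathsf T}\mathbf M_{m_0})^{-1}$ for (i); the line in the fixed space $\col\mathcal C_{(m_0)}^{\mathsf T}$ orthogonal to the $M_{m_0,k}$ with $k\ne j$ for (ii); the identity $\Xi M_{m_0,j}=L(s_j)\bar Y_b[:,j]=0$ for (iii); and the orthogonal projector onto $\col\mathbf M_{m_0}$ together with the entry-sum count for the converse. The normalization step you flag is fine, because a line through the origin contains at most one vector of entry sum one, and $\bar Y_{m_0}[:,j]$ is such a vector.
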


\begin{proof}
\pfpart{Part~\ref{item:exp-curve-dual}} By \eqref{eq:exp-curve-case} and Lemma~\ref{lem:exp-rank-q} with $m=m_0$, the $M_{m_0,j}$ form a basis of $\col\mathcal C_{(m_0)}^{\mathsf T}$.  Hence $\mathbf M_{m_0}$ has rank $q$, and $\mathbf M_{m_0}^{\mathsf T}\mathbf M_{m_0}$ is invertible.  Multiplying \eqref{eq:exp-matricized} from the right by $\mathbf M_{m_0}(\mathbf M_{m_0}^{\mathsf T}\mathbf M_{m_0})^{-1}$ gives \eqref{eq:exp-cramer}.

\pfpart{Part~\ref{item:exp-curve-other}} By \eqref{eq:exp-cramer}, $\bar Y_{m_0}[:,j]$ is a positive multiple of $\mathcal C_{(m_0)}$ times the $j$th column of
\[
\mathbf M_{m_0}(\mathbf M_{m_0}^{\mathsf T}\mathbf M_{m_0})^{-1}.
\]
This column lies in $\col\mathbf M_{m_0}$, which has dimension $q$, and is orthogonal to the $M_{m_0,k}$ with $k\ne j$ since $\mathbf M_{m_0}^{\mathsf T}\mathbf M_{m_0}(\mathbf M_{m_0}^{\mathsf T}\mathbf M_{m_0})^{-1}=\Id_q$.  Therefore $\mathbf M_{m_0}(\mathbf M_{m_0}^{\mathsf T}\mathbf M_{m_0})^{-1}[:,j]$ is determined up to a nonzero scalar, which cancels by normalizing $\bar Y_{m_0}[:,j]$. Hence $\bar Y_{m_0}[:,j]$ depends only on the $M_{m_0,k}$ with $k\ne j$.

\pfpart{Part~\ref{item:exp-curve-param}} By Lemma~\ref{lem:exp-rank-q}, the vector $M_{m_0,j}=(s_j,1-s_j)^{\mathsf T}\otimes\bar Y_b[:,j]$ lies in $\col\mathcal C_{(m_0)}^{\mathsf T}$, which is orthogonal to $\ker\mathcal C_{(m_0)}$. Hence $L(s_j)\bar Y_b[:,j]=\Xi M_{m_0,j}=0$, so $\bar Y_b[:,j]\in\ker L(s_j)$. If $s_j$ is not special, then $\dim \ker L(s_j)=1$, and $\bar Y_b[:,j]$ is its vector with entry sum one.  For $\delta_b=2$, $\ker L(s_j)$ is spanned by $(\beta(s_j),-\alpha(s_j))^{\mathsf T}$, which gives \eqref{eq:exp-curve}.

For the converse, the rows of $\mathcal C_{(m_0)}$ lie in $\col\mathbf M_{m_0}$, so multiplying $\mathcal C_{(m_0)}$ from the right by the projection $\mathbf M_{m_0}(\mathbf M_{m_0}^{\mathsf T}\mathbf M_{m_0})^{-1}\mathbf M_{m_0}^{\mathsf T}$ onto this space does not change it.  By \eqref{eq:exp-cramer}, this gives $\mathcal C_{(m_0)}=\bar Y_{m_0}\operatorname{diag}(\omega_1,\ldots,\omega_q)\mathbf M_{m_0}^{\mathsf T}$, which is \ref{cond:core} by \eqref{eq:exp-matricized}.  The entry sum of $\mathcal C_{(m_0)}$ is $\Omega_I$ by \eqref{eq:core}, and each $M_{m_0,j}$ has entry sum one.  Hence $\sum_j\omega_j=\Omega_I$, as required by \eqref{eq:last-weight}. Since $\bar Y_a$, $\bar Y_b$, and the right side of \eqref{eq:exp-cramer} are nonnegative, and its columns are nonzero, \ref{cond:simplex} and \ref{cond:weights} hold.
\end{proof}

The \emph{rank-one curve} consists of all pairs
$(u,v)$ of nonnegative vectors in $\R^2$ and
$\R^{\delta_b}$, respectively, each with entry sum
one and with
$u\otimes v\in\col\mathcal C_{(m_0)}^{\mathsf T}$. By Proposition~\ref{prop:exp-curve}\ref{item:exp-curve-dual}, 
for every solution of $\mathcal N_{I,q}$, the pairs
$(\bar Y_a[:,j],\bar Y_b[:,j])$ lie on this curve,
with $\bar Y_a[:,j]=(s_j,1-s_j)^{\mathsf T}$
and $s_j\in[0,1]$.
If $s_j$ is not special, then $\bar Y_b[:,j]$
is uniquely determined by
$\bar Y_b[:,j]\in \ker L(s_j)$ and its entry sum one.
By \eqref{eq:exp-cramer}, non-special $s_1,\dots, s_q$ determine
all remaining variables. Thus every solution of
$\mathcal N_{I,q}$ is represented in the
parametrized systems.

Conversely, choose $s_j\in[0,1]$ and
$\bar Y_b[:,j]\in\ker L(s_j)$ with nonnegative
entries and entry sum one.
Put $\bar Y_a[:,j]=(s_j,1-s_j)^{\mathsf T}$
and $M_{m_0,j}=\bar Y_a[:,j]\otimes\bar Y_b[:,j]$.
Since $\Xi M_{m_0,j}=L(s_j)\bar Y_b[:,j]=0$,
we have $M_{m_0,j}\in
\col\mathcal C_{(m_0)}^{\mathsf T}$.
If $\rank\mathbf M_{m_0}=q$ and every column
of the right-hand side of \eqref{eq:exp-cramer}
is nonnegative and nonzero, the converse in
Proposition~\ref{prop:exp-curve} gives
$\bar Y_{m_0}$ and $\omega_j$ satisfying
\ref{cond:simplex}--\ref{cond:core}.
The additional conditions \ref{cond:lifting}
and, when $q=p$, \eqref{eq:td} then give
a solution of the original system.
Therefore the parametrized systems preserve
feasibility. When $s_j$ is special, $\textsc{Solve}$ keeps
the entries of $\bar Y_b[:,j]$ as variables.

\subsection{Detailed results}\label{subsec:exp-details}

\pfpart{Minimality}  For 66 inputs, $\rho_0=5$ certifies minimality by the rank lower bound~\ref{item:exp-bounds} in
Section~\ref{subsec:exp-setup}.  The other 48 have $\rho_0=4$, allowing only $q=p-1$: 42 have no such profile; 007, 011, 014, 068, 082, and 092 yield 10 blocks with one profile each.  None admits a cover by $q$ maximal boxes, so Lemma~\ref{lem:exp-support} proves all 10 systems infeasible without Z3.

\pfpart{Added rows}  Only 007, 014, 068, 082, and 098 encounter $K_m\ne\varnothing$, all in Phase~III and infeasible. For these profiles, \ref{cond:lifting} is handled by Lemma~\ref{lem:circuit-elimination}; elsewhere \ref{cond:lifting} is linear, $w_{m,i}\bar Y_m\ge0$.

\pfpart{Witnesses} Each of the 19 feasible systems yields a rational length-five decomposition inequivalent to the supplied one. Seventeen replace three terms, i.e. $p=q=3$, with $\sigma$ a permutation of $(3,2,2)$, where $\sigma_m=\rank A_I^{(m)}$. 

The other two, 057 and 092, replace four terms, i.e. $p=q=4$, with $\sigma=(2,4,3)$ and $(3,3,3)$, respectively. In all cases, $\sum_m(\sigma_m-1)=p+q-2$ forces $\delta_m=\sigma_m$ by \eqref{eq:profile-budget}. Thus each of the first 17 systems has 14 variables, and each of the other two has 27. For 057 and 092, Z3 found every term-disjoint system on at most three terms infeasible.

\pfpart{Instance 098} Only this instance required Proposition~\ref{prop:exp-curve}, for the blocks $(\{1,2,4\},3)$ and $(\{1,2,4,5\},4)$, with $\delta=(3,2,2)$ and $(4,2,3)$, and 14 and 27 variables, respectively. Without parametrization, neither system was decided within one hour.  With parametrization, Z3 proved each system infeasible in under 15 seconds.  The only profile skipped by Lemma~\ref{lem:exp-redundant} occurred in instance 098, with $I=[5]$, $q=5$, $\delta=(6,2,3)$, $K_1=\{6\}$, and $\rank\mathcal C_{(1)}=5$.

\subsection{Decompositions of three subsums}\label{subsec:exp-families}

\begin{figure}[p]
\centering
\includegraphics[width=0.90\linewidth]{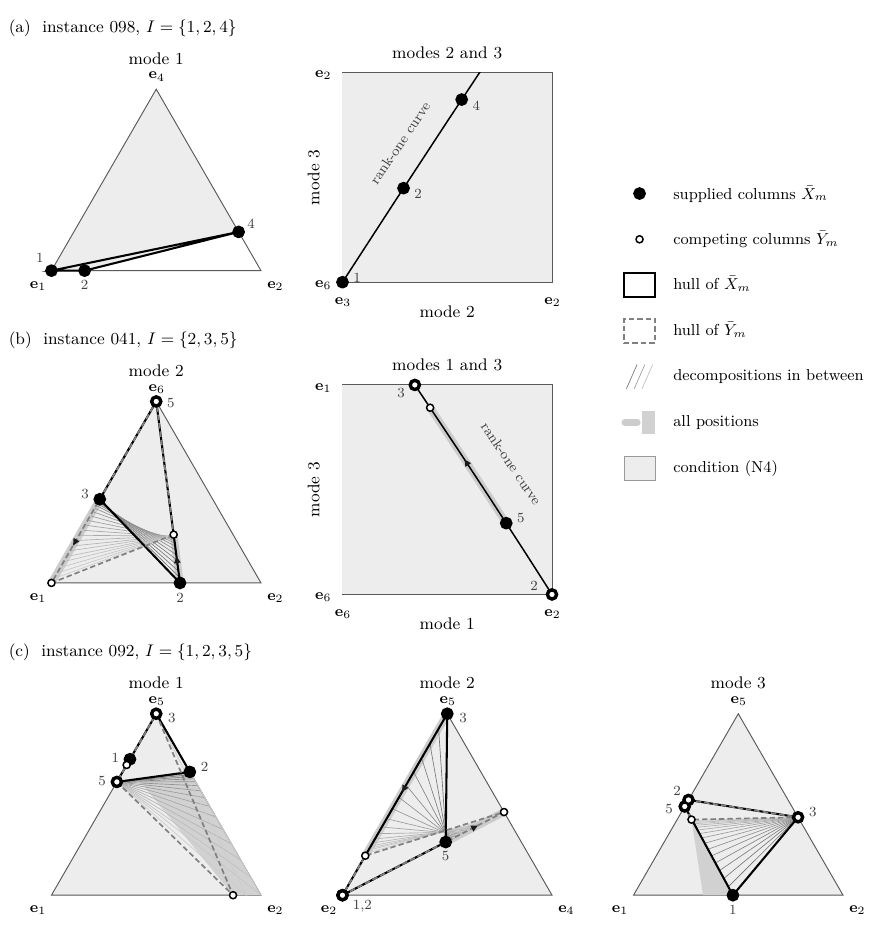}
\caption{Nonnegative length-$|I|$ decompositions of $\T_I$ in selected-row coordinates.  Triangles: $\delta_m=3$; squares: pairs of $\delta_m=2$ columns.  Black points/solid hulls: supplied decomposition; white points/dashed hulls: one other decomposition.  Thin grey hulls: a sequence of other decompositions; thick bands and dark regions: all positions of one point, with arrows away from the supplied position.  Light grey: condition~\ref{cond:lifting}; black curve in (a) and (b): rank-one curve.  (a)~098, $I=\{1,2,4\}$: only the supplied decomposition.  (b)~041, $I=\{2,3,5\}$: decompositions with one parameter.  (c)~092, $I=\{1,2,3,5\}$: decompositions with two parameters.}
\label{fig:exp-families}
\end{figure}

For instances 098, 041, and 092, Figure~\ref{fig:exp-families} shows inequivalent nonnegative decompositions of subsum $\mathcal T_I$ with $|I|$ terms. For each subsum, the effective-order bound \eqref{eq:profile-budget} forces $\delta_m=\sigma_m$ and $K_m=\varnothing$ in all three active modes, so there is only one profile. For a mode $m$ with $\delta_m=3$, each column $\bar X_m[:,r]$, $r\in I$, is represented by a point in a triangle, since its entries sum to one. The vertices are labelled by the selected row indices in $S_m$.

For two modes $a,b$ with $\delta_a=\delta_b=2$, each pair of columns $(\bar X_a[:,r],\bar X_b[:,r])$, $r\in I$, is represented by a point in a square, with coordinates given by their first entries. The ends of the bottom and left edges are labelled by the selected row indices in $S_a$ and $S_b$, respectively. For example, in Figure~\ref{fig:exp-families}(a), $S_2=\{2,3\}$ and $S_3=\{2,6\}$. The black point labelled $1$ represents $\bar X_2[:,1]=\bar X_3[:,1]=(0,1)^{\mathsf T}$. The competing columns $\bar Y_m[:,j]$, $j\in[q]$, are represented in the same way.  

The light grey regions show the columns $y$ satisfying $w_{m,i}y\ge0$ for every omitted row $i$, as required by \ref{cond:lifting}. Only in Figure~\ref{fig:exp-families}(a), the mode-$1$ triangle is truncated at $\be_2$ by \ref{cond:lifting}, with supplied columns 1, 2, and 4 on the boundary of the remaining region. If a competing column equals the supplied column, the white and black points coincide.

In each triangle, the hull of a decomposition is the convex hull of its points.  The solid hull is given by the black points, and the dashed hull by the white points.  The thin grey hulls are given by a sequence of other decompositions.  These decompositions are specified below.  A thick band or a dark region shows all positions of one point, and an arrow on a band points away from the supplied position.  For 098 and 041, put $(m_0,a,b)=(1,2,3)$ and $(2,1,3)$, respectively.  In both cases, $\rank\mathcal C_{(m_0)}=3$, so \eqref{eq:exp-curve-case} holds, and the black curve in each square is the rank-one curve of Section~\ref{subsec:exp-solve}.

Instance 098 is essentially unique (Figure~\ref{fig:exp-outcomes}), so in (a), the supplied decomposition is the only one.  In (b), for every $s$ with $s_{\min}\le s\le s_{\max}$, the parameters $(s_1,s_2,s_3)=(\bar X_1[1,2],\bar X_1[1,3],s)$ of Proposition~\ref{prop:exp-curve} give a decomposition, where $s_{\min}=4491371/10735830$ and $s_{\max}=\bar X_1[1,5]=211/270$.  We verified this exactly by real-root isolation.  Distinct $s$ give inequivalent decompositions, and $s=s_{\max}$ gives the supplied one.  The white points are given by $s=s_{\min}$, and the thin grey hulls by a sequence of values of $s$ in $(s_{\min},s_{\max})$, darker for larger $s$.  In (c), the white points show another decomposition, and the thin grey hulls show a sequence of decompositions between the supplied decomposition and this one, darker nearer the supplied one.  The subsums in (a) and (b) have the same ranks $\sigma_m$ up to the order of the modes, so their systems have the same numbers of variables and equations.  Hence the ranks alone do not decide uniqueness here, and the difference between (a) and (b) comes from the inequalities \ref{cond:simplex}, \ref{cond:weights}, and \ref{cond:lifting}.

\end{document}